\newcommand{\AFN}{\textup{\texttt{AFN}}\xspace}
\newcommand{\Enc}{\mathrm{Enc}} 
\newcommand{\Dec}{\mathrm{Dec}}
\newcommand{\minrank}{\mathrm{minrank}} 
\newcommand{\interval}[1]{\llbracket #1 \rrbracket}
\providecommand{\email}[1]{\texttt{#1}}
\title{A Polynomial Space Lower Bound for Diameter Estimation in Dynamic Streams} 
\author{
Sanjeev Khanna\thanks{University of Pennsylvania. Supported by NSF award CCF-2402284 and AFOSR award FA9550-25-1-0107. \{\email{sanjeev@cis.upenn.edu}\}} \and
Ashwin Padaki \thanks{University of Pennsylvania. Supported by NSF Graduate Research Fellowship Program. \{\email{apadaki@seas.upenn.edu}\}} \and
Krish Singal \thanks{University of Pennsylvania. Supported by NSF award CCF-2402284. \{\email{ksingal@seas.upenn.edu}\} } \and
Erik Waingarten\thanks{University of Pennsylvania. Supported by the National Science Foundation (NSF) under Grant No. CCF-2337993. \{\email{ewaingar@seas.upenn.edu}\}}
}
\begin{document} 

\maketitle

\begin{abstract}
We study the space complexity of estimating the diameter of a subset of points in an arbitrary metric space in the dynamic (turnstile) streaming model. The input is given as a stream of updates to a frequency vector $x \in \mathbb{Z}_{\geq 0}^n$, where the support of $x$ defines a multiset of points in a fixed metric space $\calM = ([n], \sfd)$. The goal is to estimate the diameter of this multiset, defined as $\max\{\sfd(i,j) : x_i, x_j > 0\}$, to a specified approximation factor while using as little space as possible.

In insertion-only streams, a simple $O(\log n)$-space algorithm achieves a 2-approximation. In sharp contrast to this, we show that in the dynamic streaming model, any algorithm achieving a constant-factor approximation to diameter requires polynomial space. Specifically, we prove that a $c$-approximation to the diameter requires $n^{\Omega(1/c)}$ space. Our lower bound relies on two conceptual contributions: (1) a new connection between dynamic streaming algorithms and linear sketches for {\em scale-invariant} functions, a class that includes diameter estimation, and (2) a connection between linear sketches for diameter and the {\em minrank} of graphs, a notion previously studied in index coding.
We complement our lower bound with a nearly matching upper bound, which gives a $c$-approximation to the diameter in general metrics using $n^{O(1/c)}$ space.
\end{abstract} 
\newpage
\section{Introduction} \label{sec: intro}

We study streaming algorithms for estimating the diameter of a set of points in a metric space. Let $\calM = ([n], \sfd)$ be a metric space on $n$ elements, where $\sfd$ is a distance function. We encode (multi-)subsets of $\calM$ as vectors $x \in \Z^n_{\geq 0}$, where the entry $x_i$ denotes the multiplicity of element $i \in [n]$ in the subset. The goal is to estimate the diameter of the subset encoded by $x$ in a streaming model, where the vector $x$ is updated dynamically through insertions and deletions (i.e., increments and decrements to its coordinates), and the diameter of the set represented by $x$ is defined as:
\[
\diam_{\calM}(x) = \max\left\{ \sfd(i,j) : x_i, x_j > 0\right\}.
\]
We assume oracle access to the metric $\sfd$, without limiting the number of distance queries. In particular, we allow algorithms to depend arbitrarily on the structure of the metric. The central objective is to minimize the space complexity required to output a $c$-approximation to $\diam_{\calM}(x)$: a number $\boldeta\in\R$ satisfying \[\diam_{\calM}(x) / c < \boldeta \leq \diam_{\calM}(x)\] with high probability.\footnote{The fact that a $c$-approximation must be \emph{strictly} larger than the diameter over $c$ is non-standard (usually, an output which is exactly diameter over $c$ is considered a $c$-approximation). We opt for this notion of approximation, since it will make the upper and lower bounds slightly cleaner to state.}

In insertion-only streams (where updates only increment $x$), there is a simple $O(\log n)$-space algorithm achieving a $c$-approximation for any $c > 2$: store the first inserted element $i^* \in [n]$ and maintain the maximum distance $\sfd(i^*, j)$ over all subsequently inserted elements $j$. This quantity corresponds to the radius of the smallest enclosing ball centered at $i^*$, which is within a factor $2$ of the diameter.\footnote{By a reduction from the indexing problem in communication complexity, there exist metric spaces $\calM$ where achieving a $c$-approximation for $c\leq 2$ requires $\Omega(n)$ bits of space, even in insertion-only streams.}

However, this approach breaks down in the presence of deletions: if $i^*$ is later removed from the set, the algorithm loses all relevant information. This motivates the following fundamental question:

\begin{quote}
Do there exist small-space streaming algorithms that can approximate the diameter in the dynamic (turnstile) model, where both insertions and deletions are allowed?
\end{quote}

Note that, given {\em prior knowledge} of some $i \in [n]$ with $x_i \neq 0$ in the final vector, one could reduce the problem to maintaining distances $\sfd(i, j)$ for updated elements $j$, and use $\ell_0$-sampling~\cite{FIS05, JST11} to estimate the radius around $i$ under insertions and deletions. However, identifying such an $i$ {\em during} the stream is challenging: while an $\ell_0$-sampler can recover such an index in logarithmic space, its output only becomes available at the end of the stream—too late to maintain relevant distances involving $i$. Determining the space-complexity of streaming the diameter of a turnstile stream of points in a high-dimensional Euclidean space was asked by Krauthgamer~\cite{K22}. While the current manuscript does not resolve that problem (which currently remains open), it is a necessary first step.

\subsection{Our Results}

We establish nearly tight space-approximation tradeoffs for estimating the diameter of general metrics in dynamic streams. Our main result is a lower bound that shows any randomized $c$-approximation streaming algorithm which succeeds with high-constant probability must use polynomial space (as a function of $n$ and $c$), demonstrating a strong separation from the logarithmic-space 2-approximation in insertion-only streams.

\begin{theorem}\label{thm:lb}
For any $c \geq 1$, there exists a metric $\calM = ([n], \sfd)$ such that any dynamic streaming algorithm that computes a $c$-approximation to $\diam_{\calM}(x)$ for all $x \in \{0,1\}^n$ must use $\tilde{\Omega}(n^{1/(2\ceil{c} - 1)})$ bits of space.
%
\end{theorem}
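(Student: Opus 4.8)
The strategy is to prove the lower bound in two reductions, following the two "conceptual contributions" advertised in the abstract. First I would show that any dynamic (turnstile) streaming algorithm computing a $c$-approximation to $\diam_{\calM}$ yields a \emph{linear sketch} of comparable size: since diameter is scale-invariant (multiplying all multiplicities by a constant, or more relevantly the structure of the support is all that matters), the standard reduction from turnstile algorithms to linear sketches over a sufficiently large field/prime should apply, possibly needing the scale-invariance hypothesis to handle the fact that we only care about the support of $x$ rather than its exact values. So the task reduces to: lower bound the dimension of any (possibly randomized) linear sketch $Sx \mapsto$ ($c$-approx of $\diam_{\calM}(x)$) that works for all $x \in \{0,1\}^n$.

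**Reduction to minrank.** Next I would design a hard metric $\calM$ from a graph $G$ on $[n]$ with small independence number / large minrank. The idea: take the metric where $\sfd(i,j)$ is "small" (say $1$) if $\{i,j\}$ is a non-edge (or $i=j$) and "large" (say $\approx c$, or a chain of distances realizing a $(2\lceil c\rceil - 1)$-step path) if $\{i,j\}$ is an edge, padded so the triangle inequality holds — this is essentially the shortest-path metric of $G$ truncated, which forces the large separations to appear only along edges and at scale roughly $2\lceil c\rceil - 1$. Then $\diam_{\calM}(x)$ distinguishes whether the support of $x$ is an independent set in $G$ (diameter small) or contains an edge (diameter large). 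A $c$-approximation must therefore solve: given the linear sketch $Sx$, decide whether $\mathrm{supp}(x)$ is independent in $G$. I would then argue that a linear sketch that solves this for all $x\in\{0,1\}^n$ must have row space that "covers" $G$ in the minrank sense: if the sketch dimension is $d$, one can extract from $S$ (and the decoder) a matrix fitting $G$ with rank $\le d$ (over the relevant field), so $d \ge \minrank_{\F}(G)$. Finally, choosing $G$ to be a graph with independence number $\Theta(n^{1-1/(2\lceil c\rceil -1)})$-ish whose complement still has large minrank — e.g. a blow-up of a clique, or a suitable Kneser-type / Ramsey graph — gives $\minrank(G) \ge \tilde\Omega(n^{1/(2\lceil c\rceil - 1)})$, completing the bound. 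The exponent $1/(2\lceil c\rceil - 1)$ should come from iterating the metric construction: realizing a separation "ratio $c$" requires a path of $2\lceil c\rceil - 1$ hops, and each hop costs a factor in the graph parameter via a product/tensor construction.

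**The main obstacle.** I expect the crux to be the \emph{linear-sketch-to-minrank} step: carefully defining what it means for a linear sketch to "fit" the graph and proving the dimension lower bound. The subtlety is that the sketch only needs to succeed with high probability and only on Boolean inputs, not all of $\Z^n$, so a naive rank argument fails; one likely needs a fooling-set / communication-style argument, showing that if $d$ is too small then two Boolean vectors — one whose support is independent, one whose support has an edge — collide under $S$ (or are indistinguishable to the decoder), which is exactly where a minrank lower bound on $G$ enters. Handling randomness (fixing the sketch by an averaging / Yao argument) and making sure the scale-invariance reduction is lossless are secondary technical points. The clean quantitative form $\tilde\Omega(n^{1/(2\lceil c\rceil - 1)})$ then follows from plugging in the known minrank lower bounds for the chosen graph family; I would double-check the constant in the exponent by tracing the recursion in the metric construction rather than trusting the heuristic above.
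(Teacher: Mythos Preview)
Your two-step outline (streaming $\to$ linear sketch via scale-invariance, then sketch $\to$ minrank) is the paper's route. Two pieces of your plan, however, would not go through as written.

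\textbf{The metric.} ``Small on non-edges, large on edges'' cannot be a metric once the large/small ratio exceeds $2$ (a common non-neighbor gives a length-$2$ detour), so this does not produce a gap of $c$ for $c>2$. The paper instead uses the shortest-path metric on a random bipartite graph $\bG\sim\calG(n,n,p)$ with $p\approx n^{-1+1/(2k-1)}$, $k=\lceil c\rceil$. The hard instances are not ``independent set versus contains an edge'': for a random $i$, the support is either $N(v_i)$ (diameter $\le 2$, trivially) or $N(v_i)\cup\{u_i\}$, where $\sfd_{\bG}(u_i,v_i)\ge 2k+1$ with high probability, so the diameter jumps to $\ge 2k$. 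This is an index-coding instance; the relevant minrank is that of the knowledge graph on $I^*(\bG)$ with edge $(i,j)$ iff $(u_j,v_i)\notin\bE$, and the random-graph minrank bound of~\cite{ABGMM20} gives $\tilde\Omega(np)=\tilde\Omega(n^{1/(2k-1)})$. The exponent $1/(2k-1)$ is not a recursion depth; it is the threshold for $p$ at which $(u_i,v_i)$ is typically at distance $>2k-1$.

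\textbf{Boolean inputs do not suffice.} You correctly flag the issue but your proposed fix---a fooling pair among Boolean vectors---will not produce the minrank witness. The connection is by duality: if no $z\in\ker T$ supported on $\{i\}\cup N(v_i)$ with $z_i\neq 0$ exists, then a separating $h_i\in\R^s$ exists with $\langle h_i,T^{(i)}\rangle\neq 0$ and $\langle h_i,T^{(j)}\rangle=0$ for $u_j\in N(v_i)$, and stacking the $h_i$ gives a rank-$s$ matrix meeting the minrank constraints. To rule out such $z$ from correctness you need both $x$ and $x+z$ in the support of your hard distribution; but $z$, an integer solution to $Tz=0$, can have entries exponential in $s$ and $\log\|T\|_\infty$. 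A distribution on $\{0,1\}^n$ cannot absorb this, and if you enlarge the support, the standard reduction of~\cite{LNW14} gives a sketch whose dimension grows logarithmically in that support size---which is itself exponential in the sketch dimension, a circularity that yields nothing. The paper breaks the loop in two moves: (i) a new reduction (Theorem~\ref{thm:lin-sketch-red}) that, using scale-invariance, outputs a sketch with $s\le\calS^+(\calA,1)$ and $\|T\|_\infty\le O(n^{n/2})$ \emph{regardless} of the input magnitude; (ii) a hard distribution whose nonzero value at coordinate $i$ is $\sfP=(O(s^2(2n)^{2n})^s)!$, so that for whatever integer $z_i$ the fooling vector carries, the integer multiple $(\sfP/z_i)\cdot z$ lands on the support. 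Step (i) is where scale-invariance actually bites, and it is the technical heart of the argument; your sketch of ``the standard reduction should apply'' misses it.
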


As we discuss in Section~\ref{sec:overview}, the proof of Theorem~\ref{thm:lb} relies on two important conceptual developments. The first is further developing the connection between dynamic streaming algorithms and linear sketches~\cite{G08, LNW14, KP20}. We consider problems defined on partial functions $g \colon \Z^n \to \{0,1,*\}$ that are \emph{scale-invariant} (i.e. $g(a x) = g(x)$ for any scaling $a\in\Z_{+}$ of a vector $x \in \Z^n$)---note that $\diam_{\calM}(x)$ satisfies this property. We show that for this class of problems, dynamic streaming algorithms can be made linear sketches without any overhead in complexity. In contrast, the space complexity of the sketch in~\cite{LNW14} increases logarithmically in the maximum entry of $x$, which is necessary for general problems but would render the second part of our argument meaningless. 

The second is a connection between linear sketches for the diameter and the minrank of a particular graph associated with the metric $\calM$. The minrank of a graph was developed in~\cite{BBJK06} for the problem of index coding with side information, and we use a bound from~\cite{ABGMM20} on the minrank of a random graph in order to lower bound the dimension of linear sketches. Complementing this, we give a matching upper bound that achieves a $c$-approximation in $n^{O(1/c)}$ space for general metrics with polynomially bounded aspect ratio:

\begin{theorem}\label{thm:ub}
For any metric space $\calM = ([n], \sfd)$ and any $c > 6$, there is a dynamic streaming algorithm that computes a $c$-approximation to $\diam_{\calM}(x)$ with  $\tilde{O}(n^{1 / \lfloor (c-2)/4\rfloor})$ bits of space.\footnote{We assume, for convenience, that final frequency vectors have entries with magnitude at most $\poly(n)$ (to support frequency vectors of magnitude $m$, we pay a multiplicative $O(\log m)$ in the bit-complexity). Furthermore, we assume for convenience, that the metric $\calM$ has aspect ratio $\poly(n)$ (to support aspect ratio $\Delta$, we pay an additive $\polylog(n\Delta)$ factor in the bit-complexity).}
\end{theorem}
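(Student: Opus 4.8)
The plan is to construct, for the parameter $\ell := \lfloor (c-2)/4 \rfloor$, a linear sketch of $x$ using $\tilde{O}(n^{1/\ell})$ bits from which a $c$-approximation can be read off; since $\diam_{\calM}$ is scale-invariant and a linear sketch can be maintained under turnstile updates, this yields the claimed algorithm. Because the aspect ratio is $\poly(n)$, the diameter takes one of $O(\log n)$ dyadic values, so it suffices to solve, at each scale $r = 2^k$ and with a constant multiplicative slack, the gap problem of deciding whether $\diam_{\calM}(x) > r$ or $\diam_{\calM}(x) \le \Theta(r)$; reporting the largest scale whose test fires then gives a $\Theta(1)$-approximation, and the precise constant $c$ comes from tuning the slack. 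Fix a scale $r$. Using the (free) oracle access to $\sfd$, precompute a hierarchical net --- a greedily built family of maximal $2^k$-separated sets $N_{2^K} \subseteq \cdots \subseteq N_1$, together with a net tree that partitions $[n]$, at each scale $r = 2^k$, into nested clusters of radius $O(r)$. The key structural observation is that the aggregated vector $v^{(r)}(x) \in \Z^{N_r}$ with $v^{(r)}_p := \sum_{i \text{ in cluster } p} x_i$ is a \emph{linear} image of $x$, and since $x \ge 0$ its support is exactly the set of occupied clusters at scale $r$; moreover, if $\diam_{\calM}(x) > \Theta(r)$ then two occupied clusters have net points $\Omega(r)$ apart, while if all occupied clusters have mutually $O(r)$-close net points then $\diam_{\calM}(x) = O(r)$. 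Thus the gap problem at scale $r$ reduces to learning the occupied clusters --- or at least exhibiting a far-apart pair of them.

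For each scale $r$ I would maintain two complementary linear-sketch primitives, each using $\tilde{O}(n^{1/\ell})$ bits. First, an $n^{1/\ell}$-sparse-recovery sketch of $v^{(r)}$, which recovers all occupied clusters exactly whenever there are at most $n^{1/\ell}$ of them; then the largest pairwise distance among the corresponding net points determines $\diam_{\calM}(x)$ up to an additive $O(r)$, and so up to a constant factor once it is $\Omega(r)$. Second, a constant number of $\ell_0$-samplers on $v^{(r)}$, returning uniformly random occupied clusters; if two of the returned net points are $\Omega(r)$ apart, this alone certifies $\diam_{\calM}(x) = \Omega(r)$. Across all $O(\log n)$ scales this costs $\tilde{O}(n^{1/\ell})$ bits in total. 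Sparse recovery resolves every configuration that looks ``low-dimensional'' at the relevant scale (few occupied clusters), while $\ell_0$-sampling resolves configurations in which a far-apart pair of occupied clusters is abundant (e.g.\ an essentially uniform spread of occupied clusters across the diameter).

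To handle the remaining, ``intermediate'' configurations, I would run the sparse-recovery primitive (with the same $\tilde{O}(n^{1/\ell})$ budget) at $\ell$ nested scales around the --- unknown --- diameter; the crucial claim is that at \emph{some} of these $\ell$ scales the occupied-cluster pattern is $n^{1/\ell}$-sparse and hence fully recovered, because as one passes from a finer to a coarser scale (a geometric factor apart) the tight parts of the support collapse into single clusters, so the occupancy pattern cannot remain $n^{1/\ell}$-dense through $\ell$ consecutive such scales without the support spanning more than $n$ points. The scale at which a far pair is finally certified is inflated by a constant --- about $4$ --- per level of this descent, on top of a factor of $2$ inherent in reading a diameter estimate off cluster centers, which is precisely what yields the guarantee $c \approx 4\ell + 2$, i.e.\ $\ell = \lfloor (c-2)/4 \rfloor$. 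The overall space is $O(\log n)$ copies of the $\tilde{O}(n^{1/\ell})$-bit sparse-recovery sketch plus $O(\log n)$ $\ell_0$-samplers, i.e.\ $\tilde{O}(n^{1/\ell})$ bits, with the extra $O(\log m)$ and $\polylog(n\Delta)$ factors coming from supporting frequencies of magnitude $m$ (in the fingerprints inside the sparse-recovery sketch) and aspect ratio $\Delta$ (in the scale bookkeeping). I expect the main obstacle to be exactly this last claim --- setting up the nested scales and the collapsing argument so that every support configuration is certified at one of the $\ell$ levels, and so that the accumulated approximation loss is a clean factor of roughly $4$ per level.
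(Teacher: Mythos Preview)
Your route is genuinely different from the paper's, and the place you flagged as the main obstacle is indeed where it breaks. The paper does not build a net tree or use sparse recovery at all. It invokes Matousek's theorem that any $n$-point metric embeds into $\ell_\infty^k$ with distortion $2q-1$ and $k=\tilde O(n^{1/q})$, and then solves diameter in $\ell_\infty^k$ up to a factor $2(1+\varepsilon)$ in $\tilde O(k)$ space: one global $\ell_0$-sampler supplies a query point, and a per-coordinate bucketed $\ell_0$-sampling scheme solves approximate furthest neighbor from that point. Composing the distortion $2q-1$ with the factor $2(1+\varepsilon)$ gives $c\approx 4q-2$, whence $q=\lfloor(c-2)/4\rfloor$; the entire space--approximation tradeoff is imported from the embedding theorem, not from any multi-scale collapse.

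Your collapsing claim is false in general metrics. Take the uniform metric on $n$ points (all pairwise distances equal to $D$): for every scale $r<D$ there are exactly $n$ occupied clusters, and the count drops to $1$ only above $D$. No number of consecutive geometric scales forces the count below $n^{1/\ell}$; the intuition that ``tight parts collapse'' relies implicitly on a packing bound (that a radius-$O(r)$ ball contains only $O(1)$ many $r$-separated net points), which simply does not hold in arbitrary metrics. Your $\ell_0$-samplers do handle this particular example, since any two clusters are at distance $D$, but once you lean on sampling rather than the collapse, the parameter $\ell$ no longer enters the approximation factor in any way you have derived. The ``about $4$ per level'' is asserted, not proved, and I do not see how to extract it from net geometry alone; what you would actually need is a dichotomy of the form ``at some scale $r\le D/\Theta(\ell)$, either there are at most $n^{1/\ell}$ occupied clusters, or a constant fraction of occupied-cluster pairs are at net-distance $\Omega(D/\ell)$.'' Matousek's embedding sidesteps this entirely by producing $\tilde O(n^{1/q})$ explicit coordinates on which the furthest-neighbor problem becomes one-dimensional, and the distortion bound $2q-1$ is exactly where the constant in the exponent comes from.
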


The proof of Theorem~\ref{thm:ub} makes use of low-dimensional $\ell_{\infty}$-embeddings and $\ell_0$-sampling. Specifically, a result of~\cite{M96b} shows that any metric on $n$ points admits a distortion-$(2k+1)$ embedding into $\ell_{\infty}^d$ with dimension $d = O(kn^{1/(k+1)} \log n)$. We show that for dynamic pointsets in $\ell_{\infty}^d$, a constant factor approximation to the diameter be accomplished in $\tilde{O}(d)$ space, by maintaining a constant number of $\ell_0$-samplers per coordinate.

Our techniques also yield new embedding lower bounds. We show that the same random metrics used in Theorem~\ref{thm:lb} require large dimension when embedded into $\ell_{\infty}$ with bounded distortion:

\begin{theorem}\label{thm:embed-lb}
For any $k\in\N$ and $c < 2k+1$, let $\calbM$ be the shortest path metric on a random bipartite graph $\calG(n,n,p)$ with $p=n^{-1 + 1/(2k-1)} / \polylog(n)$. Then with high probability, any embedding of $\calbM$ into $\ell_{\infty}$ with distortion $c$ requires dimension $\tilde{\Omega}(n^{1/(2k-1)})$.
\end{theorem}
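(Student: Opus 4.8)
The plan is to follow the template behind Theorem~\ref{thm:lb}: reduce the existence of a distortion‑$c$ embedding $f\colon\calbM\to\ell_\infty^d$ to an upper bound on the $\minrank$ of a graph $H$ built from $\calbM$, and then cite the random‑graph $\minrank$ lower bound of~\cite{ABGMM20}. Fix $k$ and $c<2k+1$, write $L$ for the left vertex class (so $|L|=n$), and let $H$ be the graph on $L$ whose edges are the pairs at $\calbM$‑distance at most $2k-2$. A routine first/second‑moment computation using $p=n^{-1+1/(2k-1)}/\polylog(n)$ shows that a random pair in $L$ lies within distance $2k-2$ with probability $\Theta(n^{-1/(2k-1)}/\polylog(n))$ and that these events are weakly enough correlated that, with high probability, $\minrank_{\R}(H)=\tilde\Omega(n^{1/(2k-1)})$; even the crude bound $\minrank_{\R}(H)\ge\alpha(H)=\tilde\Omega(n^{1/(2k-1)})$ already identifies the target. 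Thus it suffices to produce from $f$ a matrix $M\in\R^{L\times L}$ that is (i) nonzero on the diagonal, (ii) zero on every non‑edge of $H$, i.e.\ on every pair at $\calbM$‑distance $\ge 2k$, and (iii) of rank $\tilde O(d)$: this certifies $\minrank_{\R}(H)\le\tilde O(d)$ and hence $d=\tilde\Omega(n^{1/(2k-1)})$.

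For the matrix $M$, the plan is to exploit the geodesic rigidity of $\ell_\infty$ coordinates. After normalizing so that $\sfd_{\calbM}(x,y)\le\|f(x)-f(y)\|_\infty\le c\,\sfd_{\calbM}(x,y)$, each coordinate $f_t$ is $c$‑Lipschitz on $\calbM$, yet for every pair $u,v$ at distance $\ge 2k$ some coordinate must separate them by at least $2k/c>1$. Since such $u,v$ admit no short connecting walk, and since --- with high probability and this value of $p$ --- the balls of $\calbM$ of radius on the order of $k$ are essentially tree‑like (it is exactly this tree‑like radius, governed by $p$, that yields the exponent $1/(2k-1)$), one can localize the separation to a single coordinate and a single ``scale,'' collect these per‑coordinate, per‑scale rank‑$\tilde O(1)$ contributions (using that $\calbM$ has aspect ratio $O(k)=\polylog(n)$, so only $\tilde O(1)$ scales per coordinate are relevant), and combine them --- via a random linear combination, so that the diagonal survives --- into a rank‑$\tilde O(d)$ matrix with the required zero pattern, in the same spirit as the ``linear sketch $\Rightarrow\minrank$'' step of Theorem~\ref{thm:lb}. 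A cheaper‑to‑state but weaker alternative is to note that $f$ gives an $\tilde O(d)$‑space dynamic streaming algorithm for $c$‑approximating $\diam_{\calbM}$ (maintain $O(1)$ $\ell_0$‑samplers per coordinate, as in the proof of Theorem~\ref{thm:ub}) and then invoke Theorem~\ref{thm:lb}; but this only yields $d=\tilde\Omega(n^{1/(2\lceil c\rceil-1)})$, so the direct $\minrank$ argument is what is needed to reach the full range $c<2k+1$.

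The main obstacle is exactly steps (ii)--(iii): turning the distortion guarantee --- which promises only that \emph{some} coordinate separates each far pair --- into a single low‑rank matrix whose zero pattern is precisely ``$\calbM$‑distance $\ge 2k$ $\Rightarrow$ $0$.'' A crude ``$\ell_\infty$‑close versus $\ell_\infty$‑far'' dichotomy does not suffice, since the distance gap between $H$‑edges ($\le 2k-2$) and $H$‑non‑edges ($\ge 2k$) is only a factor $\approx 1$, far below $c$; one must instead combine the $c$‑Lipschitz structure of the coordinates with the local tree structure of $\calbM$ at radius $\approx k$, and it is the interplay of these two that both forces the required zero pattern and pushes the distortion threshold up to $2k+1$ --- matching, up to the anticipated factor of two in the exponent, the $\ell_\infty^{d}$ embedding of~\cite{M96b} used in Theorem~\ref{thm:ub}. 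The remaining ingredients --- the moment estimates for the density and pseudorandomness of $H$, the $\ell_0$‑sampler bookkeeping, and the application of~\cite{ABGMM20} --- are routine.
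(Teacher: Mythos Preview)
Your primary route---building a rank-$\tilde O(d)$ matrix $M$ directly from the embedding $f$ so as to witness $\minrank_{\R}(H)\le\tilde O(d)$---is not actually carried out. You state the desiderata (i)--(iii), but the construction you sketch (``localize the separation to a single coordinate and a single scale'', ``per-coordinate, per-scale rank-$\tilde O(1)$ contributions'', ``random linear combination so that the diagonal survives'') is a wish list, not a construction. You even identify the obstruction yourself: the gap between $H$-edges (distance $\le 2k-2$) and $H$-non-edges (distance $\ge 2k$) is multiplicatively $\approx 1$, far below the distortion $c$, so no coordinate of $f$ distinguishes them. Invoking ``local tree structure'' does not fix this; the tree-like geometry of small balls is already baked into the minrank lower bound and does not by itself produce the low-rank certificate on the embedding side. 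As written, the main argument has a genuine hole at exactly the step you flag.

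The paper does not attempt a direct minrank construction from $f$. Instead it routes through the \emph{approximate furthest neighbor} problem, which is the idea you are missing. The point is that $\AFN$ in $\ell_\infty^d$ admits a $(1+\eps)$-approximate linear sketch of dimension $\tilde O(d)$ (Lemma~\ref{lem: streaming AFN in l_infty}), with no factor-$2$ loss: given the query $q$ at the end, a few $\ell_0$-samplers per coordinate, combined with random bucketing, detect any point at $\ell_\infty$-distance $>(1+\eps)r$ from $q$. Composing this sketch with the distortion-$c$ embedding $f$ and taking $r=c$ yields a $\tilde O(d)$-dimensional linear sketch for $\AFN_{\calbM}^{1,\,c(1+\eps)}$, and since $c(1+\eps)\le 2k+1$ for small $\eps$, this sketch solves $\AFN_{\calbM}^{1,\,2k+1}$. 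The paper then invokes a separate $\AFN$ sketching lower bound (Lemma~\ref{lem:afn-lb}), itself proved by reducing to the diameter lower bound via $\ell_0$-samplers that recover the hidden index $\bi$ from the support of $\bx$; this is where the minrank argument is actually applied, with the duality step (Lemma~\ref{lem:build-min-rank}) doing the work of producing the low-rank certificate.

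Your ``cheaper alternative'' is in fact the right template, and you discarded it for the wrong reason. The loss you observe---$n^{1/(2\lceil c\rceil-1)}$ rather than $n^{1/(2k-1)}$---comes from going through \emph{diameter} in $\ell_\infty$, where the best streaming approximation is $2(1+\eps)$ (one $\ell_0$-sample for the center, then radius), so the composed approximation is $2c(1+\eps)$. Switching the intermediate problem from diameter to $\AFN$ removes the factor of $2$ and recovers the full range $c<2k+1$.
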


The benefit of Theorem~\ref{thm:embed-lb} is that it gives precise constants for embedding a simple-to-describe metric into low-dimensional $\ell_{\infty}$. To best of our knowledge, comparable theorems for dimension-distortion tradeoffs for embeddings into normed spaces use one of two routes. They show that subsets of low-dimensional norms cannot host shortest path metrics on expanders (of the form in~\cite{M97, LMN05, N17}, which are lossy with respect to constant-factors in the dimension-distortion tradeoffs), or use a counting argument in Matousek~\cite{M96b} (attributed to~\cite{B85,JLS87}) which relates the dimension-distortion tradeoff to existence of dense graphs with no short cycles. Even though the counting argument in~\cite{M96b} gives precise constants and can obtain a better tradeoff than Theorem~\ref{thm:embed-lb} when instantiated with the currently-known densest cycle-free graphs, the appeal of Theorem~\ref{thm:embed-lb} is the simplicity of the hard metric, and the technique, which identifies the minrank of a graph associated with the metric as the relevant quantity in dimension-distortion tradeoffs.

\subsection{Related Work} 

There has been significant work on streaming algorithms for geometric problems, with a particular focus on upper bounds and for points lying in Euclidean space. In the insertion-only model, the current best algorithms for estimating the diameter of $n$-point subsets of $d$-dimensional Euclidean space are in~\cite{AS15}, who give a $(\sqrt{2} + \eps)$-approximation using $d \cdot \poly(1/\eps)$ space; furthermore, they show that any $\sqrt{2} (1 - 2/d^{1/3})$-approximation requires $\Omega(\min\{ n, e^{d^{1/3}}\})$ space.\footnote{Their reduction to indexing also implies $(\sqrt{2} - \eps)$-approximation requires space $\min\{ \Omega(n), 2^{\Omega(\eps^2 d)}\}$. For $n = 2^{\Omega(\eps^2 d)}$, find a collection $\calC \subset \mathbb{S}^d$ of size $n$ with pairwise distances $\sqrt{2}\pm\eps$; inputs $x \in \{0,1\}^n$ are associated with subsets of $\calC$, and queries consisting of antipodal points have diameter $2$ or $\sqrt{2}+\eps$.}  For low-dimensional Euclidean spaces (when $d$ is a constant) and in the insertion-only model, one can achieve $(1+\eps)$-approximation in $\poly(1/\eps)$ space using coresets~\cite{AHV04, BC08}. For dynamic streaming algorithms, the best space-approximation tradeoff for (high-dimensional) Euclidean spaces appears in~\cite{I03}, where they give an insertion-only algorithm for $c$-approximation using $O(dn^{1/(c^2-1)})$ space, which can be slightly modified (using $\ell_0$-sampling) to obtain a $c(1+\eps)$-approximation in dynamic streams using $\tilde{O}(dn^{2/(c^2-1)}/\eps)$ space.

In general, there are many geometric problems whose space-approximation exhibits a qualitatively similar tradeoff (i.e. for any fixed constant approximation $c > 1$, a polynomial space complexity which decays with $c$). We have already mentioned Euclidean diameter above; the Euclidean minimum spanning tree of a set of $n$ points admits a $c$-approximation in $n^{O(1/\sqrt{c})}$ space~\cite{CJLW21, CCJLW23}; Euclidean facility location admits a $c$-approximation in $n^{O(1/c)}$ space~\cite{CJKVY22}; and the Earth mover's distance over $[\Delta]^2$ admits a $c$-approximation in $\Delta^{O(1/c)}$ space. The one technique for proving lower bounds on large approximation factors reduces the streaming lower bound to the lower bounding the distortion of $\ell_1$-embeddings (for example,~\cite{AIK08, AKR15, CJLW21}); however, all such arguments lower bound the space complexity \emph{times} the approximation, and thus cannot match the tradeoffs of Theorem~\ref{thm:lb} and Theorem~\ref{thm:ub}.

\begin{remark}
In some results cited above (namely,~\cite{I03, CJLW21, CCJLW23, CJKVY22}), the meaning of the parameter $n$ differs slightly. In this paper, $n$ is the size of the underlying metric space $\calM$, and in those works, $n$ is the size of the Euclidean subset (since the metric $(\R^d, \ell_2)$ is infinite). However, in the above cited papers, one may often perform a dimension reduction to $O(\log n)$ dimensions and discretization step. This effectively reduces the problem to one where the $n$-point subset comes from a fixed $n^{O(1)}$-sized sub-metric of $\ell_2$. For these cases, the size of the subset and the size of the metric become polynomially related, which is why we claim these are qualitatively similar.
\end{remark}

\subsection{Technical Overview}\label{sec:overview}

\paragraph{Communication Game and the Minrank.} A natural approach to lower bounding the space complexity of a streaming algorithm for diameter is to consider the following one-way communication game. There is a fixed public metric $\calM$, which in our case is the shortest path metric on a random bipartite graph $\bG = (U, V, \bE) \sim \calG(n,n,p)$ (Definition~\ref{def:random-metric}), along with a public source of randomness.

\begin{itemize}
\item Alice receives a uniformly random vector $\bx \sim \{0,1\}^n$ encoding a subset of $U$, by associating the $n$ coordinates with the elements $u_1,\dots, u_n \in U$.
\item Bob receives a uniformly random index $\bi \in [n]$ corresponding to a vertex $v_{\bi} \in V$, as well as a vector $\by \in \{0,1\}^n$ such that $\by_j = \bx_j$ for all $u_j \notin N(v_{\bi}) \cup \{ u_{\bi} \}$, and $\by_j = 0$ otherwise (here, $N(v_{\bi})$ denotes the neighborhood of $v_{\bi}$).
\end{itemize}

Both Alice and Bob have oracle access to the metric $\calM$, and Alice must send a message that enables Bob to recover $\bx_{\bi}$ with high probability. Assuming the existence of a dynamic streaming algorithm for approximating diameter, Alice simulates insertions of every $u_j$ with $\bx_j = 1$ and sends the algorithm’s state to Bob. Bob then simulates deletions of all $u_j$ with $\by_j = 1$ and queries the streaming algorithm. We set $p \lesssim n^{1/(2k-1)}/n$, so that $\sfd(u_i, v_i) \geq 2k+1$ with high probability. Hence, if $\bx_{\bi} = 0$, the vector $\bx - \by$ is supported entirely within $N(v_{\bi})$ and the diameter is at most 2; if $\bx_{\bi} = 1$, the resulting set has diameter at least $2k$ (except for the unlikely case every $u_j \in N(v_{\bj})$ has $\bx_j = 0$). For any $c\leq k$, a $c$-approximation to $\diam_{\calM}(\cdot)$ can thus distinguish between these two cases. This setup corresponds to an instance of {\em index coding} with side information~\cite{BBJK06}, where Bob knows almost everything about Alice's subset (all but the membership of elements in $N(v_{\bi}) \cup \{ u_{\bi} \}$). The communication complexity of this problem remains open in general, but progress has been made under the restriction to linear protocols over finite fields. Over $\F_2$, the complexity is characterized by the \emph{minrank} of the knowledge graph (i.e., the index $i \in [n]$ has directed edges to everything except $N(v_i)$). The minrank is the minimum rank of an $n \times n$ matrix that is nonzero on the diagonal and zero at every $(i,j)$ that is not an edge of the knowledge graph. For linear protocols over $\F_2$, the message length (i.e. the sketch dimension) must be at least the minrank, which is $\tilde{\Omega}(n^{1/(2k-1)})$ for our distribution over graphs~\cite{ABGMM20}. This completes the lower bound for linear sketches over $\F_2$, and motivates our approach to general streaming lower bounds via reductions to linear sketches~\cite{G08, LNW14, KP20}, which bypasses the need to analyze the communication problem in full generality.

\paragraph{General Streaming Lower Bounds.} 
A subtlety arises when generalizing to a field $\F$ beyond $\F_2$, since there are two things to generalize. First, the communication problem may consider inputs $\bx$ drawn from $\F^n$ instead of $\F_2^n = \{0,1\}^n$; second, linear protocols may be over $\F$ instead of $\F_2$. If the domain of the communication game and linear protocols use the same field $\F$, bounds easily generalize.\footnote{See Remark 2.3 in  \cite{LS09}.} However, in a general streaming lower bound, we must first fix an input distribution (which is tied to the domain of the communication game), and then allow for a streaming algorithm to be tailored to that distribution.

This issue appears concretely in~\cite{LNW14}: the reduction to linear sketches produces a sketch over $\R$ whose dimension and entries grow with the magnitude of the final frequency vectors (see $m$ in Theorem~16 of~\cite{LNW14}). Consider the following faulty approach toward a streaming lower bound. Fix an input distribution and apply~\cite{LNW14} to obtain an $s \times n$ sketching matrix. Then, correctness under the distribution implies that frequency vectors $x$ that “fool” the sketch (e.g., those supported only on $N(v_i) \cup \{ u_i \}$ with $x_i\neq 0$) must not sketch to $0$. A duality argument (Lemma~\ref{lem:build-min-rank}) then yields a rank-$s$ matrix satisfying the minrank conditions. The minrank lower bound~\cite{ABGMM20} (which holds for arbitrary fields, including $\R$) implies a lower bound on $s$.

However, the quantitative parameters in~\cite{LNW14} introduce a circular dependency that leads to the downfall of this approach. The duality argument requires the kernel of the sketch to avoid the convex hull of fooling inputs. Since streaming algorithms operate on integer vectors, a “continuous-to-discrete” argument shows that the convex hull intersection must contain an integer point (Lemma~\ref{lem:build-min-rank}). However, the coordinates of this point, which is the solution to a system of $s$ linear equations, may be as large as $\exp(s)$. To apply the duality argument and rule out these integer-valued fooling vectors, the input distribution must support vectors with entries up to $\exp(s)$. But the logarithmic loss in~\cite{LNW14} means the dimensionality of the sketch, $s$, will suffer a loss least logarithmic in $\exp(s)$, which leads to the circularity and gives a vacuous lower bound.

\paragraph{Revisiting the Reduction in~\cite{LNW14}.} The reduction from streaming algorithms to linear sketches proceeds in two steps. The first step reduces dynamic streaming algorithms to \emph{path-independent} algorithms~\cite{G08, LNW14}, whose memory contents depend only on the final frequency vector, not the update sequence. This step is lossless (Theorem~\ref{thm:lnw}) and is the starting point of our lower bound argument. Path-independent algorithms have a natural linear structure over $\Z^n$: the \emph{kernel} $K$ (the set of frequency vectors mapping to the initial state) is closed under addition, i.e. $x, y \in K \Rightarrow x + y \in K$.\footnote{If $x, y \in K$, they individually map the algorithm to the initial state, then $x + y$ can first update $x$ (resulting in the initial state) and then update $y$ (again, going back to initial state), so $x+y \in K$.} 
While this linearity does not imply that path-independent algorithms are linear sketches, the second step in~\cite{LNW14} shows how to simulate them using linear sketches, at the cost of a logarithmic overhead in sketch dimension, and sketch entries that depend on the magnitude of the final frequency vectors.

Our first main technical result (Theorem~\ref{thm:lin-sketch-red}, proved in Section~\ref{sec:stream-to-sketch}) eliminates the logarithmic loss incurred in~\cite{LNW14} (and therefore avoids the aforementioned circularity) for a class of problems we call \emph{scale-invariant decision problems}. These are partial functions $g \colon \Z^n \to \{0,1,*\}$ satisfying $g(ax) = g(x)$ for all scalings $a\in \Z_{+}$ (Definition~\ref{def:scale-inv}). The diameter decision problem falls into this class, since scaling a frequency vector does not change its support. Our second main result (Theorem~\ref{thm: Diam linear sketch dim lb}, proved in Section~\ref{sec:ln-sketch}) gives a lower bound on the sketch dimension required to approximate the diameter on random bipartite graph metrics. The statement of Theorem~\ref{thm:lb} follows directly from combining Theorem~\ref{thm:lin-sketch-red} and Theorem~\ref{thm: Diam linear sketch dim lb}, both of which we outline below.

\paragraph{Overview of Theorem~\ref{thm:lin-sketch-red}: Streaming to Sketching for Scale-Invariant Problems.} 
In Section~\ref{sec:stream-to-sketch}, we prove Theorem~\ref{thm:lin-sketch-red}, which gives a reduction from dynamic streaming algorithms to linear sketches for scale-invariant problems. The dimensionality of the resulting sketch is at most $\calS^+(\calA, 1)$, the final space used by the streaming algorithm $\calA$ on frequency vectors in $\{0,1\}^n$ (see Definition~\ref{def:space}).

\begin{restatable*}{theorem}{linearsketchthm}\label{thm:lin-sketch-red} 
Suppose $\calA$ is a randomized streaming algorithm that computes a scale-invariant function $g$ over all dynamic streams with probability at least $1-\delta$. Then, for any distribution $\calD$ supported on $\Z^n\cap g^{-1}(\{0, 1\})$, there exists a matrix $T$ with the following guarantees:
\begin{itemize}
\item $T$ is an $s \times n$ integer matrix where $s \leq \calS^+(\calA, 1)$ and $\|T\|_{\infty} \leq O(n^{n/2})$.
\item There exists a function $\Dec \colon \Z^s \to \{0,1\}$ such that 
\begin{align}
\Prx_{\bx \sim \calD}\left[ \Dec(T\bx) = g(\bx) \right] \geq 1 - O(\delta). \label{eq:correctness}
\end{align}
\end{itemize} 
\end{restatable*}
The novelty of Theorem~\ref{thm:lin-sketch-red} is that the complexity of the sketch $T$ (i.e. its dimensionality and the size of its entries) is completely independent of the magnitude of the vectors in the support of $\calD$. To prove Theorem~\ref{thm:lin-sketch-red}, we start similarly to~\cite{G08,LNW14} with the (lossless) reduction to path-independent algorithms. Path-independent algorithms are specified by encoding and decoding functions $(\Enc, \Dec')$, where $\Enc \colon \Z^n \to \calW$ maps frequency vectors to a state\footnote{The function $\Enc \colon \Z^n \to \calW$ only depends on the frequency vector and not the history of updates, since it is path-independent.} and $\Dec' \colon \calW \to \{0,1\}$ produces the output (we use $\Dec'$ to distinguish from $\Dec$ in Theorem~\ref{thm:lin-sketch-red}). As mentioned, the kernel $K$ of $\Enc$, consisting of frequency vectors mapping to the initial state, is a submodule of $\Z^n$; if $K$ behaved like a subspace of $\R^n$, the linear map which projects to $K^{\perp}$ would give the desired matrix $T$. However, submodules of $\Z^n$ are not subspaces, so~\cite{LNW14} employs a ``structure theorem'' (Theorems~1 and 2 in~\cite{LNW14}) to characterize $K$ on all inputs within the required support. This increases the sketch dimension by a factor that is logarithmic in the magnitude of input vectors.

To bypass this logarithmic overhead, the proof of Theorem~\ref{thm:lin-sketch-red} sets $T$ to a matrix which projects vectors onto $\Span(K \cap \{0,1\}^n)^{\perp}$. The intersection $K \cap \{0,1\}^n$ gives the desired complexity of $T$ (i.e. dimensionality at most $\calS^+(\calA, 1)$ and entries that are at most exponential in $n$). The main challenge lies in guaranteeing correctness: the components of $x \in \Z^n$ along $\Span(K \cap \{0,1\}^n)$ are ``forgotten'' by $Tx$, and the fact that submodules are not subspaces means $\Span(K \cap \{0,1\}^n)$ may include vectors outside of $K$. This presents an obstacle to decoding $Tx$, because many $x' \in \Z^n$ with different encodings $\Enc(x')\neq \Enc(x)$ may ``collapse'' to the same sketch $Tx' = Tx$. In light of this ambiguity, we let $\Dec \colon \Z^s \to \{0,1\}$ be the next best thing: for $y \in \Z^s$, we consider all inputs $x \in \Z^n$ with $Tx = y$ and take a majority vote on $\Dec'(\Enc(x))$, where the ``majority'' is taken with respect to $\bx \sim \calD$. This alone does not work, but we have yet to take advantage of scale-invariance. By drawing a random scaling $\bZ$ from a suitable distribution over $\Z_+$, we use the scale-invariance of $g$ to show that, conditioned on $T\bx = y$, the majority of $\Dec'(\Enc(\bZ \bx))$ (for $\bx \sim \calD$) coincides with the majority of $g(\bx)$, whenever $(\Enc, \Dec')$ is correct. This gives the desired correctness guarantee of Theorem~\ref{thm:lin-sketch-red}. This step, showing the error of $\Dec$ does not increase even though states of $\Enc(\cdot)$ are ``collapsed'', appears in Subsection~\ref{sec:proof-of-ln-sketch} and is the most technically involved part of Theorem~\ref{thm:lin-sketch-red}.

\ignore{Nevertheless, we  by adapting the decoding function $\Dec$, that the additional error incurred is bounded.\footnote{The argument in~\cite{G08} also adapts decoding and avoids the logarithmic loss; however, it is specific to frequency estimation and only works for deterministic algorithms.} In particular, we define the output $\Dec(Tx)$ to be a ``majority vote'' on outputs produced by $(\Enc, \Dec')$ on inputs from $\calD$ which are consistent with 

We argue that any low-space randomized streaming algorithm $\calA$ computing a scale-invariant function $g \colon \Z^n \to \{0,1,*\}$ with error probability $\delta$ induces a low-dimensional linear sketch $T$ to compute $g$ with error probability $O(\delta)$ over inputs drawn from an arbitrary distribution $\calD \subset \Z^n \cap g^{-1}(\{0,1\})$. The matrix $T$ will be integer-valued, and unlike prior reductions~\cite{LNW14}, neither the dimension nor the size of its entries depends on the distribution $\calD$.

At a high level, our proof proceeds as follows. We begin by defining a new distribution $\calD'$ obtained by randomly rescaling each $\bx \sim \calD$ via $\bZ \sim \calZ$, where $\calM$ is a carefully designed distribution over integers. Since $g$ is scale-invariant, this rescaling preserves the function value. We then apply Theorem~\ref{thm:lnw} to convert $\calA$ into a deterministic, path-independent algorithm $\calB = (\Enc, \Dec)$ that computes $g$ over $\calD'$ with error probability at most $2\delta$ and space complexity no larger than that of $\calA$.

The algorithm $\calB$ defines a kernel $K$ of frequency vectors that map to the same memory state. Using Lemma~\ref{lem:basis-from-enc}, we construct a basis of $\R^n$ that separates this kernel from its orthogonal complement. The sketch matrix $T$ consists of the basis vectors spanning the orthogonal complement and therefore has dimension $s\leq \calS(\calA,1)$ and entry size $\norm{T}_\infty = O(2^n)$. We then define a decoding algorithm $\Dec'$ for our sketch using a majority vote over the outputs of $\Dec(\Enc(\bZ \bx))$. To analyze its error, we couple pairs of inputs sharing the same sketch and show that any disagreement must imply either a decoding error by $\calB$ or a violation of scale-invariance. This establishes that the linear sketch $T$ has error probability $O(\delta)$.}

\paragraph{Overview of Theorem~\ref{thm: Diam linear sketch dim lb}: Sketching Lower Bounds for Diameter.} 

We complete the proof of Theorem~\ref{thm:lb} by establishing Theorem~\ref{thm: Diam linear sketch dim lb}, which gives a lower bound on the dimension of linear sketches for diameter estimation. In particular, we consider the distribution $\calG(n, n, p)$ over random bipartite graphs with vertices $U = \{ u_1,\dots, u_n \}$ on the left-hand side, vertices $V = \{ v_1,\dots, v_n\}$ on the right-hand side, and each edge $(u_i, v_j)$ appearing independently with probability $p$. Our hard metric is given by the shortest path distance on $\bG \sim \calG(n,n,p)$, and our lower bound holds against linear sketches which compute the scale-invariant function $\diam_{\bG}^{2,c}$ (i.e. determine whether a pointset has diameter $\leq 2$ or $\geq 2c$), as defined formally in Definition~\ref{def: diameter}.

\begin{restatable*}{theorem}{linsketchlb}\label{thm: Diam linear sketch dim lb} 
    For any $c > 1$ and $n \in \N$, the following holds with high probability 
    over $\bG\sim\calG(n,n,p)$ where $p = p(n,c) \in (0,1)$. There exists a distribution $\calD=\calD(\bG)$ supported on $\Z^{2n}$ such that any linear sketch $(T, \Dec)$ where 
    \begin{itemize}
        \item $T$ is an $s \times 2n$ integer matrix with $\|T\|_{\infty} \leq O((2n)^{n})$, and
        \item The function $\Dec \colon \Z^s \to \{0,1\}$ satisfies \[ \Prx_{\bx\sim\calD}[\Dec(T\bx) = \diam^{2,c}_{\bG}(\bx)] \geq 1 - 1/(100 n),\]
    \end{itemize}
    must have dimension $s \geq \tilde{\Omega}(n^{1/(2\ceil{c}-1)})$.  
\end{restatable*} 
It is not too hard to show (Lemma~\ref{lem:necessary-properties}) that letting $k = \lceil c \rceil$ and $p = n^{-1+1/(2k-1)} /\log n$, a draw of $\bG \sim \calG(n,n,p)$ will simultaneously satisfy (i) there are $\Omega(n)$ indices $I^*(\bG)$ where $\sfd_{\bG}(u_i, v_i) \geq 2k+1$ and the neighborhood $\bN(v_i)$ is nonempty for all $i\in I^*(\bG)$, and (ii) the minrank of the directed knowledge graph corresponding to any $\Omega(n)$-sized induced subgraph of $\bG$ is at least $\tilde{\Omega}(np)$. In what follows, we design the distribution $\calD$, so that from the assumptions of Theorem~\ref{thm: Diam linear sketch dim lb}, we can violate the lower bound on the minrank. Roughly speaking, we construct $\calD$ as follows:
\begin{itemize}
    \item Draw $\bi \sim I^*(\bG)$, and let $\bx_{\bi}$ be zero or positive with equal probability (the specific choice of the positive value will be a subtle consideration)
    \item For $j\neq i$, the entry $\bx_j$ is positive whenever $u_j \in \bN(v_{\bi})$, and zero otherwise
\end{itemize}

When $\bx_{\bi} = 0$, the subset encoded by $\bx$ is contained within $\bN(v_{\bi})$ and has diameter at most $2$. If $\bx_{\bi} > 0$, the subset encoded by $\bx$ contains $u_{\bi}$ and a neighbor of $v_{\bi}$, and has diameter at least $2c$ by property (i). Thus, any linear sketch computing $\diam_{\bG}^{2,c}(\bx)$ must determine whether or not $\bx_{\bi} = 0$.

Consider a draw from $\calD$ for a fixed $\bi = i$. There is a natural set of (integer) vectors $z$ which may ``fool'' the sketch, namely those which satisfy $z_i \neq 0$, and $z_j = 0$ whenever $j\neq i$ and $u_j \notin \bN(v_i)$. If $Tz = 0$, then for $\bx_i=0$, the sketch will err on one of $\{ \bx, \bx+z \}$, because $T\bx = T(\bx + z)$ but $0=\bx_i\neq (\bx+z)_i$. On the other hand, if no such fooling vectors $z$ exist, we obtain a strong constraint on the columns of $T$: \[0 \neq Tz = z_i T^{i} + \sum_{u_j \in \bN(v_i)} z_j T^j.\] We would like to argue by duality that there exists $h_i \in \R^s$ for which $\langle h_i, T^i\rangle\neq 0$ and $\langle h_i, T^j\rangle = 0$ for $u_j \in \bN(v_i)$. By stacking vectors $h_i$ into an $n \times s$ matrix $H$, we would obtain a rank-$s$ matrix $HT$ satisfying the minrank condition for the knowledge graph induced by $I^*(\bG)$. Our lower bound would follow from the minrank lower bound (property (ii) above). 

But there is a slight nuance: the duality argument for finding $h_i$ requires that no real-valued ``fooling'' vectors $z$ exist. However, linear sketches (which operate on integer vectors) need not be correct on non-integer inputs. Separately, even if $z$ was an integer vector, $\bx + z$ might appear with negligible probability under $\calD$. The final argument (Claim~\ref{claim: forbidden-kernel-vectors} and Lemma~\ref{lem:build-min-rank}) addresses both of these issues and proceeds as follows. Suppose any (potentially real-valued) fooling vector $z$ satisfies $Tz = 0$. Then, since $T$ is an integer $s \times 2n$ matrix with entries $O((2n)^n)$, we can find an integer fooling vector $z$ whose entries have magnitude $\exp(O(sn \log n + s\log s))$ (coming from the determinant of an $s\times s$ matrix with bounded entries). If we knew $z_i$ exactly (and not just an upper bound), then we would define $\calD$ so that $\bx_i \in \{0, z_i\}$. Then, the mistake made on $\{\bx, \bx + z\}$ would contribute close to $1/(2n)$ towards the failure probability of the linear sketch, since $\bx+z$ remains in the support of $\calD$ with high probability. Recall, however, that $z_i$ depends on $T$, which in turn depends on $\calD$, so we cannot define $\calD$ in terms of $z_i$. This motivates the actual definition of $\calD$, where a non-zero $\bx_i$ is set to $\sfP = \exp(O(sn \log n + s\log s))!$. An integer fooling vector $z$, if one exists, has $z_i\leq \exp(O(sn \log n + s\log s))$, so the scalar $\chi = \sfP / z_i$ is an integer, and the inputs, $\bx$ and $\bx + \chi z$, are (i) indistinguishable to the sketch, (ii) require different outputs, and (iii) collectively appear often under $\calD$. This completes the duality argument, from which we conclude the statement of Theorem~\ref{thm: Diam linear sketch dim lb}.

\subsection{Organization of the Paper}

Section~\ref{sec: prelims} contains preliminary definitions that will be used throughout the paper, and Section~\ref{sec: streaming-defns} provides a formalization of streaming algorithms and space complexity. Section~\ref{sec:stream-to-sketch} presents a novel reduction from streaming algorithms to linear sketches for scale-invariant problems. In Section~\ref{sec:ln-sketch}, we prove a dimension lower bound on linear sketches for diameter estimation and conclude a streaming lower bound (Theorem~\ref{thm:lb}), the main result of our paper. Finally, in Section~\ref{sec: embedding-ubs}, we show a matching upper bound for diameter estimation in general metrics (Theorem~\ref{thm:ub}) and present an application to dimension-distortion tradeoffs for metric embeddings into $\ell_\infty$ (Theorem~\ref{thm:embed-lb}).

\ignore{
Note, we already conditioned on $\bi = i$, so in order to guarantee no fooling vector $z \in \R^s$ exists, we show that existence of a single such $z$ results in a constant

and produces a vector $\bx \in \Z^{2n}$ which is non-zero in every index $j$ for $u_j \in \bN(v_{\bi})$ (i.e., all vertices of $\bN(v_{\bi})$ are included in the multi-set), every other $j \neq \bi$ is zero (every vertex in $V \cup (U \setminus \bN(v_{\bi}) \setminus \{ u_{\bi} \})$ is not in the multi-set

We prove that for any $k\in\N$, linear sketches $T\in\Z^{s\times n}$ achieving a better-than-$k$ approximation to the diameter must have dimension $s\geq \tilde{\Omega}(n^{1/(2k-1)})$. Our lower bounds hold in metric spaces given by random bipartite graphs under the shortest path distance. In particular, consider $\bG=(U,V,\bE)$ with $|U|=|V|=n$, where each edge $(u_i,v_j)$ is included in $\bE$ independently with probability $p$ (see Definition~\ref{def:random-metric}). We will set $p \lesssim n^{-1 + 1/(2k-1)}$ so that with high probability, there is a set $I^\ast\subset[n]$ of size $\Omega(n)$ such that $\sfd_\bG(u_i,v_i)\geq 2k+1$ whenever $i\in I^\ast$. 

We define a distribution $\calD$ over inputs $\bx\in\Z^{n}$ which will encode subsets of $U$, the left side of the bipartition. We first sample $\bi\sim I^\ast$, which can be interpreted as a hidden index. We set $x_{j}$ to be a (random) positive number if $(u_j,v_{\bi})\in\bE$, encoding the inclusion of the neighborhood of $v_{\bi}$. Then, we set $x_{\bi}$ to be positive or zero with probability $1/2$, encoding the inclusion or exclusion of the vertex $u_{\bi}$. The idea is that if $x_{\bi} > 0$, then the subset encoded by $x$ contains both $u_{\bi}$ and a vertex adjacent to $v_{\bi}$, and therefore has diameter $\geq 2k$. However, if $x_{\bi} = 0$, then $x$ encodes a subset of the neighborhood of $v_{\bi}$, whose diameter is at most $2$. Informally, this means that any linear sketch hoping to approximate the diameter within a factor $k$ must recognize whether or not $x_{\bi} = 0$.

The above condition can be viewed as a restriction on the kernel of the linear sketch $T$. Fix $i \in I^\ast$ and suppose there is an integer vector $z\in\ker(T)$ supported on $\{i\} \cup \{ j : (u_j, v_i) \in \bE\}$ with $z_i\neq 0$. Since $z$ is an integer solution to a system of $s$ linear equations given by the rows of $T$, we can assume that its entries are at most exponential in $s$ and the magnitude of the entries of $T$. Now, consider inputs $x\in\supp(\calD)$ for which $x_i = 0$. Since $T(x+z) = Tx$ but $(x+z)_i \neq x_i = 0$, the linear sketch must fail on at least one of $\{x, x+z\}$. In order for this mistake to count towards the failure probability of $T$, we set the support of $\calD$ to be large enough (exponential in $s$ and the magnitude of the entries of $T$) so as to include $\bx + z$ with high probability over $\bx\sim\calD$. 

In other words, if $T$ succeeds with sufficiently high probability against $\calD$, then $\ker(T)$ avoids vectors $z\in\Z^n$ of the above form, which means the columns of $T$ indexed by $\{i\}\cup \{ j : (u_j, v_i)\in\bE \}$ are linearly independent. A duality argument then allows us to extract a hyperplane $h\in\R^s$ for which $(hT)_i = 1$ and $(hT)_j = 0$ for $\{ j : (u_j, v_i)\in\bE \}$. Repeating for every $i \in I^\ast$ gives a rank-$s$ matrix $M=HT$ that satisfies a minrank-like condition on the rows corresponding to $I^\ast$, essentially equivalent to the minrank condition for random graphs $\calG(n,1-p)$. A minor modification to the minrank lower bound in~\cite{ABGMM20} gives $s \geq \tilde{\Omega}(np) = \tilde{\Omega}(n^{1/(2k-1)})$.}



\section{Preliminaries} \label{sec: prelims}

The following preliminaries consist of formal definitions that we will use throughout the paper. We begin with a formal definitions of the geometric estimation problems that we study, as well as the algorithmic primitive of $\ell_0$-sampling. Then, we give descriptions of (randomized) streaming algorithms, path-independent streaming algorithms. 

\subsection{Geometric Estimation Problems} 

Here, we present formal definitions for the geometric estimation problems (in finite metric spaces) studied in this paper. Both problems, the diameter and approximate furthest neighbor, are problems defined on multi-sets of a finite metric space. Importantly, both will fall in the category of \emph{scale-invariant} decision problems, i.e., those whose outputs does not change if one rescales the input by a positive integer (Definition~\ref{def:scale-inv}).

\begin{definition}[Diameter] \label{def: diameter}
Fix $n \in \N$ and let $\calM = ([n], \sfd)$ be a metric and $r > 0$ and $c>1$. The diameter function $\diam^{r,c}_{\calM} : \Z^n\to\{0,1,\ast\}$ is given by
\[\diam^{r,c}_{\calM}(x) = \begin{cases}
        1 & \textup{if $x \in \Z_{\geq 0}^{n}$ and } \max\left\{ \sfd(i,j) : x_i, x_j > 0 \right\} \geq cr \\
        0 & \textup{if $x \in \Z_{\geq 0}^n$ and } \max\left\{ \sfd(i,j) : x_i, x_j > 0 \right\} \leq r \\
        \ast & \textup{otherwise} 
    \end{cases}  \] 
\end{definition}

\begin{definition}[Approximate Furthest Neighbor] \label{def: AFN}
Fix $n \in \N$ and let $\calM = ([n], \sfd)$ be a metric and $r > 0$ and $c>1$. The approximate furthest neighbor function $\AFN^{r,c}_{\calM} : \Z^n\times[n]\to\{0,1,\ast\}$ is given by 
\[\AFN^{r,c}_{\calM}(x,q) = \begin{cases}
        1 & \textup{if $x \in \Z_{\geq 0}^{n}$ and } \max\left\{ \sfd(q,i) : x_i > 0 \right\} \geq cr \\
        0 & \textup{if $x \in \Z_{\geq 0}^n$ and } \max\left\{ \sfd(q,i) : x_i > 0 \right\} \leq r \\
        \ast & \textup{otherwise} 
    \end{cases}  \] 
\end{definition}

\subsection{Streaming Subroutine: $\ell_0$-sampling~\cite{JST11}} 

In Section~\ref{sec: embedding-ubs}, we will use the following streaming primitive for our upper bounds.

\begin{definition}[$\ell_0$-sampler~\cite{JST11}] \label{def: L0 sampler} 
For $\delta > 0$, a $\delta$-error $\ell_0$-sampler over $\Z^n$ is a randomized streaming algorithm with the following behavior: on input $x \in \Z^n$, it fails with probability at most $\delta$; conditioned on not failing, it either returns a uniformly random index $\bi \sim \supp(x)$, where $\supp(x) := \{i \in [n] : x_i \neq 0\}$, or declares $x = 0$ if $\supp(x) = \emptyset$.
\end{definition}

\begin{lemma}[Theorem 2 of~\cite{JST11}]\label{thm: L0 sampler}
For any $\delta > 0$, there exists a $\delta$-error $\ell_0$-sampler that uses $O(\log n \log(1/\delta) \log(mn))$ bits of space, where $m$ is an upper bound on the magnitude of the coordinates of the input vector. Moreover, the algorithm can be implemented as a linear sketch using an $s \times n$ matrix with $s = O(\log n \log(1/\delta))$ whose entries are in $\{-1,0,1\}$.
\end{lemma}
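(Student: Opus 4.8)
\textbf{Proof sketch (following~\cite{JST11}).} Since the statement restates a known construction, the plan is to recall the standard recipe for $\ell_0$-sampling---geometric subsampling plus verified sparse recovery---and check that it can be realized as a linear sketch with $\{-1,0,1\}$ entries of the claimed dimension. The core primitive is a \emph{$1$-sparse recovery} linear sketch with $O(\log n)$ rows: take $r_0 = \one$ and, for each bit position $t \le \lceil \log n\rceil$, a row $r_t$ equal to the indicator of the coordinates whose index has a $1$ in bit $t$. If a vector $z$ is supported on a single coordinate $i$, then $\langle r_0, z\rangle = z_i$ and $\langle r_t, z\rangle \in \{0, z_i\}$ reveals bit $t$ of $i$, so $(i,z_i)$ is recovered exactly; all entries here lie in $\{0,1\}$.

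To this I append a \emph{fingerprint} block of $O(\log n + \log(1/\delta))$ rows with i.i.d.\ uniform $\{-1,1\}$ entries: a recovered candidate $(i,z_i)$ is accepted only if, for every fingerprint row $\sigma$, $\langle \sigma, z\rangle = \sigma_i z_i$. A genuine $1$-sparse $z$ always passes; if $\supp(z)$ has size $\ge 2$, anti-concentration of signed sums makes each fingerprint check pass independently with probability $\le 1/2$, so a false accept occurs with probability at most $2^{-\Omega(\log n + \log(1/\delta))} \le \delta/\poly(n)$. Next, geometric subsampling creates nested restrictions $x^{(0)} \supseteq x^{(1)} \supseteq \cdots$ in which a coordinate survives to level $\ell$ with probability $2^{-\ell}$ (implemented with a short-seed hash), and each $x^{(\ell)}$ is fed to an independent copy of the verified $1$-sparse gadget; running $O(\log(1/\delta))$ parallel repetitions of the whole structure guarantees that with probability $\ge 1-\delta$ some repetition has a level at which the restriction is genuinely $1$-sparse, while the false-accept bound, union-bounded over the $O(\log n \cdot \log(1/\delta))$ (level, repetition) pairs, controls the other failure mode. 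Conditioned on no false accept, reading off the coordinate recovered at the lowest verified level returns a (near-)uniform element of $\supp(x)$, since conditioned on which coordinates survive and on recovery succeeding, the surviving set is an unbiased subset of $\supp(x)$; this is the usual $\ell_0$-sampling uniformity argument. Declaring $x = 0$ when $x^{(0)}$ has empty support handles the trivial case.

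For the resource bounds: every row is $\{0,1\}$- or $\{-1,1\}$-valued, hence the map is an $s\times n$ matrix over $\{-1,0,1\}$; each sketch coordinate is an integer of magnitude at most $\|x\|_1 \le mn$, costing $O(\log(mn))$ bits, for a total of $O(s\log(mn))$ bits plus a $\polylog$ seed for the randomness. The main obstacle---and the only part that is not routine---is the bookkeeping needed to keep $s = O(\log n\log(1/\delta))$: the naive count ``$O(\log n)$ levels $\times$ $O(\log n)$ index-encoding rows $\times$ $O(\log(1/\delta))$ repetitions'' overshoots by a $\log n$ factor, so one must either reuse the fingerprint randomness across levels or replace per-level $1$-sparse recovery by a single $O(\log(1/\delta))$-sparse recovery stage (so that only $O(1)$ subsampling levels are ``active''), implemented via binary-encoded bucket hashing to preserve the $\{-1,0,1\}$ property, and then verify that the false-positive and uniformity errors still sum to at most $\delta$. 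This optimization, together with the uniformity analysis under a bounded-independence hash, is exactly what~\cite{JST11} carries out, so for our purposes it suffices to invoke their result.
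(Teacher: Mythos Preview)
The paper provides no proof of this lemma: it is stated as a black-box citation of Theorem~2 in~\cite{JST11} and used as a primitive, with no argument given. Your proposal therefore does strictly more than the paper does here. Your sketch of the JST11 construction---geometric subsampling, a $1$-sparse recovery gadget with bit-encoding rows, a random-sign fingerprint for verification, and $O(\log(1/\delta))$ repetitions---is a faithful outline of the standard $\ell_0$-sampler, and you correctly flag the one nontrivial point (that the naive row count is $O(\log^2 n \log(1/\delta))$ and one needs the optimization in~\cite{JST11} to reach $O(\log n \log(1/\delta))$). Since both you and the paper ultimately defer to~\cite{JST11} for the tight bound and the $\{-1,0,1\}$ realization, your proposal is consistent with the paper's treatment, just more expository.
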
 

\subsection{Randomized Streaming Algorithms and Path-Independent Algorithms} \label{sec: streaming-defns}
Throughout this subsection, we will use the notation $\interval{a,b} := \{a, a+1,\ldots, b\}$ for integers $a\leq b\in\Z$.

\paragraph{Randomized Streaming Algorithm.} We follow the standard formalization of (randomized) streaming algorithms from~\cite{G08, LNW14} for dynamic streams. One (minor) difference between the definition here and~\cite{LNW14} is that we differentiate between the working tape and the random tape;~\cite{LNW14} evaluates the space complexity by adding the randomness complexity, and we keep them separate. 


\begin{definition}[Dynamic Stream]
A dynamic stream $\sigma$ for $\Z^n$ of length $u \in \N$ is a sequence of $u$ updates of the form $\textsc{Add}(i)$ or $\textsc{Remove}(i)$ for $i \in [n]$, defining an implicit frequency vector $x \in \Z^n$. The vector $x$ is initialized to zero, an $\textsc{Add}(i)$ update increments $x_i$ by 1, while a $\textsc{Remove}(i)$ update decrements $x_i$ by 1.
\end{definition}

\begin{definition}[Randomized Streaming Algorithm]\label{def:randomized-streaming-alg}
A randomized streaming algorithm $\calA$ is specified by a Turing machine with state space $S$ and two distinguished states $s, t \in S$ denoting the start and end of an update's processing. The machine has access to three tapes:
\begin{itemize}
    \item \textbf{Input Tape}: A read-only, one-way tape containing a dynamic stream $\sigma$.
    \item \textbf{Working Tape}: A two-way tape storing memory from $\calW \subset \{0,1\}^*$.
    \item \textbf{Random Tape}: A read-only, two-way tape containing a fixed string $\rho \in \{0,1\}^{r}$ of random bits. The parameter $r$ is the randomness complexity of $\calA$ (see Definition~\ref{def:randomness}). 
\end{itemize}
A configuration after processing an update is a tuple $(w, \rho)$, where $w \in \calW$ is the working tape and $\rho$ is the random tape.

\textbf{Output Function:} There is a function $\Dec \colon \calW \times \{0,1\}^r \to \{0,1\}$ that produces the output from the final configuration.
\end{definition}

\paragraph{Correctness, Space Complexity, and Randomness Complexity.}
We now define the correctness, the space complexity of randomized streaming algorithms (that which we will lower bound), and the randomness complexity of a streaming algorithm. 
The computation starts with the input tape initialized to a dynamic stream $\sigma$ of length $u$; the work tape starts empty; the random tape $\rho$ is filled with $r$ uniformly random bits; and the state $q$ is initialized to $s$.
The algorithm proceeds in $u$ rounds where in each round $h = 1, \dots, u$:
\begin{itemize}
    \item The machine begins in state $s$, reads the $h$-th update of $\sigma$.
    \item It updates $w$ and the state $q$, while $\rho$ remains fixed.
    \item When it reaches state $t$, the processing of the update is complete, and the next round begins.
\end{itemize}

Each update defines a transition from $(w, \rho)$ to $(w', \rho)$. Following~\cite{G08, LNW14}, we denote this operation by $\oplus$: for example, $w \oplus e_i = w'$ indicates the result of applying $\textsc{Add}(i)$ to $(w,\rho)$. After processing all updates, the algorithm outputs $\Dec(w, \rho)$. Let $\calA(\sigma)$ denote this output.

\begin{definition}[Correctness]
For a function $g \colon \Z^n \to \{0,1,*\}$, a randomized streaming algorithm $\calA$ computes $g$ with error probability at most $\delta$ if, for any stream $\sigma$ with frequency vector $x \in \Z^n$,
\[
\Pr_{\rho}[g(x) \in \{0,1\} \text{ and }\calA(\sigma) \neq g(x) ] \leq \delta.
\]
\end{definition}

\begin{definition}[Space Complexity]\label{def:space}
For $m \in \N$, the space complexity of a streaming algorithm $\calA$ on (nonnegative) frequency vectors $x \in \interval{0,m}^n$ is
\[
\calS^+(\calA, m) = \log |\calW^+(\calA, m)|,
\]
where $\calW^+(\calA, m)$ is the set of all possible final working tapes $w$ that may result from processing any stream $\sigma$ with final frequency vector $x \in \interval{0,m}^n$ and any random tape $\rho \in \{0,1\}^r$.\footnote{In Appendix~\ref{sec:stream-to-path-ind}, we also (analogously) consider $\calW(\calA, m)$ to be the set of all final working tapes $w$ that may result from streams $\sigma$ with final frequency vector $x \in \interval{-m,m}^n$ and any random tape $\rho \in \{0,1\}^r$. Then, $\calS(\calA, m) = \log |\calW(\calA, m)|$, and since $\calW^+(\calA, m) \subset \calW(\calA, m)$, we have $\calS^+(\calA, m) \leq \calS(\calA, m)$.}
\end{definition}

We will lower bound $\calS^+(\calA, 1)$, which captures the number of bits needed to represent the final working tape on any stream with final frequency vector in $\{0,1\}^n$. Note that this is not the maximum memory used during computation as intermediate frequencies may be very large even if the final vector is in $\{0,1\}^n$. 
For example, an algorithm using a counter to maintain $\sum_{i=1}^n x_i$ would use $O(\log n)$ bits in the final configuration when $x\in\{0,1\}^n$, even though intermediate states use space complexity which grows with the magnitude of $x_i$'s. In a randomized streaming algorithm meant to work on arbitrarily long streams, one should think of the algorithm as dynamically allocating and de-allocating memory in the working tape. 

\begin{definition}[Randomness Complexity]\label{def:randomness}
The randomness complexity $r = r(n, m)$ of a randomized streaming algorithm $\calA$ is the maximum number of random bits that the algorithm uses over all streams with final frequency vector in $\interval{-m, m}^n$. Equivalently, the random tape $\rho$ of $\calA$ is assumed to lie in $\{0,1\}^{r(n,m)}$. While $r$ can depend arbitrarily on $n$ and $m$, it will be important that $r$ does not depend on the length of the stream. 
\end{definition}

\paragraph{Path-Independent Algorithms.} An important notion will be that of a (deterministic) path-independent algorithm, which we define below. Then, we state one of the main lemmas of~\cite{LNW14}, which shows how, given a distribution $\calD$, one may obtain a deterministic path-independent algorithm from a randomized streaming algorithm.  

\begin{definition}[Path-Independent Algorithm]\label{def:path-ind}
For any $n,m \in \N$, a deterministic, path-independent streaming algorithm $\calB$ with bound $m$ and dimension $n$ is specified by a pair of functions $(\Enc,\Dec)$ along with a set $\calW$ of states and transitions.
\begin{itemize}
    \item \textbf{Encoding.} $\Enc \colon \interval{-m,m}^n \to \calW$, maps frequency vectors to states.
    \item \textbf{Decoding.} $\Dec \colon \calW \to \{0,1\}$, maps a state to a final output.
    \item \textbf{Transitions.} Let $w \in \calW$ for which $\Enc^{-1}(w)$ contains a vector in $\interval{-(m-1), m-1}^n$. There is a well-defined transition function $\oplus$ that specifies, for each of the $2n$ possible stream updates $\xi e_i$ (with $\xi \in \{-1,1\}$ and $i\in[n]$), the new state of the algorithm: \[w' = w\oplus \xi e_i.\] One can view $\calW$ as vertices in a directed graph, with the edge $(w,w')$ labeled by $\xi e_i$.
\end{itemize}
$\calB$ is called \emph{path-independent} if it satisfies the following property:
\begin{itemize}
\item \textbf{Path-Independence}. For all $x \in \interval{-(m-1),m-1}^n$, $i \in [n]$, and $\xi \in \{-1,1\}$, one has \[\Enc(x) \oplus \xi e_i = \Enc(x+\xi e_i).\] 
\end{itemize}

Two relevant properties of streaming algorithms are space complexity and correctness.
\begin{itemize}
\item \textbf{Space Complexity}. The space complexity of $\calB$, and the nonnegative space complexity of $\calB$ are specified functions $\calS(\calB, \cdot), \calS^+(\calB, \cdot) \colon \N \to \R$, given by:
\begin{align*}
\calS(\calB, \ell) &= \log\left|\left\{ \Enc(x) \in \calW : x \in \interval{-\ell, \ell}^n \right\} \right| \qquad \calS^+(\calB, \ell) = \log\left|\left\{ \Enc(x) \in \calW : x \in \interval{0, \ell}^n \right\} \right|.
\end{align*}
\item \textbf{Correctness}. We say that $\calB$ computes a partial function $g \colon \interval{-m,m}^n \to \{0,1,*\}$ over a distribution $\calD$ supported on $\interval{-m,m}^n$ with probability at least $1-\delta$, if
\[ \Prx_{\bx \sim \calD}\left[ g(x) \in \{ 0,1 \} ~\mathrm{and}~\Dec(\Enc(\bx)) \neq g(\bx) \right] \leq \delta \]
\end{itemize}
\end{definition}

The first result in~\cite{LNW14} (Theorem~5) is a (lossless) reduction from randomized streaming algorithms to path-independent streaming algorithms. The theorem is stated to emphasize a generalization of the results from~\cite{LNW14} that we will need; namely, that the bound on the space complexity $\calS(\calB, \ell) \leq \calS(\calA, \ell)$ and $\calS^+(\calB,\ell) \leq \calS^+(\calA, \ell)$ holds for all $\ell \leq m$. We include a proof in Appendix~\ref{sec:stream-to-path-ind} (following the arguments in~\cite{LNW14}, and expanding on their exposition).

\begin{restatable*}{theorem}{lnwthm}\label{thm:lnw}
For any $n \in \N$, let $\calA$ be any randomized streaming algorithm with randomness complexity $r \in \N$ that computes a function $g \colon \Z^n \to \{0,1, *\}$ with error at most $\delta$, and let $\calD$ be a finitely-supported distribution on $\Z^n$. Then, for any large enough $m \in \N$, there exists a deterministic path-independent algorithm $\calB = (\calW, \Enc, \Dec, \oplus)$ with bound $m$ such that:
\begin{itemize}
\item The space complexity satisfies $\calS(\calB, \ell) \leq \calS(\calA, \ell)$ and $\calS^{+}(\calB, \ell) \leq \calS^{+}(\calA, \ell)$ for all $\ell \leq m$. 
\item $\calB$ computes $g$ on $\calD$ with probability at least $1-8\delta$.
\end{itemize}
\end{restatable*}

\subsection{A Special Basis from a Path-Independent Algorithm}

Theorem~\ref{thm:lnw} allows one to define a distribution $\calD$ supported on frequency vectors in $\Z^n$ and, assuming a randomized streaming algorithm $\calA$, obtain a path-independent streaming algorithm $\calB$ for $\calD$ with bounded space complexity. The following lemma (also essentially appearing in~\cite{G08, LNW14}) will allow us to connect the space complexity of a path-independent algorithm $\calB$ with the sketching matrix we construct in Section~\ref{sec:stream-to-sketch}.

\begin{definition}[Zero-Set and the Spanning Set]\label{def:zero-and-spanning}
Let $\calB = (\calW, \Enc, \Dec, \oplus)$ be a deterministic path-independent algorithm with bound $m$ and dimension $n$. For any $\ell \leq m$, we define
\begin{itemize}
\item \textbf{Zero-Set}. The zero-set of $\Enc(\cdot)$ within $\interval{-\ell, \ell}^n$ is given by
\[ K(\ell) = \left\{ x \in \interval{-\ell, \ell}^n : \Enc(x) = \Enc(0) \right\}.\]
\item \textbf{Zero-Set Subspace}. The zero-set subspace $M(\ell) \subset \R^n$ is $M(\ell) = \Span(K(\ell))$, where the span is taken over $\R$.
\end{itemize}
We let $t(\ell) \in \N$ be the dimension of $M(\ell)$ and $b_1,\dots, b_{t(\ell)} \in K(\ell)$ be an arbitrary basis of $M(\ell)$. 
\end{definition}

In the language of~\cite{G08,LNW14}, for any $\ell \leq m$, the zero-set $K(\ell) \subset \interval{-\ell,\ell}^n$ is the intersection of the ``kernel'' of the encoding function $\Enc(\cdot)$ with the interval $\interval{- \ell, \ell}^n$. We choose to name this the zero-set (as opposed to referencing kernels or sub-modules) because Definition~\ref{def:path-ind} and the proof of Theorem~\ref{thm:lnw} only requires encoding be defined for vectors in $\interval{-m,m}^n$. The proof of Lemma~\ref{lem:basis-from-enc} appears in Appendix~\ref{sec:path-ind-to-matrix}.

\begin{lemma}\label{lem:basis-from-enc}
Let $\calB = (\calW, \Enc,\Dec, \oplus)$ be a deterministic path-independent algorithm. There exists a basis $b_1,\dots, b_n$ of $\R^n$ and an integer $t$ with $n - \calS^{+}(\calB, 1) \leq t \leq n$ such that:
\begin{itemize}
\item The first $t$ vectors $b_1,\dots, b_t \in K(1)$ span the zero-set subspace $M = M(1)$.
\item The remaining vectors $b_{t+1}, \dots, b_n$ span $M^{\perp}$, and are integer vectors with $\|b_{t+1}\|_{\infty}, \dots, \|b_{n}\|_{\infty} \leq t^{t/2}$. 
\end{itemize}
\end{lemma}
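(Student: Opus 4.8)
\textbf{The plan.} I want to produce a basis of $\R^n$ whose first $t$ vectors are integer vectors lying in the zero-set $K(1)$ and spanning $M = M(1)$, and whose remaining $n-t$ vectors form an integer basis of $M^\perp$ with controlled $\ell_\infty$-norm. The key structural fact I would lean on is that $K(1) \subseteq \{-1,0,1\}^n$, so every vector in $K(1)$ is a $\{-1,0,1\}$-vector; in particular $M$ is spanned by $\{-1,0,1\}$-vectors. The bound $n - \calS^+(\calB,1) \le t \le n$ will come from the fact that the encoding $\Enc$ restricted to $\{0,1\}^n$ can take at most $|\{\Enc(x) : x \in \{0,1\}^n\}| = 2^{\calS^+(\calB,1)}$ values, and that two inputs $x,x' \in \{0,1\}^n$ with $\Enc(x) = \Enc(x')$ have $x - x' \in K(1)$ by path-independence (translating by $x'$ sends $\Enc(x') = \Enc(0\cdot\text{shifted})$ appropriately). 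I would make this precise by a counting/pigeonhole argument: since $\{0,1\}^n$ has $2^n$ elements collapsing into at most $2^{\calS^+(\calB,1)}$ classes, the span of the pairwise differences (which all lie in $K(1)$) has dimension at least $n - \calS^+(\calB,1)$, giving $t \ge n - \calS^+(\calB,1)$.

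\textbf{Key steps, in order.} First, establish $K(1) \subseteq \{-1,0,1\}^n$ and the path-independence-based observation that differences of $\{0,1\}^n$-inputs with equal encodings lie in $K(1)$; conclude $t = \dim M \ge n - \calS^+(\calB,1)$. Second, choose an integer basis $b_1,\dots,b_t \in K(1)$ of $M$ — this is possible since $K(1)$ itself contains a spanning set of $\{-1,0,1\}$-vectors, so I can greedily extract a maximal linearly independent subset; these automatically satisfy $\|b_i\|_\infty \le 1 \le t^{t/2}$. Third, extend to an integer basis of $\R^n$ by appending integer vectors $b_{t+1},\dots,b_n$ spanning $M^\perp$ with small entries. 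For this last step I would take a basis matrix $B_M$ whose rows are $b_1,\dots,b_t$ (entries in $\{-1,0,1\}$), and find $M^\perp$ as the solution space of $B_M y = 0$; by Gaussian elimination / Cramer's rule, $M^\perp$ has an integer basis each of whose coordinates is (up to sign) a $t \times t$ minor of $B_M$ (or a ratio thereof that clears to an integer), so by Hadamard's inequality each entry is at most $\det$ of a $t\times t$ $\{-1,0,1\}$-matrix, which is $\le t^{t/2}$. Concretely: pick $t$ columns of $B_M$ forming an invertible $t\times t$ submatrix $C$; the vectors obtained by, for each free coordinate $j$, setting $y_j$ to $\det C$ and solving $C y_{\text{pivot}} = -\det C \cdot (\text{column } j)$ via Cramer's rule, are integer, lie in $M^\perp$, are linearly independent, and have $\|\cdot\|_\infty \le \max(|\det C|, \max_j |\det(\text{modified } C)|) \le t^{t/2}$ by Hadamard.

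\textbf{Main obstacle.} The genuinely delicate part is the third step: producing an integer basis of $M^\perp$ with the precise bound $t^{t/2}$ on the $\ell_\infty$-norm. The naive orthogonalization (Gram–Schmidt) does not keep things integral, and the complementary-basis construction via minors needs care to ensure the resulting vectors are (a) integral, (b) actually span all of $M^\perp$ and not a finite-index sublattice that fails to span over $\R$ — though for our purposes spanning over $\R$ suffices, so a full-rank integer set in $M^\perp$ is enough — and (c) bounded correctly. The Hadamard bound $|\det C| \le \prod_i \|(\text{row } i)\|_2 \le (\sqrt{t})^t = t^{t/2}$ for a $\{-1,0,1\}$ (indeed $\{-1,0,1\}$, but even arbitrary) $t \times t$ matrix with unit-bounded-squared-norm rows is exactly what delivers the stated bound, so the whole argument is arranged to route through that inequality. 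I would also double-check the edge case $t = n$ (where $M = \R^n$, $M^\perp = \{0\}$, and there are no vectors $b_{t+1},\dots,b_n$ to construct) so the statement is vacuously consistent there.
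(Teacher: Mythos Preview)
Your proposal is correct and matches the paper's argument essentially step for step: the bound $t \ge n - \calS^+(\calB,1)$ comes from counting fibers of $\Enc$ on $\{0,1\}^n$ (packaged in the paper as Lemmas~\ref{lem:enc-to-equiv}--\ref{lem:space-bound} with $\ell=1$), and your construction of $b_{t+1},\dots,b_n$ via Cramer's rule on a full-rank $t\times t$ submatrix of the $\{-1,0,1\}$ basis matrix, together with Hadamard's inequality to get the $t^{t/2}$ bound, is exactly what the paper does.
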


The benefit of path-independent algorithms will come from the following lemma, which will allow us to conclude that the encoding state of a frequency vector $x$ which is not in $K(1)$, but may be represented as a (bounded) integer combination of vectors in $K(1)$, has its encoding being $\Enc(0)$. This will be a crucial property used in the proof of Theorem~\ref{thm:lin-sketch-red}.

\begin{lemma}\label{lem:path-ind-prop}
    Let $\calB = (\calW, \Enc,\Dec,\oplus)$ be a deterministic path-independent algorithm of bound $m$ and dimension $n$. For any $x \in \Z^n$ which can be represented as
    \[ x = \sum_{i=1}^t \alpha_i b_i,\]
    with $\alpha \in \Z^n$ and $t |\alpha_i| \leq m-1$ for the basis $b_1, \dots, b_t$ of $M(1)$ from Lemma~\ref{lem:basis-from-enc}, $\Enc(x) = \Enc(0)$.
\end{lemma}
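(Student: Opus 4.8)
The plan is to move the algorithm from its initial state $\Enc(0)$ to $\Enc(x)$ by feeding it an explicit sequence of updates built out of single copies of the basis vectors $b_1,\dots,b_t$, and to show by induction that the algorithm sits in the state $\Enc(0)$ at every block boundary along this sequence. First I would dispose of the degenerate cases: if $t=0$ or $m=1$, the hypothesis $t|\alpha_i|\le m-1$ forces $\alpha=0$ and hence $x=0$, so we may assume $m\ge 2$. For $c\in\{-1,0,1\}^n$, write $\sigma_c$ for the sequence of $\|c\|_0$ single-coordinate updates that produces $c$ from the zero vector; the key structural feature I will lean on is that when $\sigma_c$ is applied starting from a vector $w$, every partial sum it visits agrees coordinate-by-coordinate with either $w$ or $w+c$, so it stays in any axis-aligned box containing both.

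Next I would record two facts about the path-independent algorithm $\calB$. (1) $K(1)$ is closed under negation: for $b\in K(1)\subseteq\{-1,0,1\}^n$, applying $\sigma_{-b}$ from the state $\Enc(b)=\Enc(0)$ gives, by repeated path-independence (valid since all vectors involved lie in $\{-1,0,1\}^n\subseteq\interval{-(m-1),m-1}^n$), both $\Enc(b)\oplus\sigma_{-b}=\Enc(b-b)=\Enc(0)$ and $\Enc(0)\oplus\sigma_{-b}=\Enc(-b)$; since $\Enc(b)=\Enc(0)$ these agree, so $-b\in K(1)$. (2) A one-step lemma: if $\Enc(w)=\Enc(0)$, $c\in K(1)$, and $w,w+c\in\interval{-(m-1),m-1}^n$, then $\Enc(w+c)=\Enc(0)$. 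Here I apply $\sigma_c$ from $w$: every partial sum is sandwiched between $w$ and $w+c$ and so stays in the box, so path-independence yields $\Enc(w+c)=\Enc(w)\oplus\sigma_c=\Enc(0)\oplus\sigma_c=\Enc(0+c)=\Enc(c)=\Enc(0)$ (the last two equalities using $c\in K(1)$ and, again, that the partial sums of $\sigma_c$ from $0$ stay in $\{-1,0,1\}^n$).

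With those in place, write $x=\sum_{i=1}^t\alpha_i b_i$, set $c_i:=\mathrm{sgn}(\alpha_i)\,b_i\in K(1)$ (using (1)), and form the walk $0=w_0,w_1,\dots,w_N=x$ that adds $c_1$ a total of $|\alpha_1|$ times, then $c_2$ a total of $|\alpha_2|$ times, and so on, so that $N=\sum_i|\alpha_i|$ and each step adds one $c_i$. Every $w_\ell$ has the form $\sum_{i<i_0}\alpha_i b_i+k\,c_{i_0}$ with $0\le k\le|\alpha_{i_0}|$, so since $\|b_i\|_\infty\le 1$ (as $b_i\in K(1)\subseteq\interval{-1,1}^n$),
\[\|w_\ell\|_\infty\;\le\;\sum_{i=1}^t|\alpha_i|\;\le\;t\max_i|\alpha_i|\;\le\;m-1,\]
which is exactly where the hypothesis is consumed. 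Thus every $w_\ell$ lies in $\interval{-(m-1),m-1}^n$, and $w_{\ell+1}=w_\ell+c$ for some $c\in\{c_1,\dots,c_t\}\subseteq K(1)$; inducting on $\ell$ with the one-step lemma (base case $\Enc(w_0)=\Enc(0)$) gives $\Enc(w_\ell)=\Enc(0)$ for all $\ell$, and $\ell=N$ finishes.

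The step I expect to be fussiest is not any single computation but the uniform range check: path-independence is asserted only for vectors in $\interval{-(m-1),m-1}^n$, so I must verify that every configuration touched — both the block boundaries $w_\ell$ and the single-coordinate substeps inside a $+c_i$ block — stays in that box. The substeps come for free from the "coordinate-wise sandwiched between $w$ and $w+c$" observation, and the boundaries are handled by the displayed inequality, which is precisely why the lemma requires $t|\alpha_i|\le m-1$ rather than merely $|\alpha_i|\le m-1$.
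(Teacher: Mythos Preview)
Your proof is correct and follows essentially the same approach as the paper's: both decompose $x$ into a walk that adds $\pm b_i$ one copy at a time, use path-independence along the walk, and exploit the bound $\sum_i|\alpha_i|\le t\max_i|\alpha_i|\le m-1$ to keep every partial sum in $\interval{-(m-1),m-1}^n$. Your treatment is in fact slightly more careful than the paper's (you explicitly isolate the ``one-step lemma'' and the closure of $K(1)$ under negation, and you verify the inner single-coordinate substeps stay in the box), but the underlying argument is the same.
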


\begin{proof}
    We note that we may decompose the vector $x$ as a sum of vectors from the standard basis by letting
    \[ x = \sum_{i=1}^t \sum_{\ell_1=1}^{|\alpha_i|} \sum_{j=1}^n \sum_{\ell_2=1}^{|b_{ij}|} \sign(\alpha_i b_{ij}) \cdot e_j,\]
    where $b_{ij}$ is the $j$-th coordinate of $b_i$, and $e_j$ is the standard basis vector. Thus, using path-independence of $\calB$, as long as $m-1 \geq |\alpha_i| \cdot t$, all partial sums above lie in $\interval{-(m-1), m-1}^n$, and we may conclude
    \[ \Enc(0) \oplus \bigoplus_{i=1}^t \bigoplus_{\ell_1=1}^{|\alpha_i|}\left( \bigoplus_{j=1}^n \bigoplus_{\ell_2=1}^{|b_{ij}|} \sign(\alpha_i b_{ij}) \cdot e_j \right) = \Enc(x).\]
    However, for each fixed $i$ and $\ell_1$, the updates $\sign(\alpha_i b_{ij}) \cdot e_j$ for each $j\in[n]$ (repeated $|b_{ij}|$ times) results in the frequency vector $\sign(\alpha_i) \cdot b_{i}$. We claim that both $\Enc(b_i) = \Enc(-b_i) = \Enc(0)$, so that we may inductively conclude $\Enc(x) = \Enc(0)$. The claim that $\Enc(b_i) = \Enc(0)$ is immediate from $b_i \in K(1)$. The fact that $\Enc(-b_i) = \Enc(0)$ follows from writing:
    \begin{align*} 
    \Enc(0) &= \Enc(0) \oplus \left(\bigoplus_{j=1}^n \bigoplus_{\ell=1}^{|b_{ij}|} \sign(b_{ij}) e_j \right) \oplus \left(\bigoplus_{j=1}^n \bigoplus_{\ell=1}^{|b_{ij}|} -\sign(b_{ij}) e_j \right) \\
            &= \Enc(b_i) \oplus \left(\bigoplus_{j=1}^n \bigoplus_{\ell=1}^{|b_{ij}|} -\sign(b_{ij}) e_j \right) = \Enc(0) \oplus \left(\bigoplus_{j=1}^n \bigoplus_{\ell=1}^{|b_{ij}|} -\sign(b_{ij}) e_j \right) = \Enc(-b_i).
    \end{align*}
\end{proof}

\section{Streaming Algorithms to Sketching Matrices}\label{sec:stream-to-sketch}

Our first main theorem shows how to extract a low-complexity sketching matrix from a randomized streaming algorithm, for the class of {\em scale-invariant} promise problems (i.e., problems whose answer does not change if one rescales the frequency vector).

\begin{definition}\label{def:scale-inv}
A problem specified by a partial function $g \colon \Z^n \to \{0,1, *\}$ is scale invariant if $g(a x) = g(x)$ for all $x \in \Z^n$ and $a\in \Z_{+}$. 
 \end{definition}

\linearsketchthm

The matrix may be used as a linear sketch for solving instances sampled from $\calD$ with high probability.
We note that in a crucial contrast to a similar transformation given in~\cite{LNW14}, the complexity of the matrix $T$ we find (i.e., the dimensionality and granularity of the entries of $T$) does not depend on the support of the frequency vectors in $\calD$. The remainder of this section is devoted to the proof of above theorem. 

\ignore{We start with some technical preliminaries.

\subsection{Preliminaries on Streaming Algorithms}
\enote{this section may be able to be replaced with appropriate pointers to the appendix.}
The proof of our theorem will crucially rely on an argument of~\cite{LNW14}, which shows how to transform a randomized streaming algorithm to a deterministic path-independent algorithm for any fixed distribution. We present a self-contained proof of this statement in Appendix~\ref{sec:stream-to-path-ind} and Appendix~\ref{sec:path-ind-to-matrix}. We present the relevant definitions and results here, deferring all proofs to the Appendix.

\begin{definition}[Path-Independent, abridged version of Definition~\ref{def:path-ind}]\label{def:path-ind-abridged}
A (deterministic) streaming algorithm is \emph{path-independent} if the final working tape depends only on the frequency vector $x$, not on the specific stream $\sigma$. That is, there exists:
\begin{itemize}
    \item \textbf{Encoding:} $\Enc \colon \Z^n \to \calW$, mapping frequency vectors to final work tapes.
    \item \textbf{Decoding:} $\Dec \colon \calW \to \{0,1\}$, producing the final output, given a work tape configuration.
\end{itemize}
Moreover, for all $x \in \Z^n$, $i \in [n]$, and $w \in \{-1,1\}$, we have $\Enc(x) \oplus we_i = \Enc(x + we_i)$.
\end{definition}

\begin{theorem}[Abridged version of Theorem~\ref{thm:lnw}]\label{thm:lnw-preview}
Let $\calA$ be any randomized streaming algorithm with randomness complexity $r \in \N$ that computes a function $g \colon \Z^n \to \{0,1, *\}$ with error at most $\delta$, and let $\calD$ be a finitely-supported distribution on $\Z^n$. Then, for any large enough $m \in \N$, there exists a deterministic path-independent algorithm $\calB = (\calW, \Enc, \Dec, \oplus)$ with bound $m$ such that:
\begin{itemize}
\item The space complexity satisfies $\calS^{+}(\calB, \ell) \leq \calS^{+}(\calA, \ell)$ for all $\ell \leq m$. 
\item $\calB$ computes $g$ on $\calD$ with probability at least $1-8\delta$.
\end{itemize}
\end{theorem}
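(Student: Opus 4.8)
The plan is to carry out the two-step reduction of~\cite{G08,LNW14}: first replace $\calA$ by a deterministic algorithm, and then ``warm it up'' so that its memory state comes to depend only on the final frequency vector. The one twist needed to keep the error accounting non-circular is that the warm-up stream must be chosen \emph{independently of the random tape}. Throughout, write $\calA_\rho(\sigma)\in\calW$ for the final working tape of $\calA$ on a stream $\sigma$ when the random tape is fixed to $\rho$, so that the output of $\calA$ on $\sigma$ is $\Dec(\calA_\rho(\sigma),\rho)$.

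\emph{Fixing the randomness.} For $x\in\Z^n$ let $\sigma_x$ be the canonical stream realizing $x$ coordinate by coordinate, so all of its intermediate frequencies lie in $\interval{-\|x\|_\infty,\|x\|_\infty}^n$; let $m$ be any integer exceeding $2+\max_{\bx\in\supp(\calD)}\|\bx\|_\infty$; and fix a warm-up stream $\tau$ of the form $L^{T}$, where $L$ is a sufficiently rich idle loop (final frequency $0$, intermediate frequencies in $\interval{-1,1}^n$) and $T$ is a sufficiently large integer---crucially, $L$ and $T$ depend only on $\calA,\calD,m$ and \emph{not} on the random tape. For each fixing $\rho$ of the random tape define $\Enc_\rho\colon\interval{-m,m}^n\to\calW$ by $\Enc_\rho(x):=\calA_\rho(\tau\,\sigma_x)$, and let $\calB$ decode by $\Dec_\calB(\cdot):=\Dec(\cdot,\rho)$. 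Since $\tau$ is fixed, $\tau\,\sigma_{\bx}$ is a fixed stream with final frequency $\bx$, so the per-stream guarantee of $\calA$ gives $\Pr_\rho[\Dec(\calA_\rho(\tau\sigma_{\bx}),\rho)\neq g(\bx)]\le\delta$ whenever $g(\bx)\in\{0,1\}$. Averaging over $\bx\sim\calD$ then produces a fixing $\rho^*$ for which $\Dec_\calB\circ\Enc_{\rho^*}$ agrees with $g$ on $\calD$ except with probability $O(\delta)$ (tracking constants through the argument of~\cite{LNW14} gives $8\delta$). Henceforth write $\Enc:=\Enc_{\rho^*}$ and $\calA':=\calA_{\rho^*}$.

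\emph{Path-independence and the space bound.} It remains to see that, for suitable $L,T$, the map $\Enc$ admits a transition function $\oplus$ with $\Enc(x)\oplus\xi e_i=\Enc(x+\xi e_i)$ on $\interval{-(m-1),m-1}^n$. Let $W$ be the (finite) set of states of $\calA'$ reachable by $\interval{-m,m}$-bounded streams. A finiteness argument---whose natural form is to track the quantity $N(w)$, defined as the number of ordered pairs $(\bar w,\bar w')\in W\times W$ such that some pair of $\interval{-m,m}$-bounded streams with equal final frequency drives $w$ to $\bar w$ and to $\bar w'$ respectively (this quantity is non-increasing under prepending $\interval{-m,m}$-bounded frequency-$0$ streams)---shows that $L,T$ can be chosen so that $w^*:=\calA'(\tau)$ is \emph{saturated}: from $w^*$, any two $\interval{-m,m}$-bounded streams with the same final frequency reach the same state. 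A ``bubble-sort'' of the update sequence (transposing adjacent updates and cancelling $\textsc{Add}(i)\textsc{Remove}(i)$ pairs, all intermediate frequencies staying in $\interval{-m,m}^n$ by the choice of $m$) then gives that $\Enc(x)=w^*\oplus\sigma_x$ depends only on $x$ and that $\Enc(x)\oplus\xi e_i:=\calA'(\tau\,\sigma_x\,(\xi e_i))=\Enc(x+\xi e_i)$, so $\calB$ is path-independent. For $x\in\interval{0,\ell}^n$ the stream $\tau\sigma_x$ has final frequency $x\in\interval{0,\ell}^n$, whence $\Enc(x)\in\calW^+(\calA,\ell)$ and $\calS^+(\calB,\ell)\le\calS^+(\calA,\ell)$ for all $\ell\le m$ (and $\calS(\calB,\ell)\le\calS(\calA,\ell)$ identically); and since $\Dec_\calB(\Enc(x))=\Dec(\calA'(\tau\sigma_x),\rho^*)$, the correctness established above carries over to $\calB$, and $\calB$ has bound $m$.

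\emph{Main obstacle.} The technical heart is the saturation step---showing that \emph{finitely many} idle loops force the state to forget the processing order, i.e. that the potential $N$ can be driven to a state from which equal-frequency bounded streams are indistinguishable, and that saturation then yields path-independence by rearrangement without ever leaving $\interval{-m,m}^n$. This is delicate because the update transitions of a streaming algorithm generate only a finite \emph{monoid}, not a group (``add then remove $i$'' need not be invertible), so a one-line fixed-point argument fails and one must reason about the recurrent component of the state graph; this is precisely where we follow and expand on~\cite{G08,LNW14}. A secondary but essential point, which we handle by insisting that $\tau$ be randomness-independent, is that otherwise fixing $\rho^*$ to optimize correctness on $\calD$ would depend on a warm-up that itself depends on $\rho^*$---a circularity.
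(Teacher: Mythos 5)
There is a genuine gap at the ``saturation'' step, and it is not a technicality: the claim that some warm-up stream $\tau$ drives $\calA'$ into a state $w^*$ from which \emph{any} two bounded streams with equal final frequency reach the \emph{same working tape} is false in general. Consider an algorithm that, besides whatever it needs to compute $g$, also stores in its working tape the number of updates processed so far modulo $3$; this costs two bits, so all space complexities remain finite. Two streams with the same frequency vector can have lengths differing by $2$ (append an $\textsc{Add}(i)\,\textsc{Remove}(i)$ pair), so from \emph{every} state they lead to distinct working tapes, and no warm-up --- deterministic or not, $\rho$-dependent or not --- can create a saturated state. Concretely, your transition $\Enc(x)\oplus(-e_i)=\calA'(\tau\,\sigma_x\,(-e_i))$ has stream length $|\tau|+\|x\|_1+1$ while $\Enc(x-e_i)=\calA'(\tau\,\sigma_{x-e_i})$ has length $|\tau|+\|x\|_1-1$, so path-independence fails for this algorithm under your construction; the ``bubble-sort'' rearrangement preserves length but the needed cancellations do not, and the monotone potential $N(w)$ can never reach the diagonal. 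This is exactly why the paper does not take $\Enc(x)$ to be a working tape at all: it defines the zero-frequency transition graph $G(\rho)$, passes to strongly connected components, and takes $\Enc(x;\rho,v)$ to be the \emph{unique terminal component} reachable after processing $x$ from a terminal component $v$ (Lemma~\ref{lem:relate}); in the mod-$3$ example the three counter values are merged by zero-frequency cycles, which is the quotient your single-tape encoding cannot perform. The space bound then comes from an injective choice of representatives per component, not from $\Enc(x)\in\calW^+(\calA,\ell)$ itself.

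Once the state is an equivalence class rather than a tape, your correctness argument also no longer goes through as stated: the class collapses tapes on which $\Dec(\cdot,\rho^*)$ may disagree, so you cannot decode by applying $\calA$'s output map to ``the'' tape and invoke the per-stream guarantee on the fixed stream $\tau\,\sigma_x$. This is the part of the reduction you have skipped and that constitutes most of the paper's proof: it pads the stream with \emph{random} zero-frequency streams (one distribution $\calS$ to land in a terminal component, one distribution $\calM$ to mix a lazy walk inside it to its stationary distribution $\xi(\rho;u)$), proves a simulation lemma showing the resulting tape distribution is within $O(\delta)$ of sampling $\xi(\rho;\Enc(x;\rho,v))$, decodes by a majority vote under $\xi(\rho;u)$, and only then fixes $\rho$ and the initial terminal component $v$ by averaging (this also dissolves the circularity you were worried about, since $\calS$ and $\calM$ are built once for all $\rho$); the constant $8$ arises from the majority vote combined with the simulation error. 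Your averaging-to-fix-$\rho^*$ step is fine on its own, but the construction it is applied to is not path-independent, so the proposal does not establish the theorem.
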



\paragraph{Kernel of a Path-Independent Algorithm.}
Following~\cite{G08, LNW14}, the space complexity of a path-independent algorithm is governed by the kernel of the encoding map. Let $\calB = (\Enc, \Dec)$ be a deterministic path-independent algorithm. Then we define its kernel as:
\[
K = \{x \in \Z^n : \Enc(x) = \Enc(0)\},
\]
which forms a submodule of $\Z^n$ (namely, for any vectors $x, y \in K$ and any scalars $a, b \in \Z_{\geq 0}$ have $a x + by \in K$). For any $m \in \N$, it is shown in Lemma 4 of~\cite{G08} that
\[
\dim(\Span(K \cap \interval{0,m}^n)) \geq n - \frac{\calS(\calB, m)}{\log_2(m+1)}.
\]

In particular, if the space $\calS(\calB, m)$ of the streaming algorithm is {\em small}, the kernel has a {\em large} dimension.

\begin{lemma}[Same as Lemma~\ref{lem:basis-from-enc}]\label{lem:basis-from-enc-preview}
Let $\calB = (\calW, \Enc,\Dec, \oplus)$ be a deterministic and path-independent algorithm. There exists a basis $b_1,\dots, b_n$ of $\R^n$ and an integer $t$ with $n - \calS^{+}(\calB, 1) \leq t \leq n$ such that:
\begin{itemize}
\item The first $t$ vectors $b_1,\dots, b_t$ span the zero-set subspace $M = M(1)$ with $\ell=1$.
\item The remaining vectors $b_{t+1}, \dots, b_n$ span $M^{\perp}$, and are integer vectors with $\|b_{t+1}\|_{\infty}, \dots, \|b_{n}\|_{\infty} \leq t^{t/2}$. 
\end{itemize}
\end{lemma}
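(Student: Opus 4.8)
The plan is to produce the basis in two stages: first extract a basis of $M = M(1)$ from its own spanning set $K(1)$, then complete it with a bounded integer basis of $M^\perp$; the only genuinely substantive point is the dimension estimate $t \geq n - \calS^{+}(\calB, 1)$. For the first stage, since $M(1) = \Span(K(1))$ by definition, I would simply pick $b_1, \dots, b_t \in K(1)$ to be a basis of $M(1)$, where $t := \dim M(1)$. These automatically lie in $\interval{-1,1}^n$, so they are integer vectors with $\|b_i\|_\infty \leq 1 \leq t^{t/2}$, and $t \leq n$ is immediate; it remains to lower bound $t$ and to construct $b_{t+1}, \dots, b_n$.

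For the lower bound on $t$, the key is to use path-independence to convert a collision of $\Enc$ on $\{0,1\}^n$ into an algebraic statement. Concretely, if $x, y \in \{0,1\}^n$ satisfy $\Enc(x) = \Enc(y)$, then applying to both $x$ and $y$ the update sequence that deletes the support of $y$ one coordinate at a time keeps all intermediate frequency vectors inside $\interval{-1,1}^n$ (hence in the valid range, as long as the bound satisfies $m \geq 2$), so path-independence gives $\Enc(x - y) = \Enc(x) \oplus (\text{deletions}) = \Enc(y) \oplus (\text{deletions}) = \Enc(0)$, i.e.\ $x - y \in K(1) \subseteq M(1)$. Contrapositively, $\Enc$ is injective on any subset of $\{0,1\}^n$ whose pairwise differences avoid $M(1)$. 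Since $\dim M(1) = t$, there is a coordinate set $J \subseteq [n]$ of size $n - t$ with $M(1) \cap \R^J = \{0\}$ --- take $J$ to be the complement of a set of $t$ pivot columns of the matrix with rows $b_1, \dots, b_t$ --- and then $\Enc$ is injective on the subcube $\{0,1\}^J$. Hence $2^{n-t} \leq |\{\Enc(x) : x \in \{0,1\}^n\}| = 2^{\calS^{+}(\calB, 1)}$, which gives $t \geq n - \calS^{+}(\calB, 1)$.

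To complete the basis, I would view $b_1, \dots, b_t$ as the rows of a $t \times n$ integer matrix $B$ with entries in $\{-1,0,1\}$, so that $M = \Row(B)$ and $M^\perp = \ker B$ has dimension $n - t$. Fixing a set $I$ of $t$ columns with $\det(B_I) \neq 0$, for each $j \in [n] \setminus I$ I would apply Cramer's rule to the $t \times (t+1)$ column-submatrix of $B$ indexed by $I \cup \{j\}$ to obtain an integer vector $b^{(j)} \in \ker B$ supported on $I \cup \{j\}$ whose coordinates are signed $t \times t$ minors of $B$; Hadamard's inequality bounds each such minor, and hence $\|b^{(j)}\|_\infty$, by $t^{t/2}$ (columns of $B$ have $\ell_2$-norm at most $\sqrt{t}$). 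Since $b^{(j)}_j = \pm \det(B_I) \neq 0$ while $b^{(j)}_{j'} = 0$ for all $j' \notin I \cup \{j\}$, the $n - t$ vectors $\{b^{(j)} : j \notin I\}$ are linearly independent and thus form a basis of $M^\perp$; taking $b_{t+1}, \dots, b_n$ to be these vectors yields a basis of $\R^n = M \oplus M^\perp$ with all the stated properties.

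The step I expect to be the main obstacle is the middle one: invoking path-independence exactly where it is needed to pass from ``$\Enc$ identifies $x$ and $y$'' to ``$x - y \in M(1)$'', together with the (elementary but essential) observation that a $t$-dimensional subspace has a complementary coordinate subspace, which is what lets a full $(n-t)$-dimensional Boolean cube inject into the set of reachable states. The entry bound on $b_{t+1}, \dots, b_n$ is then a routine Cramer's-rule-plus-Hadamard computation.
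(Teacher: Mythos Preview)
Your proposal is correct and follows essentially the same route as the paper: the Cramer's-rule-plus-Hadamard construction of $b_{t+1},\dots,b_n$ is identical to the paper's, and your collision-to-kernel argument for the bound $t \geq n - \calS^{+}(\calB,1)$ is the $\ell=1$ specialization of the paper's equivalence-class machinery (Lemmas~\ref{lem:enc-to-equiv}--\ref{lem:space-bound}). If anything, your direct argument is slightly cleaner here: because $x,y\in\{0,1\}^n$ forces $x-y\in\interval{-1,1}^n$, you land in $K(1)$ rather than the paper's $K(2\ell)$, which lets you bound $t(1)$ itself without the extra slack.
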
}

\subsection{Proof of Theorem~\ref{thm:lin-sketch-red}} \label{sec:proof-of-ln-sketch}

We will now prove Theorem~\ref{thm:lin-sketch-red} by showing that any streaming algorithm for computing a scale-invariant function can be simulated by a linear sketch of comparable dimension that satisfies the guarantees of the theorem. At a high-level, our proof proceeds by first lifting the given input distribution via random scaling and then applying Theorem~\ref{thm:lnw} to obtain a path-independent algorithm. We then construct a sketching matrix using the basis obtained from Lemma~\ref{lem:basis-from-enc}, and this ensures that the kernel of the matrix contains the zero-set (Definition~\ref{def:zero-and-spanning}). The main technical step is defining a decoder that operates on the image of the sketch and essentially maintains the correctness guarantee of the streaming algorithm. Towards this end, we define a suitable majority-based decoder, and utilize scale-invariance and a coupling argument to prove that the error probability of our sketch is similar to that of the original streaming algorithm.

We start by defining a randomized distribution over scalings of frequency vectors, which allows us to leverage scale-invariance in the proof. Let $\sfR, \sfL \in \N$ be parameters (which should be read as very large numbers) to be set later. Let $P \subset \N$ denote the set of all primes less than $\sfR$. Define the distribution $\calZ$ over $\Z$ as follows:
\begin{itemize}
    \item For each $p \in P$, sample $\boldeta(p) \sim \{0, \dots, \sfL\}$ independently;
    \item Define $\bZ = \prod_{p \in P} p^{\boldeta(p)}$.
\end{itemize}
Let $\calD'$ be the distribution over $\Z^n$ obtained by sampling $\bx \sim \calD$ and $\bZ \sim \calZ$, and setting $\bz = \bZ \cdot \bx$. Furthermore, up to an (negligible) additional error probability, we may assume that $\calD$ is finitely supported and let $m_0$ such that $\supp(\calD) \subset \interval{-m_0, m_0}^n$. In the case $\supp(\calD)$ is unbounded, we set $m_0$ so as to ensure a $(1-\delta)$-fraction of $\calD$ lies in $\interval{-m_0,m_0}^n$ and incur an additional $\delta$ error for cases when $\bx \sim \calD$ does not lie in $\interval{-m_0,m_0}^n$.

Apply Theorem~\ref{thm:lnw} to the scale-invariant function $g$ and the distribution $\calD'$, for any large enough bound $m$. We emphasize that $m$ should be treated as a large number which we specify later; the fact that Theorem~\ref{thm:lnw} was proven for any large enough $m$ means it is helpful to consider this as a yet-unset parameter which can be set to be arbitrarily large. This yields a deterministic, path-independent algorithm $\calB = (\calW, \Enc, \Dec, \oplus)$ with a large enough bound $m$ and dimension $n$ such that:
\begin{align}
\Pr_{\bz \sim \calD'}[\Dec(\Enc(\bz)) = g(\bz)] \geq 1 - 8\delta, \label{eq:guarantee} 
\end{align} and $\calS^+(\calB, \ell) \leq \calS^{+}(\calA, \ell)$ for all $\ell \leq m$, and in particular, $\calS^+(\calB, 1) \leq \calS^+(\calA, 1)$.

By Lemma~\ref{lem:basis-from-enc}, let $b_1,\dots,b_n$ be a basis of $\R^n$ with the first $t \geq n - \calS^+(\calB,1)$ vectors spanning $M = M(1)$ from $K(1) \subset \interval{-1,1}^n$, and the remaining vectors spanning $M^\perp$ (Definition~\ref{def:zero-and-spanning}). Define $T \in \Z^{s \times n}$ to be the matrix whose rows are the last $s = n-t \leq \calS^+(\calB,1)$ vectors $b_{t+1}, \dots, b_n$. Then:
\begin{itemize}
    \item $T$ is an integer matrix with $\|T\|_{\infty} \leq O(n^{n/2})$, since $t \leq n$;
    \item The kernel of $T$ is $M$.
\end{itemize}

We now define the decoding function $\Dec' \colon \R^s \to \{0,1\}$ to satisfy Equation~\eqref{eq:correctness}. It suffices to define $\Dec'$ on the image of $T$ restricted to $\supp(\calD)$, since correctness is only required over $\calD$. For any $y$ in the image of $T$, define:
\[
\Dec'(y) = 
\begin{cases}
1 & \text{if } \mathbb{E}_{\bx \sim \calD}\left[ \Pr_{\bZ \sim \calZ}[\Dec(\Enc(\bZ \bx)) = 1] \mid T\bx = y \right] > \mathbb{E}_{\bx \sim \calD}\left[ \Pr_{\bZ \sim \calZ}[\Dec(\Enc(\bZ \bx)) = 0] \mid T\bx = y \right], \\
0 & \text{otherwise}.
\end{cases}
\]
In other words, $\Dec'(y)$ returns the majority outcome of $\Dec(\Enc(\bZ \bx))$ conditioned on $T \bx = y$. Let $\gamma$ denote the probability over $\bx \sim \calD$ that $\Dec'(T\bx) \neq g(\bx)$. Our goal now is to bound $\gamma$ in terms of $\delta$.
Since $\Dec'$ is a function of $T\bx$, we can equivalently express $\gamma$ by conditioning on the value of $y = T\bx$. Let $\calY$ be the distribution over $\Z^s$ induced by $\calD$ obtained by setting $y = T\bx$ for $\bx \sim \calD$. Then we may write:
\[
\gamma = \mathbb{E}_{y \sim \calY} \left[
\Pr_{\bx \sim \calD \mid T\bx = y}[\Dec'(\by) \neq g(\bx)]
\right],
\]
where the key will be to switch the order in how one samples. Namely, the above sampling experiment should be read as: sample $y$ in the image of $T$ on inputs drawn from $\calD$, and then, draw an input $\bx\sim \calD$ conditioned on it resulting in the output $y = T \bx$.

To analyze $\gamma$ in terms of the error probability of the deterministic path-independent algorithm $\calB = (\calW,\Enc, \Dec,\oplus)$, we define a coupling distribution $\calC(y)$ over pairs $((\bZ_1, \bx_1), (\bZ_2, \bx_2)) \in (\Z \times \Z^n)^2$ such that:
\begin{itemize}
    \item Both $\bx_1$ and $\bx_2$ satisfy $T\bx_1 = T\bx_2 = y$;
    \item For each $k \in \{1,2\}$, the marginal distribution of $(\bZ_k, \bx_k)$ is given by $\bx_k \sim \calD$ and $\bZ_k \sim \calZ$, conditioned on $T\bx_k = y$;
    \item For at least one choice of $k \in \{ 1,2 \}$, we have $\Dec(\Enc(\bZ_k \bx_k)) = \Dec'(y)$.
\end{itemize}
We emphasize that the choice of letting $\Dec'(y)$ be the ``majority rule'' is precisely so that such a coupling distribution $\calC(y)$ is guaranteed to exist. A distribution $\calC(y)$ of this form could be constructed in the following way: we consider all pairs $(Z, x)$ where $x \in \supp(\calD)$ and $Z \in \supp(\calZ)$ with $Tx = y$, and we let $\pi$ be any ordering of these pairs where pairs with $\Dec(\Enc(Z \cdot x)) = \Dec'(y)$ come before those with $\Dec(\Enc(Z \cdot x)) \neq \Dec'(y)$; a draw of the pair $(\Z \times \Z^n)^2$ is then given by sampling a real number $r \in [0,1]$ then taking the pairs corresponding to $r$ from the CDF of $\pi$ and the reverse order of $\pi$.

Using this coupling, and noting that the marginal distribution of $\bx_1$ under $\calC(y)$ matches the conditional distribution $\calD \mid T\bx = y$, we can equivalently write:
\[
\gamma = \mathbb{E}_{y \sim \calY}
\left[
\Pr_{((\bZ_1, \bx_1), (\bZ_2, \bx_2)) \sim \calC(y)}[\Dec'(y) \neq g(\bx_1)]
\right] = \mathbb{E}_{y \sim \calY}
\left[
\Pr_{((\bZ_1, \bx_1), (\bZ_2, \bx_2)) \sim \calC(y)}[\Dec'(y) \neq g(\bZ_1 \bx_1)]
\right];
\]

where the second equality uses the fact that $g$ is scale-invariant. Now whenever $((\bZ_1, \bx_1), (\bZ_2, \bx_2)) \sim \calC(\by)$ has $\Dec'(\by) \neq g(\bZ_1\bx_1)$, at least one of the following cases must occur:
\begin{itemize}
\item \textbf{Case} (a): $\Dec(\Enc(\bZ_1 \bx_1)) = \Dec'(\by)$, which means $\Dec(\Enc(\bZ_1 \bx_1)) \neq g(\bZ_1 \bx_1)$.
\item \textbf{Case} (b): $\Dec(\Enc(\bZ_1 \bx_1)) = g(\bZ_1 \bx_1) \neq \Dec'(\by)$ and $\Dec'(\by) \neq g(\bZ_2 \bx_2)$. From the second property of $\calC(\by)$, $\Dec(\Enc(\bZ_2 \bx_2)) = \Dec'(\by)$, so $\Dec(\Enc(\bZ_2 \bx_2)) \neq g(\bZ_2 \bx_2)$.
\item \textbf{Case} (c): $\Dec(\Enc(\bZ_1 \bx_1)) = g(\bZ_1 \bx_1) \neq \Dec'(\by)$, $\Dec(\Enc(\bZ_2 \bx_2))= \Dec'(\by) = g(\bZ_2 \bx_2)$, and this means that $g(\bZ_1 \bx_1) \neq g(\bZ_2 \bx_2)$.
\end{itemize}
Thus, at least one of the three cases occurs with probability at least $\gamma/3$. 

\paragraph{(Easier) Cases (a) and (b).} In both cases, we observe that one of $\bZ_1 \bx_1$ or $\bZ_2 \bx_2$ is a failure point for the deterministic algorithm $\calB = (\calW, \Enc, \Dec,\oplus)$ on an input drawn from $\calD'$. In particular, if case (a) occurs with probability $\ge \gamma/3$, we have:
\[
\frac{\gamma}{3} \leq \Pr_{\bz \sim \calD'}[\Dec(\Enc(\bz)) \neq g(\bz)] \leq 8\delta,
\]
by the guarantee of Theorem~\ref{thm:lnw}.

Similarly, suppose that case (b) occurs with probability $\geq \gamma/3$. We again find that $\bz = \bZ_2 \bx_2 \sim \calD'$ is a point where $\calB$ errs, so the same bound applies:
\[
\frac{\gamma}{3} \leq \Pr_{\bz \sim \calD'}[\Dec(\Enc(\bz)) \neq g(\bz)] \leq 8\delta.
\]

Combining both, we get $\gamma \leq 24\delta$ if either case (a) or (b) occur with probability $\geq \gamma/3$.

\paragraph{(Harder) Case (c).} 

In this case, we have:
\[
\Dec(\Enc(\bZ_1 \bx_1)) = g(\bZ_1 \bx_1), \quad \Dec(\Enc(\bZ_2 \bx_2)) = g(\bZ_2 \bx_2), \quad \text{but } g(\bZ_1 \bx_1) \neq g(\bZ_2 \bx_2).
\]
Since $T\bx_1 = T\bx_2$, it follows that $\bx_1 - \bx_2 \in \ker(T) = M$, the kernel of $T$, which is spanned by $K(1)$. So $\bx_2 - \bx_1$ can be written as a (real) linear combination $\sum_{i=1}^t \alpha_i b_i$ of the $t$ basis vectors of $M$ (Definition~\ref{def:zero-and-spanning}), and recall these are all in $\interval{-1,1}^n$ and the encoding function evaluates each $\Enc(b_i) = 0$. In particular, letting $W$ be the $n \times t$ matrix whose columns are $b_1, \dots, b_t$, we have $\bx_2 - \bx_1 = W\alpha$ for some $\alpha \in \R^t$.

Furthermore, $\bx_1, \bx_2$ are integer-valued and all entries of $W$ are in $\interval{-1,1}^n$. We consider the following: (i) first, let $W'$ be a $t\times t$ matrix which selects a subset of $t$ linearly independent rows of $W$, (ii) we set the parameter $\sfD = \det(W')$, which is an integer bounded by $t!$ (since entries are from $\interval{-1,1}$) and is independent of $\bx_2 - \bx_1$, and (iii) by Cramer's rule, an integer $\alpha \in \Z^t$ where $\sfD (\bx_2 - \bx_1) = W \alpha$ exists, and where $|\alpha_i| \leq m_0 \cdot t!$ (recall, $m_0$ is such that $\bx_1, \bx_2 \in \interval{-m_0,m_0}^n$).

Using Lemma~\ref{lem:path-ind-prop}, this has the consequence that for any $a \in \Z$ where $|a| \leq (m-1)/(2t\cdot m_0 t!)$, 
\[ \Enc(a \sfD (\bx_2 - \bx_1)) = \Enc(0). \]
As long as $a \sfD \bx_1 \in \interval{-(m-1)/2, (m-1)/2}^n$, 
\begin{align*}
    \Enc(a \sfD \bx_2) &= \Enc(a \sfD(\bx_2 - \bx_1)) \oplus \left(\bigoplus_{j=1}^n \bigoplus_{\ell=1}^{|a \sfD \bx_{1j}|} \sign(a\sfD \bx_{1j}) e_j \right) \\
    &= \Enc(0) \oplus \left(\bigoplus_{j=1}^n \bigoplus_{\ell=1}^{|a \sfD \bx_{1j}|} \sign(a\sfD \bx_{1j}) e_j \right) = \Enc(a \sfD \bx_1).
\end{align*}
\ignore{so there exists a fixed integer $\sfD$ (depending only on $K(1)$) such that $\sfD(\bx_2 - \bx_1)$ is an integer linear combination of $n$ vectors in $K(1)$, where the coefficients of each of these vectors has magnitude at most .... The reason is that (irrespective of $\bx_2 - \bx_1$,) we may find a set $k \leq n$ linearly independent vectors in $K(1)$ which span $M$. Letting $W$ be the $n \times k$ Boolean matrix whose columns form a linearly independent set of vectors in $K(1)$, the assumption that $\bx_2 - \bx_1 \in M$ means that there exists $\alpha \in \R^k$ with $W \alpha = \bx_2 - \bx_1$. In particular, finding a set of $k$ linearly independent rows and finding the inverse of the $k \times k$ submatrix of $W$ would give a desired $\alpha \in \Q^k$, where every coordinate is an integer multiple of $\sfD$, where $\sfD$ is the determinant of the $k \times k$ submatrix with entries in $\{-1,0,1\}$, and has magnitude at most $k!$ (recall, the multilinear representation of the determinant). {\color{red}Erik: we should now ensure that we only apply path-independence on vectors of magnitude at most $m$ -- note, however, that $m$ does not matter, so we can choose it to be large enough once we figure out what the right value should be.} Therefore $\sfD (\bx_2 - \bx_1)$ is an integer linear combination of vectors in $K(1)$, and hence for every $a \in \Z$ such that $a \sfD (\bx_2 - \bx_1) \in \interval{-(m-1),m-1}^n$, we have $\Enc(a \sfD (\bx_2 - \bx_1)) = 0$.}
Thus, for any integer $a \in \Z$ where $|a| \leq (m-1) / (2t\cdot m_0 t!)$ and $a\sfD \bx_1 \in \interval{-(m-1)/2, (m-1)/2}^n$, we have $\Dec(\Enc(a \sfD \bx_1)) = \Dec(\Enc(a \sfD \bx_2))$, and scale-invariance implies:
\[
g(a \sfD \bx_1) = g(\bx_1), \quad g(a \sfD \bx_2) = g(\bx_2),
\]
which are distinct by the assumption for case (c). Therefore, for every $a \in \Z$ with $|a| \leq (m-1)/(2t\cdot m_0 t!)$ at least one of the two must be incorrect, that is, either $\Dec(\Enc(a \sfD \bx_1)) \neq g(a \sfD \bx_1)$ or $\Dec(\Enc(a \sfD \bx_2)) \neq g(a \sfD \bx_2)$.

We can now set $m$ (as a function of $\sfR, \sfL$) to be $2t\cdot m_0 \cdot t! \cdot \sfR^{\sfR\sfL} + 1$. This ensures that for every $a = \bZ' \sim \calZ$, we satisfy $|a| \leq (m-1) / (2t\cdot m_0 t!)$ and $a \sfD \bx_1 \in \interval{-(m-1)/2, (m-1)/2}^n$. Then by averaging over $a = \bZ' \sim \calZ$, we obtain:
\begin{align}
\frac{\gamma}{3} &\leq \Ex_{\bZ' \sim \calZ} \left[
\Pr_{\substack{y \sim \calY \\ \calC(y)}}[\Dec(\Enc(\bZ' \sfD \bx_1)) \neq g(\bZ' \sfD \bx_1)]
+ \Pr_{\substack{y \sim \calY \\ \calC(y)}}[\Dec(\Enc(\bZ' \sfD \bx_2)) \neq g(\bZ' \sfD \bx_2)]
\right] \nonumber \\
&\leq 2 \cdot \Pr_{\bz \sim \calD'}[\Dec(\Enc(\sfD \bz)) \neq g(\sfD \bz)]. \label{eq:almost}
\end{align}

We now bound the error probability in Equation~\eqref{eq:almost}. The difficulty is that $\sfD \bz$ does not follow the same distribution as $\calD'$, since $\bz \sim \calD'$ implies $\bz = \bZ \bx$ for $\bZ \sim \calZ$ and $\bx \sim \calD$, whereas here we are analyzing $\sfD \bz = \sfD \bZ \bx$. However, this is precisely where the design of the distribution $\calZ$ allows us to control the discrepancy between these two distributions. In particular, we will show that for appropriate choice of the parameters $\sfL, \sfR$, we can ensure that the scaling by $\sfD$ maps almost all of $\calD'$ back into itself, modulo a small {\em additive error} of $O(\delta)$.

Now to bound this source of error, let us consider the prime factorization of $\sfD = \prod_{p \in P(\sfD)} p^{\pi(p)}$. Then for $\sfD \bZ$ to lie within the support of $\calZ$, it suffices that, for each $p \in P(\sfD)$, the sampled exponent $\boldeta(p)$ in $\bZ$ satisfies $\boldeta(p) \leq \sfL - \pi(p)$. The probability that this fails is at most:
\[
\sum_{p \in P(\sfD)} \frac{\pi(p)}{\sfL},
\]
since each $\boldeta(p)$ is drawn uniformly from $\{0,\dots,\sfL\}$. Each $\pi(p) \leq O(\log \sfD)$ and $|P(\sfD)| \leq O(\log \sfD)$ as well, so the total failure probability is at most $O(\log^2 \sfD / \sfL)$.

In these boundary cases, we can no longer ensure that $\sfD \bz$ lies in the support of $\calD'$, and this is the only source of error in connecting the two distributions. We now choose $\sfL = \Omega(n^2 \log^2 n / \delta)$, and 
$\sfR \ge n! \ge \sfD$, to achieve an additive error of $O(\delta)$, as desired. This yields:
\[
\Pr_{\bz \sim \calD'}[\Dec(\Enc(\sfD \bz)) \neq g(\sfD \bz)] \leq \Pr_{\bz \sim \calD'}[\Dec(\Enc(\bz)) \neq g(\bz)] + O\left( \frac{\log^2 \sfD}{\sfL} \right).
\]

By the earlier guarantee on $\calB$ from Theorem~\ref{thm:lnw}, the first term is at most $8\delta$, and by our choice of  $\sfL$ and $\sfR$, the second term is at most $O(\delta)$. Hence we conclude that case (c) also satisfies:

\[
\gamma \leq 3 \cdot O(\delta) = O(\delta),
\]
completing the proof of Theorem~\ref{thm:lin-sketch-red}. 

\section{A Lower Bound for Linear Sketches Approximating Diameter}
\label{sec:ln-sketch}


We will show lower bounds for linear sketches that compute $\diam^{2,c}_{\calM}$ (Definition~\ref{def: diameter}) where $\calM$ is derived from the random graph metrics specified as follows.

\begin{definition}\label{def:random-metric} 
For $n \in \N$ and $p \in (0, 1)$, $\bG = (U,V,\bE) \sim \calG(n,n,p)$ is a random bipartite graph with $U=\{ u_1,\ldots,u_n \}$ and $V = \{ v_1,\ldots, v_n\}$, where every edge $(u_i,v_j)$ is included in $\bE$ independently with probability $p$. We denote the neighborhood of a vertex $v_i\in V$ as $\bN(v_i)\subset U$.

For convenience, we often associate the left-hand side, $U$ with the set of indices $[n]$, and $V$ with the set of indices $[n+1,2n]$. $\bG$ refers to both a graph and a metric space $([2n], \sfd_{\bG})$, where $\sfd_{\bG}(i,j)$ is the length of a shortest path between $i$ and $j$ in $\bG$, or $\infty$ if $i$ and $j$ are disconnected.
\end{definition} 

\linsketchlb

\subsection{Proof of Theorem~\ref{thm: Diam linear sketch dim lb}} \label{sec:proof-diameter}

We begin by stating the key properties that our random graph metric $\bG\sim\calG(n,n,p)$ from Definition~\ref{def:random-metric} will satisfy. We begin by defining the minrank of a graph, which will be a useful notion in our analysis.

\begin{definition}[Minrank~\cite{BBJK06, ABGMM20}]
    The real minrank of a graph on $m$ vertices is the smallest rank of an matrix $M\in\R^{m\times m}$ with nonzero entries on the diagonal, and $M_{ij} = 0$ whenever $(i,j)$ is not an edge of the graph.
\end{definition}

To obtain our lower bound, it will be sufficient for the following properties to hold in our graph $\bG$. For the remainder of the section, it will be useful to fix an integer $k = \lceil c \rceil \leq \log n / (20\log \log n)$. Notice that for larger values of $c \geq \log n / (20\log \log n)$, the desired lower bound becomes vacuous. We will set a suitable $p = p(n, k)$ to derive the desired lower bound on graph metrics $\bG$.

\begin{lemma}\label{lem:necessary-properties}
For any $k \leq \log n / (20 \log \log n)$, set $p = n^{-1+1/(2k-1)}/\log n$ and let $\bG \sim \calG(n,n,p)$. Then, the following three properties hold with probability $1-o(1)$:
\begin{itemize}
    \item The set $I^*(\bG)$ of coordinates $i \in [n]$ where $\sfd_{\bG}(u_i, v_i) \geq 2k+1$ has size $|I^*(\bG)| \geq 0.9n$.
    \item For every $i \in [n]$, the neighborhood $\bN(v_i) \neq \emptyset$.
    \item Letting $\tilde{\bG} = (I^*(\bG), \tilde{\bE})$ be the (directed) graph where $(i, j) \in \tilde{\bE}$ whenever $(u_j, v_i) \notin \bE$, we have $\minrank(\tilde{\bG}) \geq \tilde{\Omega}(np)$.
\end{itemize}
\end{lemma}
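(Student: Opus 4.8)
The plan is to verify the three properties separately, each via a routine first- or second-moment argument on the random bipartite graph $\bG \sim \calG(n,n,p)$ with $p = n^{-1+1/(2k-1)}/\log n$. Throughout, I will keep in mind that $k \leq \log n/(20\log\log n)$, so that quantities like $(np)^{2k}$ stay sub-polynomial and the various union bounds go through.

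\textbf{Property 1 (large $I^*(\bG)$).} Fix an index $i \in [n]$; I want to show $\sfd_{\bG}(u_i, v_i) \geq 2k+1$ with probability $1 - o(1)$, and in fact that the expected number of ``bad'' indices $i$ with $\sfd_{\bG}(u_i,v_i) \leq 2k-1$ is at most $0.05 n$, after which Markov's inequality gives $|I^*(\bG)| \geq 0.9n$ with probability $1-o(1)$. (A path between $u_i$ and $v_i$ in a bipartite graph has odd length, so $\sfd_{\bG}(u_i,v_i) \le 2k-1$ is the relevant bad event.) The probability that there exists a path of length exactly $2\ell-1$ from $u_i$ to $v_i$ is at most the expected number of such paths, which is bounded by $n^{2\ell-2} p^{2\ell-1} = (np)^{2\ell-1}/n$ (choosing $2\ell - 2$ intermediate vertices and requiring $2\ell-1$ edges). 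Summing over $\ell = 1, \dots, k$ and using $np = n^{1/(2k-1)}/\log n$, the dominant term is $\ell = k$, giving $(np)^{2k-1}/n = 1/(n\log^{2k-1} n) = o(1/n)$; hence the probability that $i$ is bad is $o(1/n) \cdot O(1) $, wait — more carefully, $\sum_{\ell=1}^{k}(np)^{2\ell-1}/n \leq k (np)^{2k-1}/n = o(1)$ since $np \ge 1$ forces the top term to dominate only if $np\ge 1$; in fact $np = n^{1/(2k-1)}/\log n \ge 1$ for the relevant range, so the sum is $O\!\big(k(np)^{2k-1}/n\big) = O(k/(n\log^{2k-1}n)) = o(1/n)$. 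So $\Pr[i \text{ bad}] = o(1)$, the expected number of bad indices is $o(n) \le 0.05n$ for large $n$, and Markov finishes it.

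\textbf{Property 2 (all neighborhoods nonempty).} For fixed $i$, $\Pr[\bN(v_i) = \emptyset] = (1-p)^n \leq e^{-np} = e^{-n^{1/(2k-1)}/\log n}$. Since $k \le \log n/(20\log\log n)$, we have $n^{1/(2k-1)} \geq n^{1/(\log n/(10\log\log n))} = 2^{10\log\log n} = \log^{10} n$, so $np \geq \log^{10}n/\log n = \log^9 n$, and $\Pr[\bN(v_i)=\emptyset] \le e^{-\log^9 n} = n^{-\omega(1)}$. A union bound over the $n$ choices of $i$ gives the claim with probability $1 - o(1)$.

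\textbf{Property 3 (minrank lower bound) — the main obstacle.} This is the crux. The graph $\tilde{\bG}$ is the directed complement-type ``knowledge graph'': $(i,j)$ is an edge iff $u_j \notin \bN(v_i)$, which happens independently with probability $1-p$ for $i \ne j$. So $\tilde\bG$ restricted to $I^*(\bG)$ is (essentially) a random directed graph where non-edges appear with probability $p$, i.e.\ the regime where $\minrank$ is known to be large. I would invoke the minrank lower bound of~\cite{ABGMM20} for random graphs $\calG(N, 1-p)$ (equivalently, where each potential edge is \emph{absent} with probability $p$), which gives $\minrank \geq \tilde\Omega(Np)$ when $p$ is in the appropriate range; here $N = |I^*(\bG)| \geq 0.9n$, so $\tilde\Omega(Np) = \tilde\Omega(np)$. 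There are two technical wrinkles to address. First, $I^*(\bG)$ is itself a random subset determined by $\bG$, and one must argue the conditioning on $i \in I^*(\bG)$ does not destroy the independence structure of the relevant edges; the cleanest route is to note that the event $\{i \in I^*(\bG)\}$ depends on short-path structure, and to either (a) apply the minrank bound to the \emph{whole} graph on $[n]$ and observe that deleting $\le 0.1n$ rows/columns changes minrank by at most an additive $0.1n$ — but that is too lossy — or better (b) show the minrank bound of~\cite{ABGMM20} is robust: it holds with probability $1 - e^{-\Omega(\text{something} \gg n)}$ so that a union bound over all $\binom{n}{\le 0.1n}$ choices of which vertices to keep still succeeds, yielding $\minrank \ge \tilde\Omega(np)$ for \emph{every} induced subgraph on $\ge 0.9n$ vertices simultaneously. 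I expect route (b) is what is intended (the overview says ``the minrank of \ldots any $\Omega(n)$-sized induced subgraph''), and the probability bound in~\cite{ABGMM20} should be strong enough. Second, the diagonal entries: the matrix realizing minrank must be nonzero on the diagonal and zero on non-edges of $\tilde\bG$; since $\tilde\bG$ has a loop at every $i$ ($u_i$ may or may not be in $\bN(v_i)$, but the minrank definition always allows diagonal to be free), this is consistent with the standard index-coding minrank setup. I would state Property~3 as a direct corollary of the cited bound applied to the conditioned/robust random graph, deferring the precise parameter chase to~\cite{ABGMM20}. Combining all three via a union bound over the $o(1)$-probability failure events completes the proof.
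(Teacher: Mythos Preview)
Your plan matches the paper's proof essentially line for line: Property~1 by a first-moment count of short paths plus Markov, Property~2 by $(1-p)^n$ and a union bound, and Property~3 by the minrank lower bound of~\cite{ABGMM20} made uniform over all large induced subgraphs. Two small corrections.

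For Property~1, your intermediate claim $\Pr[i\text{ bad}] = o(1/n)$ is an arithmetic slip: $(np)^{2k-1}/n = 1/\log^{2k-1} n$, not $1/(n\log^{2k-1} n)$. You recover the correct $o(1)$ in the next sentence, and Markov on the expected $o(n)$ bad indices is exactly what the paper does, so the conclusion stands.

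For Property~3, your route~(b) is the right idea but the mechanism you sketch does not work as stated. The failure probability in the~\cite{ABGMM20} argument is only $1/\poly(n)$, not $e^{-\omega(n)}$, so a brute-force union bound over the $\binom{n}{\le 0.1n} = 2^{\Theta(n)}$ choices of $I^*(\bG)$ would blow up. What the paper actually does (Theorem~\ref{thm:minor-mod-abgmm}) is observe that the~\cite{ABGMM20} proof already unions over the \emph{location} $U' \in \binom{[n]}{n'}$ of the offending $(n',k',s')$-principal submatrix inside the full $n\times n$ array; since a low-rank completion of \emph{any} $t$-vertex induced subgraph contains such a submatrix sitting somewhere in $[n]$, that single union bound covers every choice of induced subgraph for free. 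So no extra $2^{\Theta(n)}$ factor appears, and the $1/\poly(n)$ failure probability suffices.
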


\begin{proof}
    For the first bullet point, we first upper bound the probability that $\sfd_{\bG}(u_i,v_i)\leq 2k-1$ for any fixed $i \in [n]$ using a union bound; then, we lower bound $|I^*(\bG)|$ by Markov's inequality. In particular, fix $i$ and note that in order for $\sfd_{\bG}(u_i, v_i)$ to have a path of length at most $2k-1$, there must be some $\ell \leq k-1$ and two sequences $u^{(1)}, \dots, u^{(\ell+1)}$ and $v^{(1)}, \dots, v^{(\ell+1)}$ where $u^{(1)} = u_i$ and $v^{(\ell+1)} = v_i$ such that an edge connecting $u^{(j)}$ to $v^{(j)}$ always exists, and an edge connecting $v^{(j)}$ to $u^{(j+1)}$ always exists.  Hence, we can upper bound the probability that such a path exists by:
\begin{align*}
\sum_{\ell=1}^{k-1} n^{2\ell} \cdot p^{2\ell +1} = p \cdot \sum_{\ell=1}^{k-1} (np)^{2 \ell} \leq p\cdot \frac{(np)^{2k-2}}{1 - (1/np)^2} \leq 2 n^{2k-2} p^{2k-1} \leq \frac{2}{\log n}.
\end{align*}
Hence, $I^*(\bG)$ is expected to have size at least $n - o(n)$, and we obtain the desired bound by Markov's inequality. The second bullet point follows from a simpler bound. The probability that there exists some $i \in [n]$ where $\bN(v_i) = \emptyset$ is at most $n(1-p)^{n} \leq 1/n$ whenever $p \geq 2\log n / n$ (note, our setting of $p$ is significantly larger, so we expect $\bN(v_i)$ to have size $\Omega(n^{1/(2k-1)})$).

The final bullet point has its proof in Appendix~\ref{apx: minrank proof} and uses an argument of~\cite{ABGMM20}. Roughly speaking,~\cite{ABGMM20} lower bounds the minrank of a random (directed) graph $\calG(m,q)$ by $\Omega(m\log(1/q)/\log m)$. Their lower bound proceeds by a union bound over sub-matrices with a special structure (which they show are very unlikely to appear). For us, the only step is showing the argument can incorporate the fact we consider a sub-graph defined by vertices in $I^*(\bG)$ whose has size $m$ is at least $\Omega(n)$, and we set $q = 1-p$, so $\log(1/q) \geq p$.
\end{proof}

\ignore{
\begin{claim}\label{claim: far w.h.p.} 
For $k\in\N$ and $p = n^{1/(2k-1) - 1} / \log n$, let $\bG\sim\calG(n,n,p)$. Then, for any $i\in[n]$,
\[ \Prx_{\bG}\left[ \sfd_{\bG}(u_i, v_i) > 2k-1 \right] \geq 1 - o_n(1). \] 
Moreover, since $\bG$ is bipartite, $\sfd_{\bG}(u_i, v_i) \geq 2k+1$ with probability $1 - o_n(1)$. 
\end{claim}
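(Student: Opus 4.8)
The plan is a straightforward first-moment (union bound) argument over short walks. I would begin with the parity observation that handles the ``moreover'' clause for free: since $\bG$ is bipartite with $u_i\in U$ and $v_i\in V$ on opposite sides of the bipartition, every path from $u_i$ to $v_i$ has odd length, so $\sfd_{\bG}(u_i,v_i)$ is an odd integer (or $+\infty$, if $u_i$ and $v_i$ lie in different connected components). Hence the events $\{\sfd_{\bG}(u_i,v_i) > 2k-1\}$ and $\{\sfd_{\bG}(u_i,v_i) \ge 2k+1\}$ coincide, and it suffices to prove $\Prx_{\bG}[\sfd_{\bG}(u_i,v_i)\le 2k-1] = o_n(1)$.

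Next I would bound this probability by observing that $\sfd_{\bG}(u_i,v_i)\le 2k-1$ forces the existence, for some $\ell\in\{0,1,\dots,k-1\}$, of a walk of length $2\ell+1$ of the form $u_i = u^{(1)},\,v^{(1)},\,u^{(2)},\,v^{(2)},\dots,u^{(\ell+1)},\,v^{(\ell+1)} = v_i$ with every consecutive pair joined by an edge of $\bE$. By independence of the edges, a fixed such walk is present with probability $p^{2\ell+1}$; the number of walks of this shape is at most $n^{2\ell}$, since only the $\ell$ intermediate vertices $u^{(2)},\dots,u^{(\ell+1)}$ on the $U$-side and the $\ell$ intermediate vertices $v^{(1)},\dots,v^{(\ell)}$ on the $V$-side are free (bounding walks rather than paths only inflates this estimate, which is harmless). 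Union bounding over the choice of $\ell$ and over walks then gives
\[ \Prx_{\bG}\big[\sfd_{\bG}(u_i,v_i)\le 2k-1\big] \;\le\; \sum_{\ell=0}^{k-1} n^{2\ell}\,p^{2\ell+1} \;=\; p\sum_{\ell=0}^{k-1}(np)^{2\ell}. \]

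Finally I would substitute $p = n^{1/(2k-1)-1}/\log n$, for which $np = n^{1/(2k-1)}/\log n \ge \sqrt2$ for all large $n$ (in the regime $k = O(1)$, or more generally $k \le \log n/(20\log\log n)$ as in Lemma~\ref{lem:necessary-properties}). The geometric sum is then dominated by its last term, $\sum_{\ell=0}^{k-1}(np)^{2\ell} \le 2(np)^{2k-2}$, so the bound becomes $2\,n^{2k-2}p^{2k-1}$; and the specific value of $p$ is engineered so that $n^{2k-2}p^{2k-1} = n^{2k-2}\cdot n^{2-2k}/(\log n)^{2k-1} = (\log n)^{-(2k-1)}$. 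Therefore $\Prx_{\bG}[\sfd_{\bG}(u_i,v_i)\le 2k-1] \le 2(\log n)^{-(2k-1)} = o_n(1)$, which gives $\Prx_{\bG}[\sfd_{\bG}(u_i,v_i)>2k-1] \ge 1-o_n(1)$, and combined with the parity observation, $\Prx_{\bG}[\sfd_{\bG}(u_i,v_i)\ge 2k+1]\ge 1-o_n(1)$. I do not anticipate any real obstacle: this is a one-line first-moment computation, and the only mild subtleties are estimating walks instead of paths (which weakens, not breaks, the union bound) and keeping $np$ bounded away from $1$ so that the geometric series collapses to its top term, which holds throughout the parameter regime of interest.
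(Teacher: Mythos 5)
Your proposal is correct and follows essentially the same route as the paper: a union bound over alternating walks of length $2\ell+1$, each counted by $n^{2\ell}$ and present with probability $p^{2\ell+1}$, with the geometric sum collapsing to its top term since $np \gg 1$, plus the bipartite parity observation for the ``moreover'' clause. Your bookkeeping is in fact slightly cleaner (you include the $\ell=0$ direct-edge term and get the sharper bound $2(\log n)^{-(2k-1)}$, versus the paper's $2/\log n$), but the argument is the same.
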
 

\begin{proof}
We upper bound the probability that $\sfd_{\bG}(u_i,v_i)\leq 2k-1$ by a union bound. In particular, in order for $\sfd_{\bG}(u_i, v_i)$ to have a path of length at most $2k-1$, there must be some $\ell \leq k-1$ two sequences $u^{(1)}, \dots, u^{(\ell+1)}$ and $v^{(1)}, \dots, v^{(\ell+1)}$ where $u^{(1)} = u_i$ and $v^{(\ell+1)} = v_i$ such that an edge connecting $u^{(j)}$ to $v^{(j)}$ always exists, and an edge connecting $v^{(j)}$ to $u^{(j+1)}$ always exists.  Hence, we can upper bound the probability that such a path exists by:
\begin{align*}
\sum_{\ell=1}^{k-1} n^{2\ell} \cdot p^{2\ell +1} = p \cdot \sum_{\ell=1}^{k-1} (np)^{2 \ell} \leq p\cdot \frac{(np)^{2k-2}}{1 - (1/np)^2} \leq 2 n^{2k-2} p^{2k-1} \leq \frac{2}{\log n}
\end{align*}
\end{proof}

\begin{lemma}\label{lem: many far dists}
    For $k\in\N$ and $p = n^{1/(2k-1) - 1} / \log n$, let $\bG\sim\calG(n,n,p)$, and define \[I^\ast(\bG) := \{i\in[n] : \sfd_{\bG}(u_i,v_i)\geq 2k+1\}.\] With probability $1-o_n(1)$ over the draw of $\bG$, it holds that $|I^\ast(\bG)| \geq 0.9n$.
\end{lemma}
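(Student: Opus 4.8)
The plan is to establish the bound by a first-moment computation followed by Markov's inequality. First I would fix an index $i \in [n]$ and upper bound $\Pr_{\bG}[\sfd_{\bG}(u_i, v_i) \leq 2k-1]$; linearity of expectation then controls $\mathbb{E}_{\bG}[\, n - |I^\ast(\bG)| \,]$, and Markov applied to the nonnegative random variable $n - |I^\ast(\bG)|$ upgrades this to the desired high-probability statement. Note that, by parity (bipartiteness with $u_i$ and $v_i$ on opposite sides forces $\sfd_{\bG}(u_i,v_i)$ to be odd when finite, and $\infty$ otherwise), the event $i \notin I^\ast(\bG)$ is exactly the event $\sfd_{\bG}(u_i,v_i) \leq 2k-1$, so these two reductions line up.

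For the per-index bound, the key observation is that $\sfd_{\bG}(u_i, v_i) \leq 2k-1$ occurs only if $\bG$ contains a $u_i$--$v_i$ path of length exactly $2\ell+1$ for some $0 \leq \ell \leq k-1$. Such a path is specified by its $\ell$ interior $V$-vertices together with its $\ell$ interior $U$-vertices, giving at most $n^{2\ell}$ candidates, and each is present exactly when all of its $2\ell+1$ edges are, which happens with probability $p^{2\ell+1}$. A union bound then gives
\[
\Pr_{\bG}\big[\sfd_{\bG}(u_i, v_i) \leq 2k-1\big] \;\leq\; \sum_{\ell=0}^{k-1} n^{2\ell} p^{2\ell+1} \;=\; p\sum_{\ell=0}^{k-1}(np)^{2\ell}.
\]
Plugging in $p = n^{-1+1/(2k-1)}/\log n$ gives $np = n^{1/(2k-1)}/\log n$; in the only non-vacuous range $k = O(\log n/\log\log n)$ this exceeds $1$, so the geometric sum is dominated by its largest term, and since $n^{2k-2}p^{2k-1} = (\log n)^{-(2k-1)}$ the whole quantity is $O(1/\log n) = o_n(1)$. (When $np \leq 1$ the $\ell=0$ term $p$ already dominates and is itself $o_n(1)$, so the conclusion holds regardless of $k$.)

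With this in hand, $\mathbb{E}_{\bG}[\, n - |I^\ast(\bG)| \,] = \sum_{i=1}^n \Pr_{\bG}[i \notin I^\ast(\bG)] = o_n(n)$, and Markov's inequality yields $\Pr_{\bG}[\, n - |I^\ast(\bG)| \geq 0.1 n \,] = o_n(1)$, i.e. $|I^\ast(\bG)| \geq 0.9 n$ with probability $1 - o_n(1)$, which is the claim.

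I expect this to be essentially routine; the only points needing care are (i) the bipartite path enumeration — matching the number of free interior vertices ($2\ell$) to the exponent of $p$ ($2\ell+1$) and using parity to identify $\{i \notin I^\ast(\bG)\}$ with $\{\sfd_{\bG}(u_i,v_i) \leq 2k-1\}$ — and (ii) verifying that the chosen $p$ makes $n^{2k-2}p^{2k-1}$ genuinely $o_n(1)$ rather than merely bounded. No concentration beyond Markov is required, since the target constant $0.9$ leaves ample slack.
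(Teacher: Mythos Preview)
Your proposal is correct and follows essentially the same approach as the paper: a union bound over candidate $u_i$--$v_i$ paths of length $2\ell+1$ (yielding $\sum_{\ell} n^{2\ell}p^{2\ell+1} \leq 2/\log n$), followed by linearity of expectation and Markov's inequality on $n - |I^\ast(\bG)|$. Your treatment is in fact slightly more careful, since you explicitly include the $\ell=0$ (direct edge) term and spell out the parity observation identifying $\{i\notin I^\ast(\bG)\}$ with $\{\sfd_{\bG}(u_i,v_i)\leq 2k-1\}$.
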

\begin{proof}
    By Claim~\ref{claim: far w.h.p.}, we have $\Ex_{\bG}\left[n-|I^\ast(\bG)|\right] = o(n)$, so by Markov's inequality, $n-|I^\ast(\bG)| \leq 0.1n$ with probability $1-o_n(1)$.
\end{proof}

While $I^\ast(\bG)$ is defined with respect to the metric $\sfd_\bG$, the crux of our argument will relate it to a combinatorial property of $\bG$ resembling the minrank \cite{ABGMM20}.

\begin{definition}\label{def: minrank completion}
    Let $G = (U,V,E)$ be a bipartite graph with $|U|=|V|=n$. For $1\leq t\leq n$, we say that a matrix $M\in\R^{n\times n}$ is a $t$-completion of $G$ if there is a subset $I\subset [n]$ of size $|I| = t$ for which: 
    \begin{itemize}
        \item $M_{ii} \neq 0$ for every $i \in I$,
        \item $M_{ij} = 0$ for every $i\in I$ and $j\in[n]$ with $(u_j,v_i)\in\bE$
    \end{itemize}
    
\end{definition}

\begin{lemma}[Minrank Lower Bound] \label{lem: minrank lower bound with bad rows}
    Let $n\in\N$ and $p\in[0,1]$. With probability $1 - o_t(1)$ over $\bG \sim \calG(n,n,p)$, any matrix $M=M(\bG)$ that is a $t$-completion of $\bG$ has
    \[ \rank(M) \geq \Omega\del{\frac{t p}{\log t}} \]
\end{lemma}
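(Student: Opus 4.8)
The plan is to reduce the statement to a lower bound on the minrank of a random directed graph, and then invoke the argument of~\cite{ABGMM20}. First I would fix a subset $I \subseteq [n]$ with $|I| = t$ and associate to it the directed graph $H_I$ on vertex set $I$ in which $i \to j$ (for $i \neq j$) is an edge exactly when $(u_j, v_i) \notin \bE$. If $M = M(\bG)$ is any $t$-completion of $\bG$ witnessed by $I$, then the principal submatrix $M[I,I]$ has rank at most $\rank(M)$, has all diagonal entries nonzero, and vanishes at every off-diagonal $(i,j)$ with $(u_j, v_i) \in \bE$; hence $M[I,I]$ is a feasible matrix in the definition of $\minrank(H_I)$, so $\rank(M) \geq \minrank(H_I)$. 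It therefore suffices to show that, with probability $1 - o_t(1)$, \emph{every} $H_I$ over $|I| = t$ satisfies $\minrank(H_I) = \Omega(tp/\log t)$.

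For a fixed $I$, the edge slots $\{(u_j, v_i) : i \neq j \in I\}$ of $\bG$ are pairwise distinct, hence mutually independent, so $H_I$ is distributed as the random directed graph on $t$ vertices in which each ordered pair is an edge independently with probability $q := 1-p$ (self-loops, i.e.\ the slots $(u_i,v_i)$, are irrelevant to minrank). Invoking the lower bound of~\cite{ABGMM20} on the minrank of a random directed graph $\calG(t,q)$ (the directed analogue of their bound, proved by the same argument) gives $\minrank(H_I) \geq c\, t\log(1/q)/\log t$ for an absolute constant $c>0$, and since $\log(1/(1-p)) \geq p$ this is at least $c'\, tp/\log t$. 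Moreover, the argument of~\cite{ABGMM20} proves such bounds via a union bound over a small family of ``witness'' substructures that a low-rank matrix with nonzero diagonal must contain, and unwinding it yields that this bound fails for a fixed $I$ with probability at most $\beta \leq \exp(-\Omega(t\log t))$ in our parameter range.

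Finally I would take a union bound over the $\binom{n}{t}$ choices of $I$: the probability that some $H_I$ with $|I| = t$ has minrank below $c'\,tp/\log t$ is at most $\binom{n}{t}\cdot\beta \leq \exp\!\left(t\log n - \Omega(t\log t)\right)$, which is $o_t(1)$ in the regime of interest (in particular $t = \Omega(n)$, where $\binom{n}{t} = 2^{O(n)}$ while $\beta = 2^{-\omega(n\log n)}$). Together with the reduction of the first paragraph, this proves the lemma.

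I expect the middle step to be the main obstacle: one must extract from~\cite{ABGMM20} an explicit per-instance failure probability $\beta$ small enough to survive the union bound over all $\binom{n}{t}$ subsets $I$, while carrying through the dependence on $q = 1-p$. Since $q$ is close to $1$ (so $\log(1/q) \approx p$ is small), this is precisely the regime in which the ABGMM concentration is weakest, and making it work requires re-running their argument with the subset size and the parameter $q$ tracked explicitly; as the surrounding exposition indicates, this is essentially the only nontrivial modification, the passage to directed graphs and the substitution $q = 1-p$ being otherwise formal.
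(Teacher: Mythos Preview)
Your reduction to $\minrank(H_I)$ is correct, and so is the observation that for a fixed $I$ the graph $H_I$ is distributed as $\calG(t,q)$ with $q=1-p$. The gap is in the claimed per-instance failure probability. The ABGMM20 argument, applied to a single $\calG(t,q)$, does \emph{not} yield $\beta \leq \exp(-\Omega(t\log t))$. If you trace their union bound (over $(n',k',s')$-principal submatrices and zero-patterns), the smallest admissible $n'$ is $n' \geq t/k$, and for $k=\Theta(tp/\log t)$ this gives $n'=\Theta(\log t/p)$; the resulting tail is at best $\exp(-\Theta(\log^2 t / p))$. In the parameter range that matters here ($t=\Theta(n)$ and $p=n^{-1+1/(2k-1)}/\polylog n$ for constant $k$), this is far from enough to beat the $\binom{n}{t}=2^{\Theta(n)}$ choices of $I$. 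So the outer union bound over $I$ cannot be closed this way.

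The paper's argument avoids this by \emph{not} applying ABGMM20 separately per $I$. Instead, it runs the ABGMM20 counting argument directly on the full $n$-vertex graph $\bH$: if some $t$-subset $I$ admits a rank-$k$ completion $M$, then Lemma~2.2 produces an $(n',k',s')$-principal submatrix of $M$ with $k'/n'\le k/t$, and that submatrix sits at some $n'$-subset of $[n]$. The union bound is then over the $\binom{n}{n'}$ possible locations inside $[n]$, and this single $\binom{n}{n'}$ factor already covers every choice of $I$ (since the principal submatrix of $M$ is automatically a principal submatrix of the adjacency pattern of $\bH$). The price is that $\log t$ in the denominator becomes $\log n$, which is harmless since $t=\Theta(n)$ in the application. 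In short: the correct fix to your outline is to move the union bound over subsets \emph{inside} the ABGMM20 counting (over locations of the low-rank witness), rather than bolting it on \emph{outside} via a per-$I$ tail bound that is simply not strong enough.
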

\begin{proof}
    See Appendix~\ref{apx: minrank proof}.
\end{proof}

As we will see, we require the following three properties from the graph $\bG \sim \calG(n,n,p)$: (i) the set $I^*(\bG)$ should have size at least $0.9n$ (Lemma~\ref{lem: many far dists}), (ii) the minrank of any $t$-completion of $\bG$ if at least $\tilde{\Omega}(tp)$, and (iii) the neighborhoods $\bN(v_i)$ are non-empty for all $i \in [n]$. This final property also holds with high probability from a simple union bound whenever $p \geq 2\log n / n$. In the remainder of the section, it is useful to consider any fixed bipartite graph $G = (U, V, E)$ from the support of $\calG(n,n,p)$ satisfying the three properties above.}


\paragraph{Description of $\calD(\bG)$.} We now describe the hard distribution $\calD(\bG)$ referred to in Theorem~\ref{thm: Diam linear sketch dim lb}, and we will show that a linear sketch cannot correctly compute $\diam^{2,k}_{\bG}(\bx)$ where $\bx$ is drawn from this distribution. We first define $\calD(G)$ for any fixed bipartite graph $G = (U, V, E)$ with $|U|=|V|=n$, and then pick $\bG$ satisfying the conditions of Lemma~\ref{lem:necessary-properties}. A draw from $\calD(G)$ is an input $x \in \Z^{2n}$ where each coordinate corresponds to a point in $G$. 
As it will turn out, it suffices to consider frequency vectors supported entirely on the first $n$ coordinates corresponding to the $n$ vertices $u_1,\dots, u_n\in U$. Specifically, for two large parameters $\sfP, \sfU \in \N$ (which we specify later), an input $\bx \sim \calD(G)$ is drawn according to the following procedure:
\begin{enumerate}
\item\label{en:step-1} We draw $\bi \sim I^*(G)$. Let $\bx_i = 0$ with probability $1/2$ and $\bx_i = \sfP$ with probability $1/2$. 
\item\label{en:step-2} Every $j \in [n]$ such that $(u_j, v_{\bi}) \in E$ has $\bx_j \sim [\sfP \cdot \sfU]$, and $\bx_j = 0$ if $(u_j, v_{\bi}) \notin E$.
\item The final $n$ coordinates $\bx_{n+1}, \dots, \bx_{2n}$ are all set to $0$.
\end{enumerate}
It is also useful to introduce the notation $(\bx, \bi) \sim \calD(G)$ which reveals the index $\bi$ sampled in Step~\ref{en:step-1}, and this allows one to connect our desired function $\diam^{2,k}_{G}(\cdot)$ to the distribution.

\begin{claim}\label{claim: index determines diameter}
    Let $G = (U, V, E)$ satisfy conditions of Lemma~\ref{lem:necessary-properties}. Then, for any $(x, i)$ sampled from $\calD(G)$, $\diam_G^{2,k}(x) = \ind\{ x_i > 0 \}$.
\end{claim}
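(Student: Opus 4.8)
The plan is to verify the two sides of the identity directly from the structure of the sampling procedure for $\calD(G)$, invoking only two of the three guarantees of Lemma~\ref{lem:necessary-properties}: that every neighborhood is nonempty, and that $i \in I^*(G)$ forces $\sfd_G(u_i,v_i)\geq 2k+1$. First note that every coordinate produced by the procedure is a nonnegative integer --- either $0$, the value $\sfP$, or a value drawn from $[\sfP\cdot\sfU]\subseteq\Z_{+}$ --- so every $x$ in the support of $\calD(G)$ lies in $\Z_{\geq 0}^{2n}$; hence $\diam_G^{2,k}(x)\in\{0,1\}$ (it is never $\ast$) and it remains only to pin down which value occurs.

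For the case $x_i = 0$, the only coordinates of $x$ that can be nonzero are those $j\in[n]$ with $(u_j,v_i)\in E$, i.e.\ $u_j\in N(v_i)$; thus $\supp(x)$, viewed as a vertex set, is contained in $N(v_i)$. For any $u_j,u_{j'}$ in $N(v_i)$ the triangle inequality through $v_i$ gives $\sfd_G(u_j,u_{j'})\leq \sfd_G(u_j,v_i)+\sfd_G(v_i,u_{j'})=2$, and $\sfd_G(u_j,u_j)=0$, so $\max\{\sfd_G(j,j'):x_j,x_{j'}>0\}\leq 2$ and $\diam_G^{2,k}(x)=0=\ind\{x_i>0\}$. (This argument also handles a potentially empty support; in fact, since $N(v_i)\neq\emptyset$ by Lemma~\ref{lem:necessary-properties} and every coordinate indexed by $N(v_i)$ is drawn from $[\sfP\cdot\sfU]\subseteq\Z_{+}$, the support equals $N(v_i)$ exactly and is nonempty.)

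For the case $x_i>0$, we have $u_i\in\supp(x)$. By the second bullet of Lemma~\ref{lem:necessary-properties}, $N(v_i)\neq\emptyset$, so fix any $u_j\in N(v_i)$; then $j\neq i$ (since $i\in I^*(G)$ gives $\sfd_G(u_i,v_i)\geq 2k+1>1$, so $u_i\notin N(v_i)$), and by Step~\ref{en:step-2} the coordinate $x_j$ was drawn from $[\sfP\cdot\sfU]$, so $x_j\geq 1>0$ and $u_j\in\supp(x)$ as well. Now $\sfd_G(v_i,u_j)=1$ and $i\in I^*(G)$ gives $\sfd_G(u_i,v_i)\geq 2k+1$, so the triangle inequality yields $\sfd_G(u_i,u_j)\geq \sfd_G(u_i,v_i)-\sfd_G(v_i,u_j)\geq 2k$. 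Hence $\max\{\sfd_G(j',j''):x_{j'},x_{j''}>0\}\geq 2k$, so $\diam_G^{2,k}(x)=1=\ind\{x_i>0\}$.

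There is no substantive obstacle; the claim is a direct consequence of the design of $\calD(G)$. The only points needing care are to use nonemptiness of $N(v_i)$ so that a far pair actually materializes when $x_i>0$, to use $i\in I^*(G)$ for the bound $\sfd_G(u_i,v_i)\geq 2k+1$, and to note that the samples in Step~\ref{en:step-2} come from the positive set $[\sfP\cdot\sfU]$, so that membership of $u_j$ in $N(v_i)$ genuinely places $u_j$ in the support.
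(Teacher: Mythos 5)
Your proof is correct and follows essentially the same route as the paper's: in the case $x_i=0$ the support lies in $N(v_i)$ and the triangle inequality through $v_i$ bounds the diameter by $2$, and in the case $x_i>0$ the nonemptiness of $N(v_i)$ plus $i\in I^*(G)$ and the triangle inequality force a pair at distance $\geq 2k$. One small presentational nit: your early assertion that $x\in\Z_{\geq 0}^{2n}$ ``hence $\diam_G^{2,k}(x)\in\{0,1\}$'' is premature --- nonnegativity alone does not rule out the $\ast$ case (which occurs when the diameter falls strictly between $2$ and $2k$); this is only established by the subsequent case analysis, which you do carry out, so the overall argument is sound.
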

\begin{proof}   
    Fix any $(x, i)$ drawn from $\calD(G)$ and let $X\subset [2n]$ denote the support of the vector $x$. If $x_i=0$, then $X= N(v_i)$ and therefore has diameter at most $2$. If $x_i > 0$, then $X = N(v_i)\cup \{u_i\}$. Choosing $u'\in N(v_i)$ arbitrarily, the diameter of $X$ is at least $\sfd_G(u_i,u')\geq \sfd_G(u_i,v_i)-  \sfd_G(v_i,u') \geq 2k$, where we used the fact that $i\in I^\ast(G)$.
\end{proof}

Given $\calD(G)$, we let $(T, \Dec)$ be an integer $s \times 2n$ matrix with $\|T\|_{\infty} \leq O((2n)^{n})$, and $\Dec\colon \Z^s \to \{0,1\}$ which we assume computes $\diam_{G}^{2,k}(\bx)$ with probability at least $1-1/(100n)$ over $\bx \sim \calD(G)$. Note that vectors $x \in \supp(\calD(G)) \subset \Z^{2n}$ are always zero in the last $n$ coordinates (those corresponding to vertices in $V$), so it suffices to consider the first $n$ columns of $T$, and we henceforth refer to $T$ as an $s \times n$ matrix. 

\begin{claim} \label{claim: forbidden-kernel-vectors} 
    Suppose $G$ satisfies the condition in Lemma~\ref{lem:necessary-properties}. Let $\sfQ := \sfP \cdot O(s^2 (2n)^{2n})^{s+1}$ and for every $i \in I^\ast$, we define $Z_i(\sfP)$ to be the set of vectors $z\in\Z^n$ satisfying:
    \begin{enumerate}[label=(\roman*)]
        \item $z_i = \sfP$ 
        \item $\norm{z}_\infty\leq \sfQ$
        \item $z_j = 0$ for all $j\in[n]$ for which $(u_j,v_i)\notin E$.
    \end{enumerate}
    When $\sfU = \omega(n (s^2 (2n)^{2n})^{s+1})$, we have that $\ker(T) \cap Z_i(\sfP) = \emptyset$. 
    
\end{claim}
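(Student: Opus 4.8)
The plan is to argue by contradiction: suppose there is some $i \in I^*$ and a vector $z \in \ker(T) \cap Z_i(\sfP)$. I will use $z$ to construct two inputs in the support of $\calD(G)$ that $T$ cannot distinguish but which have different diameter labels, and then show that collectively these inputs carry enough probability mass under $\calD(G)$ to violate the assumed $1 - 1/(100n)$ success guarantee. Concretely, consider the event (under $(\bx, \bi) \sim \calD(G)$) that $\bi = i$ and $\bx_i = 0$; on this event $\bx$ is supported on $N(v_i)$ with each coordinate $\bx_j$, for $u_j \in N(v_i)$, drawn uniformly from $[\sfP \cdot \sfU]$. Since $z$ is supported on $\{i\} \cup \{j : (u_j,v_i)\in E\}$ with $z_i = \sfP$ and $\|z\|_\infty \le \sfQ$, the vector $\bx + z$ satisfies: (a) $(\bx+z)_i = \sfP > 0$; (b) for $u_j \in N(v_i)$, $(\bx+z)_j = \bx_j + z_j$; and (c) all other coordinates are $0$. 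So $\bx + z$ looks exactly like a draw from $\calD(G)$ with $\bi = i$, $\bx_i = \sfP$, provided each shifted coordinate $\bx_j + z_j$ still lands in $[\sfP\cdot\sfU]$ and is positive. Because $|z_j| \le \sfQ$, a coordinate $\bx_j$ fails to have $\bx_j + z_j \in [\sfP\cdot\sfU]$ only if $\bx_j$ lies within $\sfQ$ of one of the two endpoints of $[\sfP\cdot\sfU]$, which happens with probability $O(\sfQ / (\sfP\sfU)) = O((s^2(2n)^{2n})^{s+1}/\sfU)$ per coordinate; union-bounding over the at most $n$ nonzero coordinates and using $\sfU = \omega(n(s^2(2n)^{2n})^{s+1})$ makes this total failure probability $o(1/n)$. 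Hence, conditioned on $\bi = i$ and $\bx_i = 0$, with probability $1 - o(1/n)$ the pair $(\bx, \bx+z)$ consists of two valid support points of $\calD(G)$ with $\diam_G^{2,k}(\bx) = 0$ and $\diam_G^{2,k}(\bx+z) = 1$ (by Claim~\ref{claim: index determines diameter}) but $T\bx = T(\bx+z)$ since $z \in \ker(T)$.

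Next I need to turn this into a contradiction with the success probability. The subtle point is that $\Dec$ must err on at least one of $\bx$ and $\bx+z$, but I want to charge this error against the probability mass that $\calD(G)$ actually places on these inputs. The draw $\bx$ (with $\bi = i$, $\bx_i = 0$, coordinates on $N(v_i)$ uniform in $[\sfP\sfU]$) occurs with probability $\frac{1}{|I^*|}\cdot\frac12\cdot\frac{1}{(\sfP\sfU)^{|N(v_i)|}}$, and the shifted vector $\bx + z$ occurs under $\calD(G)$ (with $\bi = i$, $(\bx)_i = \sfP$) with the same elementary probability $\frac{1}{|I^*|}\cdot\frac12\cdot\frac{1}{(\sfP\sfU)^{|N(v_i)|}}$, as long as all its relevant coordinates lie in $[\sfP\sfU]$ — which is exactly the good event above. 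So I can set up a bijection between the good $\bx$'s (with $\bx_i = 0, \bi = i$) and a subset of the $\bx'$'s (with $\bx'_i = \sfP, \bi = i$) via $\bx \mapsto \bx + z$; every such pair is mapped identically by $T$, so $\Dec$ is wrong on at least one member of the pair, and each member carries probability at least $\frac{1}{|I^*|}\cdot\frac12\cdot\frac{1}{(\sfP\sfU)^{|N(v_i)|}}$ under $\calD(G)$. Summing the minimum of the two elementary probabilities over all good $\bx$, the total error mass is at least $\frac{1}{2|I^*|}\cdot(1 - o(1/n))\cdot\Pr[\bx_i=0 \mid \bi=i]$-worth of the conditional mass on $\bi = i$; since $\bi = i$ itself has probability $1/|I^*| = \Theta(1/n)$ and conditionally $\bx_i = 0$ has probability $1/2$, this error mass is $\Omega(1/n) \cdot (1 - o(1/n)) = \Omega(1/n)$, strictly more than $1/(100n)$ for large $n$. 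This contradicts the assumed correctness of $(T,\Dec)$, so no such $z$ exists, i.e.\ $\ker(T) \cap Z_i(\sfP) = \emptyset$.

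I expect the main obstacle to be the bookkeeping in the second paragraph: one must be careful that mapping $\bx \mapsto \bx + z$ does not double-count, i.e.\ that distinct good $\bx$ with $\bx_i = 0$ produce distinct images (which is immediate since translation by a fixed $z$ is injective), and that the images genuinely lie in $\supp(\calD(G))$ with the stated elementary probability (which is the content of the endpoint-avoidance computation). A second point of care is that the error from $\Dec$ being wrong on the pair only needs to be charged once per pair, so I should take the minimum of the two elementary probabilities rather than the sum — but since they are equal on the good event, this costs only a factor of $2$. Finally, I should confirm the constant: $\Pr_{(\bx,\bi)\sim\calD(G)}[\text{error}] \ge \frac{1}{|I^*|}\cdot\frac12\cdot(1 - o(1/n)) \ge \frac{1}{2\cdot n}\cdot(1-o(1))$ when $|I^*| \le n$, which comfortably exceeds $\frac{1}{100n}$, closing the argument; the remaining routine step is just to verify $\sfQ \ge \sfP$ so that condition (ii) defining $Z_i(\sfP)$ is consistent with $z_i = \sfP$, which holds since $\sfQ = \sfP \cdot O(s^2(2n)^{2n})^{s+1} \ge \sfP$.
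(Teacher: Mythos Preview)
Your proposal is correct and follows essentially the same approach as the paper: assume $z \in \ker(T) \cap Z_i(\sfP)$, pair each $\bx$ (with $\bi=i,\,\bx_i=0$ and all neighbor-coordinates away from the endpoints of $[\sfP\sfU]$) with $\bx+z$ (which then has $(\bx+z)_i=\sfP$ and lies in $\supp(\calD(G))$ with the same elementary probability), and conclude an error mass of at least $\tfrac{1}{2|I^*|}(1-o(1)) > 1/(100n)$. One small slip: with $\sfU = \omega(n(s^2(2n)^{2n})^{s+1})$, the union bound over at most $n$ coordinates gives a bad-event probability that is $o(1)$, not $o(1/n)$; this is harmless since $o(1)$ is all the final inequality needs.
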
 
\begin{proof} 
    
    Fix an $i \in I^\ast(G)$ and let $z\in Z_i(\sfP)$. Suppose, for the sake of contradiction that $Tz = 0$. We will lower bound the failure probability of $(T, \Dec)$. We consider a draw of $\bx\sim \calD(G)$ conditioned on $\bi = i$ and $\bx_{\bi} = 0$ in Step~\ref{en:step-1} (which occurs with probability $1/(2n)$). Whenever $\bx_j \in \{ \sfQ + 1, \dots, \sfP \sfU - \sfQ\}$ for all $j \in [n]$ with $(u_j, v_i) \in E$, the fact $\|z\|_{\infty} \leq \sfQ$ implies $\bx + z$ lies in the support of $\calD(G)$. Thus, $\bx + z$ is drawn from $\calD(G)$ with the same probability as $\bx$ is. Namely, $i$ is chosen and $(\bx + z)_i = \sfP$ with probability $1/(2n)$, and the distribution is uniform over its support on the remaining coordinates. However, $Tz = 0$ implies $T(\bx + z) = T\bx$, which means that $\Dec(T(\bx + z)) = \Dec(T\bx)$. By Claim~\ref{claim: index determines diameter}, $\diam_{G}^{2,k}(\bx)\neq \diam_{G}^{2,k}(\bx + z)$, meaning $(\Dec, T)$ is incorrect on one of $\{\bx, \bx+z\}$. In other words, 
    \begin{align*}
    \Prx_{\bx \sim \calD(G)}\left[ \Dec(T\bx) \neq \diam_{G}^{2,k}(\bx) \right] \geq \frac{1}{2n} \cdot \left(1 - \frac{2\cdot \sfQ}{\sfP \cdot \sfU} \right)^n \geq \frac{1}{50 n}
    \end{align*}
    whenever $\sfU \gg n \cdot (s^2 (2n)^{2n})^{s+1}$. 
\end{proof} 

Crucially, the fact that $\ker(T)\cap Z_i(\sfP) = \emptyset$ for all $i \in I^\ast$ allows us to construct a low-rank matrix $M\in\R^{n\times n}$ which satisfies the conditions of the minrank of $\tilde{G}$ (the third bullet point in Lemma~\ref{lem:necessary-properties}). 

\begin{lemma}\label{lem:build-min-rank} 
Let $T$ be an integer $s \times n$ matrix satisfying $\| T\|_{\infty} \leq O((2n)^n)$ such that $\ker(T) \cap Z_i(\sfP) = \emptyset$ for all $i \in I^\ast$. If $\sfP := (O(s^{2} (2n)^{2n})^s)!$, there exists an $n\times n$ matrix $M$ of rank at most $s$ with ones on the diagonal and $M_{ij} = 0$ whenever $i \in I^*(G)$ and $(u_j, v_i) \in E$. 
\end{lemma}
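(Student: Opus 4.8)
The plan is to produce $M$ in the factored form $M = HT$ for a suitable $H \in \R^{n\times s}$, so that $\rank(M)\le s$ is automatic; all the content lies in choosing the rows of $H$ so that $M$ has ones on the diagonal and the prescribed zero pattern on the rows indexed by $I^*(G)$. Writing $T^j\in\R^s$ for the $j$-th column of $T$, it suffices to find, for each $i\in I^*(G)$, a vector $h_i\in\R^s$ with $\langle h_i, T^i\rangle = 1$ and $\langle h_i, T^j\rangle = 0$ for every $j$ with $(u_j,v_i)\in E$; stacking the $h_i^\top$ as the corresponding rows of $H$ then makes $M_{ii}=1$ and $M_{ij}=0$ for $i\in I^*(G)$, $(u_j,v_i)\in E$, as required. (The remaining rows of $H$ are immaterial for the minrank conclusion, which only concerns the principal submatrix on $I^*(G)$; one may also fill them in to enforce ones on the full diagonal.) By the standard linear-algebraic duality underlying the minrank, such an $h_i$ exists precisely when $T^i\notin\Span\{T^j : (u_j,v_i)\in E\}$: the orthogonal complement of that span is nonzero and cannot be entirely orthogonal to $T^i$. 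So the crux is to rule out the dependence $T^i\in\Span\{T^j:(u_j,v_i)\in E\}$ for every $i\in I^*(G)$, using only the hypothesis $\ker(T)\cap Z_i(\sfP)=\emptyset$.

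Fix $i\in I^*(G)$ and suppose for contradiction that $T^i\in\Span\{T^j:(u_j,v_i)\in E\}$ (note $i\notin N(v_i)$ since $i\in I^*(G)$ forces $\sfd_G(u_i,v_i)\ge 2k+1$, so this span does not involve $T^i$). Then there is a nonzero $z\in\R^n$ supported on $\{i\}\cup\{j:(u_j,v_i)\in E\}$ with $z_i\neq 0$ and $Tz=0$. I would convert this real solution into a \emph{small integer} solution of the same shape: choose a linearly independent subset $S\subseteq\{j:(u_j,v_i)\in E\}$, $|S|\le s$, whose span contains $T^i$, then $|S|$ rows of $T$ on which the corresponding $|S|\times|S|$ submatrix $A'$ is invertible, and apply Cramer's rule to $\sum_{j\in S}c_j T^j = T^i$ restricted to those rows. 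Clearing denominators gives an integer vector $z$ supported on $\{i\}\cup S$ with $Tz=0$, $z_i=\pm\det(A')\neq 0$, and every entry of $z$ a signed $|S|\times|S|$ minor of $T$. Since $\|T\|_\infty\le O((2n)^n)$ and $|S|\le s$, each such minor is at most $s!\cdot O((2n)^n)^s$, so $\|z\|_\infty\le N$ where $N := O(s^2(2n)^{2n})^s$ (with room to spare in the constants).

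The last step uses the specific choice $\sfP = (O(s^2(2n)^{2n})^s)! = N!$: since $z_i$ is a nonzero integer with $1\le |z_i|\le N$, it divides $N!=\sfP$, so $\chi := \sfP/z_i$ is a nonzero integer. Then $z' := \chi z$ satisfies $z'_i=\sfP$, $Tz'=0$, $\supp(z')\subseteq\{i\}\cup\{j:(u_j,v_i)\in E\}$, and $\|z'\|_\infty = |\chi|\,\|z\|_\infty \le \sfP\cdot N \le \sfP\cdot O(s^2(2n)^{2n})^{s+1} = \sfQ$. Hence $z'\in Z_i(\sfP)\cap\ker(T)$, contradicting the hypothesis. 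Therefore $T^i\notin\Span\{T^j:(u_j,v_i)\in E\}$ for every $i\in I^*(G)$; duality produces the vectors $h_i$; and $M=HT$ is a rank-$\le s$ matrix with ones on the diagonal and the required zero pattern.

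The main obstacle is the passage in the last two paragraphs from an arbitrary real kernel dependency to an integer one whose $i$-th coordinate is \emph{exactly} $\sfP$ while all coordinates stay within $\sfQ$. Two points need care: (i) the Cramer's-rule bound on the minors of $T$ — this is exactly why $\sfQ$ is defined with exponent $s+1$, leaving slack beyond the $\le N$ bound for $\|z\|_\infty$; and (ii) the divisibility trick — because $z_i$ depends on $T$, which depends on $\calD(G)$, we cannot hard-code $z_i$ into the distribution, so we instead set the nonzero value of $\bx_i$ to the factorial $\sfP$, which is divisible by every possible value $z_i$ could take. The remaining steps (the duality argument and verifying $\rank(HT)\le s$ together with the zero pattern) are routine.
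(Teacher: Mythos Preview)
Your proposal is correct and follows essentially the same approach as the paper: both write $M=HT$, reduce to showing $T^i\notin\Span\{T^j:(u_j,v_i)\in E\}$ for each $i\in I^*(G)$, argue by contradiction via Cramer's rule to extract an integer kernel vector with bounded entries, and then use the factorial choice of $\sfP$ to rescale so that the $i$-th coordinate becomes exactly $\sfP$, landing the vector in $Z_i(\sfP)$. The only cosmetic differences are that the paper clears denominators using $\det(W^\top W)$ rather than your row-selected submatrix determinant, and phrases the duality step as a separating hyperplane for the affine set $\{T^{(i)}-Aw\}$ rather than directly invoking $T^i\notin\Span\{T^j\}$; these are equivalent.
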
 
\begin{proof}
We will let $M = HT$ where we describe the $n \times s$ matrix $H$ by describing each of its rows $h_i \in \R^s$ for $i \in [n]$. If $i \notin I^*(G)$, we let $h_i$ be the $T^{(i)} / \|T^{(i)}\|_2^2$, where $T^{(i)}$ is the $i$-column of $T$, which satisfies the constraint that $M_{ii} = 1$.

On the other hand, suppose $i \in I^{*}(G)$, and let $A$ be the $s \times |N(v_i)|$ submatrix of $T$ which considers the columns $j \in [n]$ lying in the neighborhood $u_j \in N(v_i)$, and $W$ be the $s \times k$ matrix with $k \leq \min\{ s, |N(v_i)|\}$ which contains a maximal set of linearly independent columns of $A$. We will first show that there are no vectors $w \in \R^{|N(v_i)|}$ with $A w = T^{(i)}$. If such a vector $w$ existed, there would also exist $y \in \R^{k}$ which satisfies $Wy = T^{(i)}$, and because $T^{(i)}$ and $W$ have integer coordinates of magnitude at most $O((2n)^n)$, there exists an integer vector $\tilde{y} \in \Z^k$ which satisfies $W \tilde{y} = \sfD \cdot T^{(i)}$, where $\sfD := \det(W^{\t} W)$ is an integer of magnitude at most $O(s^{2} (2n)^{2n})^s$, and $\| \tilde{y} \|_{\infty} \leq \sfD \cdot O(s^2 (2n)^{2n})^{s+1}$. Thus, by setting of $\sfP = (O(s^{2} (2n)^{2n})^s)!$, the fact $\tilde{y}$ exists means $\sfP / \sfD$ is an integer, and $\tilde{y}' = \sfP / \sfD \cdot \tilde{y}$  satisfies $W \tilde{y}' = \sfP \cdot T^{(i)}$ and $\| \tilde{y}'\|_{\infty} \leq \sfP\cdot O(s^2 (2n)^{2n})^{s+1}$. Let $z \in \mathbb{Z}^n$ be defined as $z_i = \sfP$ and $z_j = -\tilde{y}_{j'}'$ for each $j$ such that $u_j \in N(v_i)$, where $\tilde{y}_{j'}'$ is the coordinate of $\tilde{y}'$ associated with the column of $T$ indexed by $u_j$.
This would imply the existence of a $z \in \ker(T)$ that is also in $Z_i(\sfP)$, which we have ruled out by Claim~\ref{claim: forbidden-kernel-vectors}.
Since no such vectors exists, this means that 
\[\left\{ T^{(i)} - A w : w \in \R^{|N(v_i)|} \right\}, \]
is a convex set that does not contain $0$. Therefore, there is a vector $\tilde{h} \in \R^s$ which satisfies $\langle \tilde{h}, T^{(i)}\rangle - \langle \tilde{h}, A w \rangle > 0$ for all $w \in \R^{|N(v_i)|}$. This implies, in particular, that $\langle \tilde{h}, T^{(i)}\rangle > \langle \tilde{h}, T^{(j)}\rangle \cdot \sfW$, for all $j \in N(v_i)$ and all $\sfW \in \R$. Note this can only happen if $\langle \tilde{h}, T^{(j)}\rangle = 0$. Rescaling $\tilde{h}$ by $1 / \langle \tilde{h}, T^{(i)}\rangle$ gives the desired vector $h_i$. 
\end{proof}

\begin{proof}[Proof of Theorem~\ref{thm: Diam linear sketch dim lb}]
    Fix $k = \ceil{c} \leq \log n / (20 \log \log n)$ and take $p = n^{-1+1/(2k-1)}/\log n$ (recall, if $c \geq \log n / (8 \log \log n)$ then the lower bound holds vacuously), and a draw $\bG\sim \calG(n,n,p)$ satisfying conditions of Lemma~\ref{lem:necessary-properties} which hold with probability $1-o(1)$. 
    Observe that in the graph $G$, all paths between $U$-side vertices have even length, since paths alternate between vertices in $U$ and $V$.
    So any $x \in \supp(\calD(\bG))$ which satisfies $\diam_{\bG}(x) \geq 2c$ will also satisfy $\diam_{\bG}(x) \geq 2k$, and this means that the linear sketch $(T, \Dec)$ must satisfy $\Dec(T\bx) = \diam^{2,k}_{\bG}(\bx)$ with probability $1 -1/(100 n)$. Therefore, we can apply Claim~\ref{claim: forbidden-kernel-vectors} and Lemma~\ref{lem:build-min-rank} and obtain an $n \times n$ matrix $M$ of rank at most $s$, with ones on the diagonal and which satisfies $M_{ij} = 0$ whenever $i \in I^*(\bG)$ and $(u_j, v_i) \in \bE$. Consider the sub-matrix $M'$ which considers rows and columns in $I^*(\bG)$; and note this satisfies conditions of the minrank on the knowledge graph $\tilde{\bG}$. The matrix $M'$ has rank at most $s$ (since it is a sub-matrix of $M$), and satisfies $M'_{ij} = 0$ whenever $(i, j) \notin \tilde{\bE}$ (equivalent to $(u_j, v_i) \in \bE$). By the third bullet point of Lemma~\ref{lem:necessary-properties}, $s$ is at least $\tilde{\Omega}(np)$, and we obtain the desired bound.
\end{proof}

\begin{proof}[Proof of Theorem~\ref{thm:lb}]
    Let $G$ satisfy the condition of Theorem~\ref{thm: Diam linear sketch dim lb} and let $\calA$ be a dynamic streaming algorithm computing a $c$-approximation for $\diam_G$ with success probability $1 - o(1/n)$ (note that we can afford to lose a $O(\log n)$ multiplicative factor to ensure the high success probability). The algorithm $\calA$ computes $\diam^{2,c}_{\calM}(x)$ for any stream with frequency vector $x$ with probability at least $1-o(1/n)$ (over the internal randomness of the algorithm), where $\calM$ is the shortest path metric induced by $G$. Applying Theorem~\ref{thm:lin-sketch-red} on the distribution $\calD(G)$, which is supported on $(\diam^{2,c}_{G})^{-1}(\{0,1\})$ by Claim~\ref{claim: index determines diameter}, we obtain a linear sketch $(T,\Dec)$ of dimension $s \leq \calS^+(\calA, 1)$ satisfying the conditions of Theorem~\ref{thm: Diam linear sketch dim lb}, which implies \[\calS^+(\calA,1) \geq s\geq \tilde{\Omega}(n^{1/(2\ceil{c} - 1)}).\]
\end{proof}

\subsection{Linear Sketching Lower Bound for $\AFN$} 

Theorem~\ref{thm: Diam linear sketch dim lb} proves a lower bound on the dimension of any linear sketch estimating the diameter on random graph metrics. We now show that, as a consequence, a similar lower bound holds for the related problem of approximate furthest neighbor. 

We will show lower bounds for computing $\AFN_{1,c}^{\calM}$ (Definition~\ref{def: AFN}) that hold for the same random graph metric $\bG=(U,V,\bE)\sim\calG(n,n,p)$ given in Definition~\ref{def:random-metric}, albeit with a slightly different setting of $p$. We will require two additional properties of $\bG$.

\begin{claim}\label{claim: neighborhood size bounds}
    For $n,k\in\N$ with $k\leq\frac{\log n}{20 \log \log n}$, set $p=n^{-1 + 1/(2k-1)}/\log^5 n$. With probability $1-o(1)$ over the draw of $\bG=(U,V,\bE)\sim\calG(n,n,p)$, the following properties hold: 
    \begin{enumerate}[label=(\roman*)] 
        \item $|\bN(v_i)| \geq pn/2$ for all $i \in [n]$ 
        \item $|\bN(v_i) \cap \bN(v_j)| \leq \log n\cdot \max\{np^2, 1\}$ for all distinct $i,j \in [n]$ 
    \end{enumerate} 
\end{claim}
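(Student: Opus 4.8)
The plan is to prove both parts as routine concentration statements for $\bG\sim\calG(n,n,p)$: for each, I would apply a Chernoff bound to a single vertex (respectively, a single pair of vertices) and then union bound. The only place the hypothesis $k\leq\frac{\log n}{20\log\log n}$ enters is in ensuring the relevant expectations are large enough for the union bounds to close; concretely, since $p=n^{-1+1/(2k-1)}/\log^5 n$ we have $np=n^{1/(2k-1)}/\log^5 n$, and $2k-1\leq\frac{\log n}{10\log\log n}$ forces $n^{1/(2k-1)}\geq(\log n)^{10}$, hence $np\geq(\log n)^5=\omega(\log n)$.

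For (i), I would fix $i\in[n]$ and note that $|\bN(v_i)|=\sum_{\ell=1}^n\one[(u_\ell,v_i)\in\bE]$ is a sum of $n$ independent $\mathrm{Bernoulli}(p)$ variables with mean $np$, so the multiplicative Chernoff lower tail gives $\Pr[\,|\bN(v_i)|<np/2\,]\leq e^{-np/8}$; a union bound over the $n$ choices of $i$ then shows property (i) fails with probability at most $n\,e^{-np/8}=o(1)$.

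For (ii), I would fix distinct $i,j\in[n]$ and observe that $|\bN(v_i)\cap\bN(v_j)|=\sum_{\ell=1}^n\one[(u_\ell,v_i)\in\bE]\,\one[(u_\ell,v_j)\in\bE]$ is a sum of $n$ independent $\mathrm{Bernoulli}(p^2)$ variables with mean $np^2$, and then split into two cases depending on whether $np^2\geq1$ or $np^2<1$ (both regimes occur within the relevant range of $k$, which is exactly why the threshold involves $\max\{np^2,1\}$). When $np^2\geq1$, the target $\log n\cdot np^2$ is a $\log n$ multiple of the mean, and the Chernoff upper tail gives probability at most $(e/\log n)^{\log n\cdot np^2}\leq(e/\log n)^{\log n}=n^{-\Omega(\log\log n)}$. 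When $np^2<1$, the target is $\log n$, far above the mean, and the crude binomial tail $\Pr[X\geq\log n]\leq\binom{n}{\log n}(p^2)^{\log n}\leq(e\,np^2/\log n)^{\log n}\leq(e/\log n)^{\log n}$ gives the same $n^{-\Omega(\log\log n)}$ bound. In either case the per-pair failure probability is $n^{-\omega(1)}$, so a union bound over the $\binom{n}{2}$ pairs leaves failure probability $o(1)$; intersecting with the event from (i) yields the claimed $1-o(1)$.

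I do not expect a real obstacle here. The only points that need a little care are the case $np^2<1$ in (ii), where the mean falls below $1$ and one must use a direct binomial tail bound rather than the naive multiplicative-Chernoff form, and the bookkeeping showing $np\geq(\log n)^5$, which is precisely where the upper bound on $k$ gets used.
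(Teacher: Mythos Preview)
Your proposal is correct and follows essentially the same approach as the paper: Chernoff for part (i) with a union bound over vertices, and for part (ii) a case split on the size of $np^2$ followed by a union bound over pairs. The only cosmetic difference is that the paper splits (ii) according to $k=1$ versus $k\geq 2$ (which in this parameter range is exactly your $np^2\geq 1$ versus $np^2<1$ dichotomy), but the underlying tail estimates are the same.
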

\begin{proof} 
    We will show that each property holds with probability $1-o(1)$. Notice that for any fixed $i \in [n]$, the probability that $|\bN(v_i)| \leq pn/2$ is at most $e^{-pn/8} \leq 1/n^2$ by a Chernoff bound (this holds whenever $p\geq 16\log n / n$, which is true for our setting of $k$). Union bounding over $i\in[n]$, we conclude that (i) holds with high probability. 
    Next, take any distinct $i,j\in[n]$, and observe that the distribution of $|\bN(v_i)\cap \bN(v_j)|$ is precisely $\Bin(n,p^2)$. If $k=1$, then $np^2=\tilde{\Omega}(n)$ and standard concentration bounds imply $|\bN(v_i) \cap \bN(v_j)|\leq np^2 \log n$ except with probability $1/n^{10}$. If $k\geq 2$, then $np^2\ll n^{-0.1}$ and it can be directly shown that \[\sum_{\ell = \log n}^n \binom{n}{\ell}\cdot p^{2\ell}\cdot (1-p^2)^{n-\ell} < \sum_{\ell = \log n}^n (np^2)^\ell = n^{-\omega(1)}.\] In either case, taking a union bound over all pairs $i, j \in [n]$ concludes (ii). 
\end{proof} 

\begin{lemma}\label{lem:afn-lb}
Fix $k, n \in \N$, and let $c \leq 2k+1$. For $\bG \sim \calG(n,n,p)$ with $p = n^{1/(2k-1) - 1}/\log^5 n$, and $\calD = \calD(\bG)$ (from Subsection~\ref{sec:proof-diameter}), the following holds with probability $1-o(1)$: any linear sketch $(T, \Dec)$ which satisfies
\begin{itemize}
\item The matrix $T \in \Z^{s \times 2n}$ has $\|T\|_{\infty} \leq O((2n)^n)$, and
\item $\Dec \colon \Z^s \times [2n] \to \{0,1\}$ is an output function such that the probability over $\bx \sim \calD$ that there exists $q \in [2n]$ satisfying $\Dec(T\bx, q) \in \{0,1\} \setminus \{ \AFN^{1, c}_{\bG}(\bx, q)\}$ is at most $1/(200 n)$,
\end{itemize}
must have $s \geq \tilde{\Omega}(n^{1/(2k-1)})$.
\end{lemma}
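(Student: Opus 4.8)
The natural route is to reduce the $\AFN$ problem to the diameter problem we have already handled in Theorem~\ref{thm: Diam linear sketch dim lb}. The key observation is that for the distribution $\calD(\bG)$, a draw $(\bx, \bi)$ has its support contained in $\bN(v_{\bi}) \cup \{u_{\bi}\}$, and by Claim~\ref{claim: index determines diameter}, $\diam_{\bG}^{2,k}(\bx) = \mathbf{1}\{\bx_{\bi} > 0\}$. The plan is to show that if an $\AFN$ sketch $(T, \Dec)$ as in the statement exists, then a suitable query $q$ (namely $q = u_{\bi}$, or rather the index $\bi$) lets $\Dec(T\bx, \bi)$ recover $\mathbf{1}\{\bx_{\bi} > 0\}$ for a constant fraction of the distribution, so that $(T, \Dec(\cdot, \bi))$ violates the conclusion of Theorem~\ref{thm: Diam linear sketch dim lb} (with $c$ replaced by the appropriate $k$), forcing $s \geq \tilde\Omega(n^{1/(2k-1)})$.

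\textbf{Key steps, in order.} First I would verify that for $(\bx, \bi) \sim \calD(\bG)$, the answer to $\AFN_{\bG}^{1,c}(\bx, \bi)$ is exactly $\mathbf{1}\{\bx_{\bi} > 0\}$ with probability $1-o(1)$: when $\bx_{\bi} = 0$, the support is a nonempty subset of $\bN(v_{\bi})$ (using the second and the new first bullet of Claim~\ref{claim: neighborhood size bounds}), and every point of $\bN(v_{\bi})$ is at distance $2$ from $u_{\bi}$, so the furthest distance from $q = u_{\bi}$ is $\le 2 \le 1\cdot c$ is false --- so here one should instead take the query to be a vertex at distance exactly $1$, or re-scale; more carefully, one takes $q = v_{\bi}$: then when $\bx_{\bi}=0$ the support lies in $\bN(v_{\bi})$ so $\max\{\sfd(v_{\bi}, j) : x_j > 0\} = 1$, and when $\bx_{\bi} > 0$ the support contains $u_{\bi}$ with $\sfd(v_{\bi}, u_{\bi}) \ge 2k+1 > c$ since $\bi \in I^*(\bG)$ (this is why the radius parameter is $1$ rather than $2$, and why $p$ is taken slightly smaller). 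Second, I would note that $v_{\bi}$ is determined by $\bi$, which is revealed in Step~\ref{en:step-1}, so querying with $q = v_{\bi}$ is well-defined in the reduction: define $\Dec'(y) := \Dec(y, v_{\bi})$ --- but since $\Dec'$ must not depend on $\bi$, one instead argues contrapositively, fixing the hard distribution and observing that the hypothesized error bound $1/(200n)$ on $\AFN$ (over the existence of a bad $q$) in particular bounds the error at the single query $q = v_{\bi}$. Third, I would invoke Theorem~\ref{thm: Diam linear sketch dim lb}: since $(T, \Dec(\cdot, v_{\bi}))$ computes $\diam_{\bG}^{2,k}$ on $\calD(\bG)$ with error at most $1/(200n) + o(1) \le 1/(100n)$, and $T$ satisfies the required entry bound, we get $s \geq \tilde\Omega(n^{1/(2k-1)})$.

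\textbf{Main obstacle.} The subtle point --- and the one I would spend the most care on --- is the handling of the query: the $\AFN$ decoder takes $q$ as an explicit input, whereas the diameter decoder does not, so one cannot literally "hard-code" $q = v_{\bi}$ without smuggling in knowledge of $\bi$. The clean way around this is to observe that Theorem~\ref{thm: Diam linear sketch dim lb} only requires the existence of \emph{some} decoder $\Dec''$, and the reduction can define $\Dec''$ non-uniformly: for each $y$ in the image, $\Dec''(y)$ is chosen to match $\Dec(y, v_i)$ on the (essentially unique, for this distribution) preimage structure --- more precisely, one couples the diameter error of $\Dec''$ to the $\AFN$ error via the fact that on $\supp(\calD(\bG))$, the value $T\bx$ together with the combinatorics of $\bG$ morally pins down $\bi$ up to the ambiguity that is already absorbed into the $1/(100n)$ slack. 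A second, minor obstacle is confirming that the slightly smaller $p = n^{-1+1/(2k-1)}/\log^5 n$ still satisfies all three bullets of Lemma~\ref{lem:necessary-properties} (the extra $\polylog$ factors only help the $I^*$ and nonempty-neighborhood bounds, and change $\minrank$ by a $\polylog$ factor, which is absorbed in $\tilde\Omega$), together with the two new properties of Claim~\ref{claim: neighborhood size bounds}; this is routine given the union-bound arguments already in place.
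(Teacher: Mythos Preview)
Your high-level reduction is the right one, and you correctly isolate the only real difficulty: the diameter decoder in Theorem~\ref{thm: Diam linear sketch dim lb} must be a fixed map $\Dec'':\Z^s\to\{0,1\}$, while the ``good'' query $v_{\bi}$ depends on the hidden index $\bi$ that generated $\bx$. However, your resolution of this obstacle is where the argument breaks down. The claim that ``$T\bx$ together with the combinatorics of $\bG$ morally pins down $\bi$'' is unjustified: nothing in the $\AFN$ correctness hypothesis forces $T$ to encode $\bi$. The sketch $T$ is only promised to answer $\AFN$ queries given an \emph{externally supplied} $q$; it could, in principle, collapse inputs arising from many different $\bi$'s to the same sketch value $y$, and still be correct on each $(\bx,v_{\bi})$ pair because $v_{\bi}$ is fed in separately. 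So defining $\Dec''(y)$ as $\Dec(y,v_{i(y)})$ for some choice function $i(\cdot)$ has no reason to succeed with the required $1-1/(100n)$ probability.

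The paper closes exactly this gap by \emph{augmenting} $T$ rather than trying to squeeze $\bi$ out of it. It stacks onto $T$ a batch of $L=\tilde O(b)$ independent $\ell_0$-samplers (each an $O(\log n)$-row linear sketch with $\{-1,0,1\}$ entries, so the entry bound is preserved), where $b=\log n\cdot\max\{np^2,1\}$. The samplers return $\Omega(b)$ distinct indices from $\mathrm{supp}(\bx)\subset \bN(v_{\bi})\cup\{u_{\bi}\}$ with probability $1-o(1/n)$, and Claim~\ref{claim: neighborhood size bounds}(ii) (the bound $|\bN(v_i)\cap\bN(v_j)|\le b$) is precisely what guarantees that this many support samples identify $\bi$ \emph{uniquely}. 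With $\bi$ recovered, the diameter decoder simply outputs $\Dec(T\bx,v_{\bi})$, and then Theorem~\ref{thm: Diam linear sketch dim lb} yields $s+\tilde O(b)\ge\tilde\Omega(np)$; since $b=o(np/\log n)$ for this choice of $p$, one concludes $s\ge\tilde\Omega(n^{1/(2k-1)})$. So Claim~\ref{claim: neighborhood size bounds} is not there to certify nonemptiness of the support (as you suggested) but to make the recovery of $\bi$ from $\ell_0$-samples well-defined; this recovery step, and the accounting that the samplers add only $o(np)$ rows, is the missing idea in your outline.
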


\begin{proof}
Notice that the lower bound is vacuous when $k = \Omega(\log n / \log \log n)$. Thus, for the remainder of the proof, we may assume $k \leq \log n / 20 \log \log n$.  

The proof is via a direct reduction to Theorem~\ref{thm: Diam linear sketch dim lb} using $\ell_0$-samplers (Lemma~\ref{thm: L0 sampler}) and Claim~\ref{claim: neighborhood size bounds}. In particular, we design a linear sketch $(T', \Dec')$ which will satisfy the conditions of Theorem~\ref{thm: Diam linear sketch dim lb} with the approximation factor $c' := k$. The sketch $(T', \Dec')$ does the following:
\begin{itemize}
\item It initializes an assumed sketch $(T, \Dec)$ from the hypothesis of this lemma.
\item It initializes $L$ independent $\ell_0$-samplers from Lemma~\ref{thm: L0 sampler} for vectors in $\Z^{2n}$ and $\delta = 0.1$, for a parameter $L$ (which will be set to $100b\log n$ for $b :=\log n\cdot \max\{np^2, 1\} = o(np)$).
\end{itemize}
The matrix $T'$ is obtained by stacking the $s$ rows of $T$, as well as the $L \cdot O(\log n)$ rows of the $\ell_0$-samplers, for a total dimensionality of $s + O(L\log n)$. Note that the bound $\|T' \|_{\infty} \leq O((2n)^n)$ follows from the fact $\|T\|_{\infty} \leq O((2n)^n)$, and the $\ell_0$-samplers have entries in $\{-1, 0, 1\}$ (from Lemma~\ref{thm: L0 sampler}). The decoding function $\Dec' \colon \Z^{n} \to \{0,1\}$ proceeds as follows:
\begin{enumerate}
\item Extract from the $L$ many $\ell_0$-samplers, coordinates $\bell_1,\dots, \bell_{L} \in [2n] \cup \{ \bot\}$ such that any non-failing index $t \in [L]$ (i.e., one whose $\bell_t \neq \bot$) satisfies $\bx_{\bell_t} \neq 0$.
\item Second, we find the unique $i \in [n]$ such that $u_{\bell_1}, \dots, u_{\bell_L} \in N(v_{i}) \cup \{ u_{i} \}$, which we show exists as long as $| \{ \bell_1, \dots, \bell_L \} | \geq b+2$ from the description of $\calD$ and Claim~\ref{claim: neighborhood size bounds}. We output $\Dec(T\bx, v_{i})$.
\end{enumerate}
We now show that $\Dec'$ satisfies the conditions of Theorem~\ref{thm: Diam linear sketch dim lb}, that is, that the probability over $\bx \sim \calD$ that $\Dec'(T'\bx) = \diam_{\bG}^{2,c'}(\bx)$ is at least $1-1/(100n)$. In particular, we consider any draw of $\bx \sim \calD$ which was generated with $\bi$, and we first claim that, as long as $L = 100 b \log n$, the $\ell_0$-samplers return satisfy $|\{ \bell_1,\dots, \bell_{L}\}| \geq b + 2$ with probability $1 - o(1/n)$. This occurs for two reasons. First, (i) each $\ell_0$-sampler fails independently with probability at most $0.1$, so by a standard Chernoff bound, there are at most $L/2$ $\ell_0$-samplers which fail. Second, (ii) whenever at least $L/2$ many $\ell_0$-samplers succeed, each is distributed uniformly among indices $j \in [n]$ where $\bx_j \neq 0$; if $(\bx, \bi)$ was the input generated by $\calD$, the non-zero entries are contained within $\bN(v_{\bi}) \cup u_{\bi}$,  whose size is at least $np/2 = \omega(b)$.

Since $L/2$ $\ell_0$-samplers succeed and take on entries uniformly from a set of size $\omega(b)$, the probability $|\{ \bell_1, \dots, \bell_L \}| < b + 2$ is at most $o(1/n)$, and therefore, there are at least $b$ many indices among $\bell_1,\dots, \bell_L$ within $\bN(v_{\bi})$ (the additive $+2$ comes from $\bot$ and potentially $\bi$); when this occurs, the unique coordinate $i$ from the second step is exactly $\bi$, and $\AFN_{\calM}^{1,c}(\bx, v_{\bi}) = \ind\{ x_{\bi} > 0 \}$. In other words, we have:
\begin{align*}
\Prx_{(\bx, \bi) \sim \calD}\left[ \Dec'(T'\bx) = \diam_{\bG}^{2,c'}(\bx) \right] \geq \Prx_{(\bx, \bi) \sim \calD}\left[ \Dec(T\bx, v_{\bi}) = \AFN_{\bG}^{1,c}(\bx, v_{\bi})\right] - o(1/n) \geq 1 - 1/(100 n).
\end{align*}
The resulting lower bound implies $s + O(L\log n) \geq \tilde{\Omega}(n^{1/(2\lceil c'\rceil-1)})$, which is the same as $\tilde{\Omega}(n^{1 / (2k-1)})$ since $c' = k$ is already an integer. The conclusion that $s\geq \tilde{\Omega}(n^{1/(2k-1)})$ follows from $L\log n = O(b\log^2 n) = o(np/\log n)$, completing the proof.\footnote{Here, we are using the fact that the lower bound in Theorem~\ref{thm: Diam linear sketch dim lb} is exactly $\Omega(np / \log n)$ even for our new setting of $p = n^{1/(2k-1) - 1} / \log^5 n$}
\end{proof}

\section{Embedding Metrics in Low-Dimensional $\ell_{\infty}$} \label{sec: embedding-ubs}

In this section, we give algorithms for solving the approximate furthest neighbor and diameter problems in $\ell^{k}_\infty$. We then use embeddings of arbitrary metrics into low-dimensional $\ell_\infty$ to give a dynamic streaming algorithm for diameter estimation in general metrics, matching the lower bound of Theorem~\ref{thm:lb} up to constant factors in the exponent. By the same token, we show how our sketching lower bound for approximate furthest neighbor from Lemma~\ref{lem:afn-lb} translates to dimension-distortion tradeoffs for embeddings into $\ell_\infty$.

\subsection{Streaming Approximate Furthest Neighbor in $\ell_{\infty}^k$} 

We begin by generalizing the approximate furthest neighbor problem (originally defined in Definition~\ref{def: AFN}) to potentially infinite metric spaces. 

\begin{definition}[Approximate Furthest Neighbor in arbitrary metrics] 
Let $\calM = (Y, \sfd)$ be a metric space and fix an $n$-point submetric $U = \{u_1,\dots, u_n\}\subset \calM$ and parameters $r > 0$ and $c > 1$. The approximate furthest neighbor function $\AFN_{U, \calM}^{r, c} : \Z^n\times Y \to\{0,1,\ast\}$ is given by 
\[\AFN_{U, \calM}^{r, c}(x,q) = \begin{cases}
        1 & \textup{if $x \in \Z_{\geq 0}^{n}$ and } \max\left\{ \sfd(q,u_i) : x_i > 0 \right\} \geq cr \\ 
        0 & \textup{if $x \in \Z_{\geq 0}^n$ and } \max\left\{ \sfd(q,u_i) : x_i > 0 \right\} \leq r \\ 
        \ast & \textup{otherwise} 
    \end{cases}  \] 
In dynamic streaming, the underlying frequency vector $x\in\Z^n$ (encoding a multiset $X\subset U$) is updated via increments and decrements. The query $q\in Y$ is given at the end of the stream.
\end{definition}

It will be convenient to relabel the range of $\AFN_{U,\calM}^{r,c}$ as $\{\text{``close''},\text{``far''},\ast\}$.

\begin{lemma}\label{lem: streaming AFN in l_infty}
Fix any $r,\Delta > 0$, $\eps, \delta \in (0, 1)$. There exists a dynamic streaming algorithm computing $\AFN_{U, \ell_\infty^k}^{r, (1+\epsilon)}$ with probability $1-\delta$, with the following guarantees:
\begin{itemize}
\item The algorithm is a linear sketch with $\tilde{O}(k \log n\log(1/\delta) / \eps)$ rows, $n$ columns, and matrix entries in $\{-1,0, 1\}$.
\item The bit-complexity is $\tilde{O}((k \log n\log(mn) + \log \Delta) \log(1/\delta) / \eps)$, where $\Delta$ upper bounds the aspect ratio, and $m$ upper bounds the magnitude of the underlying frequency vector.
\end{itemize}
\end{lemma}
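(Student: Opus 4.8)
\emph{Proof plan.} By rescaling every coordinate by $1/r$ we may assume $r=1$; this changes only how the algorithm reads the metric oracle, not the sketch matrix. Since $\|q-u_i\|_\infty = \max_{t\in[k]}|q_t-(u_i)_t|$, the instance is ``far'' iff some active point (one with $x_i>0$) satisfies $|(u_i)_t-q_t|\ge 1+\epsilon$ for some $t$, and ``close'' iff $|(u_i)_t-q_t|\le 1$ for all active $i$ and all $t$. The plan is, for each coordinate $t$, to maintain a small linear sketch of $x$ from which, at query time (once $q$ is revealed), we extract a set $W_t\subseteq[n]$ of \emph{witness indices} of active points, and then output ``far'' iff for some $t$ there is a witness $j\in W_t$ with $|(u_j)_t-q_t|\ge 1+\tfrac{\epsilon}{2}$, or two witnesses $j,j'\in W_t$ with $|(u_j)_t-(u_{j'})_t|>2(1+\epsilon)$, and ``close'' otherwise (the empty-support case is detected from the samplers and answered ``close''). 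Neither condition can trigger on a genuine ``close'' instance, since there all active points lie within distance $1$ of $q$ in every coordinate; the real work is to guarantee that, on a genuine ``far'' instance, the far witness actually surfaces in some $W_t$.

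\emph{The per-coordinate gadget.} Fix granularity $g=\Theta(\epsilon)$ and, for each coordinate $t$, sample $G=O(\log(\Delta/\epsilon))$ independent ``folded grids'': the $\ell$-th grid has a random offset $o_\ell$ and a random period $P_\ell=\Theta(1)$, inducing for each point $u_i$ the residue $\rho_\ell^{(t)}(i):=\lfloor((u_i)_t-o_\ell)/g\rfloor \bmod (P_\ell/g)$, which takes one of $M=P_\ell/g=\Theta(1/\epsilon)$ values. For every $t$, every $\ell\le G$, and every residue class, we run (as a linear sketch) an $\ell_0$-sampler from Lemma~\ref{thm: L0 sampler} on $x$ \emph{restricted to the columns} in that class; restricting to a column subset merely zeros out columns, so the composite matrix still has entries in $\{-1,0,1\}$. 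Running $O(\log(1/\delta)+\log(kn/\epsilon))$ independent copies of each sampler and union-bounding over the $O(kG/\epsilon)$ sampler banks makes all of them succeed simultaneously with probability $1-\delta$. This gives $\tilde O(k\log n\log(1/\delta)/\epsilon)$ rows and $n$ columns (the $\log\Delta$ from $G$ is absorbed into $\tilde O$ under the polynomial aspect-ratio assumption). At query time, for each $t$ we collect the indices returned by the non-empty banks into $W_t$; since $U$ is fixed, each $(u_j)_t$ is known exactly. The bit-complexity is the $\ell_0$-sampler cost $O(\log n\log(mn))$ times the $O(kG/\epsilon)$ banks times $O(\log(1/\delta))$ copies, plus an additive $O(\log\Delta)$ to store grid offsets/periods and witness coordinates over a range of size $\poly(\Delta)$, matching the claimed bound.

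\emph{Correctness and the main obstacle.} The ``close'' direction is immediate: there all active points lie in an interval of length $2$ per coordinate, which is shorter than any period $P_\ell$, so within each grid the occupied cells span fewer than $M$ consecutive values and hence pairwise-distinct residues; consequently each non-empty bank returns a point in \emph{the} unique occupied cell of its residue, so the witnesses reproduce exactly the occupied cells and both rules stay silent. The crux is the ``far'' direction: a far point $u_{i^*}$ in coordinate $t^*$ occupies a cell that may share its residue, \emph{in a given grid}, with cells near $q_{t^*}$ that are more heavily populated, so that the bank for that residue ``masks'' $u_{i^*}$ by returning a nearby point instead. I expect this masking phenomenon to be the main difficulty. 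It is handled by the randomization over the $G$ grids: for any fixed pair of distances $D_F$ (far) and $D_N\le 1+\tfrac{\epsilon}{2}$ (near) and a random period $P_\ell=\Theta(1)$, the event that these collide modulo $P_\ell$ (up to the $g$-rounding) has probability bounded by a constant strictly below $1$; taking $G=O(\log(\Delta/\epsilon))$ grids and union-bounding over the $O(\Delta/\epsilon)$ possible (discretized) distances $D_F$ ensures that, for \emph{every} far point that can occur, at least one grid isolates its cell as the sole occupant of its residue --- whence its bank must return a point in that cell, placing a witness within $g<\tfrac{\epsilon}{10}$ of $u_{i^*}$ and at distance $\ge 1+\epsilon-g>1+\tfrac{\epsilon}{2}$ from $q_{t^*}$, so the first rule fires. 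The supplementary ``two far-apart witnesses'' rule cleans up the remaining configurations in which the active points in a coordinate are spread over more than a period but concentrated in few residues. Combining these cases with the union bound over sampler banks gives Lemma~\ref{lem: streaming AFN in l_infty}.
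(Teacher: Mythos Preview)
Your approach differs substantially from the paper's, and while the complexity budget is in the right ballpark, the ``far''-case isolation argument has a genuine gap. The paper's construction is much simpler: for each coordinate $\ell\in[k]$ it draws a single pairwise-independent hash $h$ from buckets of width $\epsilon r$ into $\{0,\dots,K-1\}$ with $K=\lceil 4/\epsilon+2\rceil$, and runs \emph{one} $\ell_0$-sampler restricted to points whose bucket hashes to $0$. At query time the buckets within distance $r$ of $q_\ell$ form a set $Q$ with $|Q|\le 2/\epsilon+1$; by pairwise independence the event ``far bucket hashes to $0$ and every $\beta\in Q$ hashes to $\neq 0$'' has probability at least $\tfrac{1}{K}(1-|Q|/K)\ge \epsilon/12$, and under that event the sampler's support contains the far point and excludes every near point, so it must return a far witness. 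Repeating $\Theta(\log(1/\delta)/\epsilon)$ times completes the proof.

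Your grid mechanism tries to achieve the same isolation via randomized periods, but the inference from the pairwise claim to isolation does not go through. You assert that for a fixed pair $(D_F,D_N)$ the collision probability over a random $P_\ell=\Theta(1)$ is a constant $c<1$, take $G$ independent grids, and union-bound over the $O(\Delta/\epsilon)$ values of $D_F$. But ``isolating'' the far cell means its residue avoids \emph{every} occupied near cell simultaneously in a \emph{single} grid; even if for each near $D_N$ some grid avoids that particular $(D_F,D_N)$ collision, those grids may all be different. Since the near interval has length $2+\epsilon$ while your period is $\Theta(1)$, the near cells already occupy a constant fraction of the $M=\Theta(1/\epsilon)$ residues, so there is no room to union-bound $O(1/\epsilon)$ pairwise events with a constant $c$ into a single-grid isolation bound. (The random offset is a red herring: it shifts all residues together and does not affect relative collisions.) A correct version would argue directly that for $P$ drawn from a sufficiently wide continuous range, no integer multiple of $P$ lands in $[D_F-(1{+}\epsilon/2),\,D_F+(1{+}\epsilon/2)]$ with constant probability---an equidistribution-type statement that is plausible but not what you wrote---and your fallback Rule~2 does not obviously rescue the configurations where the far cell's residue is perpetually shared with a near cell. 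The paper's pairwise-independent hash sidesteps all of this: the isolation event is bounded in a single line.
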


The proof of Lemma~\ref{lem: streaming AFN in l_infty} relies on suitably constructing a small number of $\ell_0$-samplers (Definition~\ref{def: L0 sampler}) for each coordinate of the input space.

\subsubsection{Proof of Lemma~\ref{lem: streaming AFN in l_infty}}

Here, we present a linear sketching algorithm for $\AFN_{U, \ell_\infty^k}^{r, (1+\epsilon)}$ matching the description of Lemma~\ref{lem: streaming AFN in l_infty}. First, consider the simple algorithm that maintains an $\ell_0$-sampler along each dimension and returns ``Far" if it finds a sample which is $> r$ far from the query $q$, and ``Close" otherwise. In the case that the furthest neighbor from $q$ is at distance $\leq r$, the algorithm will always return ``Close". However, in the case there exists a neighbor of $q$ which is far, the probability of sampling it is at most $k / |X|$ where $X$ denotes the surviving set of points. 

To improve the probability of sampling a far point (when it exists), our algorithm discretizes each dimension into buckets of width $\frac{1}{\epsilon r}$ and `activates' each bucket (using a pairwise independent hash function) with probability roughly $\epsilon$. We update the $\ell_0$ sampler along a particular dimension only if the point's bucket along that dimension is active. Now, the probability that a given `far' bucket is active while all `close' buckets are inactive is $\Theta(\epsilon)$. Finally, repeating $\Theta(\log (1/ \delta) / \epsilon)$ times boosts the success probability to  $1 - \delta$.  

Let $D_{\min}$ and $D_{\max}$ denote the minimum and maximum distances in $U$, respectively. By shifting and scaling, we may assume $D_{\min} = 1$ and that $U \subseteq [0,\Delta]^k$. Thus, if $r < 1$ we can safely return ``far'', and we henceforth assume $r \geq 1$.

    \begin{algorithm}[H]\label{alg: streaming-afn}  
        \caption{Streaming $\AFN_{U, \ell_{\infty}^{k}}^{r, (1+\epsilon)}$}
        Define $\phi: [0, \Delta] \to \{0, \ldots, \Delta / \epsilon \}$ by $\phi(z) = \lfloor z / (\eps r) \rfloor$ \\
        Let $K:= \lceil 4 / \epsilon + 2 \rceil$ 
        \vspace{10pt}

        Draw $\bh: \{0, \ldots, \Delta / \epsilon\} \to \{0, 1, \ldots, K-1\}$ from a pairwise independent hash family \\
        Initialize $k$ independent $\ell_0$ samplers $S_1, \ldots, S_k$ on frequency vectors $z^1, \ldots, z^k \in \Z^n$ with failure probability $\leq 1/(2k)$

        \vspace{10pt}
        \textbf{Stream Processing:}\\
        On insertion/deletion of $u_i \in U$, \ForEach{$\ell \in [k]$}{
            \If{$\bh(\phi(u_{i}^{\ell})) = 0$}{
                Increment/decrement $z_i^{\ell}$ in $S_{\ell}$ accordingly
            }
        } 
        \vspace{10pt}
        \textbf{Query Processing:} \\
        On query $q \in \R^k$, \ForEach{$\ell \in [k]$}{
            Draw sample $\bi_{\ell} \in \supp(z^{\ell})$ from $S_{\ell}$ (or $S_{\ell}$ declares that $z^{\ell}$ is zero) \\
            \If{$\norm{q - u_{\bi_{\ell}}}_{\infty} > r$}{
                Return Far
            }
        } 
        Return Close
    \end{algorithm}

    We will analyze the algorithm that performs $\Theta(\log(1 / \delta) / \epsilon)$ independent repetitions of Algorithm~\ref{alg: streaming-afn}. It outputs ``Far'' if any run outputs ``Far'', and ``Close'' otherwise.

\textbf{Analysis.} 
Let $X\subset U$ denote the final multiset of points. Suppose all $u \in X$ satisfy $\|u - q\|_{\infty} \le r$. Then as long as the $\ell_0$-samplers $S_1,\dots, S_k$ do not fail (which occurs with probability at least $1/2$), any $\bi_{\ell} \in [n]$ returned by $S_{\ell}$ will satisfy $x_{\bi_{\ell}} \neq 0$, and hence $\|q - u_{\bi_{\ell}}\|_{\infty} \le r$, so algorithm \ref{alg: streaming-afn} outputs ``close''. Suppose, on the other hand, that there exists an index $i^* \in [n]$ such that, at the end of the stream, $x_{i^*} > 0$ and $\|q - u_{i^*}\|_{\infty} \geq (1+\eps)r$. Let $\ell \in [k]$ be a coordinate such that $|u_{i^*}^{\ell} - q^{\ell}| \geq (1+\eps)r$, and define the set 
\[
Q = \left\{\bigfloor{q^{\ell} / \epsilon r} - \frac{1}{\eps}, \dots, \bigfloor{q^{\ell} / \epsilon r} + \frac{1}{\eps} \right\},
\]
a set of $2/\eps + 1$ integers. Notice that, any point $y \in U$ with $|y^{\ell} - q^{\ell}| \le r$ must have $\phi(y^{\ell}) \in Q$, and any $y \in U$ with $|y^{\ell} - q^{\ell}| \ge (1+\eps)r$ has $\phi(y^{\ell}) \notin Q$. In particular, $\phi(u_{i^*}^{\ell}) \notin Q$. Let $\calbE$ be the event that $\bh(\phi(u_{i^*}^{\ell})) = 0$ and $\bh(\beta) \ne 0$ for all $\beta \in Q$. Then, we use pairwise independence of $\bh$ to say
\begin{align*}
\Prx_{\bh}[\calbE] &= \Prx_{\bh}[\bh(\phi(u_{i^*}^{\ell})) = 0] \cdot \Prx_{\bh}[\bh(\beta) \ne 0\ \forall \beta \in Q \mid \bh(\phi(u_{i^*}^{\ell})) = 0] \\
&\ge \frac{1}{K} \left(1 - \sum_{\beta \in Q} \Prx_{\bh}[\bh(\beta) = 0]\right) = \frac{1}{K} \cdot \left(1 - \frac{|Q|}{K} \right) \ge \frac{1}{2K} \ge \frac{\eps}{12},
\end{align*}
using the setting of $K = \lceil 4/\eps + 2 \rceil$ and the fact that $\eps \in (0,1)$. Conditioned on $\calbE$, the following occurs. First, the sampler $S_{\ell}$ has $z^{\ell}_{i^*} > 0$ since $\bh(\phi(u_{i^*}^{\ell})) = 0$, which implies $\supp(z^{\ell}) \neq \emptyset$. Assume $S_{\ell}$ does not fail, so it outputs a coordinate $\bi_{\ell} \in [n]$ satisfying $x_{\bi_{\ell}} > 0$ and $\bh(\phi(u_{\bi_{\ell}}^{\ell})) = 0$. Since $\bh(\phi(u_{\bi_{\ell}}^{\ell})) = 0$, it must be the case that $\phi(u_{\bi_{\ell}}^{\ell}) \notin Q$ by $\calbE$. This, in turn, means $\|q - u_{\bi_{\ell}} \|_{\infty} \geq |q_{\ell} - u_{\bi_{\ell}}^{\ell}| > r$, and the algorithm outputs ``far.'' The success probability in this case is at least $\epsilon/12 \cdot 1/2 \geq \epsilon/24$. After $\Theta(\log (1 / \delta) / \epsilon)$ many independent repetitions, the success probability becomes at least $1 - \delta$. 


\textbf{Space Complexity.} 
By Lemma~\ref{thm: L0 sampler}, each $\ell_0$-sampler can be implemented as an $s \times n$ linear sketch with $s = O(\log n \log k)$ and using $O(\log n \log k \log(nm))$ bits of space where $m$ is an upper bound on the magnitude of each coordinate of $x \in \Z^n$ in the input vector. The hash function $\bh$ requires $O(\log(\Delta/\eps))$ bits to store. The sketch for one independent run of algorithm \ref{alg: streaming-afn} is an $O(k \log n \log k) \times n$ matrix, using $O(k \log n \log k \log(mn)) + O(\log(\Delta/ \eps))$ bits of space. Thus, the entire sketch for $O(\log(1 / \delta) / \epsilon)$ independent runs has dimensions $O(k \log n \log k \cdot \log(1 / \delta) / \epsilon) \times n$, using 
\[
O\left( (k \log k \log n \log(nm) + \log(\Delta/\eps)) \log(1/\delta) / \eps \right).
\]
bits of space. 


\begin{corollary} \label{cor: diameter Linfty algorithm} 
Fix any $r,\Delta > 0$ and any $c \ge 2(1 + \eps)$ for $\eps, \delta \in (0,1)$, there is a dynamic streaming algorithm for $\diam_{U}^{r, c}$ over $\ell_{\infty}^k$ that uses a linear sketch with $\tilde{O}(k \log n \log(1/\delta)/\eps)$ rows and $n$ columns. The total bit-complexity is $\tilde{O}( k \log n \log(nm) + \log \Delta) \log(1/\delta) / \eps)$, where $\Delta$ upper bounds the aspect ratio, and $m$ upper bounds the magnitude of the final frequency vector.
\end{corollary}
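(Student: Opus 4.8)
The plan is to reduce the diameter decision problem over $\ell_\infty^k$ to approximate furthest neighbor, exploiting the standard fact that for any point $p$ in a pointset $X$ the eccentricity $\max_{y\in X}\sfd(p,y)$ is a $2$-approximation to $\diam(X)$: it is at most $\diam(X)$ since $p,y\in X$, and at least $\diam(X)/2$ by applying the triangle inequality through $p$ to a diametral pair. Concretely, during the stream I would maintain two query-independent linear sketches in parallel: (i) the $\AFN_{U,\ell_\infty^k}^{r,(1+\eps)}$ sketch of Lemma~\ref{lem: streaming AFN in l_infty} with failure probability $\delta/2$, and (ii) one additional $\ell_0$-sampler (Lemma~\ref{thm: L0 sampler}) run on the whole frequency vector $x\in\Z^n$ with failure probability $\delta/2$, whose matrix has entries in $\{-1,0,1\}$. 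Both sketches are linear and neither consults a query during the stream, so stacking their rows yields a single linear sketch; the fact that the $\ell_0$-sample is only available at the end of the stream is not an obstacle, since the AFN decoder only needs a query at that point anyway.

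The decoder would first query the $\ell_0$-sampler: if it reports $x=0$ it outputs ``close'', and otherwise it obtains an index $i^\ast$ with $x_{i^\ast}>0$, feeds $q=u_{i^\ast}$ to the AFN decoder, and outputs ``far'' iff the AFN decoder returns ``far''. For correctness, condition on both sketches succeeding (probability $\geq 1-\delta$ by a union bound). If $\diam(X)\leq r$, every surviving point lies within $r$ of $u_{i^\ast}$, so $\AFN_{U,\ell_\infty^k}^{r,(1+\eps)}(x,u_{i^\ast})$ is ``close'' and the algorithm outputs ``close''. If $\diam(X)\geq cr\geq 2(1+\eps)r$, pick a diametral pair $a,b\in X$; the triangle inequality gives $\max\{\sfd(u_{i^\ast},a),\sfd(u_{i^\ast},b)\}\geq \sfd(a,b)/2\geq(1+\eps)r$, so $\AFN_{U,\ell_\infty^k}^{r,(1+\eps)}(x,u_{i^\ast})$ is ``far'' and the algorithm outputs ``far''. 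When $\diam(X)\in(r,cr)$ the output is unconstrained, which matches the $\ast$ case of Definition~\ref{def: diameter}, so no further analysis is needed.

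For the space bound, the combined sketch has $\tilde O(k\log n\log(1/\delta)/\eps)$ rows coming from Lemma~\ref{lem: streaming AFN in l_infty} plus the $O(\log n\log(1/\delta))$ rows of the extra $\ell_0$-sampler (which is dominated), has $n$ columns, and has all entries in $\{-1,0,1\}$; the bit-complexity is likewise dominated by the AFN component, giving $\tilde O((k\log n\log(nm)+\log\Delta)\log(1/\delta)/\eps)$ as claimed. The only points that require care are the factor-$2$ loss in the approximation---which is precisely why the hypothesis asks for $c\geq 2(1+\eps)$---and splitting the failure budget $\delta$ between the two sketches; I do not expect any substantive obstacle beyond this routine bookkeeping.
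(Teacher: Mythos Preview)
Your proposal is correct and follows essentially the same approach as the paper: run an $\ell_0$-sampler in parallel with the AFN sketch of Lemma~\ref{lem: streaming AFN in l_infty}, use the sampled point as the query at the end of the stream, and invoke the triangle inequality to justify the factor-$2$ loss in approximation. Your handling of the failure budget (splitting $\delta$ between the two sketches) is in fact slightly cleaner than the paper's.
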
 

\begin{proof} 
As before, we let $D_{\min}$ and $D_{\max}$ denote the minimum and maximum distances in $U$, respectively. By shifting and scaling, we may assume $D_{\min} = 1$ and that $U \subseteq [0, \Delta]^k$. Thus, if $r < 1$ we can safely return ``far'', and we henceforth assume $r \geq 1$. We create an $\ell_0$-sampler $S$ for the entire stream to sample an index $i \in [n]$ with $x_i \neq 0$. 
We condition on the sampler $S$ succeeding from here on, and we let $q = x_{i}$ for the sample $i$ generated by $S$ (note that if $\supp(x) = \emptyset$, the diameter is trivially small). We also run in parallel the $\smash{\AFN_{U, \ell_\infty^k}^{r, (1+\epsilon)}}$ algorithm of Lemma~\ref{lem: streaming AFN in l_infty} over the entire stream, also with error probability $\delta$. At the end of the stream, we give $q$ as the query point to the $\smash{\AFN_{U, \ell_\infty^k}^{r, (1+\epsilon)}}$ algorithm. 

If there exists a pair of points in $X$ with $\ell_\infty$ distance at least $2(1+\eps)r$, then for any query point $q \in X$, there exists another point $x^*$ that is at least $(1+\eps)r$ far; otherwise by triangle inequality the diameter of $X$ can be bounded by $2(1 + \eps)r$. Hence in this case, the $\smash{\AFN_{U, \ell_\infty^k}^{r, (1+\epsilon)}}$ algorithm will output ``far'' with probability at least $1-\delta$. On the other hand, if the diameter is at most $r$, then for any choice of $q \in X$, the $\smash{\AFN_{U, \ell_\infty^k}^{r, (1+\epsilon)}}$ algorithm will return ``close'' with probability at least $1 - \delta$. Hence, we get an algorithm for $\diam_{U}^{r, c}$ over $\ell_{\infty}^k$ that succeeds with probability at least $1-\delta$ and the space complexity follows accordingly.
\end{proof}

\subsection{Embedding Upper and Lower Bounds}

\begin{lemma}[Theorem~15.6.2 in~\cite{M02}] \label{lemma: frechet embedding}
    For any metric $\calM = ([n], \sfd)$ and any odd integer $\sfD = 2q - 1 \geq 3$, there exists an embedding of $\calM$ into $\ell_{\infty}^k$ of distortion $\sfD$ with $k = O(q n^{1/q} \log n)$. In other words, a function $g \colon [n] \to \R^k$ satisfying
    \[ \sfd(i,j) \leq \|g(i) - g(j) \|_{\infty} \leq \sfD \cdot \sfd(i,j) \]
    for all $i,j \in[n]$.
\end{lemma}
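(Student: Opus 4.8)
The plan is to build $g$ as a \emph{Fréchet-type} (Matoušek) embedding into $\ell_\infty$: every coordinate of $g$ will have the form $z \mapsto \sfd(z,A)$ for a suitably chosen subset $A \subseteq [n]$. With this shape, the upper bound in the statement is essentially free, since for any nonempty $A$ the map $z \mapsto \sfd(z,A)$ is $1$-Lipschitz; hence stacking any number of such coordinates into a map $g_0$ gives $\|g_0(i)-g_0(j)\|_\infty \le \sfd(i,j)$, and at the very end we rescale $g := \sfD \cdot g_0$, which inflates only the right-hand inequality to $\sfD\cdot\sfd(i,j)$. All the work is therefore on the other side: choosing $O(q\,n^{1/q}\log n)$ sets $A$ so that \emph{every} pair $x\neq y$ has some coordinate with $|\sfd(x,A)-\sfd(y,A)| \ge \sfd(x,y)/\sfD$.

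\textbf{Step 1: the random family.} I would use $q+1$ ``density levels'' $p_j := n^{-j/q}$ for $j=0,1,\dots,q$, so that the reciprocals $1/p_j$ sweep $1, n^{1/q}, \dots, n$ multiplicatively, and for each level draw $N = \Theta(n^{1/q}\log n)$ independent subsets $A \subseteq [n]$, each formed by including every element independently with probability $p_j$. The total dimension is $(q+1)N = O(q\,n^{1/q}\log n)$, as required; it remains to show the witness property above holds simultaneously for all pairs with positive probability over this draw.

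\textbf{Step 2: a witnessing scale and density for a fixed pair.} Fix $x \neq y$, put $\rho := \sfd(x,y)$ and $\delta := \rho/(2q-1)$, and for $t=0,1,\dots,q$ record the ball sizes $a_t := \lvert\{z : \sfd(x,z)\le t\delta\}\rvert$ and $b_t := \lvert\{z : \sfd(y,z)\le t\delta\}\rvert$; these are nondecreasing with $a_0=b_0=1$ and $a_q,b_q\le n$. Multiplying $\prod_{t=1}^q \frac{a_t}{b_{t-1}}$ by $\prod_{t=1}^q \frac{b_t}{a_{t-1}}$ telescopes to $a_q b_q \le n^2$, so one of the two products is at most $n$; by pigeonhole (after possibly swapping $x\leftrightarrow y$) there is a scale $t^\star \in \{1,\dots,q\}$ at which the ``small'' ball of radius $(t^\star-1)\delta$ around one point, say $y$, is not a great deal smaller than the ``target'' ball of radius $t^\star\delta$ around the other point $x$, together with a matching level $j^\star$. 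The point of these choices: a random set $A$ at level $j^\star$ (i) \emph{misses} the target ball around $x$ with constant probability, because that ball has size at most $1/p_{j^\star}$, and (ii) \emph{hits} the small ball around $y$ with probability $\Omega(n^{-1/q})$; and because these two balls are disjoint — a common point would force $\sfd(x,y) \le (2t^\star-1)\delta \le \rho$, the single boundary case $t^\star=q$ being handled by replacing the target ball with its half-open version — the two events are independent, and when both occur we get $\sfd(x,A) \ge t^\star\delta$ and $\sfd(y,A) \le (t^\star-1)\delta$, hence $|\sfd(x,A)-\sfd(y,A)| \ge \delta = \rho/\sfD$. So a single set at level $j^\star$ is a witness for $(x,y)$ with probability $\Omega(n^{-1/q})$ (up to constant slack).

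\textbf{Step 3: union bound, and the main obstacle.} For the fixed pair, the chance that none of the $N=\Theta(n^{1/q}\log n)$ sets at level $j^\star$ witnesses it is at most $(1-\Omega(n^{-1/q}))^N \le n^{-3}$ for a large enough constant in $N$; a union bound over the $\binom{n}{2}$ pairs then makes the witness property hold for all pairs with probability $\ge 1 - 1/n > 0$. Fixing such an outcome, stacking the coordinates into $g_0$, and setting $g := \sfD\cdot g_0$ completes the construction. Everything outside Step~2 is routine bookkeeping; the crux — and the step I expect to be the main obstacle — is making Step~2 quantitatively tight. One must couple the \emph{geometric} ladder of radii (spacing exactly $\rho/(2q-1)$, the finest a distortion-$(2q-1)$ embedding can afford) with the \emph{multiplicative} ladder of densities $\{n^{-j/q}\}$ so that the scale $t^\star$ delivered by pigeonhole lands on a density level where the hit-ball and miss-ball sizes are \emph{simultaneously} favorable: a naive discretization of the ``ideal'' density to the $n^{1/q}$-ratio grid costs an extra $n^{1/q}$ factor in the witnessing probability and would inflate the dimension to $q\,n^{2/q}\log n$, so the radii, the grid, and the orientation chosen at $t^\star$ must be arranged to avoid that loss, and the $t^\star=q$ boundary must be handled by a half-open ball as noted above.
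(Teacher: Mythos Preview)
The paper does not prove this lemma; it is quoted as Theorem~15.6.2 of Matou\v{s}ek's book and used as a black box. Your outline is essentially Matou\v{s}ek's own argument, and Steps~1 and~3 are exactly right. The ``main obstacle'' you flag in Step~2, however, is an artifact of the way you set up the pigeonhole, not a genuine difficulty in the proof.

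The issue is that your telescoping argument first produces a scale $t^\star$ with $a_{t^\star}/b_{t^\star-1}\le n^{1/q}$, and only afterwards tries to round the ``ideal'' density $1/a_{t^\star}$ to the grid $\{n^{-j/q}\}$. The interval of acceptable densities $[\,n^{-1/q}/b_{t^\star-1},\; 1/a_{t^\star}\,]$ can have endpoint ratio arbitrarily close to $1$, so it need not contain any grid point; rounding then genuinely costs the extra $n^{1/q}$ factor you feared, and the proposal as written does not close this gap.

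Matou\v{s}ek avoids the problem by \emph{tying the radius index to the density index}: look only at pairs with $t=j$. The claim is that some $j\in\{1,\dots,q\}$ works (after possibly swapping $x\leftrightarrow y$), namely $|B_x^{\mathrm o}(j\delta)|<n^{j/q}$ and $|\bar B_y((j{-}1)\delta)|\ge n^{(j-1)/q}$. The proof is a one-line induction: if no such $j$ works in either orientation, then from $|\bar B_x(0)|=|\bar B_y(0)|=1=n^{0}$ one deduces $|B_x^{\mathrm o}(j\delta)|,\,|B_y^{\mathrm o}(j\delta)|\ge n^{j/q}$ for every $j$; at $j=q$ this forces $|B_x^{\mathrm o}(q\delta)|\ge n$, contradicting $y\notin B_x^{\mathrm o}(q\delta)$ (since $q\delta=\tfrac{q}{2q-1}\rho<\rho$). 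With $t^\star=j^\star$ already on the grid there is no discretization loss, the single-set witnessing probability is $\Omega(n^{-1/q})$ on the nose, and your Steps~1 and~3 finish the proof unchanged.
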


We are now ready to prove Theorem~\ref{thm:ub} and Theorem~\ref{thm:embed-lb}.

\begin{proof}[Proof of Theorem~\ref{thm:ub}]
Given a metric space $\calM = ([n],\sfd)$ and $c > 6$, we find the largest value of $q \geq 2$ and $\eps > 0$ such that $c - 2\eps \geq 4q - 2$, which corresponds to $q = \lfloor (c - 2) / 4\rfloor$ and $\eps >0$ a small enough constant so $c - 2\eps \geq 6$. We use Lemma~\ref{lemma: frechet embedding} to obtain an embedding $g\colon [n]\to\ell_\infty^k$ of distortion $\sfD = 2q-1 \geq 3$. We use Corollary~\ref{cor: diameter Linfty algorithm} with threshold parameter $\sfD r$, approximation $c = 2(1+\eps)$, $\delta = 1/10$, and let $U = g([n])$. The result is a streaming algorithm $\Alg$ for $\diam_{U}^{\sfD r, 2 (1+\epsilon)}$ succeeding with probability $0.9$. Our algorithm simply returns the output of $\Alg$.

Suppose the frequency vector $x \in \Z^n$ is such that $\diam_{\calM}(x) \leq r$, so that every $i, j\in [n]$ with $x_i, x_j > 0$ satisfies $\sfd(i, j) \leq r$. Then, we will have $\norm{g(i) - g(j)}_{\infty} \leq \sfD r$, and our algorithm will output ``close'' with probability at least $0.9$. On the other hand, if $\diam_{\cal{M}}(x) \geq 2(1+\eps) \sfD r$, let $i, j \in [n]$ be the indices where $x_i, x_j > 0$ and $\sfd(i, j) \geq 2(1+\eps) \sfD r$. Here, we will have $\norm{g(i) - g(j)}_{\infty} \geq 2(1+\eps) \sfD r$, and the algorithm will output ``far'' with probability at least $0.9$. The space complexity comes from Corollary~\ref{cor: diameter Linfty algorithm}, where note that $k = \tilde{O}(n^{1/q})$, so the linear sketch is a matrix of dimension $\tilde{O}(n^{1/q})\times n$, and total bit complexity $\tilde{O}(n^{1/q}) + \poly(\log(nm\Delta))$, where $m$ upper bounds the magnitude of the final frequency vector, and $\Delta$ upper bounds the aspect ratio. In terms of the approximation factor $c = 2(1+\eps) (2q-1) = 4q-2 +2\eps$ for $q \geq 2$, the resulting dimensionality of the sketch is $\tilde{O}(n^{1/ \lfloor (c - 2)/4\rfloor})$. 
\end{proof}

\begin{proof}[Proof of Theorem~\ref{thm:embed-lb}]
Let $\calbM = ([2n], \sfd)$ be the shortest path metric on a random bipartite graph $\calG(n,n,p)$ where \smash{$p = n^{-1+1/(2k-1)} / \log^5 n$}, and assume that the conclusions of Lemma~\ref{lem:afn-lb} hold for $\calbM$ (with probability at least $1-o(1)$). Now, suppose that $g \colon [2n] \to \ell_{\infty}^s$ is an embedding of $\calbM$ into $\ell_{\infty}^s$ with distortion $c < 2k+1$, and let $\eps > 0$ be a small enough parameter such that $c(1+\eps) \leq 2k+1$, and let $U = g([2n])$. We use Lemma~\ref{lem: streaming AFN in l_infty} to obtain a linear sketch for $\smash{\AFN_{U,\ell_\infty^s}^{c, (1+\epsilon)}}$ with failure probability $1/200n$ using a matrix of $\tilde{O}(s\log^2 n / \eps)$ rows, $n$ columns, and entries in $\{-1, 0, 1\}$, and directly compose it with our embedding to obtain the following linear sketch for $\smash{\AFN_{\calbM}^{1, 2k+1}}$. Upon receiving a query $q \in [2n]$ where $\sfd(q, i) \leq 1$ for all $i \in [2n]$ with $x_{i} > 0$, we have $\|g(q) - g(i) \|_{\infty} \leq c$ and our algorithm will output ``close'' with probability at least $1-1/200n$. On the other hand, if $q \in [2n]$ has some $i \in [2n]$ with $\sfd(q, i) \geq 2k+1$, then $\| g(q) - g(i) \|_{\infty} \geq 2k+1 \geq c(1 + \eps)$, hence, the algorithm will output ``far'' with probability at least $1-1/200n$. In other words, we obtain an algorithm which satisfies the conditions of Lemma~\ref{lem:afn-lb}, and therefore $\tilde{O}(s \log^2 n / \eps)$ is at least $\tilde{\Omega}(n^{1/(2k-1)})$, which implies $s = \tilde{\Omega}(n^{1 / (2k-1)})$.    
\end{proof}

\bibliographystyle{alpha} 
\bibliography{waingarten} 

\appendix 

\section{The Minrank Lower Bound} \label{apx: minrank proof}

In this appendix, we sketch the argument of~\cite{ABGMM20} so as to give the lower bound on the minrank of a (directed) graph on $n$ vertices where each edge $(i,j)$ appears independently with probability $p$. Note that the third bullet point of Lemma~\ref{lem:necessary-properties}, involves showing that the minrank of a knowledge graph is large. Recall, $\bG = (U, V, \bE)$ is the random bipartite graph where $U = \{ u_1,\dots, u_n\}$ denotes vertices on the left-hand side, $V = \{ v_1,\dots, v_n \}$ denotes the vertices on the right-hand side, and every edge $(u_i, v_j)$ is present in $\bE$ independently with probability $p$. 

We may associate to $\bG$ the following knowledge (directed) graph $\bG' = ([n], \bE')$. The vertices are in $[n]$, and the edge $(i, j) \in \bE'$ if and only if $(u_i, v_j) \notin \bE$. Note, our construction of $\bG \sim \calG(n,n,p)$ means that $\bG'$ includes each edge independently with probability $q$. The main theorem of~\cite{ABGMM20} is quoted below, and then, we state the (minor) modification that we need to conclude Lemma~\ref{lem:necessary-properties}.
\begin{theorem}[Theorem 1.2 in~\cite{ABGMM20}]
    Let $\F = \F(n)$ be a field and assume $q = q(n)$ satisfies $1/n \leq q \leq 1$. Then, with high probability over $\bH \sim \calG(n,q)$,
    \[ \minrank_{\F}(\bH) \geq \frac{n \log(1/q)}{80 \log n}. \]
\end{theorem}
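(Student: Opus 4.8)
The goal is to recall and sketch the argument of~\cite{ABGMM20} establishing their Theorem~1.2: for a field $\F=\F(n)$ and $1/n\le q\le 1$, a random digraph $\bH\sim\calG(n,q)$ has $\minrank_\F(\bH)\geq n\log(1/q)/(80\log n)$ with high probability. The plan is a union bound over ``low-rank certificates.'' First I would reformulate: if $\minrank_\F(\bH)\le r$, then there is an $n\times n$ matrix $M$ over $\F$ of rank $r$ with nonzero diagonal and $M_{ij}=0$ on every non-edge $(i,j)$ of $\bH$. Since $\mathrm{rank}(M)=r$, one may fix index sets $R,C\subseteq[n]$ with $|R|=|C|=r$ for which the core $M[R,C]$ is invertible, so that $M$ is the skeleton product $M[\,\cdot\,,C]\,M[R,C]^{-1}\,M[R,\,\cdot\,]$. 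The union bound then ranges over the $\le n^{2r}$ choices of $(R,C)$, and for each fixed choice I would bound the probability over $\bH$ that a matrix $M$ compatible with that skeleton and with the (random) support pattern of $\bH$ can exist; the claim follows once this per-term probability is at most $\exp(-\Omega(n\log(1/q)))$, since $n^{2r}=\exp(O(r\log n))$ and $r\le n\log(1/q)/(80\log n)$.

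The second step extracts combinatorial structure from a fixed skeleton. Restricting to $W=[n]\setminus(R\cup C)$, of size $\ge n-2r$, and writing $\alpha_i=M[i,C]\,M[R,C]^{-1}$ and $\beta_j=M[R,j]$, one gets $M_{ij}=\langle\alpha_i,\beta_j\rangle$ for all $i,j\in W$, subject to: (a)~$\langle\alpha_i,\beta_i\rangle\neq 0$ and $\langle\alpha_i,\beta_j\rangle=0$ for every non-edge $(i,j)$ of $\bH[W]$; and (b)~each $\beta_j$ is supported on the random in-neighborhood $\{\ell\in R:(\ell,j)\in E(\bH)\}$, while each $\alpha_i$ lies in a subspace of dimension $|\{\ell\in C:(i,\ell)\in E(\bH)\}|$ determined by $M[R,C]^{-1}$. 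A naive dimension count of the $\approx|W|^2(1-q)$ orthogonality constraints in~(a) against the degrees of freedom in~(b) heuristically suggests infeasibility already for $r$ below a threshold of order $|W|(1-q)/q$; the real work in~\cite{ABGMM20} is to make this rigorous while keeping it field-independent, so that no $\epsilon$-net over candidate matrices is available. The device is to argue with the purely combinatorial data attached to $(R,C)$ — which coordinates of $R$ (resp.\ $C$) support each $\beta_j$ (resp.\ $\alpha_i$), and which cells of $\bH[W]$ are non-edges — and to show that every pattern of this data that is consistent with $\mathrm{rank}\le r$ forces $\bH$ to contain a configuration that appears with probability at most $\exp(-\Omega(n\log(1/q)))$. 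Balancing this against the $\exp(O(r\log n))$ choices of $(R,C)$ gives the bound; note $\log(1/q)/\log n\le 1$ since $q\ge 1/n$, so the resulting lower bound never exceeds $n/80$, as it must.

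The main obstacle is exactly this last estimate: identifying the right forbidden configuration, checking that low rank genuinely forces it, and verifying it is $\exp(-\Omega(n\log(1/q)))$-unlikely, with the constants arranged so the two exponents balance and produce the $\log(1/q)/\log n$ factor. This is strictly sharper than what elementary arguments yield — for instance, the observation that $\minrank_\F(\bH)$ is at least the size of the largest induced acyclic subgraph of $\bH$ only gives $\Omega(\log n/q)$, which is far from tight for moderate $q$ — so I would invoke the analysis of~\cite{ABGMM20} as a black box for this part. Finally, to obtain the third bullet of Lemma~\ref{lem:necessary-properties} I would observe that the whole argument localizes: the certificates and forbidden configurations involve only $\Theta(n)$ cells and the relevant events are preserved under passage to induced subgraphs, so running the union bound over configurations placed within the $\Omega(n)$-vertex set $I^*(\bG)$ rather than all of $[n]$ costs only constants. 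Applied to the knowledge digraph $\tilde{\bG}$ on $I^*(\bG)$, whose edges appear independently with probability $q=1-p\ge p$, this yields $\minrank(\tilde{\bG})\ge\Omega\!\left(|I^*(\bG)|\,\log(1/(1-p))/\log n\right)\ge\Omega(np/\log n)=\tilde{\Omega}(np)$, as required.
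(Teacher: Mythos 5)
Since the statement you are proving is itself a quotation of Theorem~1.2 of~\cite{ABGMM20}, invoking that paper is legitimate, and indeed the present paper does not reprove it: it cites it and, in Appendix~\ref{apx: minrank proof}, reruns the argument only to obtain the uniform variant (Theorem~\ref{thm:minor-mod-abgmm}) needed for the third bullet of Lemma~\ref{lem:necessary-properties}. Where your sketch diverges is in how the core estimate actually works. The mechanism in~\cite{ABGMM20} (and in the appendix here) is not a union bound over $n^{2r}$ choices of a skeleton $(R,C)$ followed by a per-skeleton probability bound of $\exp(-\Omega(n\log(1/q)))$; it is a union bound over \emph{zero-patterns} of structured principal submatrices. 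Concretely: any rank-$k$ certificate contains an $(n',k',s')$-principal submatrix with $k'/n'\le k/n$ (Lemma~\ref{lem: principal submat}); the nonzero diagonal forces density $s'\ge (n')^2/(4k')$ (Lemma~\ref{lem: low rank implies dense}); the number of zero-patterns of such matrices over \emph{any} field is at most $\binom{n'}{k'}^2 (n')^{20k's'/n'}$ (Lemma~\ref{lem: zero patterns}); and each fixed pattern is realized by $\bH$ with probability at most $q^{s'-n'}$, since every nonzero off-diagonal cell must be an edge. The $\log n$ in the denominator of the bound comes from the $(n')^{20k's'/n'}$ pattern count beating $q^{s'}$ only when $k\lesssim n\log(1/q)/\log n$ --- it does not come from the $n^{2r}$ count of skeletons, and the step you flag as ``the real work'' (bounding, for a fixed skeleton, the probability that \emph{some} compatible matrix exists, with no $\eps$-net available over an arbitrary field) is exactly the part your route leaves unestablished; as stated it would not go through without being replaced by the zero-pattern counting.

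On the localization to $\tilde{\bG}$: your intuition that it ``costs only constants'' is essentially right, but the phrasing ``running the union bound over configurations placed within $I^*(\bG)$'' hides the real issue, namely that $I^*(\bG)$ is a random set determined by the same graph, so one cannot condition on it and then apply the fixed-vertex-set theorem. The paper resolves this by proving a statement uniform over \emph{all} induced subgraphs on $t$ vertices (Theorem~\ref{thm:minor-mod-abgmm}): a minrank certificate for $\tilde{\bG}$ yields a principal submatrix of the full graph, and the union bound already ranges over all vertex subsets $U'\in\binom{[n]}{n'}$, which is also why the denominator is $\log n$ rather than $\log t$. Your final numerical chain ($q=1-p$, $\log(1/(1-p))\ge p$, hence $\tilde{\Omega}(np)$) is fine once that uniform version is in hand.
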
 
We note that we will take $\F$ to be the field of real numbers, and we will use $q = 1-p$ (where $p$ comes from our parameter setting of $p = n^{1/(2k-1) - 1}/\log n$). The only minor modification is that, even though $\bG'$ defined above is drawn from $\calG(n, q)$, we require $\tilde{\bG}$ which is the sub-graph obtained from considering vertices in $I^*(\bG)$ (which depends on the randomness) have large minrank. The claim, which we show below, is that the argument of~\cite{ABGMM20}, which performs a union bound over potential sub-matrices, can give the following, from which our required bound will follow.

\begin{theorem}[Minor Modification to Theorem~1.2 of~\cite{ABGMM20}]\label{thm:minor-mod-abgmm}
Let $\F = \F(n)$ and $q = q(n)$ with $1/n \leq q \leq 1$. Then, with high probability over $\bH \sim \calG(n,q)$ any induced subgraph $\tilde{\bG}$ of $\bH$ on $t \leq n$ vertices satisfies
\[ \minrank_{\F}(\tilde{\bG}) \geq \Omega\left(\frac{t \log(1/q)}{\log n}\right).\]
\end{theorem}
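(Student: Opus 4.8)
The plan is to rerun the union-bound argument of~\cite{ABGMM20} behind their Theorem~1.2, but keeping the number of vertices as a free parameter and counting the relevant substructures over all of $[n]$ rather than over a fixed vertex set. Recall the skeleton of their proof: they isolate a combinatorial ``certificate'' with the property that whenever a graph $H$ on $w$ vertices has $\minrank_{\F}(H) \le d$, the certificate must occur inside $H$; the certificate touches only $\Theta(w/d)$ vertices, consisting of a choice of $|W| = \Theta(w/d)$ vertices together with the requirement that the edges among them realize a prescribed support/rank-deficiency pattern --- an event that forces $\Omega(|W|^2)$ independent edge or non-edge constraints. For $\bH' \sim \calG(w, q)$ one then bounds the probability that such a certificate appears by a union bound: there are at most $w^{O(|W|)}$ candidate locations, and each is realized with probability at most $q^{\Omega(|W|^2)}$; with $|W| = \Theta(w/d)$ this product is $o(1)$ exactly when $d \le w\log(1/q)/(80\log w)$, which is their lower bound on $\minrank_{\F}(\calG(w,q))$.

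For Theorem~\ref{thm:minor-mod-abgmm} I would first note that $\minrank_{\F}$ is monotone under taking induced subgraphs (deleting a vertex deletes a row and column of any fitting matrix), so it suffices to prove: with probability $1-o(1)$ over $\bH \sim \calG(n,q)$, for every $t \le n$ and every $S \subseteq [n]$ with $|S| = t$, the induced subgraph $\bH[S]$ contains no certificate at scale $d(t) := c_0\, t\log(1/q)/\log n$, for a sufficiently small absolute constant $c_0$. The point of the modification is that a certificate inside $S$ is in particular a certificate inside $\bH$, so when running the union bound we count candidate locations among all $n$ vertices. Fixing $t$ and applying the above with $w = t$, $d = d(t)$, hence $|W| = \Theta(t/d(t)) = \Theta(\log n/(c_0\log(1/q)))$: the number of candidate locations in $[n]$ is at most $n^{O(|W|)}$ and each is realized with probability at most $q^{\Omega(|W|^2)}$, so the probability that some $S$ of size $t$ has $\minrank_{\F}(\bH[S]) \le d(t)$ is at most $n^{O(|W|)}\, q^{\Omega(|W|^2)} = 2^{O(|W|\log n) - \Omega(|W|^2\log(1/q))}$. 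Since $|W|\log(1/q) = \Theta(\log n/c_0) \gg \log n$, the negative term dominates, and for $c_0$ a small enough absolute constant this is at most $n^{-2}$ uniformly over $1/n \le q \le 1$. A final union bound over the at most $n$ values of $t$ costs a factor of $n$ and completes the argument.

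The reason the denominator is $\log n$ and not $\log t$ is precisely this replacement of $\log t$ by $\log n$ in the counting term; when $t = \Omega(n)$ --- the only regime needed for Lemma~\ref{lem:necessary-properties}, where $t = |I^*(\bG)| \ge 0.9n$ --- nothing is lost, and one obtains $\minrank_{\F}(\tilde{\bG}) \ge \Omega(t\log(1/q)/\log n) = \Omega(np/\log n) = \tilde{\Omega}(np)$ after substituting $q = 1-p$ and using $\log(1/(1-p)) \ge p$. For small $t$ the quantity $d(t)$ falls below $1$ and the statement is vacuous, since $\minrank_{\F} \ge 1$ always, so no separate treatment is required there.

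The main obstacle I anticipate is purely bookkeeping with~\cite{ABGMM20}: extracting from their proof the precise form of the certificate --- in particular confirming it involves only $\Theta(w/d)$ vertices and that its probability at a fixed location is $q^{\Omega(|W|^2)}$ uniformly over the field $\F$ (over $\R$ the rank-deficiency part of the pattern is a proper algebraic condition on the edge weights that fails generically, which is what makes the estimate field-independent) --- and checking that the slack in their calculation comfortably absorbs both this change in the counting term and the outer union bound over $t$. No new linear-algebraic or probabilistic idea beyond~\cite{ABGMM20} should be needed.
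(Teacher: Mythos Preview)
Your high-level plan---union-bound the ABGMM20 certificate over all of $[n]$ rather than over a fixed $t$-vertex subset, so that $\log t$ becomes $\log n$---is exactly what the paper does, and your observation that only the ratio $k/t$ enters (so no separate union over $t$ is really needed) is correct. However, your description of the certificate itself is wrong: ABGMM20's argument does \emph{not} isolate a structure on $\Theta(w/d)$ vertices forcing $\Omega(|W|^2)$ edges. Their Lemma~2.2 instead shows that any rank-$k$ matrix with nonzero diagonal on $t$ rows contains an $(n',k',s')$-principal submatrix with $k'/n' \le k/t$, where $n'$ ranges over all of $\{1,\dots,t\}$; their Lemma~2.5 gives the density bound $s' \ge (n')^2/(4k')$, and their Lemma~2.4 bounds the number of zero-patterns by $\binom{n'}{k'}^2 (n')^{20k's'/n'}$. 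The paper then bounds the triple sum $\sum_{n',k',s'} \binom{n}{n'}\binom{n'}{k'}^2(n')^{20k's'/n'} q^{s'-n'}$ and shows it is $\le n^{-12}$ once $k \le t\log(1/q)/(160\log n)$. So the ``bookkeeping'' you anticipate is more than cosmetic---your single-scale picture with $|W|=\Theta(w/d)$ and probability $q^{\Omega(|W|^2)}$ does not correspond to anything in the actual proof and would not go through as written---but once you replace it with the correct $(n',k',s')$-machinery, the only change from ABGMM20 is the one you identified: $\binom{t}{n'} \to \binom{n}{n'}$.
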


\ignore{
the final part of our argument involves incorporating the fact we focus on a sub-matrix corresponding to entries in $I^*(\bG)$

While $I^\ast(\bG)$ is defined with respect to the metric $\sfd_\bG$, the crux of our argument will relate it to a combinatorial property of $\bG$ resembling the minrank \cite{ABGMM20}.

\begin{definition}\label{def: minrank completion}
    Let $G = (U,V,E)$ be a bipartite graph with $|U|=|V|=n$. For $1\leq t\leq n$, we say that a matrix $M\in\R^{n\times n}$ is a $t$-completion of $G$ if there is a subset $I\subset [n]$ of size $|I| = t$ for which: 
    \begin{itemize}
        \item $M_{ii} \neq 0$ for every $i \in I$,
        \item $M_{ij} = 0$ for every $i\in I$ and $j\in[n]$ with $(u_j,v_i)\in\bE$
    \end{itemize}
    
\end{definition}

\begin{lemma}[Minrank Lower Bound] \label{lem: minrank lower bound with bad rows}
    Let $n\in\N$ and $p\in[0,1]$. With probability $1 - o_t(1)$ over $\bG \sim \calG(n,n,p)$, any matrix $M=M(\bG)$ that is a $t$-completion of $\bG$ has
    \[ \rank(M) \geq \Omega\del{\frac{t p}{\log n}} \]
\end{lemma}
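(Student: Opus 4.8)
The core observation is that the real rank of any $t$-completion of $\bG$ is bounded below by the \emph{minrank} of a $t$-vertex induced subgraph of the random directed ``knowledge graph'' of Appendix~\ref{apx: minrank proof}, and crucially this holds \emph{no matter which} witness set $I$ the $t$-completion uses. So the lemma follows from Theorem~\ref{thm:minor-mod-abgmm} applied with edge probability $q := 1-p$, together with the elementary inequality $\log(1/(1-p)) \geq p$ (from $1-p \leq e^{-p}$).

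\textbf{The reduction.} Let $M \in \R^{n\times n}$ be a $t$-completion of $\bG$ with witness set $I \subseteq [n]$, $|I| = t$, and let $\hat M := M[I,I]$ be the principal submatrix on the rows and columns in $I$. Since $\hat M$ is a submatrix, $\operatorname{rank}(M) \geq \operatorname{rank}(\hat M)$. Restricting the defining conditions of a $t$-completion to $I \times I$ gives $\hat M_{ii} \neq 0$ for every $i \in I$ and $\hat M_{ij} = 0$ whenever $i,j \in I$ with $(u_j, v_i) \in \bE$. Let $D_I$ be the directed graph on vertex set $I$ in which $i \to j$ is an edge exactly when $(u_j, v_i) \notin \bE$; then $\hat M$ has nonzero diagonal and vanishes on all non-edges of $D_I$, so it witnesses $\minrank_{\R}(D_I) \leq \operatorname{rank}(\hat M) \leq \operatorname{rank}(M)$. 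Consequently
\[
\min\bigl\{\operatorname{rank}(M) : M\text{ a }t\text{-completion of }\bG\bigr\} \;\geq\; \min_{I \subseteq [n],\,|I| = t}\minrank_{\R}(D_I).
\]
For each ordered pair $(i,j)$ with $i \neq j$, the event ``$(u_j,v_i) \notin \bE$'' has probability $q = 1-p$, and since distinct ordered pairs index distinct potential edges $\{u_j,v_i\}$ of $\bG$ these events are mutually independent. Hence, up to reversing all edge orientations (which leaves minrank unchanged, as transposition preserves rank), the family $\{D_I : |I| = t\}$ is exactly the family of $t$-vertex induced subgraphs of the random directed knowledge graph $\bG' \sim \calG(n,q)$ of Appendix~\ref{apx: minrank proof}.

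\textbf{Applying the minrank bound, and degenerate ranges of $p$.} When $(\log n)/t \leq p \leq 1 - 1/n$ we have $1/n \leq q \leq 1$, so Theorem~\ref{thm:minor-mod-abgmm} (with $\F = \R$) gives: with high probability over $\bG$, every $t$-vertex induced subgraph of $\bG'$ has minrank $\Omega(t\log(1/q)/\log n) = \Omega(tp/\log n)$, using $\log(1/(1-p)) \geq p$. Combined with the reduction, every $t$-completion of $\bG$ then has rank $\Omega(tp/\log n)$; since the application in Lemma~\ref{lem:necessary-properties} uses $t = \Theta(n)$, the resulting ``$1-o(1)$'' matches the stated ``$1 - o_t(1)$''. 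Outside this range the claim is trivial: if $p < (\log n)/t$ then $tp/\log n < 1$ and any $t$-completion has rank $\geq 1$ by its nonzero diagonal; if $p$ is so close to $1$ that $1-p < 1/n$, then $\bG'[I]$ has $t(t-1)(1-p) < t$ edges in expectation, so a Chernoff bound and a union bound over the $\binom{n}{t}$ sets $I$ (dominated once $t = \Omega(\log n)$; vacuous below that, where $tp/\log n = O(1)$) show that w.h.p.\ every $\bG'[I]$ has at most $t/4$ edges, and deleting the $\leq t/4$ rows and $\leq t/4$ columns meeting an off-diagonal nonzero entry of $\hat M$ leaves a nonzero-diagonal diagonal submatrix on $\geq t/2$ indices, so $\operatorname{rank}(\hat M) \geq t/2 = \Omega(tp/\log n)$.

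\textbf{Main obstacle.} The reduction above is routine; the real content is packaged inside Theorem~\ref{thm:minor-mod-abgmm}, and the one point that needs care is precisely that the minrank lower bound must hold \emph{simultaneously for all} $\binom{n}{t}$ choices of the witness set $I$, i.e.\ for every $t$-vertex induced subgraph of $\bG'$ at once. This is why the single-graph bound of \cite{ABGMM20} does not suffice and the ``for all induced subgraphs'' strengthening (Theorem~\ref{thm:minor-mod-abgmm}) is needed: one cannot first union-bound over the $\binom{n}{t} = e^{\Theta(t\log n)}$ sets $I$ and then, for each, over bad sub-matrix structures inside $\bG'[I]$, since that loss is not recovered when $q = 1-p$ is small. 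Instead one reorganizes \cite{ABGMM20}'s union bound to run once over all candidate bad sub-matrices of $\bG'$ (each supported on $O(t)$ vertices), observing that a bad structure certifying $\minrank_{\R}(\bG'[I]) < \Omega(t\log(1/q)/\log n)$ for \emph{some} $I$ is itself a bad structure of $\bG'$, and checking that \cite{ABGMM20}'s per-structure failure probability still beats the count of such structures as the rank budget is scaled from $\Theta(n\log(1/q)/\log n)$ down to $\Theta(t\log(1/q)/\log n)$; tracking this dependence on the number of vertices involved is the delicate step.
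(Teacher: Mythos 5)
Your proposal is correct and follows essentially the same route as the paper: restrict the $t$-completion to the witness set, observe that it certifies a minrank upper bound for the induced subgraph of the knowledge digraph (which is distributed as $\calG(n,q)$ with $q=1-p$, up to an orientation reversal that preserves minrank), and invoke the simultaneous-over-all-induced-subgraphs strengthening of the \cite{ABGMM20} bound (Theorem~\ref{thm:minor-mod-abgmm}) together with $\log(1/(1-p))\geq p$ — and you correctly identify that the whole point of the strengthening is to avoid a union bound over the $\binom{n}{t}$ witness sets. One minor quibble: your extra treatment of the regime $p>1-1/n$ has a quantitative slip — when $q$ is just below $1/n$ and $t=\Theta(n)$, the expected number of edges of $\bG'[I]$ is $\approx t$, so the Chernoff claim ``at most $t/4$ edges w.h.p.'' fails there — but this regime lies outside both the paper's own treatment (Theorem~\ref{thm:minor-mod-abgmm} assumes $q\geq 1/n$) and the intended application, and it can be repaired by coupling $\calG(n,q)\subseteq\calG(n,1/n)$ and using that deleting edges only increases minrank.
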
}

\subsection{Proof of Theorem~\ref{thm:minor-mod-abgmm}}

\begin{definition}[\cite{ABGMM20}]
    The zero pattern of a sequence $(y_1,\ldots,y_m)\in\F^m$ is the sequence $(z_1,\ldots,z_m)\in\{0,\ast\}^m$ where $z_i = 0$ if $y_i = 0$ and $z_i = \ast$ if $y_i\neq 0$. The zero-pattern of a matrix $M\in\F^{n\times n}$ is the zero-pattern of its sequence of entries in $\F^{n^2}$.
\end{definition}

\begin{definition}[Definition 2.1 of~\cite{ABGMM20}]
    An $(n,k,s)$-matrix over a field $\F$ is a matrix $M\in\F^{n\times n}$ of rank $k$ with $s$ nonzero entries and containing rows $R_{i_1},\ldots, R_{i_k}$ forming a row basis for $M$ and columns $C_{j_1},\ldots,C_{j_k}$ forming a columns basis for $M$, such that the number of nonzero entries in the $2k$ vectors $R_{i_1},\ldots, R_{i_k}, C_{j_1},\ldots, C_{j_k}$ is at most $4ks/n$.
\end{definition}

\begin{lemma}[Lemma 2.2 of~\cite{ABGMM20}] \label{lem: principal submat}
    Let $\F$ be a field and $M\in\F^{n\times n}$ be a matrix of rank $k$. There are integers $n',k',s'$ with $k'/n'\leq k/n$ such that $M$ contains a $(n',k',s')$-principal submatrix.
\end{lemma}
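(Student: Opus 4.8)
The statement is purely linear-algebraic (the zero-pattern notions are not needed here), and I would prove it by induction on the rank $k$ of $M$. Throughout, for $S\subseteq[n]$ let $M_S$ denote the $|S|\times|S|$ principal submatrix of $M$ on the rows and columns indexed by $S$, and recall that a principal submatrix of a principal submatrix of $M$ is again a principal submatrix of $M$, so the recursion composes. The base case is $k=0$: then $M=0$, which is vacuously an $(n,0,0)$-matrix (the empty row- and column-bases contain $0\le 4\cdot 0\cdot 0/n$ nonzeros), and $0/n=k/n$, so $n'=n,\ k'=0,\ s'=0$ works.

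\textbf{Inductive step.} Let $k\ge 1$ and let $s$ be the number of nonzero entries of $M$. Run the greedy algorithm on the linear matroid of the rows of $M$ --- process rows in nondecreasing order of their number of nonzeros, adding a row to the basis whenever it strictly increases the rank --- to obtain a minimum-weight row basis $R$ (with $|R|=k$), where the weight of a row is its number of nonzeros; do the same for columns to get a minimum-weight column basis $C$. If the total number of nonzeros in the $2k$ vectors of $R\cup C$ is at most $4ks/n$, then $M$ is itself an $(n,k,s)$-matrix with $k/n=k/n$, and we are done. Otherwise, by symmetry assume the rows of $R$ carry more than $2ks/n$ nonzeros in total; then the densest row $i^\ast\in R$ has more than $2s/n$ nonzeros. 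Let $L=\{\text{rows with strictly fewer than }\mathrm{nnz}(\text{row }i^\ast)\text{ nonzeros}\}$. Every row outside $L$ has more than $2s/n$ nonzeros, so $|[n]\setminus L|<n/2$; and greedy selects at most $k-1$ rows from $L$ (otherwise it would hold a full row basis consisting of rows strictly sparser than $i^\ast$, contradicting $i^\ast\in R$), hence $\mathrm{rank}(M[L,\cdot])\le k-1$. Now $M_L$ is $|L|\times|L|$ of rank $\mathrm{rank}(M_L)\le\mathrm{rank}(M[L,\cdot])\le k-1<k$, so the inductive hypothesis applied to $M_L$ produces integers $n',k',s'$ with $k'/n'\le\mathrm{rank}(M_L)/|L|$ and an $(n',k',s')$-principal submatrix of $M_L$, hence of $M$. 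It remains to check $\mathrm{rank}(M_L)/|L|\le k/n$.

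\textbf{Main obstacle.} This last inequality is the crux and the only delicate point: the rank drops only from $k$ to at most $k-1$ while the dimension drops from $n$ to $|L|$, so to prevent the rank-density from increasing one needs $|L|\ge (k-1)n/k$, whereas the crude Markov/greedy estimate above gives only $|L|> n/2$ (enough for $k\le 2$ but not beyond). Closing this gap is exactly what the precise threshold $4ks/n$ in the definition of an $(n,k,s)$-matrix is calibrated for: failing to be an $(n,k,s)$-matrix should force the heavy row (or column) set $[n]\setminus L$ to have size at most $n/k$ rather than $n/2$, which means the relevant per-basis threshold must scale with $k$, and one must sharpen the greedy count to locate, above the appropriate quantile of row densities, a set $L$ of size at least $(1-1/k)n$ whose rows have rank at most $k-1$. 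With this quantitative accounting in place one gets $(k-1)/|L|\le k/n$, the recursion preserves $k'/n'\le k/n$, and the induction terminates since the rank strictly decreases at each recursive call. Everything else --- the greedy matroid fact, the base case, and the reduction to a strictly-smaller-rank principal submatrix --- is routine. (An alternative route is to take $S$ minimizing $|S|$ among principal submatrices with rank-density at most $k/n$; this automatically maintains the density bound, but then the same quantitative tension reappears when bounding the weight of the minimum-weight bases of $M_S$, so it is not obviously easier.)
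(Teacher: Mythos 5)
You have correctly isolated the weak point yourself, and it is a genuine gap, not a routine calculation left to the reader: the rank-decrement induction as set up does not close. From the failure of the $4ks/n$ bound all you can extract is that the greedy basis has total weight exceeding $2ks/n$ on the row (or column) side, which is consistent with \emph{every} basis row having weight only about $2s/n$. In that situation your set $L$ has size barely above $n/2$, while $\mathrm{rank}(M[L,\cdot])$ can still equal $k-1$, and $(k-1)/(n/2) > k/n$ for every $k\ge 3$, so the invariant $k'/n'\le k/n$ is already lost at the first recursive call. In particular, failing to be an $(n,k,s)$-matrix does \emph{not} force the heavy-row set to have size $O(n/k)$, so the hoped-for ``sharpened greedy count'' is not a refinement of your estimate but precisely the missing idea, and the proposal does not supply it.

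The paper does not reprove this lemma (it is imported from~\cite{ABGMM20}), and the argument there is in substance the alternative you mention and then set aside: an extremal choice instead of induction. Choose $S\subseteq[n]$ nonempty minimizing $\mathrm{rank}(M_S)/|S|$ over principal submatrices, and set $n'=|S|$, $k'=\mathrm{rank}(M_S)$, $s'$ the number of nonzero entries of $M_S$; then $k'/n'\le k/n$ holds automatically, and minimality supplies exactly the leverage your induction lacks: for every $T\subseteq S$ one has $\mathrm{rank}(M[T,S])\ge \mathrm{rank}(M_T)\ge (k'/n')|T|$. Now take the greedy minimum-weight row basis $B$ of $M_S$ (the case $k'=0$ being trivial). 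Writing $A_t\subseteq S$ for the rows with at most $t$ nonzeros, the matroid greedy property gives $|B\cap A_t|=\mathrm{rank}(M[A_t,S])\ge (k'/n')|A_t|$, hence the number of basis rows with more than $t$ nonzeros is at most $(k'/n')\cdot\#\{\text{rows of } M_S \text{ with more than } t \text{ nonzeros}\}$; summing this over integer $t\ge 0$ bounds the total weight of $B$ by $(k'/n')\,s'=k's'/n'$, and the same holds for columns, giving at most $2k's'/n'\le 4k's'/n'$ nonzeros in the two bases combined. Your own alternative (a smallest $S$ with ratio at most $k/n$) would also work, since minimality again yields $\mathrm{rank}(M_T)>(k/n)|T|$ for proper subsets; what it needs is this threshold-by-threshold count, after which the ``quantitative tension'' you anticipate simply does not arise.
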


\begin{lemma}[Lemma 2.4 of~\cite{ABGMM20}] \label{lem: zero patterns}
    For any field $\F$, the number of zero-patterns of $(n,k,s)$-matrices over $\F$ is at most $\binom{n}{k}^2 \cdot n^{20ks/n}$.
\end{lemma}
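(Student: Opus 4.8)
\textbf{Proof plan for Lemma~\ref{lem: zero patterns}} (reconstructing the argument of~\cite{ABGMM20}). The plan is to fix, for an $(n,k,s)$-matrix $M$, a row basis $I=\{i_1,\dots,i_k\}$ and a column basis $J=\{j_1,\dots,j_k\}$, and then union-bound the number of zero patterns over four choices: (a) the $\binom{n}{k}$ choices of $I$; (b) the $\binom{n}{k}$ choices of $J$; (c) the set of at most $4ks/n$ entries among the $2k$ basis vectors $R_{i_1},\dots,R_{i_k},C_{j_1},\dots,C_{j_k}$ that are nonzero; and (d) — with (a)--(c) fixed — the number of distinct zero patterns $M$ can exhibit as the values of those nonzero entries vary. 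Items (a),(b) produce the $\binom{n}{k}^2$ factor, and (c),(d) together will produce $n^{20ks/n}$.

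First I would record two structural facts. Since the $R_{i_\ell}$ form a basis of the row space and the $C_{j_m}$ a basis of the column space, the $k\times k$ submatrix $M[I,J]$ is invertible and $M$ admits the rank factorization
\[ M \;=\; M[\,\cdot\,,J]\; M[I,J]^{-1}\; M[I,\,\cdot\,]. \]
In particular $M$ is a function only of the entries of the basis columns $M[\,\cdot\,,J]$ and basis rows $M[I,\,\cdot\,]$, of which at most $N:=4ks/n$ are nonzero by definition of an $(n,k,s)$-matrix. Second, if row $a$ of $M$ is nonzero then $M[\{a\},J]\neq 0$ (otherwise the coefficient row $M[\{a\},J]M[I,J]^{-1}$ vanishes, so the whole row does), so $M$ has at most $N$ nonzero rows and, symmetrically, at most $N$ nonzero columns; hence $s\le N^2=(4ks/n)^2$, which rearranges to the crucial inequality $s\ge n^2/(16k^2)$.

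For step (d), with $I,J$ and the nonzero-entry set fixed, I would treat the $\le N$ nonzero basis entries as formal variables $\mathbf{w}$ and, for each $a\notin I$, $b\notin J$, define
\[ P_{ab}(\mathbf{w}) \;:=\; M_{ab}\cdot\det\bigl(M[I,J]\bigr) \;=\; M[\{a\},J]\;\mathrm{adj}\bigl(M[I,J]\bigr)\;M[I,\{b\}], \]
a polynomial of degree at most $k+1$ in $\mathbf{w}$ (degree $1$ from $M[\{a\},J]$, degree $k-1$ from the adjugate, degree $1$ from $M[I,\{b\}]$). Because $\det(M[I,J])\neq 0$ always, $M_{ab}=0\iff P_{ab}(\mathbf{w})=0$, while the zero pattern of $M$ on the basis rows/columns is already fixed by (c). So the number of zero patterns consistent with (a)--(c) is at most the number of zero patterns of the $\le n^2$ polynomials $\{P_{ab}\}$ of degree $\le k+1$ in $N$ variables, which by the classical bound on zero patterns of low-degree polynomials (R\'onyai--Babai--Ganapathy, as invoked in~\cite{ABGMM20}) is at most $\binom{n^2(k+1)+N}{N}\le\bigl(O(n^2k)/N\bigr)^{N}$. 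Substituting $N=4ks/n$ and then $s\ge n^2/(16k^2)$ bounds the base by $O(k^2n)$, whose $\log_n$ is at most $3+o(1)$ (using $k\le n$), so (d) contributes $n^{(12+o(1))ks/n}$. The count in (c) is $\binom{2kn}{\le N}\le\bigl(O(k^2)\bigr)^{N}$ under the same substitutions, contributing $n^{(8+o(1))ks/n}$; together (c) and (d) give $n^{(20+o(1))ks/n}$, and a careful version of this accounting (using the sharp form of the polynomial zero-pattern bound) gives exactly $n^{20ks/n}$.

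The hard part will be pinning the constant at $20$ rather than, say, $28$: this needs a non-lossy accounting that exploits \emph{both} that at most $4ks/n$ basis entries are nonzero (which keeps the variable count $N$ small in the polynomial bound) \emph{and} the inequality $s=\Omega(n^2/k^2)$ (which keeps $n^2k/N$ from blowing up), and one must avoid paying in full and separately for ``which basis entries are nonzero'' and for ``applying the polynomial bound''. A secondary subtlety is that there is \emph{no} sparse rank factorization $M=AB$ with both factors sparse; the resolution is that one never needs one — it is enough that $M$ is determined by the $\le 4ks/n$ jointly-nonzero entries of $M[\,\cdot\,,J]$ and $M[I,\,\cdot\,]$, even though the implicit factor $M[I,J]^{-1}M[I,\,\cdot\,]$ is dense. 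One must also treat the regime $4ks/n>2kn$ (i.e. $s>n^2/2$) separately, where step (c) is just the trivial bound $2^{2kn}=n^{o(ks/n)}$, comfortably absorbed.
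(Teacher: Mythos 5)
The paper itself offers no proof of this statement: Lemma~\ref{lem: zero patterns} is imported verbatim as Lemma~2.4 of~\cite{ABGMM20} and used as a black box in Appendix~\ref{apx: minrank proof}, so there is no in-paper argument to compare against. Judged on its own, your reconstruction follows what is essentially the argument of the cited source and is sound: fix the row- and column-basis index sets $I,J$ (the $\binom{n}{k}^2$ factor); use that $M[I,J]$ is invertible, so $M = M[\,\cdot\,,J]\,M[I,J]^{-1}\,M[I,\,\cdot\,]$ and every off-basis entry vanishes iff the degree-$(k+1)$ polynomial $M_{ab}\det(M[I,J])$ vanishes in the at most $4ks/n$ nonzero basis entries; then count supports of the basis vectors and apply the R\'onyai--Babai--Ganapathy zero-pattern bound. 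Your auxiliary observations are correct and load-bearing: the injectivity argument showing at most $4ks/n$ nonzero rows and columns, hence $s\geq n^2/(16k^2)$, is exactly what keeps the bases of both exponential factors polynomial in $k$ rather than in $n$. The one soft spot is the one you flag yourself: your accounting gives $n^{(20+o(1))ks/n}$, and nailing the constant $20$ for all $n,k,s$ is not just a matter of using ``the sharp form'' of the polynomial bound --- it needs a small case split (for $k=\Omega(n)$ the trivial support count $\binom{n^2}{s}$ already beats $n^{20ks/n}$ once $s\geq n^2/(16k^2)$, and for smaller $k$ the constants you computed fit comfortably under the exponent $5$ per unit of $N=4ks/n$), plus absorbing minor factors such as summing over support sizes, which works because $N=\Omega(n^{1/3})$ in the relevant range. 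Since the lemma is only cited in this paper, that remaining bookkeeping does not affect how it is used here.
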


\begin{lemma}[Lemma 2.5 of~\cite{ABGMM20}] \label{lem: low rank implies dense} 
    Any matrix $M\in\F^{n\times n}$ of rank $k$ with nonzero entries on the diagonal must have at least $n^2 / (4k)$ nonzero entries.
\end{lemma}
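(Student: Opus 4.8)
I would prove the equivalent statement that a rank-$k$ matrix $M\in\F^{n\times n}$ with nonzero diagonal and $s$ nonzero entries must have $k\ge n^2/(4s)$; rearranging this gives $s\ge n^2/(4k)$ as claimed. The strategy is entirely elementary and field-independent: extract, via a greedy argument, an $m\times m$ principal submatrix that is triangular with a nonzero diagonal, and hence has rank exactly $m$, for some $m\ge n^2/(4s)$. There are two ingredients: an averaging step that produces a large set of rows of small support, and the greedy extraction itself.

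For the first ingredient, let $r_i$ denote the number of nonzero entries of row $i$, so $\sum_{i\in[n]}r_i=s$. Then at most $n/2$ indices $i$ can satisfy $r_i>2s/n$ (otherwise $\sum_i r_i>s$), so the set $A=\{i\in[n]:r_i\le 2s/n\}$ has $|A|\ge n/2$. For $j\in[n]$ write $\supp(R_j)\subseteq[n]$ for the set of columns in which row $j$ of $M$ is nonzero; since $M_{jj}\neq 0$ we always have $j\in\supp(R_j)$, and $|\supp(R_j)|\le 2s/n$ for every $j\in A$. For the second ingredient, build a sequence $i_1,i_2,\dots\in A$ greedily: having chosen $i_1,\dots,i_{t-1}$, set $U_{t-1}=\bigcup_{t'<t}\supp(R_{i_{t'}})$, pick any $i_t\in A\setminus U_{t-1}$ if this set is nonempty, and otherwise stop; let $m$ be the length of the resulting sequence. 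Since each support has size at most $2s/n$, we have $|U_m|\le m\cdot(2s/n)$, and since the process terminated, $A\subseteq U_m$, so $n/2\le|A|\le m\cdot(2s/n)$, giving $m\ge n^2/(4s)$.

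To finish, consider the $m\times m$ principal submatrix $N$ of $M$ on rows and columns $i_1,\dots,i_m$, i.e.\ $N_{a,b}=M_{i_a,i_b}$. For $a<b$ the construction forces $i_b\notin U_{b-1}\supseteq\supp(R_{i_a})$, hence $N_{a,b}=M_{i_a,i_b}=0$; meanwhile $N_{a,a}=M_{i_a,i_a}\ne 0$. Thus $N$ is lower triangular with a nonzero diagonal, so $\det N=\prod_{a=1}^{m}M_{i_a,i_a}\ne 0$ and $\rank(M)\ge\rank(N)=m\ge n^2/(4s)$, which rearranges to $s\ge n^2/(4k)$. The only point that needs a moment's care is the \emph{direction} of the triangularity: because $M$ need not be symmetric, the fact that each newly selected index $i_t$ avoids all earlier row supports only guarantees that the entries strictly above the diagonal of $N$ vanish — but that is exactly what is needed to evaluate $\det N$ as the product of the diagonal entries, so no genuine obstacle arises. (The constant $4$ comes from using the threshold $2s/n$ together with $|A|\ge n/2$; a slightly different split would change it but not the shape of the bound.)
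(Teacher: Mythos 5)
Your proof is correct: the averaging step giving at least $n/2$ rows of support size at most $2s/n$, the greedy selection of indices avoiding earlier row supports, and the resulting $m\times m$ principal submatrix that is triangular with nonzero diagonal (hence of rank $m\ge n^2/(4s)$, so $k\ge n^2/(4s)$, i.e.\ $s\ge n^2/(4k)$) all check out, and the argument is field-independent as required. Note that the paper itself does not prove this statement --- it is quoted directly as Lemma~2.5 of~\cite{ABGMM20} --- and your argument is essentially the standard proof given in that reference, so there is nothing to reconcile.
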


We are now ready to prove Theorem~\ref{thm:minor-mod-abgmm}. The argument is essentially the same as the proof of Theorem 1.2 in ~\cite{ABGMM20}.

\begin{proof}[Proof of Theorem~\ref{thm:minor-mod-abgmm}]
    Let $\bH \sim \calG(n,q)$ and suppose there exists an induced subgraph $\tilde{\bG}$ of $\bH$ with $t$ vertices and a matrix $M$ of rank $k$ satisfying $M_{ii} \neq 0$ for all vertices $i$ in $\tilde{\bG}$ and $M_{ij} = 0$ whenever $i$ and $j$ are vertices in $\tilde{\bG}$ and $(i,j)$ is an edge of $\bH$. Then, we use Lemma~\ref{lem: principal submat} to find a $(n',k', s')$-principal submatrix $M'$ of $M$ for non-negative $n',k', s'$ which satisfy $k' / n' \leq k / t$ (note that $k' \geq 1$ since every principal submatrix of $M$ has ones on the diagonal). Note that $M'$ is also a principal submatrix of $\bH$, so we will argue that unless $k$ is large, the probability that any such principal submatrix $M'$ exists is small.

    Now, fix $U'\in\binom{[n]}{n'}$ and a potential zero-pattern $Z\in\{0,\ast\}^{U'\times U'}$ of a potential principal submatrix $(n',k',s')$-principal submatrix with $k'/n' \leq k / t$. Consider the probability taht the entries of $\bH \sim \calG(n,q)$ within $U'$ are consistent with $Z$. Observe that, for each entry $Z_{ij} \neq 0$, we must have $(i,j)$ in $\bH$, which occurs independently with probability $q$. Since $M'$ contains $s'-n'$ nonzero off-diagonal entries, the probability that $\bH$ is consistent with any fixed $Z$ is at most $q^{s'-n'}$. Moreover, by Lemma~\ref{lem: low rank implies dense}, $s'$ is at least $(n')^2 / (4k')$ which is at least $(n'/4) \cdot t/k$, since $n' / k' \geq t / k$. Hence, the probability that there exists such $(n',k',s')$-principal submatrices is at most
    \begin{align}
        \sum_{n'=1}^t \sum_{k'=1}^k \sum_{s' \geq n'/4 \cdot t/k}\binom{n}{n'} \binom{n'}{k'}^2 (n')^{20k's'/n'} \cdot q^{s'-n'}, \label{eq:haha}
    \end{align}
    which we show is small. First, we can upper bound the count on the number of choices for sub-matrices and zero patterns using the fact $n' \leq n$, and $k' / n' \leq k / t$:
    \[ \binom{n}{n'} \cdot \binom{n'}{k'}^2 (n')^{20k's'/n'} \leq n^{n' + 2k' + 20 k s' / t}.\]
    Therefore, we may re-write (\ref{eq:haha}) and use the fact that $n^{20k/t}$ is at most $1/\sqrt{q}$ whenever $k \leq t\log(1/q) / (160\log n)$, which implies
    \begin{align*}
        \sum_{n'=1}^t \sum_{k'=1}^k n^{n' + 2k'} / q^{n'} \sum_{s \geq n'/4 \cdot t/k} \left(q n^{20k/t} \right)^{s'} \leq \sum_{n'=1}^t \sum_{k'=1}^k n^{n' + 2k'} / q^{n'} \cdot (q^{t/k})^{n'/8} \left(\dfrac{1}{1 - \sqrt{q}} \right)
    \end{align*}
    Once more, we use the setting of $k$ to note 
    \[ q^{t/k} = \exp\left(-\frac{t \log(1/q)}{k} \right) \leq \exp\left( - 160 \log n \right) = n^{-160},\]
    and the upper bound on (\ref{eq:haha}) becomes at most
    \begin{align*}
        \sup_{n' \in \N} \left\{ n^{3n' + 2} \cdot n^{-20 n'} \cdot \left(\frac{1}{1 - \sqrt{q}} \right) \cdot (1/q)^{n'} \right\} \leq \sup_{n' \in \N} \left\{ n^{4n' + 4 - 20n'} \right\} \leq 1/n^{12}.
    \end{align*}
    where we used the fact $1/n \leq q \leq 1 - 1/n$ to first upper bound $(1/q)^{n'} \leq n^{n'}$, and then upper bound $1 / (1-\sqrt{q}) \leq n^2$. Note that, whenever $q \geq 1 - 1/n$, the desired bound is trivially true, as $t \leq n$.

\end{proof}


    

\newcommand{\rep}{\mathrm{rep}}
\newcommand{\Add}{\textsc{Add}}
\newcommand{\Remove}{\textsc{Remove}}

\section{Reduction to Path-Independent Algorithm}\label{sec:stream-to-path-ind}

The goal of this section is to provide a self-contained proof that, given any randomized streaming algorithm $\calA$ computing some function $g$, one can extract a deterministic, path-independent algorithm $\calB$ which computes $g$ over any distribution $\calD$ supported on finitely many frequency vectors. Reductions of this form were introduced in~\cite{G08,LNW14} as a core component of strengthening lower bounds for linear sketches into lower bounds for general streaming algorithms. The main theorem, relating general randomized streaming algorithms $\calA$ to path-independent algorithms $\calB$, is given below and corresponds to Theorem~5 of~\cite{LNW14}---the crucial definitions of randomized streaming algorithms and path-independent algorithms, appear in Definition~\ref{def:randomized-streaming-alg} and Definition~\ref{def:path-ind}.

For $a,b\in\Z$ with $a\le b$, we will use the notation $\interval{a,b} := \{a, a+1,\ldots, b\}$. We will also use $\log x$ to denote the base-$2$ logarithm of $x$.


\ignore{
\begin{definition}[Path-Independent Algorithm]\label{def:path-ind}
For any $n,m \in \N$, a deterministic, path-independent streaming algorithm $\calB$ with bound $m$ and dimension $n$ is specified by a pair of functions $(\Enc,\Dec)$ along with a set $\calW$ of states and transitions.
\begin{itemize}
    \item \textbf{Encoding.} $\Enc \colon \interval{-m,m}^n \to \calW$, maps frequency vectors to states.
    \item \textbf{Decoding.} $\Dec \colon \calW \to \{0,1\}$, maps a state to a final output.
    \item \textbf{Transitions.} Let $w \in \calW$ for which $\Enc^{-1}(w)$ contains a vector in $\interval{-(m-1), m-1}^n$. There is a well-defined transition function $\oplus$ that specifies, for each of the $2n$ possible stream updates $\xi e_i$ (with $\xi \in \{-1,1\}$ and $i\in[n]$), the new state of the algorithm: \[w' = w\oplus \xi e_i.\] One can view $\calW$ as vertices in a directed graph, with the edge $(w,w')$ labeled by $\xi e_i$.
\end{itemize}
$\calB$ is called \emph{path-independent} if it satisfies the following property:
\begin{itemize}
\item \textbf{Path-Independence}. For all $x \in \interval{-(m-1),m-1}^n$, $i \in [n]$, and $\xi \in \{-1,1\}$, one has \[\Enc(x) \oplus \xi e_i = \Enc(x+\xi e_i).\] 
\end{itemize}

Two relevant properties of streaming algorithms are space complexity and correctness.
\begin{itemize}
\item \textbf{Space Complexity}. The space complexity of $\calB$, and the nonnegative space complexity of $\calB$ are specified functions $\calS(\calB, \cdot), \calS^+(\calB, \cdot) \colon \N \to \R$, given by:
\begin{align*}
\calS(\calB, \ell) &= \log\left|\left\{ \Enc(x) \in \calW : x \in \interval{-\ell, \ell}^n \right\} \right| \qquad \calS^+(\calB, \ell) = \log\left|\left\{ \Enc(x) \in \calW : x \in \interval{0, \ell}^n \right\} \right|.
\end{align*}
\item \textbf{Correctness}. We say that $\calB$ computes a partial function $g \colon \interval{-m,m}^n \to \{0,1,*\}$ over a distribution $\calD$ supported on $\interval{-m,m}^n$ with probability at least $1-\delta$, if
\[ \Prx_{\bx \sim \calD}\left[ g(x) \in \{ 0,1 \} ~\mathrm{and}~\Dec(\Enc(\bx)) \neq g(\bx) \right] \leq \delta \]
\end{itemize}
\end{definition}

We note that, for our construction of path-independent streaming algorithms $\calB$, we will need to specify a finite bound $m \in \N$, which means that the path-independence assumption is only for vectors $x$ whose coordinates have magnitude less than $m$. While the specific bound on $m$ may affect the description of $\calB$, it does not affect the underlying space complexity bound $\calS(\calB, \ell)$ for $\ell \leq m$. 

\begin{remark}[Nonnegative Frequency Vectors] 
For streaming problems $g$ which are naturally defined on \emph{nonnegative} frequency vectors (e.g., the diameter), it becomes more natural to consider the space complexity as the logarithm of distinct number of final working tapes of nonnegative frequency vectors. In particular, for a randomized algorithm $\calA$, we may define $\calW^+(\calA, \ell)$ as the distinct number of final working tapes resulting from streams with frequency vectors in $\interval{0,\ell}^n$, 
and analogously, 
\[ \calS^{+}(\calA, \ell) = \log \left| \left\{ \Enc(x) \in \calW : x \in \interval{0,\ell}^n \right\} \right|. \]
While a lower bound on $\calS^+(\calA, \ell)$ implies the lower bound on $\calS(\calA, \ell)$, the opposite may not be true. Throughout the argument, we will consider working tapes and encodings resulting from streams with frequency vectors in $x \in \interval{-m, m}^n$, but we will obtain a final space lower bound on $\calS^{+}(\calA, \ell)$.
\end{remark} }

\lnwthm

To prove Theorem~\ref{thm:lnw}, we will first generate a graph corresponding to possible working tapes of the streaming algorithm $\calA$. In a streaming algorithm, the number of possible configurations of working tapes may be infinite (as the algorithm is meant to support arbitrarily long streams); however, it will be helpful for us to define a finite graph, which is where we will use the fact that we have a finite bound. Namely, the fact that distribution $\calD$ is finitely-supported means there is a large enough $m' \in \N$ such that $\supp(\calD) \subset \interval{-m',m'}^n$. Then, our construction will be parametrized by any fixed bound $m \geq m'$. As mentioned earlier, while the fact $m$ is finite is important for the construction, it will not affect the complexity analysis, as we have the space complexity bound $\calS(\calB, \ell) \leq \calS(\calA, \ell)$ and $\calS^+(\calB, \ell) \leq \calS^{+}(\calA, \ell)$ for all $\ell \leq m$. The subsequent sections give the construction of $\calB$; we note that multiple intermediate lemmas and definitions must be established to ensure the objects are well-defined and that they satisfy the desired bounds on space complexity and failure probability. The overview of the proof, which solely connects the various lemmas and sections, appears in Subsection~\ref{sec:putting-everything-together}.

\subsection{Zero-Frequency Transitions and State Transition Graphs}\label{sec:encoding}

We first define the following two families of directed graphs. Both families are parameterized by the contents of a randomness tape $\rho$. The first, which we call a \emph{zero-frequency transition graph}, encodes transitions of $\calA$ corresponding to streams with frequency vector $0^n$, henceforth called \emph{zero-frequency streams}. Intuitively, adding streams with frequency vector $0^n$ does not change the frequency vector, and thus transitions in a zero-frequency graph should not affect the final output of the algorithm. The second family of graphs is the \emph{state transition graph}, which uses the zero-frequency transition graph to define equivalence classes. These equivalences classes will be used to define the path-independent streaming algorithm $\calB$.

\begin{definition}[Zero-Frequency Transition Graph]\label{def:zero-freq}
For a randomized streaming algorithm $\calA$ of randomness complexity $r \in \N$ and any $m \in \N$, we consider the following family of directed graphs $G(\rho)$, which are parameterized by a randomness tape $\rho \in \{0,1\}^{r}$. 
\begin{itemize}
\item \textbf{Vertices}. Each vertex of the graph is a final working tape in $\calW(\calA, m)$. 
\item \textbf{Edges}. There is a directed edge $(o_1, o_2)$ for $o_1, o_2 \in \calW(\calA, m)$ if there exists a zero-frequency stream $\sigma$ such that $\calA$ starting at working tape $o_1$ transitions to working tape $o_2$ after processing stream $\sigma$.
\end{itemize}
\end{definition}

Since we have imposed a finite bound on $m \in \N$, we ensure that $G(\rho)$ is finite graph. An important aspect of this definition, which will also be useful for bounding the space complexity of $\calB$, is that the vertices of $G(\rho)$ can be partitioned into a nested family of sets: \[\calW(\calA, 0) \subset \calW(\calA, 1) \subset \calW(\calA, 2) \subset \dots \subset \calW(\calA, m).\] Using $G(\rho)$, we now define the vertices and edges of the state transition graph $H(\rho)$.


\begin{definition}[Vertices of the State Transition Graph]\label{def:vert-state-transition}
For $\rho \in \{0,1\}^{r}$, the vertices of the state transition graph $H(\rho)$ are denoted $T(\rho)$ and constructed in the following manner:
\begin{itemize}
\item Partition the vertices of $G(\rho)$ (in Definition~\ref{def:zero-freq}) into strongly connected components of $G(\rho)$, and let $C(\rho)$ be the set of components. Let $\rep \colon C(\rho) \to \calW(\calA, m)$ be an injective map\footnote{The function $\rep$ depends on $\rho$, but we will drop it since it will be clear from context.}
\[ \rep(v) = \begin{array}{c} \text{arbitrary $o \in \calW(\calA, \ell)$ with $o \in v$,} \\
					\text{for the minimum possible $\ell\leq m$.} \end{array} \] 
\item We denote the terminal components by the set $T(\rho) \subset C(\rho)$ given by
\[ T(\rho) = \left\{ v \in C(\rho) : \forall o_1 \in v, \text{ all paths from $o_1$ in $G(\rho)$ end at a vertex $o_2 \in v$}\right\}.\]
\end{itemize}
\end{definition}
Intuitively, contracting strongly connected components of $G(\rho)$ defines a directed acyclic graph whose vertices are strongly connected components, and directed edges between components are present if there exists an edge between two vertices in each component---the graph is a acyclic as otherwise, the cycle would have generated a strongly connected component. The set $T(\rho)$ corresponds to the ``terminal'' vertices in the directed acyclic graph (i.e., one cannot reach any other component via edges of $G(\rho)$). In order to define the edges of the state transition graph $H(\rho)$ among vertices $T(\rho)$, it will be helpful to first define the following map.
\begin{definition}
For $\rho \in \{0,1\}^r$, let $\alpha \colon C(\rho) \to T(\rho)$ map each component $v \in C(\rho)$ to an arbitrary terminal component reachable from $v$ in $G(\rho)$.\footnote{Here, ``reachability'' among components simply refers to reachability of underlying vertices: a component $v$ is reachable from a component $u$ in $G(\rho)$ if there exists a path in $G(\rho)$ from a vertex $o_1 \in u$ to a vertex $o_2 \in v$. The map $\alpha$ depends on $\rho$ but we drop it since $\rho$ will be clear from context.}
\end{definition} 

\begin{definition}[Edges of the State Transition Graph]\label{def:state-transition-edges}
For $\rho \in \{0,1\}^r$, and any two vertices $v, w \in T(\rho)$ and $\xi \in \{-1,1\}$ and $i \in [n]$,
\[ (v, w) \in H(\rho) \text{ with label $\xi e_i$}\iff \begin{array}{c} 
\text{$\calA$ with randomness tape $\rho$ transitions $\rep(v)$ to}\\ 
\text{a working tape $o_2 \in \calW(\calA, m)$ with update $\xi e_i$} \\ 
\text{where $o_2 \in u \in C(\rho)$ with $\alpha(u) = w$.} \end{array} \]
\end{definition}

Here, the label $\xi e_i$ is associated with the update $\Add(i)$ when $\xi = 1$ and $\Remove(i)$ when $\xi = -1$. An important aspect to notice (and the reason that the condition of path-independence will be restricted to vectors in $\interval{-(m-1), m-1}^n$) is that every vertex $v\in T(\rho)$ with $\rep(v) \in \calW(\calA, m-1)$ has outgoing edges corresponding to each of the $2n$ distinct updates. There may be vertices $v \in T(\rho)$ where $\rep(v)$ lies in $\calW(\calA, m)$ and does not have all $2n$ outgoing edges, and this occurs whenever the final working tape (after updating $\rep(v)$) can only be reached via streams of frequency vectors larger than $m$.

We now give the main lemma (Lemma~7 of~\cite{LNW14}) concerning the state transition graph $H(\rho)$. This lemma will be used to argue correctness, by relating transitions in $H(\rho)$ back to the inner workings of $\calA$. The lemma, as well as subsequent arguments, make heavy use of the following simple fact regarding terminal components.
\begin{fact}\label{fact:terminal}
Let $G$ be any directed graph with strongly connected components $C$, and let $T \subset C$ be the set of terminal components. If $x$ is a vertex of $G$ within a terminal component, and $y$ is the endpoint of a path in $G$ starting from $x$, then $x$ and $y$ lie in the same component.
\end{fact}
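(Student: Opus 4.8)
The plan is to prove Fact~\ref{fact:terminal} directly from the characterization of a terminal component as a sink of the condensation of $G$. First I would recall the setup: the \emph{condensation} $\widehat{G}$ of $G$ has one vertex per strongly connected component, with a directed edge from a component $u$ to a distinct component $w$ whenever some vertex of $u$ has an edge in $G$ to some vertex of $w$, and $\widehat{G}$ is acyclic. A component is \emph{terminal} precisely when it has no outgoing edge in $\widehat{G}$; equivalently, every $G$-edge leaving a vertex of that component lands back inside the same component. This is the same characterization of terminal components implicitly used in Definition~\ref{def:vert-state-transition}.

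Next, let $v \in T$ be a terminal component with $x \in v$, and let $x = x_0, x_1, \dots, x_k = y$ be a path in $G$ (a finite sequence of edges). I would argue by induction on $i$ that every $x_i$ lies in $v$. The base case $x_0 = x \in v$ is given. For the inductive step, suppose $x_i \in v$; the edge $x_i \to x_{i+1}$ of $G$ either keeps $x_{i+1}$ inside $v$, or else $x_{i+1}$ lies in some component $w \neq v$, in which case this edge witnesses an outgoing edge $v \to w$ in $\widehat{G}$, contradicting the terminality of $v$. Hence $x_{i+1} \in v$, completing the induction.

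Finally, applying the claim with $i = k$ gives $y = x_k \in v$; since also $x \in v$, the vertices $x$ and $y$ lie in the same strongly connected component, which is the desired conclusion.

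I do not expect a genuine obstacle here—the statement is a routine structural fact about condensations of directed graphs. The only point that needs a little care is fixing the notion of "terminal component" precisely enough that a single $G$-edge leaving $v$ to a different component immediately contradicts terminality; once "terminal $=$ sink of the condensation" is pinned down, the one-step induction does all the work.
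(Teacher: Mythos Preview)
Your proof is correct. The paper does not actually prove Fact~\ref{fact:terminal}; it states it as a ``simple fact'' and leaves it at that. Moreover, with the paper's own definition of a terminal component in Definition~\ref{def:vert-state-transition}---namely, $v \in T(\rho)$ iff every path from every vertex of $v$ ends inside $v$---the fact is literally a restatement of the definition. Your argument via the condensation (terminal $=$ sink of $\widehat{G}$, then one-step induction along the path) is a clean and correct way to derive the same conclusion from the equivalent ``no outgoing edge in the condensation'' characterization, and it matches what the paper alludes to just before Definition~\ref{def:state-transition-edges}.
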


Lemma~\ref{lem:relate} below will conclude that, so long as $\calA$ is initialized with randomness tape $\rho$ and an arbitrary working tape $o_1$ lying within a terminal component $v$ of $H(\rho)$, the following will hold for any frequency vector $x \in \interval{-m,m}^n$: any stream $\sigma$ of frequency vector $x$ will result in a working tape $o_2$ within a component which can reach a \emph{unique} terminal component of $H(\rho)$ (via edges of $G(\rho)$). Importantly, the uniquely reachable terminal component $u$ will only depend on $\rho$, $v$, and $x$. Roughly speaking, this will allow us to design an algorithm which solely maintains the unique terminal reachable component from a particular frequency vector. Since this terminal component is invariant under zero-frequency stream updates, path-independence follows, which is formally shown in Lemma~\ref{lem:path-ind}.

\begin{lemma}\label{lem:relate}
For any $\rho \in \{0,1\}^r$, any terminal component $v \in T(\rho)$ with $\rep(v) \in \calW(\calA, \ell)$, and any frequency vector $x \in \interval{-(m-\ell),m-\ell}^n$, there is a unique $u \in T(\rho)$ where the following holds:
\begin{itemize}
\item For any $o_1 \in v$, let $o_2$ be the working tape after transitioning $\calA$ from $o_1$ with an arbitrary stream $\sigma$ of frequency vector $x$ and randomness tape $\rho$.
\item Then, $o_2 \in \calW(\calA, m)$, and any path in $G(\rho)$ which starts at $o_2$ and ends in any $o_3 \in u' \in T(\rho)$ satisfies $u' = u$. 
\end{itemize}
\end{lemma}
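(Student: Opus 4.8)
The plan is to reduce the whole statement to one ``cancellation'' argument about the graph $G(\rho)$ and the absorbing property of its terminal components (Fact~\ref{fact:terminal}). For a stream $\sigma = a_1 \cdots a_t$ with updates $a_j \in \{\pm e_{i_j}\}$, write $\hat\sigma := (-a_t)(-a_{t-1})\cdots(-a_1)$ for its reversed-and-negated stream: if $\sigma$ has net frequency $x$ then $\hat\sigma$ has net frequency $-x$, so the concatenations $\sigma\hat\sigma$ and $\hat\sigma\sigma$ have net frequency $0$ and are therefore single edges of $G(\rho)$. I extend the paper's single-update notation and write $o \oplus \tau$ for the working tape reached by running $\calA$ on randomness $\rho$ and processing the stream $\tau$ starting from tape $o$. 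The part of the lemma asserting $o_2 \in \calW(\calA,m)$ is the easy part: $o_1$ lies in the same strongly connected component of $G(\rho)$ as $\rep(v)$ and $\rep(v) \in \calW(\calA,\ell)$, so $o_1$ is reached via a stream of final frequency magnitude at most $\ell$; appending $\sigma$, whose net frequency $x$ satisfies $\|x\|_\infty \le m-\ell$, gives a stream of final frequency magnitude at most $m$ ending at $o_2$. The identical budget computation will certify that every auxiliary tape built below lies in $\calW(\calA,m)$, hence is a vertex of $G(\rho)$.

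The key step I would prove is: for all $o_1, o_1' \in v$ and all streams $\sigma_1, \sigma_2$ with a common net frequency vector $x$ of magnitude at most $m-\ell$, every terminal component $D_1 \in T(\rho)$ reachable in $G(\rho)$ from $p_1 := o_1 \oplus \sigma_1$ coincides with every terminal component $D_2 \in T(\rho)$ reachable from $p_2 := o_1' \oplus \sigma_2$. To see this, use finiteness of $G(\rho)$ to pick zero-frequency streams $\tau_1, \tau_2$ with $p_1 \oplus \tau_1 = d_1 \in D_1$ and $p_2 \oplus \tau_2 = d_2 \in D_2$. The concatenations $\sigma_1\tau_1\hat\sigma_2$ and $\sigma_2\tau_2\hat\sigma_1$ are zero-frequency, so $e := o_1 \oplus (\sigma_1\tau_1\hat\sigma_2)$ and $e' := o_1' \oplus (\sigma_2\tau_2\hat\sigma_1)$ are endpoints of edges of $G(\rho)$ leaving $o_1, o_1' \in v$; since $v$ is terminal, Fact~\ref{fact:terminal} forces $e, e' \in v$, and since $v$ is strongly connected there is a zero-frequency stream $\gamma$ with $e' \oplus \gamma = e$. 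Now set $g := e \oplus \sigma_1$. Unwinding the definitions, $g = d_1 \oplus (\hat\sigma_2\sigma_1)$ where $\hat\sigma_2\sigma_1$ is zero-frequency, so $g$ is the endpoint of an edge of $G(\rho)$ out of $d_1 \in D_1$, hence $g \in D_1$; but also $g = d_2 \oplus (\hat\sigma_1\gamma\sigma_1)$ where $\hat\sigma_1\gamma\sigma_1$ is zero-frequency, so $g \in D_2$ by the same reasoning. Since the terminal components partition the vertex set of $G(\rho)$, this gives $D_1 = D_2$.

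Granting this, the lemma follows. Taking $o_1 = o_1'$ and $\sigma_1 = \sigma_2$ shows that the set of terminal components reachable from $o_1 \oplus \sigma$ (for any $\sigma$ of net frequency $x$) is a singleton; taking general $o_1, o_1'$ and streams of net frequency $x$ shows this singleton is the same for every choice of $o_1 \in v$ and of $\sigma$ among streams of net frequency $x$. I would then define $u$ to be the unique terminal component reachable from $\rep(v) \oplus \sigma_0$ for any fixed stream $\sigma_0$ of net frequency $x$ (a vertex of $G(\rho)$ by the budget argument), and this $u$---patently unique---satisfies the two bulleted conditions for every $o_1 \in v$ and every stream $\sigma$ of net frequency $x$.

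I expect the main obstacle to be the frequency-budget bookkeeping: checking that each of $p_1, p_2, d_1, d_2, e, e', g$ is reached by some stream whose final frequency vector has magnitude at most $m$, so that it is a genuine vertex of $G(\rho)$ and Fact~\ref{fact:terminal} may be invoked; this is exactly where the hypothesis $x \in \interval{-(m-\ell),m-\ell}^n$ is needed. A secondary point requiring care is that the transition operation $\oplus$ is not invertible, so one may not claim that $\sigma\hat\sigma$ returns a tape to its original value---only the net-frequency-zero property is available, which is why every conclusion above is drawn from the edge structure of $G(\rho)$ and the absorbing property of terminal components rather than from any group action.
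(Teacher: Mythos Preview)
Your proof is correct and follows essentially the same approach as the paper's: both arguments exploit the ``cancellation'' trick of concatenating a stream with its reversed-and-negated version to manufacture zero-frequency paths, and then invoke Fact~\ref{fact:terminal} (absorption by terminal components) to force equality of components. The paper fixes a canonical target $u$ built from $\rep(v)$ and shows every candidate $u'$ equals it by routing $o_3 \to o_4 \to \rep(v) \to o_2^\circ \to o_3^\circ$, whereas you prove the more symmetric statement that any two reachable terminal components $D_1, D_2$ coincide by exhibiting a common point $g$; the underlying mechanism and the budget bookkeeping are the same.
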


\begin{proof}
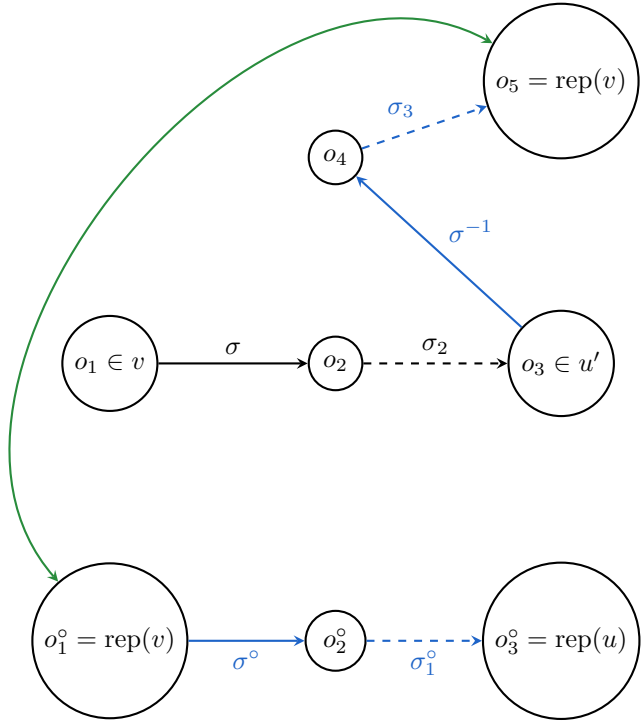
\begin{figure}[h]
\centering
\begin{tikzpicture}[->, >=stealth, node distance=3cm, thick]

\definecolor{niceblue}{RGB}{30, 100, 200} 
\definecolor{nicegreen}{RGB}{40, 140, 60}   

\node[circle, draw] (o1) {$o_1 \in v$};
\node[circle, draw, right of=o1] (o2) {$o_2$};
\node[circle, draw, right of=o2] (o3) {$o_3 \in u'$};

\node[circle, draw, below=2cm of o1] (o1c) {$o_1^{\circ} = \rep(v)$};
\node[circle, draw, right of=o1c] (o2c) {$o_2^{\circ}$};
\node[circle, draw, right of=o2c] (o3c) {$o_3^{\circ} = \rep(u)$};

\node[circle, draw, above=2cm of o2] (o4) {$o_4$};
\node[circle, draw, above=2cm of o3] (o5) {$o_5 = \rep(v)$};

\draw[->] (o1) -- node[above] {$\sigma$} (o2);
\draw[->, dashed] (o2) -- node[above] {$\sigma_2$} (o3);

\draw[niceblue, ->] (o1c) -- node[below] {$\sigma^{\circ}$} (o2c);
\draw[niceblue, ->, dashed] (o2c) -- node[below] {$\sigma_1^{\circ}$} (o3c);

\draw[niceblue, ->] (o3) -- node[above right] {$\sigma^{-1}$} (o4);
\draw[niceblue, ->, dashed] (o4) -- node[above left] {$\sigma_3$} (o5);

\draw[<->, nicegreen, bend right = 80] (o5) to (o1c);

\node[below=0.2cm of o1c] {};
\node[above=0.2cm of o1] {};

\end{tikzpicture}
\caption{Diagram of the proof of Lemma~\ref{lem:relate}. Dashed arrows denote zero-frequency transitions, blue arrows highlight the zero-frequency path from $o_3$ to $o_3^{\circ}$, and the green bidirectional arrow indicates that $o_1^{\circ}$ and $o_5$ are identical.}
\label{fig:transition-diagram}
\end{figure}

We first define $u$ in the following way. First, fix an arbitrary stream $\sigma^{\circ}$ of frequency vector $x$ (say, which performs updates in order of increasing $i$), and let $o_2^{\circ}$ be the transition from $o_1^{\circ} = \rep(v)$ with the stream $\sigma^{\circ}$ and randomness tape $\rho$. Notice that $o_2^{\circ} \in \calW(\calA, m)$ for the following reason: $\rep(v) \in \calW(\calA, \ell)$ implies there exists a frequency vector $y \in \interval{-\ell,\ell}^n$ and a stream which leads to a final working tape $\rep(v)$; concatenating $\sigma^{\circ}$ implies that $o_2^{\circ}$ is the final working tape of a stream with frequency vector $y + x \in \interval{-m,m}^n$. Let
\[ v_2^{\circ} \in C(\rho) \text{ with $o_2^{\circ} \in v_2^{\circ}$} \qquad v_3^{\circ} = \alpha(v_2^{\circ}) \qquad \text{and} \qquad o_3^{\circ} = \rep(v_3^{\circ}), \]
and let $\sigma_1^{\circ}$ be an arbitrary zero-frequency stream corresponding to an arbitrary path in $G(\rho)$ which goes from $o_2^{\circ}$ to $o_3^{\circ}$. Note such a stream must exist since, in $G(\rho)$, $o_2^{\circ}$ can reach $\rep(v_2^{\circ})$ (by strong connectivity) and $\rep(v_2^{\circ})$ can reach $o_3^{\circ}$ (by definition of $\alpha$). We define $u = v_3^{\circ} \in T(\rho)$.

We now argue the conclusions of the lemma by considering an arbitrary $o_1 \in v$, an arbitrary stream $\sigma$ of frequency vector $x$ which transitions $\calA$ from $o_1$ to $o_2$ with randomness tape $\rho$, and an arbitrary path in $G(\rho)$ which ends in any $o_3 \in u' \in T(\rho)$, corresponding to the zero-frequency stream $\sigma_2$. We will construct a path in $G(\rho)$ which connects $o_3$ to $o_3^{\circ} = \rep(u)$ and use Fact~\ref{fact:terminal} to conclude the claim (see Figure~\ref{fig:transition-diagram}). The path is obtained as follows:
\begin{itemize}
\item Let $o_4$ be the working tape obtained from transitioning $\calA$ from $o_3$ with stream $\sigma^{-1}$ and randomness tape $\rho$ (i.e., $\sigma^{-1}$ is that which inverts $\sigma$). Note, $o_4 \in \calW(\calA, \ell)$ since it is the result of a stream with frequency vector $y \in \interval{-\ell,\ell}^n$, and hence $v_4 \in C(\rho)$ with $o_4 \in v_4$. 
\item Let $\sigma_3$ an arbitrary zero-frequency stream corresponding to a path which begins at $o_4$ and ends at $o_5 \in \calW(\calA, \ell)$ given by first letting $v_5 = \alpha(v_4)$ and then $o_5 = \rep(v_5)$. Note, the concatenation $\sigma, \sigma_2, \sigma^{-1}, \sigma_3$, has frequency vector $0$ and leads to a path in $G(\rho)$, $o_1 \to o_2 \to o_3 \to o_4 \to o_5$, of frequency vector $0$, where $o_1 \in v$ and $o_5 = \rep(v_5)$ both lie in a terminal component.
\item Since $o_1$ and $o_5$ have a zero-frequency path and lie in terminal components, Fact~\ref{fact:terminal} implies they lie in the same component, and this means $o_5 = o_1^{\circ} = \rep(v)$. 
\end{itemize}
We now consider the zero-frequency stream given by the concatenation of $\sigma^{-1}, \sigma_3, \sigma^{\circ}, \sigma_1^{\circ}$ (note that it has zero-frequency since $\sigma^{-1}$ and $\sigma^{\circ}$ invert each other). The stream transitions $\calA$ with randomness tape $\rho$ by: $o_3 \to o_4 \to o_5 = o_1^{\circ} \to o_2^{\circ} \to o_3^{\circ}$, but we have thus found a zero-frequency stream which connects two working tapes $o_3$ and $o_3^{\circ}$ which lie in terminal components. By Fact~\ref{fact:terminal} once more, $o_3$ and $o_3^{\circ}$ lie in the same terminal component, and hence $u = u'$.
\end{proof}

We are now ready to define a collection of encoding maps. The deterministic, path-independent algorithm $\calB$ will be given by choosing a fixed element from the collection.

\begin{definition}\label{def:initial-states}
For any $\rho \in \{0,1\}^r$, let $T_{0}(\rho) \subset T(\rho)$ denote the set of terminal components reachable from the initial working tape of $\calA$ via zero-frequency streams and randomness tape $\rho$.
\end{definition}

\begin{definition}[States]\label{def:states}
For any fixed $\rho \in \{0,1\}^r$, the set of states $\calW$ is defined to be $T(\rho)$.
\end{definition}

\begin{definition}[Family of Encoding Functions]\label{def:encodings}
For any $\rho \in \{0,1\}^r$ and any $v \in T_{0}(\rho)$, let $\Enc(\cdot; \rho, v) \colon \interval{-m,m}^n \to T(\rho)$ be given by:
\begin{align*}
\Enc(x; \rho, v) = \begin{array}{c} \text{unique terminal component $u \in T(\rho)$ reachable from an}\\ 
						   \text{arbitrary working tape $o_1 \in v$ via an arbitrary stream $\sigma$} \\
						   \text{of frequency vector $x$ in $\calA$ with randomness tape $\rho$.} \end{array} 
\end{align*}
By Lemma~\ref{lem:relate}, the map $\Enc$ is well-defined.
\end{definition}

\begin{definition}[Transitions]\label{def:trans}
For any $\rho \in \{0,1\}^r$ and $v \in T_0(\rho)$, the transition from $u \in T(\rho)$ via $\xi e_i$, denoted $u \oplus_{\rho,v} \xi e_i$ is given by traversing the outgoing edge from $u \in H(\rho)$ labeled $\xi e_i$.
\end{definition}

Note, that in Definition~\ref{def:trans}, an outgoing edge from $u$ is guaranteed to exist whenever $\Enc^{-1}(u; \rho, v)$ contains a frequency vector $x \in \interval{-(m-1),m-1}^n$. This is because there exists a stream $\sigma$ which transitions $\calA$ from the initial state to $\rep(v)$ with randomness tape $\rho$, and furthermore, Lemma~\ref{lem:relate} implies that any stream $\sigma'$ with frequency vector $x$ will transition $\calA$ from $\rep(v)$ to a working tape $o'$ which can reach $\rep(u)$. Whenever $x \in \interval{-(m-1),m-1}^n$, we have $\rep(u) \in \calW(\calA, m-1)$, and thus $u$ has $2n$ outgoing edges in $H(\rho)$.

\subsection{Path-Independence and Space Complexity}\label{sec:path-and-space}

We are now ready to prove that the encoding function in Definition~\ref{def:encodings} and transitions constructed in Definition~\ref{def:trans} satisfy path-independence with bound $m$ as well as the claimed space complexity bound, for any fixed setting of $\rho \in \{0,1\}^r$ and $v \in T_{0}(\rho)$.

\begin{lemma}[Path-Independence]\label{lem:path-ind}
Consider any fixed $\rho \in \{0,1\}^r$ any $v \in T_0(\rho)$. For any $x\in \interval{-(m-1),m-1}^n$, $\xi \in \{-1,1\}$ and $i \in [n]$, 
\begin{align*}
\Enc(x; \rho, v) \oplus \xi e_i = \Enc(x + \xi e_i ; \rho, v).
\end{align*}
\end{lemma}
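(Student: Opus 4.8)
The plan is to unwind both sides of the claimed identity into statements about which terminal components are reachable in the zero-frequency transition graph $G(\rho)$, and then let the uniqueness clause of Lemma~\ref{lem:relate} do the work. Throughout I would fix $\rho$ and $v \in T_0(\rho)$, abbreviate $\Enc(\cdot) = \Enc(\cdot;\rho,v)$ and $\oplus = \oplus_{\rho,v}$, and set $u := \Enc(x)$ and $w := u \oplus \xi e_i$. The first step is a preliminary observation: $\rep(v) \in \calW(\calA,0)$. Indeed, since $v \in T_0(\rho)$ it contains a working tape reachable from the (empty) initial working tape of $\calA$ by a zero-frequency stream, and that tape is therefore the endpoint of a stream of frequency vector $0^n$. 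This matters because it lets me apply Lemma~\ref{lem:relate} to the pair $(v,x')$ for \emph{every} $x' \in \interval{-m,m}^n$ — in particular for $x' = x$ and $x' = x+\xi e_i$, both of which lie in $\interval{-m,m}^n$ since $x \in \interval{-(m-1),m-1}^n$ — so that $\Enc(x)$, $\Enc(x+\xi e_i)$, and (via the remark after Definition~\ref{def:trans}, since $x \in \Enc^{-1}(u)\cap\interval{-(m-1),m-1}^n$) the edge $u \oplus \xi e_i$ are all well-defined.

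Next I would produce one concrete stream of frequency vector $x + \xi e_i$ that makes the right-hand side transparent. Take $o_1 = \rep(v) \in v$ and let $\sigma$ be any stream of frequency vector $x$, reaching a working tape $o_2$ from $o_1$. By Lemma~\ref{lem:relate}, $o_2 \in \calW(\calA,m)$ and every path in $G(\rho)$ out of $o_2$ that ends in a terminal component ends in $u$; since $G(\rho)$ is finite such a path exists, so $o_2$ reaches a vertex of $u$, and then by strong connectivity of $u$ it reaches $\rep(u)$ along a zero-frequency stream $\tau$. Letting $o_3$ be the working tape obtained by applying the single update $\xi e_i$ to $\rep(u)$ with randomness tape $\rho$, the concatenation $\sigma\cdot\tau\cdot(\xi e_i)$ is a stream of frequency vector $x+\xi e_i$ from $o_1 \in v$ ending at $o_3$; hence by Lemma~\ref{lem:relate} applied to $(v,x+\xi e_i)$ together with Definition~\ref{def:encodings}, $\Enc(x+\xi e_i)$ is exactly the unique terminal component of $G(\rho)$ reachable from $o_3$ (in the sense that every path in $G(\rho)$ from $o_3$ ending in a terminal component ends there).

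The last step is to check that $w$ is that same terminal component. Unpacking Definitions~\ref{def:state-transition-edges} and~\ref{def:trans}, $w = \alpha(w'')$ where $w''$ is the strongly connected component of $G(\rho)$ containing the working tape reached from $\rep(u)$ by the update $\xi e_i$ — and that working tape is precisely $o_3$, because a single update is deterministic given $\rho$. By the definition of $\alpha$, $w$ is a terminal component reachable from $w''$ in $G(\rho)$, and since $w''$ is strongly connected and contains $o_3$, this gives a path in $G(\rho)$ from $o_3$ to a vertex of $w$; as $w$ is terminal, the uniqueness clause of Lemma~\ref{lem:relate} forces $w = \Enc(x+\xi e_i)$, which is the desired identity $\Enc(x;\rho,v)\oplus\xi e_i = \Enc(x+\xi e_i;\rho,v)$.

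I expect the only genuinely delicate point to be the bookkeeping on frequency-vector magnitudes: I must make sure every intermediate working tape ($o_2$, $\rep(u)$, $o_3$) lies in $\calW(\calA,m)$, so that it is a legitimate vertex of the finite graph $G(\rho)$, and that Lemma~\ref{lem:relate} is invoked only within its stated range $x' \in \interval{-(m-\ell),m-\ell}^n$. This is exactly what the preliminary observation $\rep(v)\in\calW(\calA,0)$ buys, reducing that range condition to $x' \in \interval{-m,m}^n$, which both $x$ and $x+\xi e_i$ satisfy; once that is in place, the rest is a routine chase through the definitions of $G(\rho)$, $H(\rho)$, $\rep$, $\alpha$, and $\Enc$.
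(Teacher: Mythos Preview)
Your proof is correct and follows essentially the same approach as the paper's: build a stream of frequency $x+\xi e_i$ from $\rep(v)$ by concatenating a stream of frequency $x$, a zero-frequency path to $\rep(u)$, and the single update $\xi e_i$, then invoke the uniqueness clause of Lemma~\ref{lem:relate} to identify the resulting terminal component simultaneously with $\Enc(x+\xi e_i)$ and with $u\oplus\xi e_i$. Your explicit observation that $\rep(v)\in\calW(\calA,0)$ (so Lemma~\ref{lem:relate} applies with $\ell=0$) is a welcome clarification that the paper leaves implicit.
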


\begin{proof}
Let $\sigma$ be a stream of frequency $x$ that transitions $\calA$ from $\rep(v)$ to a working tape $o_1$, and let $\sigma'$ be a zero-frequency stream which transitions $\calA$ from $o_1$ to $\rep(u)$, where $u = \Enc(x; \rho, v)$ is the unique terminal component which is reachable from $o_1$. By Definition~\ref{def:trans} and Lemma~\ref{lem:relate}, the transition $\Enc(x; \rho, v) \oplus \xi e_i$ corresponds to traversing an edge $(u, u')$ labeled $\xi e_i$. Furthermore, the fact $x \in \interval{-(m-1),m-1}^n$ means that $\rep(u) \in \calW(\calA, m-1)$, and hence such an edge must exist. Note, $u'$ is the unique terminal component which is reachable from the working tape obtained by transitioning $\rep(u)$ with $\xi e_i$, say via zero-frequency stream $\sigma''$. However, since the concatenation of $\sigma, \sigma', \xi e_i, \sigma''$ has frequency vector $x + \xi e_i$, we conclude that $u' = \Enc(x+\xi e_i; \rho, v)$.  
\end{proof}

\begin{lemma}[Space Complexity]\label{lem:space}
For any $\rho \in \{0,1\}^r$ and $v \in T_0(\rho)$, and any $\ell \leq m$,
\begin{align*}
\left| \left\{ \Enc(x; \rho, v) : x \in \interval{-\ell,\ell}^n \right\} \right| &\leq |\calW(\calA, \ell)|,
\end{align*}
and in addition, $\left| \left\{ \Enc(x; \rho, v) : x \in \interval{0, \ell}^n \right\} \right| \leq |\calW^+(\calA, \ell)|$.
\end{lemma}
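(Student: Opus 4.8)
The plan is to construct, for each fixed $\rho \in \{0,1\}^r$ and each $v \in T_0(\rho)$, an injection from the set of states actually used, namely $\{\Enc(x;\rho,v) : x \in \interval{-\ell,\ell}^n\}$, into $\calW(\calA,\ell)$, and likewise for the nonnegative versions. Since these states are terminal components of $G(\rho)$, the natural candidate is the map $u \mapsto \rep(u)$, which is injective because $\rep \colon C(\rho) \to \calW(\calA,m)$ is injective by construction. Thus the whole lemma reduces to the single claim: \emph{if $u = \Enc(x;\rho,v)$ with $x \in \interval{-\ell,\ell}^n$, then $\rep(u) \in \calW(\calA,\ell)$}, together with its analogue using $\calW^+$ when $x \in \interval{0,\ell}^n$.

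To prove this claim I would first record the ``grounding'' observation that every $o_1 \in v$ lies in $\calW(\calA,0)$: the empty stream places $\calA$'s initial working tape in $\calW(\calA,0)$, and since $v \in T_0(\rho)$ there is a zero-frequency stream from the initial tape to some vertex of $v$, which by strong connectivity of $v$ can be extended within $G(\rho)$ — hence still zero-frequency — to a stream $\tau_0$ ending at $o_1$, exhibiting $o_1$ as the final tape of a stream with final frequency vector $0^n$. Now take $o_1 = \rep(v)\in\calW(\calA,0)$, pick an arbitrary stream $\sigma$ with frequency vector $x$ (adding/removing each coordinate the appropriate number of times, using only \textsc{Add} updates if $x \ge 0$), and let $o_2$ be the tape reached by running $\sigma$ from $o_1$ under randomness $\rho$. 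Applying Lemma~\ref{lem:relate} with $\ell'=0$ (legitimate since $\rep(v)\in\calW(\calA,0)$ and $x\in\interval{-\ell,\ell}^n\subseteq\interval{-m,m}^n$ because $\ell\le m$) shows $o_2\in\calW(\calA,m)$ and identifies $u=\Enc(x;\rho,v)$ as the unique terminal component reachable from $o_2$ in $G(\rho)$; in particular $\rep(u)\in u$ is reachable from $o_2$ in $G(\rho)$ via some zero-frequency stream $\tau$ (routing to $\rep(u)$ by strong connectivity of $u$). Concatenating $\tau_0$, then $\sigma$, then $\tau$ gives a single stream whose final frequency vector is $0^n + x + 0^n = x$ and whose final working tape is $\rep(u)$; hence $\rep(u)\in\calW(\calA,\ell)$. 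The nonnegative case is identical: there $\sigma$ uses only \textsc{Add} updates, the concatenated stream has final frequency vector $x\in\interval{0,\ell}^n$, so $\rep(u)\in\calW^+(\calA,\ell)$.

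Putting it together, the map $x \mapsto \rep(\Enc(x;\rho,v))$ sends $\interval{-\ell,\ell}^n$ into $\calW(\calA,\ell)$, is constant on each fibre of $\Enc(\cdot;\rho,v)$, and takes distinct values on distinct states (injectivity of $\rep$), so $\bigl|\{\Enc(x;\rho,v): x\in\interval{-\ell,\ell}^n\}\bigr| \le |\calW(\calA,\ell)|$; the same argument with $\interval{0,\ell}^n$ and $\calW^+$ gives the second bound. I do not expect a genuine obstacle here: the argument is essentially bookkeeping once one notices the grounding fact $v \in T_0(\rho) \Rightarrow v \subseteq \calW(\calA,0)$ and that prepending/appending zero-frequency streams changes neither the net frequency vector nor — via the definitions of $G(\rho)$ and $\rep$ — the level at which $\rep(u)$ first appears. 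The only step that needs a moment's care is verifying the hypotheses of Lemma~\ref{lem:relate}, which here are immediate because $\rep(v)\in\calW(\calA,0)$ and $\ell\le m$.
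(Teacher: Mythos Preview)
Your argument is correct and follows the same high-level plan as the paper: use the injectivity of $\rep$ on terminal components and show that the image lands in $\calW(\calA,\ell)$ (resp.\ $\calW^+(\calA,\ell)$). The one noteworthy difference is in how you handle the nonnegative case. The paper argues the image claim via the \emph{minimality} built into the definition of $\rep$: since some $o_1 \in u$ lies in $\calW(\calA,\ell)$, the representative $\rep(u)$, chosen at the minimal level, must too. But that minimality is phrased relative to the $\calW$-filtration, not the $\calW^+$-filtration, so for the nonnegative bound the paper is forced to introduce an auxiliary map $\rep^+$ (picking from $\calW^+$ when possible) and re-verify its injectivity. Your explicit stream construction $\tau_0 \circ \sigma \circ \tau$ sidesteps this entirely: you directly exhibit a stream with final frequency vector $x$ terminating at the \emph{specific} vertex $\rep(u)$, so $\rep(u)$ itself lies in $\calW^+(\calA,\ell)$ whenever $x \in \interval{0,\ell}^n$, and the same map $\rep$ works uniformly for both bounds. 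This is a small but genuine simplification.
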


\begin{proof}
We first show the first claim, that for any $\ell \leq m$, the set of distinct states $\Enc(x;\rho, v)$ obtained via $x \in \interval{-\ell,\ell}^n$ is at most $|\calW(\calA,\ell)|$. We do this by showing that the restriction of $\rep \colon C(\rho) \to \calW(\calA, m)$ to the domain of states in $\{ \Enc(x;\rho,v) : x\in \interval{-\ell,\ell}^n\}$ maps to $\calW(\calA, \ell)$ and is an injection. First, any domain restriction of an injective map is injective, so it remains to show $u = \Enc(x; \rho, v)$ satisfies $\rep(u) \in \calW(\calA, \ell)$. For that, the fact $u = \Enc(x; \rho, v)$ with $x \in \interval{-\ell, \ell}^n$ means there exists a stream which can transition $\calA$ from the initial working tape to a working tape $o_1 \in u$ where $o_1 \in \calW(\calA, \ell)$. Hence, $\rep(u) \in \calW(\calA, \ell)$, since some $o_1$ exists. 

To generalize the argument for vectors in $x \in \interval{0,\ell}^n$, we consider an arbitrary function $\rep^{+} \colon C(\rho) \to \calW(\calA, m)$ which is an injection and for $u \in C(\rho)$, sets $\rep^+(u) = o$ to an arbitrary $o \in \calW^+(\calA, \ell)$ with $o \in u$ if one exists for the smallest possible $\ell \leq m$, and an arbitrary $o \in \calW(\calA, \ell)$ with $o \in u$ otherwise. Note (similarly to $\rep$ in Definition~\ref{def:vert-state-transition}), $\rep^{+}$
 is an injection: any two $u, u' \in C(\rho)$ and any $o \in u$ and $o' \in u'$, $\{ \rep(u), o\}$ and $\{ \rep(u'), o'\}$ lie in the same strongly connected components of $G(\rho)$, so $\rep(u) = \rep(u')$ implies that $\{o,o'\}$ lie in the same strongly connected component of $G(\rho)$. Since $o, o'$ were arbitrary, $u = u'$. Hence, the output restriction to states in $\Enc(x;\rho, v)$ with $x \in \interval{0,\ell}^n$ is injective, and there is a stream with frequency vector $x\in \interval{0,\ell}^n$ which transitions $\calA$ from the initial state to $o_1 \in u$, so $\rep^+(u) \in \calW^+(\calA, \ell)$.
 
\end{proof}

\subsection{Definition of $\calB$ and Correctness}\label{sec:decode}

In order to complete the construction of the deterministic, path-independent algorithm $\calB$, it remains to choose a setting of $\rho \in \{0,1\}^r$ and $v \in T_0(\rho)$, as well as the decoding function $\Dec(\cdot; \rho, v)$, to ensure correctness. We begin with the following lemma, which constructs a certain distribution over zero-frequency streams, which will effectively allow us to assume that the algorithm $\calA$ transitions to a terminal component. This will be useful, not only for selecting the initial terminal component $v \in T(\rho)$ to be used in $\Enc(\cdot; \rho, v)$, but also for defining the decoding function.

\begin{lemma}[Finite Collection of Zero-Frequency Streams]\label{lem:finite-length}
For any randomized streaming algorithm $\calA$ with finite randomness complexity $r \in \N$, there exists a finite collection $S$ of zero-frequency streams such that, for all $o_1, o_2 \in \calW(\calA, m)$ and any $\rho \in \{0,1\}^r$,
\[ (o_1, o_2) \text{ is an edge of } G(\rho) \iff \exists \sigma \in S : \begin{array}{c} \text{transitioning $\calA$ from $o_1$ with randomness tape $\rho$} \\ \text{and input stream $\sigma$ gives working tape $o_2$.} \end{array}  \]
\end{lemma}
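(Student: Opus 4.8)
The plan is to exploit the finiteness of both the randomness space and the set of relevant working tapes, and then simply collect one witnessing stream per edge. First I would observe that since the randomness complexity $r$ is finite, the set of randomness tapes $\{0,1\}^r$ is finite, and since we have fixed a finite bound $m$, the vertex set $\calW(\calA,m)$ of every graph $G(\rho)$ is finite as well. (This finiteness is the content of the earlier remark that imposing a finite bound $m$ ensures $G(\rho)$ is a finite graph: a streaming algorithm of interest has finite space, so only finitely many final working tapes arise from streams with final frequency vector in $\interval{-m,m}^n$; otherwise $\calS(\calA,m)=\infty$ and Theorem~\ref{thm:lnw} is vacuous.) Consequently there are only finitely many triples $(o_1,o_2,\rho)\in\calW(\calA,m)\times\calW(\calA,m)\times\{0,1\}^r$.

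Next, for each such triple for which $(o_1,o_2)$ is an edge of $G(\rho)$, Definition~\ref{def:zero-freq} furnishes at least one zero-frequency stream $\sigma_{o_1,o_2,\rho}$ that transitions $\calA$ from $o_1$ to $o_2$ under randomness tape $\rho$. Fix one such stream arbitrarily for every edge, and put
\[
S \;=\; \left\{\, \sigma_{o_1,o_2,\rho} \;:\; o_1,o_2\in\calW(\calA,m),\ \rho\in\{0,1\}^r,\ (o_1,o_2)\text{ is an edge of }G(\rho) \,\right\}.
\]
Every dynamic stream has finite length by definition, and $S$ is a finite union of singletons, so $S$ is a finite collection of zero-frequency streams, as required.

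It remains to check the biconditional for all $o_1,o_2\in\calW(\calA,m)$ and all $\rho\in\{0,1\}^r$. The forward direction is immediate: if $(o_1,o_2)$ is an edge of $G(\rho)$, then $\sigma_{o_1,o_2,\rho}\in S$ is a stream witnessing the transition. For the converse, suppose some $\sigma\in S$ transitions $\calA$ from $o_1$ to $o_2$ with randomness tape $\rho$; since every member of $S$ has frequency vector $0^n$, Definition~\ref{def:zero-freq} directly certifies that $(o_1,o_2)$ is an edge of $G(\rho)$. Note the converse uses nothing about which triple $\sigma$ was chosen to witness---only that $S$ consists solely of zero-frequency streams---so no element of $S$ can induce a ``spurious'' edge.

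The only step that warrants care is the first one: establishing that $\calW(\calA,m)$ (hence the set of triples) is finite, which is exactly where the finite bound $m$ and finite randomness complexity $r$ are used. Once that is in hand, the argument is pure bookkeeping, since the backward direction is automatic from $S$ being zero-frequency and the forward direction holds by construction.
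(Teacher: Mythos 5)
Your proposal is correct, and it is essentially the paper's approach with a small simplification. Both arguments hinge on the observation that the triple space $\calW(\calA,m)\times\calW(\calA,m)\times\{0,1\}^r$ is finite (which as you note requires $\calW(\calA,m)$ to be finite, i.e. $\calS(\calA,m)<\infty$). The paper then defines $L_0$ as the maximum over all edges $(o_1,o_2)$ of $G(\rho)$ of the minimum length of a witnessing zero-frequency stream, and sets $S$ to be \emph{all} zero-frequency streams of length at most $L_0$; you instead skip the intermediate length bound and directly collect one arbitrary witness per edge. Your smaller $S$ still satisfies the biconditional and suffices for the downstream uses of $S$ in Lemma~\ref{lem:reach-terminal} (concatenating edge-witnesses) and Claim~\ref{cl:convergence} (irreducibility of the lazy walk within a strongly connected component needs only one witness per edge, with aperiodicity supplied by the laziness), so nothing is lost. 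Your separate note that the backward direction of the biconditional is automatic from every element of $S$ being zero-frequency is a correct observation and is implicit in the paper's proof as well.
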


\begin{proof}
We use the finite bound on $\rho \in \{0,1\}^r$ and $|\calW(\calA, m)|$ to obtain a finite bound on the length of streams needed to transition edges in $G(\rho)$. First, for any $\rho \in \{0,1\}^r$ and any edge $(o_1, o_2)$ of $G(\rho)$, we let 
\[ L(\rho, o_1, o_2) = \min\left\{ |\sigma| : \begin{array}{c} \text{transitioning $\calA$ from $o_1$ with zero-frequency} \\
											\text{stream $\sigma$ with randomness tape $\rho$ gives $o_2$}\end{array}\right\}.\]
The above is well-defined, since lengths of zero-frequency streams are positive integers, and the fact $(o_1, o_2)$ is an edge of $G(\rho)$ means at least one such zero-frequency stream exists. Now, using the fact there are finitely many $\rho \in \{0,1\}^r$, and finitely many possible edges $(o_1, o_2)$ in $G(\rho))$, we let $L_0 \in \N$ be the maximum over all $L(\rho, o_1, o_2)$ given by edges of $G(\rho)$. We let $S$ denote the set of all zero-frequency streams of length at most $L_0$.
\end{proof}

\begin{lemma}[Reaching a Terminal Component]\label{lem:reach-terminal}
For any randomized streaming algorithm $\calA$ with finite randomness complexity $r \in \N$ and any $\delta > 0$, there exists a distribution $\calS$ supported on finite-length streams of zero-frequency such that the following holds:
\begin{itemize}
\item For any $o_1 \in \calW(\calA, m)$ and any $\rho \in \{0,1\}^r$. 
\item Draw $\bsigma \sim \calS$ and let $\boldsymbol{o}_2 \in \calW(\calA, m)$ be the working tape from transitioning $\calA$ from $o_1$ with randomness tape $\rho$ and input stream $\bsigma$, and let $\bu \in C(\rho)$ with $\boldsymbol{o}_2 \in \bu$.
\end{itemize}
Then,
\begin{align*}
\Prx_{\bsigma \sim \calS}\left[ \bu \in T(\rho) \right] \geq 1 -\delta. 
\end{align*}
\end{lemma}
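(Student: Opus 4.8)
The plan is to realize a random zero-frequency stream as a long random walk on $G(\rho)$ whose steps are i.i.d.\ uniformly chosen streams from the finite collection $S$ of Lemma~\ref{lem:finite-length}, and then to show this walk is absorbed into a terminal component quickly, with a bound uniform over the starting working tape $o_1$ and the randomness tape $\rho$.

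First I would fix the finite collection $S$ of zero-frequency streams given by Lemma~\ref{lem:finite-length}, which realizes every edge of every $G(\rho)$. For a vertex $o$ of $G(\rho)$ and $\sigma \in S$, write $\sigma(o)$ for the working tape obtained by running $\calA$ from $o$ with randomness tape $\rho$ and input stream $\sigma$. By the definition of $G(\rho)$ (Definition~\ref{def:zero-freq}) and Lemma~\ref{lem:finite-length}, $\sigma(o)$ is again a vertex of $G(\rho)$ and $(o,\sigma(o))$ is an edge, so a single draw $\bsigma \sim \mathrm{Unif}(S)$ moves along one edge of $G(\rho)$; conversely, since $S$ realizes every edge, every out-neighbor of $o$ is $\sigma(o)$ for at least one $\sigma \in S$, so $\bsigma \sim \mathrm{Unif}(S)$ moves $o$ to any fixed out-neighbor with probability at least $1/|S|$. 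I then record two structural facts about the finite graph $G(\rho)$: (i) its strongly connected components form a DAG, which has at least one sink, so $T(\rho)\neq\emptyset$ and from every vertex there is a directed path in $G(\rho)$ of length at most $K := |\calW(\calA,m)| - 1$ ending at a vertex that lies in a terminal component; and (ii) if $o$ lies in a terminal component $v^\ast \in T(\rho)$, then every out-neighbor of $o$ — in particular $\sigma(o)$ for any $\sigma \in S$ — again lies in $v^\ast$, since by Fact~\ref{fact:terminal} any endpoint of a path from a vertex in a terminal component lies in that same component. Thus terminal components are absorbing under the one-step process.

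Combining (i) with the per-step lower bound: for any starting vertex $o$ and any $\rho$, if $\bsigma_1,\dots,\bsigma_K \sim \mathrm{Unif}(S)$ are i.i.d.\ and $o_j := \bsigma_j(\cdots\bsigma_1(o)\cdots)$, then with probability at least $p_0 := (1/|S|)^{K}$ the walk $o=o_0,o_1,\dots,o_K$ follows a fixed shortest path to a terminal-component vertex, so $o_K$, and hence by (ii) every subsequent state, lies in a terminal component. Crucially $p_0$ and $K$ depend only on $\calA$ and the fixed bound $m$, not on $o$ or $\rho$. I would then define $\calS$ to be the distribution of the concatenation of $tK$ i.i.d.\ draws from $\mathrm{Unif}(S)$, where $t := \lceil \ln(1/\delta)/p_0\rceil$; this is supported on zero-frequency streams of length at most $tKL_0$ and depends only on $\calA,m,\delta$. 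Viewing the $tK$ steps as $t$ consecutive blocks of $K$ steps, each block whose starting state lies outside every terminal component lands inside one with probability at least $p_0$ (by the previous sentence applied to that block's starting state), and once inside the walk never leaves (by (ii)); hence the probability that the final state $\boldsymbol{o}_2$ lies in no terminal component is at most $(1-p_0)^{t} \le e^{-p_0 t} \le \delta$. On the complementary event $\boldsymbol{o}_2$ lies in some $\bu \in T(\rho)$, which is exactly the asserted event, completing the proof.

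The main obstacle is the bookkeeping of the Markov-chain picture rather than any hard estimate: one must verify that a stream from $S$ moves along exactly one edge of $G(\rho)$ (so the walk never escapes the finite graph) and that terminal components are absorbing single-step; a secondary point is checking that $|S|$, $K$, $p_0$, and $t$ are uniform in $\rho$, which holds because they are functions only of $\calA$ and the fixed $m$. Once this is in place, the absorption bound is a routine geometric-tail computation.
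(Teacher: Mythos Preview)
Your proposal is correct and follows essentially the same approach as the paper: both build $\calS$ by concatenating many i.i.d.\ zero-frequency streams drawn from a finite collection, use that from any state one such block reaches a terminal component with probability bounded away from zero (uniformly in $o_1$ and $\rho$), and invoke the absorption property of terminal components together with a geometric tail bound. The only cosmetic difference is that the paper first packages concatenations of up to $|\calW(\calA,m)|$ streams from $S$ into a set $\tilde{S}$ and then samples uniformly from $\tilde{S}$, whereas you work directly with blocks of $K$ uniform draws from $S$; the resulting arguments and bounds are interchangeable.
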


\begin{proof}
We will let $\calS$ be the distribution obtained by concatenating sufficiently many streams sampled from $S$. First, for any fixed $\rho \in \{0,1\}^r$ and any working tape $o_1$, there exists a path in $G(\rho)$ which begins at $o_1$ and ends at some $o_2$ in a terminal component of $H(\rho)$ (existence here is because the graph $G(\rho)$ is finite), and furthermore, such a path traverses at most $|\calW(\calA, m)|$ edges. Each edge along the path may be labeled with a zero-frequency stream in $S$ from Lemma~\ref{lem:finite-length}, and hence the concatenation of these streams results in a zero-frequency stream $\sigma$ of finite length, such that transitioning $\calA$ from $o_1$ with randomness $\rho$ and stream $\sigma$ leads to a working tape $o_2$ lying in a terminal component. Hence, we let $\tilde{S}$ be the finite set of zero-frequency streams such that, for all $o_1, o_2 \in \calW(\calA, m)$, where $o_1$ can reach $o_2$ in $G(\rho)$,
\[ \exists \sigma \in \tilde{S} \qquad :\qquad \begin{array}{c} \text{transitioning $\calA$ from $o_1$ with randomness tape $\rho$} \\ \text{with input stream $\sigma$ gives working tape $o_2$.} \end{array}\]  
We define $\calS$ as the distribution supported on zero-frequency streams which concatenates $|\tilde{S}| \log(1/\delta)$ streams drawn uniformly from $\tilde{S}$. We now show that for any $\rho \in \{0,1\}^r$ and any $o_1 \in \calW(\calA, m)$, with probability at least $1-\delta$ over the draw of $\bsigma \sim \calS$, if $\boldsymbol{o}_2$ is the resulting working tape after transitioning $\calA$ from $o_1$ with randomness tape $\rho$ and input stream $\bsigma$, then $\boldsymbol{o}_2$ lies in a terminal component of $H(\rho)$. This is true because: 
\begin{itemize}
\item Given any working tape $o'$, there is at least one stream in $\tilde{S}$ which, if sampled from $\tilde{S}$, will transition $\calA$ to a working tape in a terminal component. 
\item In addition, once $\calA$ has a working tape in a terminal component, updating via zero-frequency streams leads to working tapes in the same terminal component.
\end{itemize}
Thus, the probability that $\calA$ is not in a terminal component after transitioning $\bsigma \sim \calS$ is at most the probability that at each of the $|\tilde{S}| \log(1/\delta)$ draws from $\tilde{S}$, the sampled stream from $\tilde{S}$ avoided the special streams for the working tape at the moment of sampling. Since the probability of that we sample a special stream at any moment is at least $1/|\tilde{S}|$, the probability that all $|\tilde{S}| \log(1/\delta)$ draws avoid the special stream is at most
\[ \left(1 - 1/|\tilde{S}| \right)^{O(|\tilde{S}| \log(1/\delta))} \leq \delta. \]
\end{proof}

We may now define the distribution $\pi(\rho)$ which is used to select an initial terminal component $\bv \in T(\rho)$ to be used in the choice of $\Enc(\cdot; \rho, \bv)$.

\begin{definition}[Initialization Distribution]\label{def:initialization}
For any $\rho \in \{0,1\}^r$, let $\pi(\rho)$ be the distribution supported on terminal components $T(\rho)$ of $H(\rho)$ given from the following procedure:
\begin{itemize}
\item Let $o \in \calW(\calA, 0)$ be the initial working tape of $\calA$. 
\item Let $\boldsymbol{o}_2$ be the working tape after transitioning $\calA$ from $o$ with randomness tape $\rho$ and input stream $\bsigma \sim \calS$ (from Lemma~\ref{lem:reach-terminal}).
\end{itemize}
We let $\bv$ be the component of $H(\rho)$ which contains $\boldsymbol{o}_2$. The distribution $\pi(\rho)$ outputs $\bv$ if $\bv \in T(\rho)$ and otherwise a uniform choice in $T(\rho)$.\footnote{We note that from~Lemma~\ref{lem:reach-terminal}, the second case, that $\pi(\rho)$ samples a uniform element from $T(\rho)$ occurs with probability at most $\delta$.} 
\end{definition}

We now give the following lemma which constructs a distribution over zero-frequency streams (once more, used in the correctness analysis) to effectively assume that the algorithm $\calA$ transitions to a working tape whose distribution is close to the unique stationary distribution obtained from a fixed random walk over working tapes in terminal components. The following claim simply follows from the fact that a lazy random walk on a strongly connected component gives us an irreducible and aperiodic Markov chain which converges to a unique stationary distribution (see Theorem~4.9 of~\cite{LP17}).

\begin{claim}\label{cl:convergence}
For any $\rho \in \{0,1\}^r$ and $u \in T(\rho)$, the lazy random walk given by transitioning $\calA$ from any $o \in u$ with $\bsigma \sim S$ (from Lemma~\ref{lem:finite-length}) has a unique stationary distribution $\xi(\rho;u)$ supported on the vertices of $u$ in $G(\rho)$.
\end{claim}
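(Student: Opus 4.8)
The plan is to recognize the process as a finite-state Markov chain and invoke the standard convergence theorem. First I would fix $\rho \in \{0,1\}^r$ and $u \in T(\rho)$ and define the chain $\calX$ on the state space $\Omega_u$, the set of working tapes that are vertices of the component $u$ in $G(\rho)$; this set is finite because $\calW(\calA, m)$ is finite (a consequence of the finite bound $m$). A single step of $\calX$ from a state $o$ draws $\bsigma \sim S$ (uniform over the finite collection of zero-frequency streams from Lemma~\ref{lem:finite-length}) and moves to the working tape obtained by transitioning $\calA$ from $o$ with randomness tape $\rho$ and input stream $\bsigma$, with the lazy version additionally staying put with probability $1/2$ at each step.

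Next I would verify that $\calX$ is a well-defined chain on $\Omega_u$, i.e., that it never escapes $u$. Since each $\sigma \in S$ is a zero-frequency stream, transitioning $\calA$ from $o$ via $\sigma$ gives some $o'$ with $(o, o')$ an edge of $G(\rho)$ by Definition~\ref{def:zero-freq}; as $o$ lies in the terminal component $u$, Fact~\ref{fact:terminal} forces $o' \in u$. I would then check irreducibility: given $o, o' \in \Omega_u$, strong connectivity of $u$ in $G(\rho)$ provides a directed path $o = o^{(0)}, o^{(1)}, \dots, o^{(\ell)} = o'$ in $G(\rho)$ with $\ell \leq |\Omega_u|$, and by Lemma~\ref{lem:finite-length} each edge $(o^{(j)}, o^{(j+1)})$ is realized by some $\sigma_j \in S$; hence with probability at least $(2|S|)^{-\ell} > 0$ (the factor of $2$ absorbing the laziness) $\calX$ travels from $o$ to $o'$ in $\ell$ steps. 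Aperiodicity is immediate from laziness, since every state then has a self-loop of probability $1/2$; alternatively, the empty zero-frequency stream lies in $S$, which already gives a self-loop of positive probability even without laziness.

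Finally, a finite irreducible aperiodic Markov chain has a unique stationary distribution $\xi(\rho; u)$ to which the $t$-step distribution converges from every starting state (e.g., Theorem~4.9 of~\cite{LP17}), and since $\calX$ is irreducible on $\Omega_u$ this $\xi(\rho; u)$ is supported on all of $\Omega_u$, exactly as the claim asserts. The only point that needs care is the closure property — that one application of any $\sigma \in S$ keeps the walk inside the terminal component $u$ — but this is precisely Fact~\ref{fact:terminal} applied to a terminal component, so I do not anticipate any real obstacle; the statement is a routine instantiation of the Markov chain convergence theorem once the chain and its state space are set up correctly.
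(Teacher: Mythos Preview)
Your proposal is correct and follows exactly the approach the paper takes: the paper's entire argument for this claim is the one-sentence observation that a lazy random walk on a strongly connected component is an irreducible, aperiodic finite Markov chain, together with a citation to Theorem~4.9 of~\cite{LP17}. You have simply spelled out the details (closure via Fact~\ref{fact:terminal}, irreducibility via Lemma~\ref{lem:finite-length}, aperiodicity via laziness) that the paper leaves implicit.
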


\begin{lemma}[Mixing in Component]\label{lem:mix-component}
For any randomized streaming algorithm $\calA$ with finite randomness complexity $r \in \N$ and any $\delta > 0$, there exists a distribution $\calM$ supported on finite-length streams of zero-frequency such that the following holds:
\begin{itemize}
\item For any $\rho \in \{0,1\}^r$ and any $o_1 \in \calW(\calA, m)$ with a unique terminal component $u \in T(\rho)$ reachable from $o_1$ in $G(\rho)$.
\item Let $\boldsymbol{o}_2 \in \calW(\calA, m)$ be the working tape after transitioning $\calA$ from $o_1$ with randomness tape $\rho$ and input stream $\bsigma \sim \calM$. 
\end{itemize}
Then, the distribution of $\boldsymbol{o}_2$ is $2\delta$-close to the (unique) stationary distribution $\xi(\rho; u)$. 
\end{lemma}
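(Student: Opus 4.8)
The plan is to define $\calM$ as a two-phase zero-frequency stream: a ``burn-in'' phase that drives $\calA$ into the (unique) terminal component $u$ reachable from $o_1$, followed by a ``mixing'' phase that runs the lazy random walk of Claim~\ref{cl:convergence} for enough steps that, from any starting vertex of $u$, its law is within $\delta$ of the stationary distribution $\xi(\rho;u)$. The error parameter $\delta$ of Lemma~\ref{lem:reach-terminal} and the mixing error $\delta$ will each contribute $\delta$, for a total of $2\delta$.

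\textbf{Construction of $\calM$.} First I would let the burn-in phase be a draw $\bsigma_1 \sim \calS$, where $\calS$ is the distribution from Lemma~\ref{lem:reach-terminal} invoked with error parameter $\delta$. For the mixing phase, observe that for each $\rho \in \{0,1\}^r$ and each $u \in T(\rho)$, the walk of Claim~\ref{cl:convergence} (one step $=$ ``sample $\bsigma \sim S$ and transition'') is an irreducible, aperiodic Markov chain on the finitely many vertices of $u$ in $G(\rho)$, so by the standard convergence theorem there is a finite integer $N(\rho,u)$ after which its law is within total variation distance $\delta$ of $\xi(\rho;u)$, uniformly over starting vertices of $u$. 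Since $r$ is finite and each $G(\rho)$ is finite, $N := \max_{\rho,\, u \in T(\rho)} N(\rho,u)$ is a finite constant depending only on $\calA$ and $\delta$. I would then let $\calM$ be the law of the concatenation of $\bsigma_1$ with $N$ further independent draws $\bsigma_2,\dots,\bsigma_{N+1}$, each uniform over $S$ (from Lemma~\ref{lem:finite-length}). This distribution is supported on finite-length zero-frequency streams and depends only on $\calA$ and $\delta$ — not on $\rho$, $o_1$, or $u$ — as required.

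\textbf{Correctness.} Fix $\rho$, $o_1$, and the unique terminal component $u$ reachable from $o_1$ in $G(\rho)$. Transitioning $\calA$ from $o_1$ with $\bsigma_1 \sim \calS$ yields a tape $\boldsymbol{o}'$; since any terminal component reachable from $\boldsymbol{o}'$ is also reachable from $o_1$ and hence equals $u$, Lemma~\ref{lem:reach-terminal} says the event $A$ that $\boldsymbol{o}'$ lies in a terminal component (which must be $u$) satisfies $\Pr[A] \ge 1 - \delta$. Conditioned on $A$, the mixing phase performs exactly $N$ steps of the chain of Claim~\ref{cl:convergence} starting from some vertex of $u$; the walk stays inside $u$ because $u$ is terminal and a zero-frequency stream cannot leave a terminal component (Fact~\ref{fact:terminal}), so by the choice of $N$ and convexity of total variation the conditional law of $\boldsymbol{o}_2$ given $A$ is within $\delta$ of $\xi(\rho;u)$. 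Conditioned on $\overline{A}$ the law of $\boldsymbol{o}_2$ is arbitrary, hence within $1$ of $\xi(\rho;u)$. Decomposing the law of $\boldsymbol{o}_2$ as $\Pr[A]\cdot\mathrm{dist}(\boldsymbol{o}_2 \mid A) + \Pr[\overline{A}]\cdot\mathrm{dist}(\boldsymbol{o}_2 \mid \overline{A})$ and using the triangle inequality for total variation with $\Pr[A] \le 1$ and $\Pr[\overline{A}] \le \delta$,
\[
\bigl\| \mathrm{dist}(\boldsymbol{o}_2) - \xi(\rho;u) \bigr\|_{\mathrm{TV}} \;\le\; \delta + \delta \;=\; 2\delta.
\]

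\textbf{Main obstacle.} I do not expect a genuine obstacle. The only points requiring care are (i) that the number of mixing steps $N$ can be chosen uniformly over all $(\rho, o_1, u)$, so that $\calM$ is a single fixed distribution — this is exactly where finiteness of the randomness complexity $r$ and of each state-transition graph $G(\rho)$ enters; and (ii) checking that the mixing-phase walk, once inside $u$, never leaves $u$, so that the convergence statement of Claim~\ref{cl:convergence} applies verbatim. Everything else is the standard ``burn-in then mix'' argument together with convexity of total variation distance.
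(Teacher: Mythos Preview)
Your proposal is correct and essentially identical to the paper's proof: both define $\calM$ as the concatenation of a draw from $\calS$ (Lemma~\ref{lem:reach-terminal}) followed by a uniformly-chosen number $N = \max_{\rho, u} N(\rho, u)$ of independent draws from $S$, and both bound the total variation error by summing the $\delta$ from the burn-in phase and the $\delta$ from the mixing phase. The points you flag as requiring care---uniformity of $N$ via finiteness of $r$ and each $G(\rho)$, and that the walk never leaves a terminal component---are exactly the ones the paper relies on.
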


\begin{proof}
From Claim~\ref{cl:convergence}, for any $\rho \in \{0,1\}^r$ and any $u \in T(\rho)$, the lazy random walk on $u$ from transitioning according to $\bsigma \sim S$ converges to the unique stationary distribution $\xi(\rho; u)$, and hence there is a finite $t(\rho, u)$ such that $t(\rho, u)$ steps of the random walk suffice for the random walk which starts at an arbitrary working tape $o \in u$ is within total variation distance at most $\delta$ from $\xi(\rho; u)$. Let $t_0$ be the maximum over all $\rho \in \{0,1\}^r$ and all $u \in T(\rho)$ of $t(\rho, u)$, and consider the distribution $\calM$ which is given by (i) drawing a sample $\bsigma_0 \sim \calS$ from Lemma~\ref{lem:reach-terminal}, (ii) taking $t_0$ samples $\bsigma_1, \dots, \bsigma_{t_0} \sim S$, and then (iii) concatenating the streams $\bsigma_0, \bsigma_1, \dots, \bsigma_{t_0}$. 

For any $\rho \in \{0,1\}^r$ and any $o_1 \in \calW(\calA, m)$, Lemma~\ref{lem:reach-terminal} implies that the distribution over working tapes $\boldsymbol{o}_2$ obtained after transitioning $\calA$ from $o_1$ with randomness tape $\rho$ and stream $\bsigma_0$ will be in a a terminal component component with probability at least $1-\delta$, and if the unique reachable terminal component from $o_1$ is $u$, then $\boldsymbol{o}_2 \in u$ with probability at least $1-\delta$. Whenever that is the case, the subsequent transitions $\bsigma_1, \dots, \bsigma_{t_0} \sim S$ implement $t_0$ steps of a random walk, which is $\delta$-close in total variation distance to $\xi(\rho; u)$. By a union bound, the resulting distribution is $2\delta$-close in total variation distance to $\xi(\rho; u)$.
\end{proof}

\begin{definition}[Family of Decoding Functions]\label{def:decode}
For any $\rho \in \{0,1\}^r$ and any $v \in T_0(\rho)$, let $\Dec(\cdot; \rho, v) \colon T(\rho) \to \{0,1\}$ be given by:
\begin{align*}
\Dec(u; \rho, v) = \ind\left\{ \Prx_{\boldsymbol{o} \sim \xi(\rho; u)}\left[ \calA(\boldsymbol{o}, \rho) = 1 \right] \geq \frac{1}{2} \right\},
\end{align*}
where $\calA(\boldsymbol{o}, \rho)$ is the output function of $\calA$ on working tape $\boldsymbol{o}$ and randomness tape $\rho$.
\end{definition}

\begin{lemma}[Simulation of $\calA$]\label{lem:simulate}
For any $\rho \in \{0,1\}^r$, any frequency vector $x \in \interval{-m,m}^n$ and any stream $\sigma$ with frequency vector $x$, the following two distributions supported on $\calW(\calA, m)$ are within total variation distance at most $3\delta$:
\begin{itemize}
\item The first distribution produces a working tape $\boldsymbol{o}$ by:
\begin{enumerate}
\item Sampling $\bsigma_1 \sim \calS$ from Lemma~\ref{lem:reach-terminal}, sampling $\bsigma_2 \sim \calM$ from Lemma~\ref{lem:mix-component}.
\item Executing $\calA$ with randomness tape $\rho$, initial working tape, and input stream given by concatenating $\bsigma_1, \sigma, \bsigma_2$, and outputting the final working tape $\boldsymbol{o} \in \calW(\calA, m)$.
\end{enumerate}
\item The second distribution samples produces a working tape $\boldsymbol{o}'$ by:
\begin{enumerate}
\item Sampling $\bv' \sim \pi(\rho)$, letting $\bu' = \Enc(x; \rho, \bv')$.
\item Then, sampling $\boldsymbol{o}' \sim \xi(\rho, \bu')$, and outputting $\boldsymbol{o}'$.
\end{enumerate}
\end{itemize}
\end{lemma}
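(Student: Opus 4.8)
The plan is to build an explicit coupling of the two distributions that disagrees with probability at most $3\delta$; since the total variation distance is bounded by the disagreement probability of any coupling, this suffices. Throughout, $\rho \in \{0,1\}^r$, the frequency vector $x \in \interval{-m,m}^n$, and the stream $\sigma$ (of frequency vector $x$) are fixed.

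I would first run the opening stages of both pipelines off a single shared draw $\bsigma_1 \sim \calS$, where $\calS$ is the distribution from Lemma~\ref{lem:reach-terminal}. Let $\boldsymbol{o}_a \in \calW(\calA, m)$ be the working tape obtained by transitioning $\calA$ from the initial working tape with randomness tape $\rho$ and input stream $\bsigma_1$, and let $\bv \in C(\rho)$ be the component of $G(\rho)$ containing $\boldsymbol{o}_a$. Since $\bsigma_1$ is a zero-frequency stream, $\boldsymbol{o}_a \in \calW(\calA, 0)$; using the nested structure $\calW(\calA,0) \subseteq \calW(\calA,1) \subseteq \cdots$, this forces $\rep(\bv) \in \calW(\calA, 0)$, and moreover $\bv \in T_0(\rho)$ because $\boldsymbol{o}_a$ is reachable from the initial tape via the zero-frequency stream $\bsigma_1$. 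By Lemma~\ref{lem:reach-terminal}, the event $\calbE := \{\bv \in T(\rho)\}$ has probability at least $1 - \delta$. I would then define the sample $\bv'$ feeding the second pipeline to equal $\bv$ on $\calbE$, and otherwise draw $\bv'$ uniformly from $T(\rho)$; comparing with Definition~\ref{def:initialization}, this makes the marginal law of $\bv'$ exactly $\pi(\rho)$, as needed.

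Conditioning on $\calbE$ and writing $v^\ast := \bv = \bv'$, the next step is to feed $\sigma$ to the first pipeline starting from $\boldsymbol{o}_a \in v^\ast$. Since $\rep(v^\ast) \in \calW(\calA, 0)$ and $x \in \interval{-m,m}^n = \interval{-(m-0),m-0}^n$, Lemma~\ref{lem:relate} applies with $\ell = 0$: the resulting working tape $\boldsymbol{o}_b$ has a \emph{unique} reachable terminal component, namely $\bu := \Enc(x;\rho,v^\ast)$ (Definition~\ref{def:encodings}, valid since $v^\ast \in T_0(\rho)$). On the other side, the second pipeline uses $\bu' := \Enc(x;\rho,\bv') = \Enc(x;\rho,v^\ast) = \bu$, so the two intermediate terminal components coincide exactly. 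I would then draw $\bsigma_2 \sim \calM$ (the distribution from Lemma~\ref{lem:mix-component}) and transition $\calA$ from $\boldsymbol{o}_b$ with $\bsigma_2$ to obtain $\boldsymbol{o}$; because $\boldsymbol{o}_b$ has unique reachable terminal component $\bu$, Lemma~\ref{lem:mix-component} gives that the law of $\boldsymbol{o}$ is within total variation $2\delta$ of the stationary distribution $\xi(\rho;\bu)$. Since the second pipeline outputs $\boldsymbol{o}' \sim \xi(\rho;\bu') = \xi(\rho;\bu)$, I would couple $\boldsymbol{o}$ and $\boldsymbol{o}'$ via the maximal coupling of their laws, so that $\boldsymbol{o} = \boldsymbol{o}'$ with conditional probability at least $1 - 2\delta$ given $\calbE$.

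The error bound then assembles by a union bound: on $\calbE^c$ (probability at most $\delta$) I make no attempt to match the outputs, and on $\calbE$ they differ with probability at most $2\delta$, so $\Pr[\boldsymbol{o} \neq \boldsymbol{o}'] \leq \delta + 2\delta = 3\delta$, i.e.\ $d_{\mathrm{TV}} \leq 3\delta$. The step I would be most careful with is the observation that the zero-frequency prefix $\bsigma_1$ leaves $\calA$ in a level-$0$ component of $T_0(\rho)$: this is exactly what lets Lemma~\ref{lem:relate} and Definition~\ref{def:encodings} be invoked on the full range $\interval{-m,m}^n$ rather than a shrunken one, and it is the hinge that aligns the two pipelines through the common quantity $\Enc(x;\rho,\cdot)$, whereas the triple appearance of $\delta$ is routine bookkeeping. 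In the low-probability branch where $\bv'$ may land outside $T_0(\rho)$, one harmlessly extends $\Enc(x;\rho,\cdot)$ to all of $T(\rho)$ via Lemma~\ref{lem:relate}, which never referenced $T_0$; this is already absorbed into the $\delta$.
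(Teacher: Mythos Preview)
Your proposal is correct and follows essentially the same coupling argument as the paper: share the draw of $\bsigma_1$ to align $\bv'$ with $\bv$ via Definition~\ref{def:initialization}, use Lemma~\ref{lem:relate} to equate the two terminal components on the good event, and then maximally couple the $\calM$-step against $\xi(\rho,\bu)$ via Lemma~\ref{lem:mix-component}, yielding the $\delta + 2\delta$ bound. Your explicit observation that $\rep(\bv) \in \calW(\calA,0)$ (so Lemma~\ref{lem:relate} applies with $\ell = 0$ over the full range $\interval{-m,m}^n$) is a point the paper leaves implicit, and your remark about extending $\Enc$ on the $\delta$-probability bad branch correctly flags a minor definitional wrinkle present in the paper as well.
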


\begin{proof}
Consider the following distribution $\calD_1$ which generates tuples $(\bsigma_1, \boldsymbol{o}_2, \bv, \bv', \boldsymbol{o}_3, \boldsymbol{o}_3, \bu, \bu', \bsigma_2, \boldsymbol{o})$ in the following way: 
\begin{enumerate}
\item Sample $\bsigma_1 \sim \calS$ from Lemma~\ref{lem:reach-terminal}.
\item Execute $\calA$ with the initial working tape, input stream $\bsigma_1$ and randomness tape $\rho$ to obtain working tape $\boldsymbol{o}_2$. 
\item Let $\bv \in C(\rho)$ be that which satisfies $\boldsymbol{o}_2 \in \bv$. If $\bv \in T(\rho)$, let $\bv' = \bv$, otherwise, let $\bv'$ be a uniform draw of $T(\rho)$. (It is helpful to consider $\bv = \bv'$, since this will occur with probability at least $1-\delta$.)
\item Consider two executions (by Lemma~\ref{lem:relate}, these will lead to the same terminal component when $\bv = \bv'$):
\begin{itemize}
\item Execute $\calA$ with working tape $\boldsymbol{o}_2$, input stream $\sigma$, and randomness tape $\rho$ to obtain working tape $\boldsymbol{o}_3$. Let $\bu$ be the (lexicographic) first terminal component reachable from $\boldsymbol{o}_3$ in $G(\rho)$.
\item Execute $\calA$ with an arbitrary working tape $\boldsymbol{o}_2' \in \bv'$, input stream $\sigma$, and randomness tape $\rho$ to obtain working tape $\boldsymbol{o}_3'$. Let $\bu'$ be the uniquely reachable terminal component from $\boldsymbol{o}_3'$ (guaranteed to be unique by Lemma~\ref{lem:relate}, since $\bv' \in T(\rho)$
\end{itemize}
\item Let $\calC(\bu,\bu')$ denote a distribution over tuples $(\bsigma_2, \boldsymbol{o}, \boldsymbol{o}')$ which satisfies
\begin{itemize}
\item $\bsigma_2 \sim \calM$, 
\item $\boldsymbol{o}$ is the working tape after transitioning $\calA$ from working tape $\boldsymbol{o}_3$ with input stream $\bsigma_2$ and randomness tape $\rho$.
\item $\boldsymbol{o}' \sim \xi(\rho, \bu')$.
\end{itemize}
and maximizes the probability that $\boldsymbol{o} = \boldsymbol{o}'$.
\end{enumerate}
The fact that $\calA$ is a streaming algorithm means that the final working tape $\boldsymbol{o}$ is exactly distributed as that of the first distribution. To relate to the second distribution, notice that the draw $\bv \sim \pi(\rho)$ in Definition~\ref{def:initialization} is generated by Steps~1, 2, and 3. Lemma~\ref{lem:reach-terminal} implies that with probability at least $1-\delta$, $\bv \in T(\rho)$ and in this case $\bv' = \bv$. Whenever that is the case (we will lose an additive factor $\delta$), Lemma~\ref{lem:relate} implies that the choice of $\boldsymbol{o}_2, \boldsymbol{o}_2'$ is completely arbitrary, and there is a uniquely reachable terminal component from either $\boldsymbol{o}_3, \boldsymbol{o}_3'$, and it is the same uniquely reachable terminal component, so $\bu = \bu'$. In particular, Definition~\ref{def:encodings} implies that $\bu' = \Enc(x; \rho, \bv')$. Finally, Lemma~\ref{lem:mix-component} implies that, whenever $\bu = \bu'$ is the unique reachable terminal component of $\boldsymbol{o}_3$, a coupling $\calC(\bu, \bu')$ over $(\bsigma_2, \boldsymbol{o}, \boldsymbol{o}')$ is guaranteed to satisfy $\boldsymbol{o} = \boldsymbol{o}'$ except with probability at most $2\delta$.  

In summary, the projection of the above tuple to $(\bsigma_1, \bsigma_2, \boldsymbol{o})$ is generated according to the first distribution, and the projection to the tuple $(\bv', \bu', \boldsymbol{o}')$ is generated from the second distribution. By a union bound, $\boldsymbol{o} \neq \boldsymbol{o}'$ is at most $3\delta$, which proves the lemma.
\end{proof}

\begin{lemma}[Correctness]\label{lem:corretness}
For any distribution $\calD$ supported on frequency vectors $\interval{-m,m}^n$, and suppose $\calA$ computes $g \colon \interval{-m,m}^n \to \{0,1,*\}$ with error at most $\delta$, there exists a setting of $\rho \in \{0,1\}^r$, and $v \in T_0(\rho)$ satisfying
\begin{align*}
\Prx_{\bx \sim \calD}\left[ g(\bx) \in \{0,1\} \text{ and }\Dec( \Enc(\bx; \rho, v) ; \rho, v) \neq g(\bx) \right] \leq 8\delta.
\end{align*}
\end{lemma}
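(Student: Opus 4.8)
\textbf{Proof proposal for Lemma~\ref{lem:corretness}.}

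The plan is to argue that, \emph{on average} over the randomness tape $\brho \sim \{0,1\}^r$ and the initial terminal component $\bv \sim \pi(\brho)$, the path-independent algorithm $(\Enc(\cdot;\brho,\bv),\Dec(\cdot;\brho,\bv))$ errs on $\bx \sim \calD$ with probability at most $8\delta$; an averaging (probabilistic-method) argument then produces a fixed choice of $\rho$ and $v \in T_0(\rho)$ achieving the same bound. The key tool is Lemma~\ref{lem:simulate}, which says that for every frequency vector $x \in \interval{-m,m}^n$ and every stream $\sigma$ realizing $x$, the working tape produced by (i) running $\calA$ on $\bsigma_1,\sigma,\bsigma_2$ with $\bsigma_1\sim\calS$, $\bsigma_2\sim\calM$, is within total variation distance $3\delta$ of the working tape produced by (ii) sampling $\bv'\sim\pi(\rho)$, setting $\bu'=\Enc(x;\rho,\bv')$, and drawing $\boldsymbol{o}'\sim\xi(\rho,\bu')$.

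First I would fix $\rho$ and $x$ with $g(x)\in\{0,1\}$, and unwind the definition of $\Dec$. By Definition~\ref{def:decode}, $\Dec(\bu';\rho,\bv')=1$ iff $\Pr_{\boldsymbol{o}'\sim\xi(\rho,\bu')}[\calA(\boldsymbol{o}',\rho)=1]\ge 1/2$; in particular, if $\Dec(\Enc(x;\rho,\bv');\rho,\bv')\ne g(x)$, then a \emph{random} output $\calA(\boldsymbol{o}',\rho)$ with $\boldsymbol{o}'\sim\xi(\rho,\bu')$ disagrees with $g(x)$ with probability at least $1/2$. Hence, writing $\calE(\rho,x)$ for the event that the path-independent decoder is wrong on $x$ with tape $\rho$ and a $\pi(\rho)$-random initialization, we get
\[
\Prx_{\brho,\ \bv'\sim\pi(\brho)}\!\left[\calE(\brho,x)\right] \;\le\; 2\cdot \Prx_{\brho,\ \bv'\sim\pi(\brho),\ \boldsymbol{o}'\sim\xi(\brho,\Enc(x;\brho,\bv'))}\!\left[\calA(\boldsymbol{o}',\brho)\ne g(x)\right].
\]
Now I apply Lemma~\ref{lem:simulate}: the distribution of $\boldsymbol{o}'$ in the right-hand side is the "second distribution'' of that lemma, which is $3\delta$-close in total variation to the "first distribution,'' namely the working tape $\boldsymbol{o}$ obtained by running $\calA$ (with tape $\brho$, initial state, input $\bsigma_1,\sigma_x,\bsigma_2$) where $\sigma_x$ is any fixed stream realizing $x$. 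Since $\bsigma_1$ and $\bsigma_2$ are zero-frequency streams, the concatenation $\bsigma_1,\sigma_x,\bsigma_2$ still has frequency vector $x$, so by the correctness hypothesis on $\calA$ we have $\Pr_{\brho}[\calA(\boldsymbol{o},\brho)\ne g(x)]\le\delta$. Combining, $\Pr_{\brho,\bv'\sim\pi(\brho)}[\calA(\boldsymbol{o}',\brho)\ne g(x)]\le \delta+3\delta=4\delta$, and therefore $\Pr_{\brho,\bv'}[\calE(\brho,x)]\le 8\delta$.

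Finally I would take the expectation over $\bx\sim\calD$ and swap the order of expectation (Fubini):
\[
\Ex_{\brho}\Ex_{\bv'\sim\pi(\brho)}\Prx_{\bx\sim\calD}\!\left[g(\bx)\in\{0,1\}\text{ and }\Dec(\Enc(\bx;\brho,\bv');\brho,\bv')\ne g(\bx)\right]\;\le\;8\delta.
\]
Since the quantity inside is nonnegative, there exists a fixed $\rho\in\{0,1\}^r$ and a fixed $v\in\supp(\pi(\rho))\subseteq T(\rho)$ for which the inner probability is at most $8\delta$. It remains to note $v\in T_0(\rho)$: by Definition~\ref{def:initialization}, $\pi(\rho)$ is supported on terminal components reachable from the initial working tape of $\calA$ via zero-frequency streams (either $\boldsymbol{o}_2$'s component, or — in the failure branch — a uniform element of $T(\rho)$, which is still reachable from the initial state since $G(\rho)$ is finite and every vertex reaches some terminal component), so $\supp(\pi(\rho))\subseteq T_0(\rho)$, and the chosen $v$ works.

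The main obstacle I anticipate is the bookkeeping in the first displayed inequality: one must be careful that the "majority'' definition of $\Dec$ interacts correctly with the total-variation closeness — specifically, that replacing the stationary-distribution output by the genuine $\calA$-output only costs the stated $3\delta$ and that the factor $2$ (from "if the majority is wrong, a random sample is wrong w.p.\ $\ge 1/2$'') is applied before, not after, invoking Lemma~\ref{lem:simulate}. One also needs to confirm that $\sigma_x$ can be chosen uniformly for all $x$ (any canonical stream works, since $\calA$'s correctness holds for \emph{every} stream realizing $x$), so that the coupling in Lemma~\ref{lem:simulate} applies with a single fixed $\sigma$. Everything else is routine.
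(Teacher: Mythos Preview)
Your argument is essentially the paper's own proof, just with the steps reordered: you apply the ``majority $\Rightarrow$ factor $2$'' inequality pointwise first, then the $3\delta$ total-variation bound from Lemma~\ref{lem:simulate}, then correctness of $\calA$, and finally average and fix $(\rho,v)$; the paper instead fixes $\rho$ first, applies the TV bound, fixes $v$, and only then applies the factor $2$. Both orderings land on $8\delta$ for the same reason.

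One small correction: your last sentence, arguing that $\supp(\pi(\rho))\subseteq T_0(\rho)$, is not right as stated. In the failure branch of Definition~\ref{def:initialization}, $\pi(\rho)$ samples a \emph{uniform} element of $T(\rho)$, and it is not true that every terminal component of $G(\rho)$ is reachable from the initial working tape by a zero-frequency stream (the fact that ``every vertex reaches some terminal component'' goes the wrong direction). The paper sidesteps this by simply not insisting on $v\in T_0(\rho)$ at the averaging step; if you want to be fully clean, either redefine the fallback of $\pi(\rho)$ to sample from $T_0(\rho)$ (which does not affect Lemma~\ref{lem:simulate}, since that branch already costs $\le\delta$ in TV), or absorb the $\le\delta$ mass of the failure branch before averaging.
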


\begin{proof}
For any fixed $x \in \interval{-m,m}^n$ and consider an arbitrary stream $\sigma(x)$ with frequency vector $x$. Let $\calS_f(x)$ denote the distribution supported on streams given by (i) sampling $\bsigma_1 \sim \calS$, appending the stream $\sigma(x)$, and (iii) appending another stream $\bsigma_2 \sim \calM$ (note that $\bsigma$ has frequency vector $x$). Let $\boldsymbol{o}$ denote the final working tape after transitioning $\calA$ from the initial tape and stream $\bsigma \sim \calS_f(x)$ with randomness tape $\brho \sim \{0,1\}^r$ (i.e., for any fixed $\brho$, the draw of $\boldsymbol{o}$ is according to the ``first'' distribution of Lemma~\ref{lem:simulate}). 

It is notationally convenient to notice that the condition of the algorithm $\calA$ being \emph{incorrect} on the output from final working tape $o$ and randomness tape $\rho$ after processing a stream $\sigma$ of frequency vector $x$, i.e., $g(x) \in \{0,1\}$ and $\calA(o, \rho)\neq g(x)$, may be equivalent restated as $\calA(o, \rho) \notin \{0,1\} \setminus \{ g(x) \}$. With that notation, since $\calA$ computes $g$ with probability $1-\delta$, if we let $\bx \sim \calD$ and $\bsigma \sim \calS_f(\bx)$, the fact that $\calA$ computes $g$ implies
\begin{align*}
\Ex_{\brho}\left[ \Ex_{\bx \sim \calD}\left[ \Ex_{\bsigma \sim \calS_f(\bx)}\left[ \ind\left\{\calA(\boldsymbol{o}, \brho) \in \{0,1\} \setminus \{g(\bx)\} \right\} \right] \right] \right] &= \Ex_{\bx \sim \calD}\left[ \Ex_{\bsigma \sim \calS_f(\bx)}\left[ \Prx_{\brho}\left[ \calA(\boldsymbol{o}, \brho) \in \{0,1\} \setminus \{g(\bx)\}  \right] \right] \right] \\
	&\leq \delta,
\end{align*}
where the first equality is by independence of $\brho$ from $(\bx, \bsigma)$, and the second by correctness of $\calA$. Hence, there exists a fixed setting of $\rho$ where the expectation, solely over $\bx \sim \calD$ and $\bsigma \sim \calS_f$ is at most $\delta$. Consider such a fixed setting of $\rho$, and for any fixed setting of $x$, Lemma~\ref{lem:simulate} implies that the distribution over final working tapes $\boldsymbol{o}$ is $3\delta$-close in total variation distance from the distribution over $\boldsymbol{o}'$ given by $\bv' \sim \pi(\rho)$, $\bu' = \Enc(x; \rho, \bv')$ and $\boldsymbol{o}' \sim \xi(\rho, \bu')$. Since an indicator random variable lies in $\{0,1\}$, the fact the distributions are $3\delta$-close means we obtain the following bound on the expectation over $\bx \sim \calD$,
\begin{align*}
\Ex_{\bx \sim \calD}\left[ \Ex_{\bv' \sim \pi(\rho)} \left[ \Prx_{\bu',\boldsymbol{o}'}\left[ \calA(\boldsymbol{o}', \rho) \in \{0,1\} \setminus \{g(\bx)\}\right]\right] \right] \leq 4\delta,
\end{align*}
and once more, since $\bv'$ is independent of $\bx$, we may switch the order of expectations, and conclude there exists a fixed setting of $\bv' = v$ where the expectation, solely over $\bx, \bu'$ and $\boldsymbol{o}'$ is at most $4\delta$. Note that, $\bu' = \Enc(\bx; \rho, v)$ depends solely on $\bx$, so we will drop the expectation over $\bu'$ whenever the expectation over $\bx$ is present. Notice that $\Dec(\bu'; \rho, v) \in \{0,1\} \setminus \{ g(\bx) \}$ implies that at least $1/2$ fraction of draws $\boldsymbol{o}'$ have $\calA(\boldsymbol{o}', \rho) \in \{0,1\} \setminus \{g(\bx)\}$, and therefore,
\begin{align*}
\Prx_{\bx \sim \calD}\left[ \Dec(\bu'; \rho, v) \in \{0,1\} \setminus \{ g(\bx) \}\right] \leq 2 \cdot \Ex_{\bx \sim \calD}\left[ \Prx_{\boldsymbol{o}'} \left[ \calA(\boldsymbol{o}', \rho) \in \{0,1\} \setminus \{g(\bx)\} \right]\right] \leq 8\delta.
\end{align*}
\end{proof}

\subsection{Putting Everything Together: Proof of Theorem~\ref{thm:lnw}}\label{sec:putting-everything-together}

All elements of the proof have their components appearing in previous sections, and the goal of this section is to provide a guide to the corresponding definitions and lemmas. We select any large enough setting of $m \in \N$ such that the distribution $\calD$ is supported on frequency vectors in $\interval{-m,m}^n$. 

The deterministic, path-independent algorithm $\calB = (\calW, \Enc,\Dec, \oplus)$ is defined in several parts, with certain parameters left unspecified initially. Specifically, the parameters $\rho \in \{0,1\}^r$ and $v$ are fixed later according to Lemma~\ref{lem:corretness}. The first part of the definition, which specifies $\calW$ and the encoding function $\Enc$, and the transitions $\oplus$, appears in Subsection~\ref{sec:encoding}, where

\begin{itemize}
\item The set of states $\calW$, parametrized by $\rho$, is defined in Definition~\ref{def:states} as $T(\rho)$. This set corresponds to the set of vertices in the state-transition graph $H(\rho)$ (Definition~\ref{def:vert-state-transition}).
\item Within $T(\rho)$, a distinguished subset of states $T_0(\rho)$ is defined in Definition~\ref{def:initial-states}. The parameter $v$ is chosen from $T_{0}(\rho)$, though left unspecified in this subsection. 
\item The encoding function $\Enc(\cdot; \rho, v)$, as well as the transition rule $\oplus_{\rho, v}$ are specified in Definition~\ref{def:encodings} and Definition~\ref{def:trans}.
\end{itemize}
Subsection~\ref{sec:encoding} proves these definitions are well-formed, and Subsection~\ref{sec:path-and-space} establishes the path-independence property (with bound $m$), as well as the claimed space complexity. In particular, Lemma~\ref{lem:path-ind} proves that, for any parametrization $\rho \in \{0,1\}^r$ and $v \in T_0(\rho)$, the deterministic encoding function $\Enc(\cdot; \rho, v) \colon \interval{-m,m}^n \to \calW$ satisfies the ``path-independence'' property of Definition~\ref{def:path-ind}. Then, Lemma~\ref{lem:space} proves the desired space complexity bound, that $\calS(\calB, \ell) \leq \calS(\calA, \ell)$ and $\calS^+(\calB, \ell) \leq \calS^+(\calA, \ell)$ for all $\ell \leq m$, since $\calS(\calA, \ell)$ is the logarithm of $|\calW(\calA, \ell)|$, and $\calS(\calB, \ell)$ is the analogous quantity for path-independent algorithms (again, in Definition~\ref{def:path-ind}); the bound for $\calS^+(\calB, \ell) \leq \calS^{+}(\calA,\ell)$ follows analogously.

Finally, Subsection~\ref{sec:decode} defines the decoding function $\Dec(\cdot; \rho, v)$ (Definition~\ref{def:decode}). Lemma~\ref{lem:corretness} proves correctness: there exists a fixed choice of $\rho \in \{0,1\}^r$ and $v \in T_0(\rho)$ such that $\Enc(\cdot; \rho, v)$ and $\Dec(\cdot; \rho, v)$ compute $g$ over the distribution $\calD$ with probability $1-8\delta$.

\section{A Basis of $\R^n$ from the Path-Independent Algorithm}\label{sec:path-ind-to-matrix}

In this section, we consider a fixed deterministic and path-independent algorithm $\calB$, which is specified by $(\calW, \Enc, \Dec, \oplus)$, and the goal of this section is to extract a special $s \times n$ matrix $T$ whose complexity  (i.e., the number of rows $s$ and the magnitude of entries in $T$) depends on the space complexity of the algorithm $\calB$. The discussion follows the works of~\cite{G08, LNW14} which show that the so-called ``kernel'' of the encoding function $\Enc(\cdot)$ of a path-independent algorithm can be used to define a sub-module of $\Z^n$. Because we have imposed the condition that $\Enc(\cdot)$ satisfies the path-independence property to bound $m$, and because the function $\Enc(\cdot)$ is only defined on vectors $\interval{-m, m}^n$, we provide the arguments. For the rest of the section, we consider a fixed setting of a large enough bound $m \in \N$, as well as a fixed $\calB = (\calW, \Enc, \Dec, \oplus)$, and we refer the reader to Definition~\ref{def:zero-and-spanning} for the definition of the zero-set of $\Enc(\cdot)$, as well as the zero-set subspace.

\ignore{\begin{definition}[Zero-Set and the Spanning Set]\label{def:zero-and-spanning}
For $\ell \leq m$, we define:
\begin{itemize}
\item \textbf{Zero-Set}. The zero-set of $\Enc(\cdot)$ within $\interval{-\ell, \ell}^n$ is given by
\[ K(\ell) = \left\{ x \in \interval{-\ell, \ell}^n : \Enc(x) = \Enc(0) \right\}.\]
\item \textbf{Zero-Set Subspace}. The zero-set subspace $M(\ell) \subset \R^n$ is $M(\ell) = \Span(K(\ell))$, where the span is taken over $\R$.
\end{itemize}
We let $t(\ell) \in \N$ be the dimension of $M(\ell)$ and $b_1,\dots, b_{t(\ell)} \in K(\ell)$ be an arbitrary basis of $M(\ell)$. 
\end{definition}}

Note, we defined the zero-set as a discrete subset of $\{-\ell, \dots, \ell\}^n$ and the zero-set subspace as a subspace over $\R^n$. We avoid the discussion of modules (and the fact that an extension of $K(\ell)$ is a sub-module of $\Z^n$) so as to not confuse notions of linear independence over $\Z$ and $\R$. In particular, we will always work with subspaces of $\R^n$, so that linear combinations are always taken with respect to real numbers. 

\begin{definition}[Equivalence Classes]\label{def:equiv}
For $\ell \in \N$ with $\ell < m/2$ and $x \in \interval{0,\ell}^n$, the equivalence class $C(x;\ell)$ is 
\[ C(x;\ell) = \left\{ y \in \interval{0, \ell}^n : y - x \in K(2\ell) \right\}. \]
\end{definition}

The fact that we restrict $\ell < m/2$ in Definition~\ref{def:equiv} is due to the subsequent lemma, where we use the fact $\calB$ is path-independent with bound $m$ to connect the encoding map $\Enc(\cdot)$ with the equivalence classes $C(x;\ell)$ defined above. Since path-independence with bound $m$ (in Definition~\ref{def:path-ind}) requires us to restrict the underlying vectors to those within $\{-(m-1),\dots, m-1\}^n$, the condition $\ell < m/2$ enables us to conclude that any $x, y\in \interval{0,\ell}^n$ has $x-y \in \interval{-(m-1),m-1}^n$. 

\begin{lemma}\label{lem:enc-to-equiv}
For $\ell \in \N$ with $\ell < m/2$ and any two $x, y \in \interval{0, \ell}^n$, 
\[ \Enc(x) = \Enc(y) \iff y \in C(x;\ell) \iff x \in C(y;\ell). \]
\end{lemma}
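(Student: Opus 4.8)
The plan is to collapse the triple equivalence to a single \emph{cancellation} statement: for $x,y\in\interval{0,\ell}^n$ with $\ell<m/2$,
\[ \Enc(x)=\Enc(y)\iff\Enc(y-x)=\Enc(0). \]
Granting this, the lemma follows at once. Since $x,y\in\interval{0,\ell}^n$ forces $y-x\in\interval{-\ell,\ell}^n\subseteq\interval{-2\ell,2\ell}^n$, the condition $y-x\in K(2\ell)$ is by definition the same as $\Enc(y-x)=\Enc(0)$, so $y\in C(x;\ell)\iff\Enc(y-x)=\Enc(0)\iff\Enc(x)=\Enc(y)$. Running the same reasoning with the roles of $x$ and $y$ swapped gives $x\in C(y;\ell)\iff\Enc(x-y)=\Enc(0)\iff\Enc(y)=\Enc(x)$, and since $\Enc(x)=\Enc(y)$ is symmetric in $x,y$, all three statements are equivalent.

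To prove the cancellation statement I would argue both directions by applying the transition rule $\oplus$ to a pair of states that are assumed equal, in the same spirit as the proof of Lemma~\ref{lem:path-ind-prop}. \textbf{Forward direction:} assume $\Enc(x)=\Enc(y)$ and apply the update sequence that decrements coordinate $i$ exactly $x_i$ times, for $i=1,\dots,n$ in order (``subtract $x$''). Run from base frequency vector $x$, the intermediate frequency vectors lie in $\interval{0,\ell}^n$ and the run ends at $0$; run from base $y$, coordinate $i$ moves monotonically from $y_i$ down to $y_i-x_i$, so all intermediate vectors lie in $\interval{-\ell,\ell}^n$ and the run ends at $y-x$. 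Since $\ell<m/2$ (so in particular $\ell\le m-1$), every intermediate vector lies in $\interval{-(m-1),m-1}^n$, so iterating the path-independence identity of Definition~\ref{def:path-ind} gives $\Enc(x)\oplus(\text{subtract }x)=\Enc(0)$ and $\Enc(y)\oplus(\text{subtract }x)=\Enc(y-x)$; as $\Enc(x)=\Enc(y)$ and $\oplus$ is a deterministic function of the state, the left-hand sides coincide, hence $\Enc(0)=\Enc(y-x)$. \textbf{Backward direction:} assuming $\Enc(y-x)=\Enc(0)$, apply the ``add $x$'' sequence instead; run from base $0$ it stays in $\interval{0,\ell}^n$ and ends at $x$, and run from base $y-x$ coordinate $i$ climbs from $y_i-x_i$ to $y_i$, staying in $\interval{-\ell,\ell}^n$ and ending at $y$. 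Path-independence then yields $\Enc(0)\oplus(\text{add }x)=\Enc(x)$ and $\Enc(y-x)\oplus(\text{add }x)=\Enc(y)$, and equality of the base states forces $\Enc(x)=\Enc(y)$.

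The only point that needs real care — and what I would flag as the main (minor) obstacle — is the bookkeeping of intermediate frequency vectors: path-independence is only assumed for vectors in $\interval{-(m-1),m-1}^n$, and the transition $\oplus$ is only guaranteed to be well-defined at states whose preimage meets that box, so one must check that none of the four runs above ever leaves $\interval{-(m-1),m-1}^n$. This is precisely what the hypothesis $\ell<m/2$ buys, since it places $\interval{-\ell,\ell}^n$ comfortably inside the box; fixing a coordinate-by-coordinate order for the updates makes the monotone movement of each coordinate transparent, so the containment is immediate. No further subtlety arises, since $y-x$ already has entries in $\interval{-\ell,\ell}$, and the ``$2\ell$'' slack in the definition of $C(x;\ell)$ is never binding in this argument.
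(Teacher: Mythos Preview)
Your proof is correct and follows essentially the same approach as the paper: both reduce the equivalence to the cancellation statement $\Enc(x)=\Enc(y)\iff\Enc(y-x)=\Enc(0)$ and establish it by applying the same ``subtract $x$'' (resp.\ ``add $x$'') update sequence to both states, using path-independence at each step while checking that all intermediate frequency vectors stay inside $\interval{-(m-1),m-1}^n$. The paper phrases this as an induction along a path $0=p_0,\dots,p_h=x$ in $\interval{0,\ell}^n$ maintaining the invariant $\Enc(x-p_k)=\Enc(y-p_k)\iff\Enc(x)=\Enc(y)$, which is exactly your coordinate-by-coordinate run written bidirectionally; your bookkeeping (intermediate vectors in $\interval{-\ell,\ell}^n$) is in fact slightly tighter than the paper's stated $\interval{-2\ell,2\ell}^n$.
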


\begin{proof}
Let $0^n = p_0, p_1, \dots, p_{h} = x$ denote a shortest path in a graph whose vertices are $\interval{0, \ell}^n$, and there is an edge connecting $(u,v)$ whenever $u - v = \xi e_i$ for some $\xi \in \{-1,1\}$ and $i \in [n]$. Notice that for all $k \in [h]$, we have $x - p_k \in \interval{-2\ell, 2\ell}^n$ and $y - p_k \in \interval{-2\ell, 2\ell}^n$ since $x, y, p_k \in \interval{0, \ell}^n$ since we selected a shortest path. By the setting of $\ell < m/2$, we always have $x-p_k, y-p_k \in \interval{-(m-1), m-1}^n$, and hence we can always apply the path-independence property (Definition~\ref{def:path-ind}) to conclude, if $(p_k, p_{k+1})$ is the edge with $p_{k} - p_{k+1} = \xi e_i$, then 
\begin{align*} 
\Enc(x-p_k) \oplus \xi e_i &= \Enc(x - p_k + \xi e_i) = \Enc(x - p_{k+1}),\qquad \text{and} \\
\Enc(y-p_k) \oplus \xi e_i &= \Enc(y - p_k + \xi e_i) = \Enc(y - p_{k+1}). 
\end{align*}
Therefore, a simple induction on all $k \in [h]$, we have 
\[ \Enc(x - p_{k}) = \Enc(y - p_k) \iff \Enc(x) = \Enc(y) \]
Setting $k = h$ implies $\Enc(0) = \Enc(y-x)$, i.e., $y - x \in K(2\ell)$ and $y \in C(x;\ell)$, iff $\Enc(x) = \Enc(y)$. The final equivalence is because the choice of $x$ and $y$ is symmetric.
\end{proof}

We now show that the size of the equivalence classes $C(x;\ell)$ is related to the dimensionality of the zero-set subspace $M(\ell)$. We will then use the lemma below in order to conclude that small-space streaming algorithms must have zero-sets whose subspace has dimensionality close to $n$.

\begin{lemma}\label{lem:size-equiv}
For $\ell \in \N$ and $\ell < m/2$, any $x \in \interval{0, \ell}^n$ satisfies $|C(x;\ell)| \leq (\ell+1)^{t(2\ell)}$. 
\end{lemma}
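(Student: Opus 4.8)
The plan is to observe that $C(x;\ell)$ is contained in a single affine coset of the zero-set subspace $M(2\ell)$, and then to bound the number of lattice points of $\interval{0,\ell}^n$ lying on any such coset by a coordinate-projection argument. This is a purely linear-algebraic statement about lattice points on low-dimensional affine subspaces; it does not use path-independence of $\calB$ (which enters only in Lemma~\ref{lem:enc-to-equiv}, where these classes get reinterpreted via $\Enc$), and the hypothesis $\ell < m/2$ is needed only so that $C(x;\ell)$ and $K(2\ell)$ are well-defined via Definition~\ref{def:equiv} and Definition~\ref{def:zero-and-spanning}.

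First I would note that for any $y \in C(x;\ell)$ we have $y - x \in K(2\ell) \subseteq M(2\ell)$ by Definition~\ref{def:equiv} and Definition~\ref{def:zero-and-spanning}, so that $C(x;\ell) \subseteq x + M(2\ell)$. Write $t = t(2\ell) = \dim M(2\ell)$. The next step is to choose, by elementary linear algebra, a set $S \subseteq [n]$ of $t$ coordinates on which $M(2\ell)$ projects injectively: taking a basis $b_1,\dots,b_t$ of $M(2\ell)$ and forming the $t \times n$ matrix $B$ whose rows are these vectors, $B$ has rank $t$ and hence possesses $t$ linearly independent columns; let $S$ index such a set of columns. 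Then the coordinate projection $\pi_S \colon \R^n \to \R^S$ restricted to $M(2\ell)$ is a linear isomorphism onto $\R^S$; in particular $\ker(\pi_S) \cap M(2\ell) = \{0\}$, so $\pi_S$ is injective on the entire coset $x + M(2\ell)$ (if $\pi_S(u) = \pi_S(v)$ with $u,v \in x + M(2\ell)$, then $u - v \in M(2\ell)$ and $\pi_S(u-v)=0$, whence $u=v$).

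Finally, since $\pi_S$ is injective on $C(x;\ell) \subseteq x + M(2\ell)$, we conclude
\[
|C(x;\ell)| = |\pi_S(C(x;\ell))| \leq |\pi_S(\interval{0,\ell}^n)|.
\]
Every point of $\interval{0,\ell}^n$ has each coordinate in $\{0,1,\dots,\ell\}$, a set of size $\ell+1$, so $|\pi_S(\interval{0,\ell}^n)| \leq (\ell+1)^{|S|} = (\ell+1)^{t(2\ell)}$, which is the claimed bound. I do not expect a real obstacle here; the only points requiring mild care are keeping track that the relevant subspace is $M(2\ell)$ (which is exactly why the exponent in the statement is $t(2\ell)$ and not $t(\ell)$), and checking that injectivity of $\pi_S$ passes from the linear subspace $M(2\ell)$ to its coset $x + M(2\ell)$, which is immediate from linearity of $\pi_S$.
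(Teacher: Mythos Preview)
Your proof is correct and follows essentially the same approach as the paper's: both observe that $C(x;\ell)$ lies in a single coset of $M(2\ell)$, select $t(2\ell)$ coordinates on which a basis of $M(2\ell)$ is linearly independent, and use injectivity of the corresponding coordinate projection to bound $|C(x;\ell)|$ by $(\ell+1)^{t(2\ell)}$. The only differences are notational (the paper writes the basis vectors as columns of an $n\times t$ matrix and picks independent rows, whereas you write them as rows of a $t\times n$ matrix and pick independent columns).
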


\begin{proof}
Let $B$ be the $n \times t(2\ell)$ matrix whose columns correspond to the vectors $b_1,\dots, b_{t(2\ell)} \in K(2\ell)$. The matrix $B$ has rank $t(2\ell)$ by definition, and therefore there exists a set of $t(2\ell)$ rows $R \subset [n]$ whose columns are linearly independent. Let $T$ be the $t(2\ell) \times n$ matrix with a single $1$ in each row such that $TB$ is the $t(2\ell) \times t(2\ell)$ full-rank matrix which selects the rows of $B$. Consider the map $f \colon C(x;\ell) \to \interval{0, \ell}^{t(2\ell)}$ given by $y \in C(x;\ell) \mapsto Ty \in \interval{0,\ell}^{t(2\ell)}$, whose image has size $(\ell + 1)^{t(2\ell)}$. We now argue that the map is injective as follows: suppose $y, y' \in C(x;\ell)$ with $Ty = Ty'$, then $y - y' \in K$, so there exists a unique $\alpha \in \R^{t(2\ell)}$ where $y - y' = B \alpha$. Then, we have $0 = T(y-y') = TB \alpha$, but since $TB$ is full-rank, $\alpha = 0$ and hence $y - y' = 0$.
\end{proof}

\begin{lemma}\label{lem:space-bound}
For $\ell \in \N$ with $\ell < m/2$, $\calS^{+}(\calB, \ell) \geq (n-t(2\ell)) \log_2(\ell+1)$. 
\end{lemma}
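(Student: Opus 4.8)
The plan is a short counting argument that combines the two preceding lemmas. Recall that $\calS^+(\calB, \ell)$ is by definition the base-$2$ logarithm of the number of distinct encoding states $\Enc(x)$ obtained as $x$ ranges over $\interval{0,\ell}^n$. So it suffices to exhibit at least $(\ell+1)^{n - t(2\ell)}$ distinct values of $\Enc(\cdot)$ on the box $\interval{0,\ell}^n$.

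First I would use Lemma~\ref{lem:enc-to-equiv}, whose hypothesis $\ell < m/2$ is met here: for $x,y \in \interval{0,\ell}^n$ we have $\Enc(x) = \Enc(y)$ exactly when $y \in C(x;\ell)$. Hence the fibers of the map $x \mapsto \Enc(x)$ restricted to $\interval{0,\ell}^n$ are precisely the equivalence classes $C(x;\ell)$, and so the number of distinct encoding states equals the number of these classes; in particular the classes $\{C(x;\ell)\}$ form a partition of $\interval{0,\ell}^n$. Next, Lemma~\ref{lem:size-equiv} gives $|C(x;\ell)| \le (\ell+1)^{t(2\ell)}$ for every $x$. Since these fibers partition a set of size $(\ell+1)^n$, the number of fibers is at least
\[ \frac{(\ell+1)^n}{(\ell+1)^{t(2\ell)}} = (\ell+1)^{n - t(2\ell)}, \]
and taking $\log_2$ yields $\calS^+(\calB, \ell) \ge (n - t(2\ell))\log_2(\ell+1)$.

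The one point worth stating carefully — and the only place where anything could go wrong — is the identification of the fibers of $\Enc(\cdot)$ with the sets $C(\cdot;\ell)$, i.e.\ that $\{C(x;\ell)\}$ is genuinely a partition. This is \emph{not} immediate from Definition~\ref{def:equiv} alone, since $K$ is defined only on a bounded box and transitivity of the relation ``$y-x \in K(2\ell)$'' is not obvious; but it follows at once by routing through Lemma~\ref{lem:enc-to-equiv} and using that ``$\Enc(x)=\Enc(y)$'' is an equivalence relation (reflexive, symmetric, transitive for free). Beyond this, the argument is purely a pigeonhole count and requires no further calculation.
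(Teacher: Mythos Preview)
Your proposal is correct and follows essentially the same approach as the paper: identify the fibers of $\Enc$ on $\interval{0,\ell}^n$ with the equivalence classes $C(x;\ell)$ via Lemma~\ref{lem:enc-to-equiv}, bound each class size by $(\ell+1)^{t(2\ell)}$ via Lemma~\ref{lem:size-equiv}, and apply pigeonhole. Your explicit remark that the partition structure comes from $\Enc(x)=\Enc(y)$ being an equivalence relation (rather than directly from the definition of $C(x;\ell)$) is a nice clarification that the paper leaves implicit.
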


\begin{proof}
Recall that $\calS^{+}(\calB, \ell)$ is the logarithm of the distinct number of encoded states resulting from non-negative vectors. By Lemma~\ref{lem:enc-to-equiv}, 
\begin{align*}
\left| \left\{ \Enc(x) \in \calW : x \in \interval{0, \ell}^n \right\} \right| = \left| \left\{ C(x; \ell) : x \in \interval{0,\ell}^n \right\} \right| &\geq \frac{(\ell+1)^n}{\max\{ |C(x;\ell)| : x \in \interval{0, \ell}^n \}} \\
				&\geq (\ell+1)^{n - t(2\ell)},
\end{align*}
where the last inequality follows from~Lemma~\ref{lem:size-equiv}.
\end{proof}

\ignore{\begin{lemma}\label{lem:basis-from-enc}
Let $\calB = (\calW, \Enc,\Dec, \oplus)$ be a deterministic and path-independent algorithm. There exists a basis $b_1,\dots, b_n$ of $\R^n$ and an integer $t$ with $n - \calS^{+}(\calB, 1) \leq t \leq n$ such that:
\begin{itemize}
\item The first $t$ vectors $b_1,\dots, b_t$ span the zero-set subspace $M = M(1)$ with $\ell=1$.
\item The remaining vectors $b_{t+1}, \dots, b_n$ span $M^{\perp}$, and are integer vectors with $\|b_{t+1}\|_{\infty}, \dots, \|b_{n}\|_{\infty} \leq t^{t/2}$. 
\end{itemize}
\end{lemma}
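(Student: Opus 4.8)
The plan is to take $t := t(1) = \dim M$, where $M := M(1) = \Span(K(1))$, and let $b_1,\dots,b_t$ be a maximal linearly independent subset of $K(1)$; since $M=\Span(K(1))$ this set has exactly $t$ elements, spans $M$, and — because $K(1)\subseteq\interval{-1,1}^n$ — consists of integer vectors with entries in $\{-1,0,1\}$. The bound $t\le n$ is immediate, so what remains is (a) the lower bound $t\ge n-\calS^+(\calB,1)$, and (b) extending $b_1,\dots,b_t$ to a basis of $\R^n$ by integer vectors $b_{t+1},\dots,b_n$ that span $M^\perp$ and satisfy $\|b_j\|_\infty\le t^{t/2}$.

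For (a) I would re-run the counting argument behind Lemma~\ref{lem:enc-to-equiv}, Lemma~\ref{lem:size-equiv}, and Lemma~\ref{lem:space-bound} directly at $\ell=1$, where the decisive point is that for $x,y\in\{0,1\}^n$ the difference $y-x$ lies in $\interval{-1,1}^n$, so that $y-x\in K(2)$ if and only if $y-x\in K(1)$. Combined with Lemma~\ref{lem:enc-to-equiv} (valid because the fixed bound $m$ is taken large), this gives $\Enc(x)=\Enc(y)\iff y-x\in K(1)$ for $x,y\in\{0,1\}^n$. The number of distinct values of $\Enc$ on $\{0,1\}^n$ equals $2^{\calS^+(\calB,1)}$ by definition of $\calS^+$, and each block $\{y\in\{0,1\}^n:\Enc(y)=\Enc(x)\}$ lies inside the coset $x+M$. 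Fixing a set $S\subseteq[n]$ of $t$ coordinates for which the $t\times t$ submatrix $B_S$ of $B:=[\,b_1\mid\cdots\mid b_t\,]$ on rows $S$ is invertible (such $S$ exists as $\operatorname{rank}B=t$), the restriction map $z\mapsto z|_S$ is injective on $(x+M)\cap\{0,1\}^n$ — if $z-z'=B\alpha$ with $(B\alpha)|_S=0$ then $B_S\alpha=0$, so $\alpha=0$ — and its image lies in $\{0,1\}^S$. Hence each block has size at most $2^t$, so $2^n\le 2^{\calS^+(\calB,1)}\cdot 2^t$, i.e. $t\ge n-\calS^+(\calB,1)$.

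For (b) I would write down $b_{t+1},\dots,b_n$ explicitly from the cofactors of $B$. With $S$ and $B_S$ as above, for each $j\in[n]\setminus S$ Cramer's rule applied to $B_S$ yields an integer vector $b^{(j)}\in\Z^n$ supported on $S\cup\{j\}$, with $j$-th coordinate $\det(B_S)$ and $S$-coordinates chosen (from $\operatorname{adj}(B_S)$ applied to the $j$-th row of $B$) so that $\langle b^{(j)},b_i\rangle=0$ for all $i\le t$, i.e. $b^{(j)}\in M^\perp$. These $n-t$ integer vectors are linearly independent — on the coordinates $[n]\setminus S$ they form a diagonal matrix with diagonal entries $\det(B_S)\ne 0$ — so, as $\dim M^\perp=n-t$, they form a basis of $M^\perp$. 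Each coordinate of $b^{(j)}$ is, by cofactor expansion, $\pm$ the determinant of a $t\times t$ matrix whose rows are $\{-1,0,1\}$-vectors of length $t$ (either a row of $B_S$, or the inserted $j$-th row of $B$, whose entries $(b_i)_j$ lie in $\{-1,0,1\}$ since $b_i\in K(1)$), hence of $\ell_2$-norm at most $\sqrt t$; Hadamard's inequality bounds this determinant by $t^{t/2}$, so $\|b^{(j)}\|_\infty\le t^{t/2}$. Relabelling the $b^{(j)}$ as $b_{t+1},\dots,b_n$ and using $M\oplus M^\perp=\R^n$ gives the claimed basis.

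The main obstacle is step (a): one is tempted to quote Lemma~\ref{lem:space-bound} directly, but its conclusion involves $t(2\ell)=t(2)$ rather than $t(1)$, whereas the statement we must prove insists that the first $t$ basis vectors lie in $K(1)$ and span $M(1)$. The sharper estimate $t\ge n-\calS^+(\calB,1)$ goes through precisely because differences of $\{0,1\}$-vectors are $\{-1,0,1\}$-valued, which collapses membership in $K(2)$ to membership in $K(1)$; everything after that is routine integer linear algebra (Cramer's rule and Hadamard's inequality).
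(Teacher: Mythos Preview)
Your proposal is correct and follows essentially the same route as the paper: take $b_1,\dots,b_t\in K(1)$ spanning $M(1)$, then for each $j$ outside a set $S$ of $t$ independent rows use Cramer's rule on the $t\times t$ submatrix $B_S$ to produce integer vectors in $M^\perp$, bounding the entries by Hadamard's inequality. Part~(b) of your argument is identical in substance to the paper's.

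Where you are actually \emph{more} careful than the paper is in part~(a). The paper's proof invokes Lemma~\ref{lem:space-bound} at $\ell=1$, which literally yields $\calS^+(\calB,1)\ge n-t(2)$, i.e.\ a lower bound on $t(2)=\dim M(2)$, and then silently uses this as a lower bound on $t(1)=\dim M(1)$. You noticed this discrepancy and supplied the missing step: for $x,y\in\{0,1\}^n$ the difference $y-x$ lies in $\interval{-1,1}^n$, so $y-x\in K(2)$ iff $y-x\in K(1)$; hence the equivalence classes $C(x;1)$ are already determined by $K(1)$, and the counting argument of Lemmas~\ref{lem:size-equiv}--\ref{lem:space-bound} goes through with $t(1)$ in place of $t(2)$. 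This is exactly the right observation and closes what would otherwise be a small gap.
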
}

\begin{proof}[Proof of Lemma~\ref{lem:basis-from-enc}]
We let $\ell=1$, so that $t = t(2\ell) \leq n$ and $\calS^+(\calB, 1) \geq n - t$ from Lemma~\ref{lem:space-bound}. We let $b_1,\dots, b_t$ be the vectors from Definition~\ref{def:zero-and-spanning} giving a basis of $M$, where note their entries lie in $\{-1, 0,1\}$. The final $n-t$ vectors are constructed as follows. Let $B$ be the $t \times n$ matrix whose rows are $b_1,\dots, b_t$, and note that $x\in M^{\perp}$ if and only if $Bx = 0$. Since $B$ matrix has rank $t$, there is a set of $t$ columns $C \subset [n]$ which are linearly independent, and hence consider the $t\times t$ sub-matrix $B'$ given by the columns of $B$ in $C$, since $B'$ is a full-rank $t \times t$ matrix, it is invertible. For any $j \in [n]\setminus C$, let $B_j \in \{-1,0,1\}^t$ denote the $j$-th column of $B$, and the fact $j \notin C$ means there exists $\alpha \in \R^{C}$ (here, we index the coordinates of $\alpha$ by vectors from $C$) such that 
\[ B_j = B' \alpha = \sum_{i \in C} \alpha_i B_i, \] 
and by Cramer's rule, each $\alpha_i$ for $i \in C$ is the ratio of a determinant of $t \times t$ matrix with entries in $\{-1,0,1\}$ and $\det(B')$. Consider the vector $x^{(j)} \in \R^n$ given by: 
\[ x^{(j)}_j = -\det(B') \qquad \text{and}\qquad x^{(j)}_i = \left\{ \begin{array}{cc} 0 & i \in [n] \setminus C \setminus \{j \} \\
															\alpha_i \cdot \det(B') & i \in C \end{array} \right. .\]
Note, $x^{(j)}$ has integer entries with magnitude at most $t^{t/2}$ using Hadamard's inequality. Furthermore, each $x^{(j)}$ satisfies $Bx^{(j)} = 0$ and hence $x^{(j)} \in M^{\perp}$, and is the only one with non-zero $j$-th coordinate. Hence, the vectors $x^{(j)}$ for $j \notin [n] \setminus C$ gives us the remaining linearly independent $n-t$ vectors.  
\end{proof} 

\end{document}